\date{} 					
\author[1]{Veeti~Ahvonen}
\author[2]{Maurice~Funk}
\author[1]{Damian~Heiman}
\author[1]{Antti~Kuusisto}
\author[2]{Carsten~Lutz}
\affil[1]{Tampere University, Finland}
\affil[2]{Leipzig University, Germany}
\theoremstyle{plain}
\newtheorem{theorem}{Theorem}
\newtheorem{lemma}[theorem]{Lemma}
\newtheorem{proposition}[theorem]{Proposition}
\newtheorem{corollary}[theorem]{Corollary}
\theoremstyle{definition}
\newtheorem{example}[theorem]{Example}
\newtheorem{definition}[theorem]{Definition}
\newcommand{\N}{\mathbb N}
\newcommand{\Z}{\mathbb Z}
\newcommand{\R}{\mathbb R}
\newcommand{\bb}{\mathbf{b}}
\newcommand{\be}{\mathbf{e}}
\newcommand{\bbf}{\mathbf{f}}
\newcommand{\bs}{\mathbf{s}}
\newcommand{\bu}{\mathbf{u}}
\newcommand{\bv}{\mathbf{v}}
\newcommand{\bw}{\mathbf{w}}
\newcommand{\bx}{\mathbf{x}}
\newcommand{\cC}{\mathcal{C}}
\newcommand{\cF}{\mathcal{F}}
\newcommand{\cG}{\mathcal{G}}
\newcommand{\cH}{\mathcal{H}}
\newcommand{\cI}{\mathcal{I}}
\newcommand{\cL}{\mathcal{L}}
\newcommand{\cM}{\mathcal{M}}
\newcommand{\cP}{\mathcal{P}}
\newcommand{\cT}{\mathcal{T}}
\newcommand{\fG}{\mathfrak{G}}
\newcommand{\ordo}{\mathcal{O}}
\newcommand{\softmax}{\mathrm{softmax}}
\newcommand{\FO}{\mathrm{FO}}
\newcommand{\GML}{\mathrm{GML}}
\newcommand{\SA}{\mathrm{SA}}
\newcommand{\T}{\mathsf{T}}
\newcommand{\MP}{\mathrm{MP}}
\newcommand{\FF}{\mathrm{FF}}
\newcommand{\AGG}{\mathrm{AGG}}
\newcommand{\COM}{\mathrm{COM}}
\newcommand{\MLP}{\mathrm{MLP}}
\newcommand{\ReLU}{\mathrm{ReLU}}
\newcommand{\SUM}{\mathrm{SUM}}
\newcommand*{\abs}[1]{\lvert#1\rvert}   
\newcommand{\DiamondG}{\langle \mathsf{G} \rangle}
\newcommand{\GMLGC}{\mathrm{GML+GC}}
\newcommand{\GMLG}{\mathrm{GML+G}}
\newcommand{\PLGC}{\mathrm{PL+GC}}
\newcommand{\PLG}{\mathrm{PL+G}}
\newcommand{\PL}{\mathrm{PL}}
\newcommand{\MLGC}{\mathrm{ML+GC}}
\newcommand{\MLG}{\mathrm{ML+G}}
\newcommand{\ML}{\mathrm{ML}}
\newcommand{\GMSC}{\mathrm{GMSC}}
\newcommand{\AH}{\mathrm{AH}}
\newcommand{\UH}{\mathrm{UH}}
\newcommand{\GNN}{\mathrm{GNN}}
\newcommand{\GNNG}{\mathrm{GNN+G}}
\newcommand{\GNNGC}{\mathrm{GNN+GC}}
\newcommand{\GT}{\mathrm{GT}}
\newcommand{\GPS}{\mathrm{GPS}}
\newcommand{\GNNF}{\mathrm{GNN[F]}}
\newcommand{\GNNGF}{\mathrm{GNN+G[F]}}
\newcommand{\GNNGCF}{\mathrm{GNN+GC[F]}}
\newcommand{\GTF}{\mathrm{GT[F]}}
\newcommand{\GPSF}{\mathrm{GPS[F]}}
\newcommand{\MLPF}{\mathrm{MLP[F]}}
\newcommand{\LAB}{\mathrm{LAB}}
\newcommand{\lpto}{\to^{\abs{\cdot}}}
\newcommand{\round}{\mathit{round}}
\newcommand{\B}{\mathbb B}
\newcommand{\NaN}{\mathrm{NaN}}
\newcommand{\Gmc}{\ensuremath{\mathcal{G}}\xspace}
\newcommand{\Hmc}{\ensuremath{\mathcal{H}}\xspace}
\newcommand{\Imc}{\ensuremath{\mathcal{I}}\xspace}
\newcommand{\mn}[1]{\ensuremath{\mathsf{#1}}}
\newcommand{\neigh}{\operatorname{Neigh}}
\newcommand{\pred}{\operatorname{Pred}}
\title{%
Expressive Power of Graph Transformers via Logic
}
\date{\today}
\begin{document}
\maketitle
\begin{abstract}
Transformers are the basis of modern large language models, but relatively little is known about their precise expressive power on graphs. We study the expressive power of graph transformers (GTs) by Dwivedi and Bresson (2020) and GPS-networks by Rampásek et al. (2022), both under soft-attention and average hard-attention. Our study covers two scenarios: the theoretical setting with real numbers and the more practical case with floats. With reals, we show that in restriction to vertex properties definable in first-order logic (FO), GPS-networks have the same expressive power as graded modal logic (GML) with the global modality. With floats, GPS-networks turn out to be equally expressive as GML with the counting global modality. The latter result is absolute, not restricting to properties definable in a background logic. We also obtain similar characterizations for GTs in terms of propositional logic with the global modality (for reals) and the counting global modality (for floats).
\end{abstract}

\section{Introduction}\label{Introduction}

Transformers have emerged as a powerful machine learning architecture serving as the basis of modern large language models such as GPTs \citep{NIPS2017_3f5ee243} and finding success also, e.g., in computer vision \citep{DBLP:conf/iclr/DosovitskiyB0WZ21}.
Recently, transformers have received significant attention in the field of graph learning, traditionally 
dominated by graph neural networks~(GNNs)~\citep{ScarselliGTHM09} and related formalisms like graph convolutional networks~\citep{kipf2017semi}. This shift
is driven by well‑known challenges GNNs face in handling long‑range interactions, including issues such as over-squashing~\citep{DBLP:conf/iclr/0002Y21} and over-smoothing~\citep{DBLP:conf/aaai/LiHW18}. Whereas GNNs rely primarily on local message passing, transformers can attend globally to any vertex in the graph. 
The literature now includes many 
graph learning models incorporating transformers.
An important distinction is between `pure' transformer models, which ignore the graph structure and
result in `bags-of-vertices' models \citep{NEURIPS2019_9d63484a, DBLP:conf/nips/KreuzerBHLT21}, and hybrids 
that combine transformers and GNN-style message passing \citep{Rampasek}.

To understand the limitations of learning models and their relationships, an expanding literature characterizes the expressive power of such models using logical
formalisms. While transformers on words as used in GPTs connect to versions of linear temporal logic and first-order logic with counting \citep{li2025characterizingexpressivitytransformerlanguage,chiang2023}, GNNs relate to variants of graded modal logic (GML) \citep{Barcelo_GNNs, benediktandothers}.

In 
this article, we provide logical 
characterizations of graph learning models that incorporate transformers. Our characterizations are uniform in that we
do not impose a constant bound on graph size. We are
primarily interested in models that 
combine GNN message passing layers with transformer layers, and focus
in particular on the rather general GPS-networks of \citet{Rampasek}. 
In addition, we also
consider pure bags-of-vertices graph transformers (GTs) \citep{NEURIPS2019_9d63484a,Dwivedi}. 
For both models, we study the case where
features are vectors of real numbers,
as in most theoretical studies, and also
the case where they are floats, as in real-life implementations. 
We study both soft-attention and average hard-attention in the transformer layers. We 
focus on these models
in their `naked' form, without positional (or structural) encodings. Such encodings---often based on the graph Laplacian, homomorphism counts, and notions of graph centrality---enrich each vertex with information regarding its position in the graph. While they play an
important role in graph learning with transformers, there is an uncomfortably large zoo of them. 
Therefore, we believe that to characterize expressive power, it is natural to begin with the naked case, providing a foundation for analyzing models with encodings.
We focus on vertex classification as a basic learning task, but many of our results also generalize to graph classification tasks, see the end of Section~\ref{sect: float characterization} and 
Appendix~\ref{appendix: graph classification}.

To survey our  results, we start
with the case of real numbers.  Our first main result is that in restriction to vertex properties expressible in first-order logic ($\FO$), $\GPS$-networks based on reals have the same expressive power as $\GML$ with the (non-counting) global modality ($\GMLG$), stated in Theorem \ref{thm:real-GPS-GMLG}. As  all our results, this applies to both soft-attention and average hard-attention, 
assuming sum aggregation in message-passing layers as in~\citep{Barcelo_GNNs}.
While it is unsurprising that adding transformer layers to a GNN corresponds to adding a global feature to the logic $\GML$, it was far
from clear that this feature is the non-counting global modality, rather than, say, its counting version. 
Our result implies that $\GPS$-networks cannot globally count in an absolute way, as in `the graph contains at least 10 vertices labeled $p$'. In contrast, they can
globally count in a relative way, as in
`the graph contains more vertices labeled $p$ than vertices labeled $q$'. This, however, is not expressible in $\FO$. The 
proof of our result is non-trivial and requires the introduction of a new type of bisimilarity (global-ratio graded bisimilarity $\sim_{G\%}$) and the proof of a new van
Benthem/Rosen-style result that essentially states: if an $\FO$-formula $\varphi(x)$ is invariant under $\sim_{G\%}$, then it is equivalent to a $\GMLG$-formula. We also
prove that, relative to $\FO$, real-based $\GT$s have
the same expressive power as propositional
logic $\PL$ with only the non-counting global modality
($\PLG$),
stated as Theorem~\ref{thm:real-GT-PLG}.

We next discuss our results regarding floating-point numbers. Our main result
for this case is that float-based $\GPS$-networks have the same expressive power as 
$\GML$ with the counting global modality ($\GMLGC$), stated as Theorem~\ref{theorem: GMLGC = SGPS = AHGPS = GNNG}. In contrast to the
case of the reals, this characterization is absolute rather than relative to $\FO$. 
It applies to any reasonable aggregation function including sum, max and mean.
We consider it remarkable that transitioning from reals to floats results in incomparable expressive power: while
relative global counting is no longer possible, absolute global counting becomes expressible. 
Our proof techniques leverage the underflow phenomenon of floats.
Via techniques from \citep{ahvonen_neurips}, it follows that with \emph{floats}, $\GPS$-networks have the same expressive power as GNNs with counting global readout ($\GNNGC$).
We also show that float-based $\GT$s have the same expressive power as $\PL$ with the counting global modality
($\PLGC$), stated in Theorem~\ref{theorem: PLGC = SGT = AHGT}. 
We note that the logic $\PLGC$ was recently used to study links between entropy and description complexity \citep{JAAKKOLA2025103615}.
We also briefly discuss our characterizations with floats in restriction to word-shaped graphs, how the float results generalize for graph classification and non-Boolean classification tasks, and how they could be modified to cover positional encodings.
Finally, we emphasize that the paper focuses exclusively on theoretical analysis and does not involve any experimental evaluation.

\paragraph{Related Work.}

To our knowledge, the expressive power of graph transformers and $\GPS$-networks has not yet been studied from the perspective of logic. 
Regarding  transformers \emph{over words}, 
the closest to our work are \citep{jerad2025uniquehardattentiontale, li2025characterizingexpressivitytransformerlanguage}, which characterized fixed-precision transformers that use `causal masking' with soft-attention, average hard-attention and unique hard-attention, via the past fragment of linear temporal logic (LTL). 
In contrast, our characterizations exclude masking and focus on transformers on graphs rather than words.
Similar characterizations via variations of LTL are given, for instance, in \citep{Yang0A24, yang2025kneedeepcrasptransformerdepth}.

Regarding logic-based lower and upper bounds for the expressive power of transformers over words, we mention \citep{chiang2023}, which established the logic $\mathsf{FOC[+;MOD]}$ as an upper bound for the expressivity of fixed-precision transformer encoders and as a lower bound for transformer encoders working with reals.
\citet{DBLP:conf/nips/MerrillS23} showed that first-order logic with majority quantifiers is an upper bound for log-precision transformers. 
\citet{yang2024counting} gave a lower bound for future-masked soft-attention transformers with unbounded input size via the past fragment of Minimal Tense Logic with counting terms ($\mathsf{K_t}$[\#]).
\citet{BarceloKLP24} established two lower bounds for hard-attention transformers working with reals: first-order logic extended with unary numerical predicates (FO($\mathsf{Mon}$)) in the case of unique hard-attention and linear temporal logic extended with unary numerical predicates and counting formulas (LTL(C,+)) for average hard-attention. 
For a more comprehensive study of precise characterizations and upper/lower bounds on the expressive power of different transformer architectures on words, see the survey \citep{StroblMW0A24}.

Regarding non-logic-based studies of the expressive power of graph transformers, \citet{DBLP:conf/nips/KreuzerBHLT21} showed that graph transformers with positional encodings are universal function approximators in the non-uniform setting.
In the uniform setting, \citet{grohe_transformers} proved that $\GNNGC$s and $\GPS$-networks using reals have incomparable expressive power w.r.t. function approximation. 
By contrast, this paper proves that with floats, $\GNNGC$s and $\GPS$-networks are equally expressive w.r.t. both graph and vertex properties, and also w.r.t. non-Boolean graph classification which amounts to expressing functions over floating-point numbers on graphs.
Recently, \citet{Schwartzentruber} studied the complexity of verifying float-based GNNs, and showed that it is PSPACE-complete.

\citet{Barcelo_GNNs} pioneered work on the expressive power of graph neural networks by characterizing aggregate-combine $\GNN$s with reals in restriction to 
$\FO$ via graded modal logic, and the same $\GNN$s extended with a global readout mechanism (essentially the same as our $\GNNGC$s) with the two-variable fragment of $\FO$ with counting quantifiers. 
\citet{DBLP:conf/lics/Grohe23} connected GNNs 
to the circuit complexity class $\mathsf{TC}^0$,
utilizing dyadic rationals.  
\citet{ahvonen_neurips} gave logical characterizations of recurrent and constant-iteration GNNs with both reals and floats, making similar assumptions to ours on float operations such as summation.
\citet{Pfluger_Tena_Cucala_Kostylev_2024} characterized recurrent GNNs that use reals in terms of the graded two-way $\mu$-calculus relative to a logic LocMMFP.
We also mention \citep{benediktandothers}, which characterized GNNs with bounded activation functions via logics involving Presburger quantifiers.

\section{Preliminaries}\label{section: preliminaries}

We let $\Z_+$ denote the set of positive integers and $\N$ the set of non-negative integers.
For $n \in \Z_+$, define $[n] \colonequals \{1, \ldots, n\}$.
Also, set $\mathbb{B} \colonequals \{0,1\}$.  
For a set $S$, we let $\cM(S)$ denote the set of multisets over~$S$, i.e., the set of functions \mbox{$S \to \N$}. We use double curly brackets $\{\{ \cdots \}\}$ to denote multisets.
For a multiset~$M$, $M_{|k}$ denotes the \textbf{k-restriction} of~$M$, i.e., the multiset given by $M_{|k}(x) = \min\{M(x),k\}$.

For 
$\bx \in X^n$ and $i \in [n]$, 
let $\bx_i$ denote the $i$th component of $\bx$.
For a matrix $M \in X^{n\times m}$, we use  $M_{i,*}$, $M_{*,j}$ and $M_{i,j}$ to denote, respectively, the $i$th row (from the top), the $j$th column (from the left), and the 
$j$th entry in the $i$th row of $M$.
For a sequence $(M^{(1)}, \dots, M^{(k)})$ of matrices in $X^{n \times m}$, their \textbf{concatenation} is the matrix $M \in X^{n \times k m}$ such that $M^{(\ell)}_{i,j} = M_{i,(\ell - 1) m +  j}$ for all $\ell \in [k]$.
For non-empty sets $X$ and~$Y$, let $X^+$ denote the set of non-empty sequences over $X$, while $f \colon X^+ \lpto\ Y^+$ is a notation for functions that
map each sequence in $X^+$ to a sequence of the same length in $Y^+$.

\subsection{Graphs and Feature Maps}

For a finite domain $D \neq \emptyset$, a dimension $d \in \N$ and a non-empty set $W$, a \textbf{feature map} is a function $f \colon D \to W^d$ that maps each 
$x \in D$ to a \textbf{feature vector} $f_x \in W^d$.
Typically, $D$ consists of graph vertices and $W$ is $\R$ or a set of  floating-point numbers.
If 
$D$ is ordered by $<^D$, then 
we can identify
$f$
with
a \textbf{feature matrix}  $M \in W^{|D| \times d}$, 
where the row $M_{i,*}$ contains the feature vector of the $i$th element of $D$ w.r.t. $<^D$, 
the column $M_{*,j}$ containing the $j$th vector components.

We work with vertex-labeled directed graphs, allowing self-loops, and simply refer to them as \emph{graphs}.
Let $\mathrm{LAB}$ denote a countably infinite set of \textbf{(vertex) label symbols}.
We assume an ordering $<^{\mathrm{LAB}}$ of $\mathrm{LAB}$, also inducing an ordering $<^L$ of every $L \subseteq \LAB$.
Finite subsets of $\LAB$ are denoted by $\Pi$.
Given $\Pi \subseteq \mathrm{LAB}$,
a \textbf{$\Pi$-labeled graph} is
a tuple $\cG = (V, E, \lambda)$, where $V$ is a finite non-empty set of \textbf{vertices}, $E \subseteq V \times V$ is a set of \textbf{edges} and $\lambda \colon V \to 2^\Pi$ a \textbf{vertex labeling function}. Here, $2^\Pi$ denotes the power set of $\Pi$.
For convenience, we set $V(\cG) \colonequals V$, $E(\cG) \colonequals E$, and $\lambda(\cG) \colonequals \lambda$. A \textbf{pointed graph} is a pair $(\cG, v)$ with $v \in V$.
The set of \textbf{out-neighbors} of $v \in V(\cG)$ is $\neigh_\cG(v) \colonequals \{\, u \mid (v, u) \in E\,\}$. 
We may identify 
$\lambda$ with a feature map $\lambda' \colon V \to \B^{\abs{\Pi}}$ where 
$\lambda'(v)_i = 1$ if the $i$th vertex label symbol 
(w.r.t. $<^{\mathrm{\Pi}}$) is in $\lambda(v)$ and else $\lambda'(v)_i = 0$.
Thus, $\Pi$-labeled graphs can be seen as graphs where each vertex is labeled with a single vector from $\B^{\abs{\Pi}}$.
We assume w.l.o.g. that 
for any graph $\cG$, $V(\cG) = [n]$ for some $n \in \Z_+$.
Hence,
we can identify
feature maps of graphs 
with 
feature matrices and use labeling functions, feature maps and feature matrices interchangeably.

\subsection{Graph Transformers and GNNs}\label{sec: GTs and GNNs}

We next discuss the computing architectures relevant to this article: graph transformers, $\GPS$-networks, and $\GNN$s.
We view them as vertex classifiers that produce Boolean classifications. 
For this section, fix an arbitrary $\Pi$-labeled graph $\cG = (V, E, \lambda)$ with $\abs{\Pi} = \ell$ and $|V| = n$. 
In what follows, we will often speak of
the input/hidden/output dimension of learning models and their components. For 
brevity, we abbreviate these to I/H/O  dimension.

\subsubsection{Basic Components.}

A multilayer perceptron (or feedforward neural network) can be intuitively understood as a structure that propagates an input through a series of layers, where in each layer every ``neuron'' computes a weighted sum of its incoming signals, applies an activation function, and passes the result forward to the next layer. The formal details, along with some auxiliary notions, are given below.

A \textbf{perceptron layer} $P$ of I/O dimension $(d_I, d_O)$ consists of a \textbf{weight matrix} $W \in \R^{d_O \times d_I}$, a \textbf{bias term} $b \in \R^{d_O \times 1}$ and an \textbf{activation function} $\alpha \colon \R \to \R$.
Given an input vector $\bx \in \R^{d_{I}}$, $P$ computes the vector $P(\bx) \colonequals \alpha \big( b + W\bx \big)$, where $\alpha$ is applied element-wise.
A \textbf{multilayer perceptron} ($\MLP$) $F$ of I/O dimension 
$(d_I, d_O)$ is a sequence $(P^{(1)}, \dots, P^{(m)})$ of perceptron layers, where each $P^{(i)}$ has I/O dimension $(d_{i-1}, d_{i})$, where $d_0 = d_I$ and $d_m = d_O$ and $P^{(m)}$ uses the identity activation function. 
Given a vector $\bx \in \R^{d_I}$, $F$ computes the vector $F(\bx) \colonequals P^{(m)}(\cdots P^{(2)}(P^{(1)}(\bx)) \cdots)$.
For a matrix $X \in \R^{n \times d_I}$, we let $F(X)$ denote the $\R^{n \times d_O}$-matrix, where $F$ is applied row-wise for $X$.
An MLP
is \textbf{$\alpha$-activated}
if every layer uses $\alpha$, except the last, which always uses the identity function. 
Unless otherwise stated, 
$\MLP$s are $\mathrm{ReLU}$-activated, where $\mathrm{ReLU}(x) = \max(0, x)$.
An $\MLP$ is \textbf{simple} if it 
is $\ReLU$-activated and has only two perceptron layers.

Next, we introduce aggregation functions and readout gadgets, which are essential components of graph transformers, GPS-networks and graph neural networks.
An \textbf{aggregation function} of dimension $d_I$ is a function $\AGG \colon \cM(\R^{d_I}) \to \R^{d_I}$ which typically is (point-wise) sum, max or mean. It is \textbf{set-based} if $\AGG(M) = \AGG(M_{|1})$ for all $M \in \cM(\R^{d_I})$.
A \textbf{readout gadget} of I/O dimension $(d_I, d_O)$ is a tuple $R \colonequals (F, \AGG)$, where $F$ and $\AGG$ are as above. Given a matrix $X \in \R^{n \times d_I}$, it computes the matrix $R(X) \in \R^{n \times d_O}$ where each row is the same, defined by
    $R(X)_{i,*} \colonequals F \big( \AGG ( \{\!\{\, X_{j,*} \mid j \in [n] \,\}\!\} ) \big)$.

\subsubsection{Graph Neural Networks.}

A (constant-iteration) graph neural network can be intuitively viewed as a distributed system in which vertices of the input graph exchange information synchronously for a fixed number of rounds. In each round, every vertex updates its feature vector based on its previous feature vector and an aggregated representation of the feature vectors of its out-neighbors. The formal details follow below.

A \textbf{message-passing layer} of dimension $d$ is a pair $L = (\COM, \AGG)$, where $\COM$ is an $\MLP$ of I/O dimension $(2d, d)$ and $\AGG$ is an aggregation function of dimension $d$. 
A \textbf{message-passing layer with counting global readout} of dimension $d$ is a pair $(L, R)$, where $L$ is defined as above and $R$ is a readout gadget of I/O dimension $(d,d)$. 
A \textbf{message-passing layer with non-counting global readout} $(L, R)$ of dimension~$d$ is defined analogously, but the aggregation function of $R$ is set-based. 

A \textbf{graph neural network} ($\GNN$) over $(\Pi, d)$ 
is a tuple $G = (P, L^{(1)}, \ldots, L^{(k)}, C)$ where $P$ is an $\MLP$ of I/O dimension $(\ell, d)$, each $L^{(i)} = (\COM^{(i)}, \AGG^{(i)})$ is a message passing layer of dimension $d$, and $C$ is an $\MLP$ of I/O dimension $(d,1)$ 
that induces a function $\R^{d} \to \B$ 
called a \textbf{(Boolean vertex) classification head}. 
The $\MLP$ $C$ does not have to be $\ReLU$-activated, and can use, e.g., the Heaviside function $\sigma$, defined such that $\sigma(x) = 1$ if $x > 0$ and $\sigma(x) = 0$ if $x \leq 0$.
Over a graph  $\cG$, $G$ computes  a sequence $\lambda^{(0)}, \ldots, \lambda^{(k)}$ of feature maps and a final feature map $G(\cG)$ as follows: $\lambda^{( 0 )} \colonequals P(\lambda)$ and $\lambda^{( i+1)} \colonequals \lambda^{( i)} + L^{(i+1)}\big(\lambda^{( i)} \big)$, where for each $v \in V$, $L^{(i+1)}\big(\lambda^{( i)} \big)$ maps $v$ to the feature vector
\[
    \COM^{(i+1)} \Big( \lambda^{( i)}_v, \AGG^{(i+1)}  \big( \{\!\{ \lambda^{( i)}_u \mid (v,u) \in E \}\!\} \big) \Big).
\]
Finally, $G(\cG) \colonequals C(\lambda^{( k )})$.\footnote{Note that messages flow in the direction opposite to the edges. However, this is only a convention and could be reversed by using modal logics that operate over the inverse edge relation.} The final feature map $G(\cG)$ labels each vertex with $0$ or $1$, and we say that $G$ \textbf{accepts} a vertex $v$ of $\cG$ if $G(\cG)$ maps $v$ to $1$.
Note that we follow the convention of \citep{Rampasek,grohe_transformers} by including skip connections around message-passing layers, 
which refers to the fact that 
$\lambda^{( i+1)}$
is not simply defined as $L^{(i+1)}\big(\lambda^{( i)} \big)$. It is easy to see that GNNs have the same expressive power with and without skip connections.

We define \textbf{graph neural networks with counting global readout} ($\GNNGC$s) and \textbf{graph neural networks with non-counting global readout} ($\GNNG$s) analogously, except each $L^{(i)}$ is a message-passing layer $(\hat{L}^{(i)}, R^{(i)})$ with counting and non-counting global readout, respectively. They behave analogously to GNNs, except that $\lambda^{( i+1)} \colonequals  \hat{\lambda}^{( i+1)} + R^{(i+1)}(\hat{\lambda}^{( i+1)})$ where $\hat{\lambda}^{( i+1 )} \colonequals \lambda^{( i)} + \hat{L}^{(i+1)}(\lambda^{( i)}) $.
In other words, each vertex's feature vector additionally reflects the global information aggregated from all vertices in the graph.

\begin{example}
    We can define a $\GNN$ over $(\{p\}, 1)$ that accepts precisely the vertices that satisfy the following property $\cP$: `a vertex has at least $5$ out-neighbors that have the label $p$'. The initial $\MLP$ assigns the feature vector $(1)$ to each vertex labeled $p$, and $(0)$ to others. The first (and only) message-passing layer uses summation as the aggregation function; if the sum is at least $5$, the vertex has the property $\cP$. Similarly, a $\GNNG$ can express that a graph contains a vertex with the property $\cP$ by adding a second layer that sums over all vertices in the graph, and a $\GNNGC$ can express, e.g., that a graph contains exactly $3$ vertices with the property $\cP$. 
\end{example}

\subsubsection{Self-Attention and Graph Transformers.}

A graph transformer is intuitively quite similar to a $\GNN$; each vertex of a graph updates its feature vector in synchronous rounds. The key difference is that instead of aggregation and combination functions, a global attention mechanism is used to obtain information from all vertices. 
A $\GPS$-network is then a variation of graph transformers that combines the attention mechanism with the update mechanism of $\GNN$s.
The formal details are given below.

A \textbf{self-attention head} $H$ of I/H dimension 
$(d, d_h)$ over $\R$ is defined w.r.t.
an \textbf{attention-function $\alpha \colon \R^+ \lpto \R^+$} 
and
three $\R^{d \times d_h}$-matrices: the \textbf{query-matrix} $W_Q$, the \textbf{key-matrix} $W_K$ and the \textbf{value-matrix} $W_V$.
Given a matrix $X \in \R^{n \times d}$, it computes the  $n\times d_h$-matrix 
\begin{equation*} 
H(X) \colonequals \alpha\left( \frac{(X W_Q) (X W_K)^{\T}}{\sqrt{d_h}} \right) (XW_V),
\end{equation*}
where $\alpha$ is applied row-wise. 
A \textbf{self-attention module} of dimension $d$ over $\R$ is a tuple $\SA~\colonequals~(H^{(1)}, \dots, H^{(k)}, W_O)$, where 
each $H^{(i)}$ is a self-attention head
of I/H dimension $(d, d_h)$ and $W_O \in \R^{kd_h \times d}$ is an \textbf{output matrix}. Let $\cH(X)$ denote the 
concatenation of $H^{(1)}(X), \dots, H^{(k)}(X)$. 
Now, $\SA$ computes the matrix $\SA(X) \colonequals \cH(X)W_O$. 
For brevity, we may omit `self' from `self-attention'.

A \textbf{transformer layer} of dimension $d$ is a pair $(\SA, \FF)$, where $\SA$ is an attention module and $\FF$ is an $\MLP$ of dimension $d$. 
A \textbf{GPS-layer} of dimension $d$ is a tuple $(\SA, \MP, \FF)$, where $\MP$ is a message-passing layer.

A \textbf{graph transformer} ($\GT$) over $(\Pi, d)$ is a tuple $T = (P, L^{(1)}, \ldots, L^{(k)}, C)$, where $P$ and $C$ are as for GNNs and each $L^{(i)}$ is a transformer layer $(\SA^{(i)}, \FF^{(i)})$ of dimension~$d$. 
A \textbf{GPS-network} $N$ over $(\Pi, d)$ is defined like a $\GT$ except that each $L^{(i)}$ is a $\GPS$-layer
$(\SA^{(i)}, \MP^{(i)}, \FF^{(i)})$
of dimension $d$. 
This definition does not include the optional normalization layers. For more about normalization layers, see e.g. \citep{Rampasek}.
Analogously to a GNN, a $\GPS$-network $N$ computes over a graph $\cG$ a sequence of feature maps and a final feature map $N(\cG)$ as follows: $\lambda^{( 0 )} \colonequals P(\lambda)$,
\[
\begin{aligned}
    \lambda^{(i+1)}_{B} &\colonequals \lambda^{(i)} + B^{(i+1)}\big( \lambda^{( i)} \big), \text{ where } B \in \{\SA, \MP\}, \\
    \lambda^{(i+1)}_{\SA+\MP} &\colonequals \lambda^{(i+1)}_{\SA} + \lambda^{(i+1)}_{\MP}, \\
    \lambda^{(i+1)} &\colonequals   \lambda^{(i+1)}_{\SA+\MP} + \FF^{(i+1)} \big( \lambda^{(i+1)}_{\SA+\MP} \big).
\end{aligned}
\]
Finally, $N(\cG) \colonequals C(\lambda^{( k )})$.
A $\GT$ $T$ computes a feature map $T(\cG)$ analogously to $N$, but  without the modules $\MP^{(i)}$:
\[
\lambda^{(i+1)} \colonequals   \lambda^{(i+1)}_\SA + \FF^{(i+1)}\big(\lambda^{(i+1)}_\SA \big).
\]
Acceptance for GTs and GPS-networks is defined analogously to GNNs.

We focus on the two most commonly used attention functions.
For $\bx \in \R^\ell$, let $\cI_\bx = \{\, i \in [\ell] \mid \bx_i = \max(\bx) \,\}$ where $\max$ returns the largest entry in vector $\bx$.
We define the \textbf{average hard} ($\AH$) and \textbf{softmax} functions:
\begin{enumerate}
    \item\label{average hard} $\mathrm{AH}(\bx)_i \colonequals \frac{1}{\abs{\cI_\bx}}$ if $i \in \cI_\bx$ and $\mathrm{AH}(\bx)_i \colonequals 0$ otherwise,
    \item\label{softmax} $\softmax(\bx)_i \colonequals \frac{e^{x_i-b}}{\sum_{j \in [\ell]} e^{x_j-b}}$, where $b = \max(\bx)$.\footnote{This is also known as the “stable” or “safe” softmax due to its numerical stability \citep{blanchard2019accuratecomputationlogsumexpsoftmax}, in contrast to the version of $\softmax$ without the biases $-b$.} 
\end{enumerate}
\begin{example}
    For $\bx = (5,7,1,7) \in \R^4$,   
     $ \softmax(\bx) \approx (0.063, 0.468, 0.001, 0.468)$ and $\mathrm{AH}(\bx) = (0, \frac{1}{2}, 0, \frac{1}{2})$.
\end{example}
Attention heads that use
$\AH$ or $\softmax$
are called 
average hard-attention heads and soft-attention heads, respectively. The same naming applies to attention modules, transformer layers, graph transformers, $\GPS$-layers and $\GPS$-networks. 
Later in Example~\ref{ex:reals:relativecounting} in Section~\ref{sec:real-gps}, we show that there exists a soft-attention graph transformer that expresses the property: `at least half of the vertices in the studied graph have the label symbol $p$'.

\subsection{Logics}\label{section: logics}

We define the logics used in this paper.
Let $\Pi$ be a finite set of vertex label symbols. 
With a first-order (FO) formula $\varphi$ over $\Pi$, we mean a formula of first-order logic over the vocabulary that contains a unary relation symbol for each $p \in \Pi$ and a binary edge relation symbol $E$ (equality is included).
A \textbf{$\Pi$-formula of graded modal logic with the counting global modality} ($\GMLGC$)  is defined by the  grammar
$\varphi \coloncolonequals \top \mid p \mid \neg \varphi \mid \varphi \land \varphi \mid \Diamond_{\geq k} \varphi \mid \DiamondG_{\geq k} \varphi$,
where $p \in \Pi$ and $k \in \N$. 
We use $\lor$, $\rightarrow$ and $\leftrightarrow$ as abbreviations in the usual way, and for $D \in \{\Diamond, \DiamondG\}$, we define that $D_{< k} \varphi \colonequals \neg D_{\geq k} \varphi$ and $D_{= k} \varphi \colonequals D_{\geq k} \varphi \land D_{< k+1} \varphi$.

The semantics of $\GMLGC$ is defined over pointed graphs. 
In the field of modal logic, $\Pi$-labeled graphs are often called
Kripke models.
For a $\Pi$-formula $\varphi$ of $\GMLGC$ and a pointed
$\Pi$-labeled graph $(\cG, v)$, 
the truth of $\varphi$ in
$(\cG, v)$ (denoted by $\cG, v \models \varphi)$ is defined as follows. $\cG, v \models
\top$ holds always. For $p \in
\Pi$, $\cG, v \models p$ iff $p \in \lambda(v)$. The cases $\neg$ and $\land$ are defined in the
usual way. For diamonds,
$$
\begin{array}{rcl}
\cG, v \models \Diamond_{\geq k} \varphi& \!\!\text{iff}\!\! &\abs{\{ u \in \neigh_\cG(v) \mid \cG, u \models \varphi \}} \geq k \\
\cG, v \models \DiamondG_{\geq k} \varphi & \!\!\text{iff}\!\! & \abs{\{ u \in V(\Gmc) \mid \cG, u \models \varphi \}} \geq k.
\end{array}
$$

\textbf{Graded modal logic with global modality} ($\GMLG$) is the fragment of $\GMLGC$ where diamonds $\DiamondG_{\geq k}$ are allowed only if $k=1$. For simplicity, we let $ \DiamondG \colonequals \DiamondG_{\geq 1}$.
\textbf{Graded modal logic} ($\GML$) is the fragment of $\GMLG$ without diamonds $\DiamondG$. \textbf{Modal logic} ($\ML$) is the fragment of $\GML$ where we allow diamonds $\Diamond_{\geq k}$ only if $k=1$, and we let $\Diamond \colonequals \Diamond_{\geq 1}$.
\textbf{Propositional logic} ($\PL$) is the fragment of $\ML$ without diamonds $\Diamond$.
The logics 
$\MLGC$, $\MLG$, $\PLGC$ and $\PLG$ are  defined in the expected way.

\begin{example}\label{example: expressibility in logic}
    The property `no vertex is a dead-end' is expressed by the $\GMLGC$-formula $\DiamondG_{=0} \Diamond_{=0} \top$. 
    No $\GMLGC$-formula can express the property `at least half of the vertices in the graph have label $p$'. A formal proof via graded bisimulations is straightforward.
\end{example}

\subsection{Equivalence of Vertex Classifiers}\label{sec: equivalence}

A \textbf{(vertex) property over $\Pi$} is a mapping $f$ that assigns to each $\Pi$-labeled graph $\cG$ a feature map $\lambda' \colon V(\cG) \to \{0,1\}$ and is invariant under isomorphisms. 
A \textbf{vertex classifier} is
any object $C$ that defines a
vertex property.
Note that each of our computing models is a vertex classifier.
Each $\Pi$-formula $\varphi$ of any logic introduced above also corresponds to a vertex classifier (where for $\FO$, $\varphi$ must have a single free variable) which maps each $\Pi$-labeled graph $\cG$ 
to the feature map 
$\lambda_{\varphi}$ 
with $\lambda_\varphi(v) = 1$ if $\cG, v  \models \varphi$ and
$\lambda_\varphi(v) = 0$ otherwise.

 We say that vertex classifiers 
 $C_1$ and $C_2$ are  \textbf{equivalent} if they define the same vertex property.
Two classes $\cC_1$ and $\cC_2$ of vertex classifiers \textbf{have the same expressive power} if for each $C_1 \in \cC_1$ there is an equivalent $C_2 \in \cC_2$, and vice versa. We say that $\cC_1$ and $\cC_2$ \textbf{have the same expressive power relative to $\FO$},
if for each vertex property $f$ definable by a formula $\varphi(x) \in \FO$, there is some $C_1 \in \cC_1$ that defines $f$ if and only if there is some $C_2 \in \cC_2$ that defines $f$.

\section{Characterizing Real-Based Transformers} \label{sec:real-gps}

We provide characterizations of the expressive power of $\GPS$-networks
and of $\GT$s, over the reals and relative to FO, in terms of the
logics $\GMLG$ and $\PLG$, respectively. The characterizations apply to both
soft-attention
and average hard-attention.
We start with $\GPS$-networks.

\begin{theorem}\label{thm:real-GPS-GMLG}
Relative to $\FO$, the following have the same expressive power:
$\GMLG$, soft-attention $\GPS$-networks, and average hard-attention $\GPS$-networks.
\end{theorem}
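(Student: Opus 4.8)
The statement is a three-way equivalence (relative to $\FO$) between $\GMLG$, soft-attention $\GPS$-networks, and average hard-attention $\GPS$-networks. My plan is to prove it via a cycle of inclusions, establishing for each pair of classes that, relative to $\FO$, every property in one class is realized in the other. Concretely, I would show: (i) every $\GMLG$-definable property is realized by both a soft-attention and an average hard-attention $\GPS$-network; and (ii) conversely, every $\FO$-definable property realized by a soft-attention $\GPS$-network (resp. an average hard-attention one) is $\GMLG$-definable. Direction (i) needs no restriction to $\FO$ at all — it is an absolute simulation — so the $\FO$-restriction only does work in direction (ii), which is where the real content lies.

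For direction (i), I would proceed compositionally on the structure of a $\GMLG$-formula $\varphi$. The idea is to build a $\GPS$-network whose hidden dimension has one coordinate per subformula of $\varphi$, maintaining the invariant that after enough layers the coordinate for a subformula $\psi$ carries the Boolean truth value $\llbracket \cG, v \models \psi \rrbracket$ at each vertex $v$. Propositional connectives are handled by the $\FF$-$\MLP$ (a simple $\ReLU$ network computing $\land$, $\neg$, etc.), the graded diamonds $\Diamond_{\ge k}$ by the message-passing layer $\MP$ with $\SUM$ aggregation followed by a threshold in $\FF$ (exactly as in the $\GNN$/$\GML$ simulation of Barceló et al.), and — the new ingredient — the non-counting global modality $\DiamondG$ by the self-attention module $\SA$: one uses uniform (constant) query/key matrices so that every row attends equally to all vertices, making the attention output proportional to the average of the value vectors; from the average of a $\{0,1\}$-valued coordinate one can test ">0" with a $\ReLU$ gate, and this test works identically under $\softmax$ (all logits equal, so softmax is exactly the uniform average) and under $\AH$ (all entries are argmax, so $\AH$ is also the uniform average). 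Hence the same construction yields both a soft-attention and an average hard-attention witness, and the final classification head $C$ reads off the coordinate for $\varphi$.

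For direction (ii), the key is to combine two ingredients. First, I would argue that the property computed by any $\GPS$-network is invariant under the global-ratio graded bisimilarity $\sim_{G\%}$ that the introduction announces: message passing is invariant under ordinary graded bisimilarity, and a self-attention layer with the softmax/$\AH$ attention functions is invariant under rescaling the multiplicities of bisimulation-equivalent vertices by a common factor — it only sees relative frequencies of value vectors, not their absolute counts — so the combined model is $\sim_{G\%}$-invariant. (One has to be slightly careful that $\softmax$ of a weighted average over a bag genuinely depends only on the empirical distribution of the logits, which it does.) Second, I would invoke the new van Benthem/Rosen-style theorem stated in the introduction: an $\FO$-formula $\varphi(x)$ that is invariant under $\sim_{G\%}$ is equivalent to a $\GMLG$-formula. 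So if a soft-attention (or $\AH$) $\GPS$-network happens to define an $\FO$-definable property $f$, then $f$ is $\sim_{G\%}$-invariant and $\FO$-definable, hence $\GMLG$-definable, which closes the cycle and also matches the two attention variants to each other through $\GMLG$.

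I expect the main obstacle to be the $\sim_{G\%}$-invariance of soft-attention layers, i.e. making precise that the softmax attention output at a vertex depends only on the empirical distribution of logit values over the vertex set (weighted by the bisimulation classes) and not on the absolute sizes — together with checking that this distributional dependence is exactly preserved when one blows up classes by a common rational factor, and that it degenerates correctly to the $\AH$ case. The compositional simulation in direction (i) is routine bookkeeping by comparison, modulo the standard care about skip connections (which the preliminaries note are harmless) and about realizing strict thresholds with $\ReLU$ plus the Heaviside head; and the hard Rosen-style theorem is assumed as stated in the introduction, so I would not reprove it here.
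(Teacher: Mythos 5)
Your overall decomposition is exactly the paper's: one direction is an absolute compositional simulation of $\GMLG$-formulae by $\GPS$-networks, and the other direction combines invariance of $\GPS$-networks under the new global-ratio graded bisimilarity $\sim_{G\%}$ with the van Benthem/Rosen theorem for $\sim_{G\%}$ and $\GMLG$. You correctly identify that the $\FO$-restriction does all its work in the second direction, and you correctly identify the main technical obstacle there (making precise that softmax depends only on the empirical distribution of logits, and that this is preserved under common rational rescaling of bisimulation-class multiplicities). Direction (ii) of your plan is a faithful sketch.

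There is, however, a genuine gap in your direction (i). You propose to handle $\DiamondG\varphi$ by having self-attention output the uniform average of the $\{0,1\}$-valued $\varphi$-coordinate and then ``test $>0$ with a $\ReLU$ gate'', with strict thresholds realized by ``$\ReLU$ plus the Heaviside head''. This cannot work. The attention output for $\DiamondG\varphi$ at any vertex is $\ell/n$ where $\ell$ is the number of $\varphi$-satisfying vertices and $n = |V|$; for $\ell \ge 1$ this can be arbitrarily small but positive as $n$ grows, and a $\ReLU$-activated (or truncated-$\ReLU$-activated) $\MLP$ is a fixed continuous piecewise-linear map that sends a neighbourhood of $0$ into a neighbourhood of $0$, so no such map can collapse the entire interval $(0,1]$ to $1$ while sending $0$ to $0$. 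Deferring the rectification to the Heaviside head $C$ also fails, because nesting forces you to rectify earlier: for a formula like $\Diamond_{\ge 2}\DiamondG p$, the $\Diamond_{\ge 2}$ step needs the inner truth value already crisply in $\{0,1\}$ (otherwise the quantity $\deg(v)\cdot \ell/n$ conflates, e.g., degree $1$ with ratio $1/2$ and degree $2$ with ratio $1/4$). The paper resolves this by using step-function activation \emph{inside} the $\GPS$-layers, explicitly flagging this as a necessary departure from the truncated-$\ReLU$ construction of Barcel\'o et al.\ for $\GML$. Your plan should be amended to allow step activations in the $\FF$-$\MLP$s of the constructed $\GPS$-network, not only in the final classification head. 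The local case $\Diamond_{\ge k}$, by contrast, is fine with $\ReLU$, since there the aggregated quantity is an integer and a truncated-$\ReLU$ threshold suffices.
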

We discuss Theorem~\ref{thm:real-GPS-GMLG} before giving a proof. An
interesting comparison is to the results of~\citet{Barcelo_GNNs}, who prove that relative
to $\FO$, GNNs without transformer layers and without
global readout have the same expressive power as 
$\GML$.
They also show that, when counting global readout is admitted, GNNs can express
all of $\GMLGC$.\footnote{
They actually show that GNNs with counting global readout capture all of $C_2$---the two-variable fragment of $\FO$ with counting quantifiers---but only on undirected graphs; this fails for directed graphs, as $\GNNGC$s cannot express, for instance, the $C_2$-formula $\exists y \, E(x,y) \wedge E(y,x)$.
} 
Relative to $\FO$, $\GPS$-networks thus sit properly in the middle between
$\GNN$s and $\GNNGC$s: they can express global
properties such as $P_1 =$ `the graph contains a vertex labeled~$p$', but cannot express absolute global counting,  such as $P_2 =$ `the graph contains at least 2 vertices
labeled~$p$'.

Let us also discuss  absolute expressive power, 
dropping $\FO$ as a background logic. $\GNN$s are by definition
a special case of $\GPS$-networks. Conversely, 
Property~$P_1$ witnesses  that
$\GPS$-networks are strictly more expressive than GNNs, also in an absolute sense. Likewise, Property~$P_2$
shows that some $\GNNGC$s do not have an equivalent GPS-network. It remains open whether every GPS-network is
equivalent to a $\GNNGC$. It is proved in 
\citep{grohe_transformers} that this is not the case for graph classification, but the result does not immediately transfer to vertex classification.

While $\GPS$-networks cannot express
global properties that involve absolute counting, 
they 
can express global properties with relative counting,
and so can
$\GT$s. 
This is not visible in Theorem~\ref{thm:real-GPS-GMLG} because
such properties do not fall within $\FO$.
We demonstrate relative counting in the example below.

\begin{example}
\label{ex:reals:relativecounting}
There is a $1$-layer soft-attention $\GT$ $(P, L, C)$ that accepts precisely the vertices of those graphs that satisfy the property: `at least half of the vertices in the graph have the label $p$'.
The initial $\MLP$ $P$  
maps each feature vector to a $2$-dimensional feature vector, where the first component encodes the labeling by $p$ and the other is $0$. 
The soft-attention module of $L$ then has 
$W_Q = W_K = [0, 0]^\T$, $W_V = [1, 0]^\T$ and $W_O = [0, 1]$ 
and 
the $\MLP$ outputs a zero matrix.
%,
After the last skip connection, the second column of the matrix consists of the value $x$, where $x$ tells the ratio of how many vertices have the label $p$; the final classification head $C$ outputs $1$ if $x \geq 0.5$ and $0$ otherwise.
\end{example}

We now survey the proof of Theorem~\ref{thm:real-GPS-GMLG}, full 
details are in 
Appendix~\ref{app:real-gps}.
The
easier direction is to show that every $\GMLG$-formula can be
translated into an equivalent $\GPS$-network.  We extend the
corresponding construction of \citep{Barcelo_GNNs} for $\GML$, using
self-attention heads to handle subformulae of the form~$\DiamondG \varphi$.
\begin{restatable}{lemma}{lemgmlgtorealgps}\label{lem:gmlg-to-real-gps}
  For every $\GMLG$-formula, there is an equivalent $\GPS$-network.
  This applies to both soft-attention and average hard-attention.
\end{restatable}
A notable difference to the proof of \citep{Barcelo_GNNs} is that we
use a step function as an activation function, rather than truncated
ReLU. Intuitively, this is because truth values are represented as 0 and 1 in
feature vectors, but both soft-attention and average hard-attention
may deliver an arbitrarily small (positive) fractional value and there
seems to be no way to `rectify' this into a 1 without using a step function. This is similar
to what happens in GNNs that
use arithmetic mean as the aggregation 
function \citep{AAAIMean}. 

The difficult direction in the proof of
Theorem~\ref{thm:real-GPS-GMLG} is to show that every $\GPS$-network that
expresses an $\FO$-property is equivalent to a $\GMLG$-formula. In
\citep{Barcelo_GNNs}, this direction is proved by first showing that
GNNs are invariant under graded bisimulation and then applying a van
Benthem/Rosen-style result from finite model theory~\citep{otto2019}
which says that every $\FO$-formula invariant under graded bisimulation
is equivalent to a GML-formula. $\GPS$-networks, however, are not
invariant under graded bisimulations because these do not preserve
global properties. We thus introduce a stronger version of graded
bisimilarity that also takes into account the multiplicities with
which graded bisimulation types are realized, and prove a
corresponding van Benthem/Rosen theorem.

Let $\Pi$ be a finite set of vertex label symbols.
A \textbf{graded bisimulation} between two $\Pi$-labeled graphs
$\cG_1 = (V_1, E_1, \lambda_1)$ and $\cG_2 = (V_2, E_2, \lambda_2)$
is a binary relation $Z \subseteq V_1 \times V_2$ that satisfies the following conditions:
\begin{description}
    \item[atom] for all $(v_1, v_2) \in Z$, $\lambda_1(v_1) = \lambda_2(v_2)$.
    
    \item[graded forth] for all pairs $(u_1, u_2) \in Z$, for all $k \geq 1$: for all
    pairwise distinct vertices $v_1, \ldots, v_k \in \neigh_{\cG_1}(u_1)$ there are
    pairwise distinct vertices $v'_1, \ldots, v'_k \in \neigh_{\cG_2}(u_2)$ with
    $(v_1, v'_1), \ldots, (v_k, v'_k) \in Z$. 
    \item[graded back] for all pairs $(u_1, u_2) \in Z$, for all $k \geq 1$: for all
    pairwise distinct vertices $v'_1, \ldots, v'_k \in \neigh_{\cG_2}(u_2)$ there are
    pairwise distinct $v_1, \ldots, v_k \in \neigh_{\cG_1}(u_1)$ with $(v_1, v'_1),
    \ldots, (v_k, v'_k) \in Z$.
\end{description}
We write $(\cG_1, v_1) \sim (\cG_2, v_2)$ if there is a graded bisimulation $Z$
between $\cG_1$ and $\cG_2$ with \mbox{$(v_1, v_2) \in Z$}.

A \textbf{graded bisimulation type over $\Pi$} is a maximal set $t$ of $\Pi$-labeled pointed graphs such
that $(\Gmc_1,v_1) \sim (\Gmc_2,v_2)$ for all $(\Gmc_1,v_1),(\Gmc_2,v_2) \in t$.
For a pointed $\Pi$-labeled
graph $(\Gmc,v)$, we use $\mn{tp}_\Gmc(v)$ to denote the unique graded
bisimulation type $t$ over $\Pi$ such that $(\Gmc,v) \in t$.

    Pointed graphs $(\Gmc_1,v_1)$, $(\Gmc_2,v_2)$ are \textbf{global-ratio graded bisimilar}, written $(\cG_1, v_1)
    \sim_{G\%} (\cG_2, v_2)$,
    if $(\cG_1, v_1)
    \sim
    (\cG_2, v_2)$ and there exists
    a rational number $q > 0$ such that for every graded bisimulation type $t$,
    $$
       |\{ v \in V_1 \mid \mn{tp}_{\Gmc_1}(v)=t\}|
       = q \cdot  |\{ v \in V_2 \mid \mn{tp}_{\Gmc_2}(v)=t\}|.
    $$
Note that the ratios between graded bisimulation types above are closely related to relative counting as
 in Example~\ref{ex:reals:relativecounting}.

\begin{example}\label{ex:bisim}
We illustrate that global-ratio graded bisimilarity provides a middle ground between graded
bisimilarity and isomorphism.

First consider the $\{p\}$-labeled graphs $\cG_1 = ( \{v_1\}, \emptyset, \lambda_1)$ such that $\lambda_1(v_1) = \{p\}$ and $\cG_2 = ( \{ u_1, u_2\}, \emptyset, \lambda_2)$ such that $\lambda_2(u_1) = \{p\}$ and $\lambda_2(u_2) = \emptyset$. Then $(\cG_1, v_1) \sim (\cG_2, u_1)$, but $(\cG_1, v_1) \not \sim_{G\%} (\cG_2, u_1)$ as the graded bisimulation type of $(\cG_2, u_2)$ does not occur in $\cG_1$.

Now consider the graph $\cG_3 = (\{ w_1, w_2 \}, \emptyset, \lambda_3)$ with $\lambda_3(w_1) = \lambda_3(w_2) = \{p\}$. Then $(\cG_1, v_1) \sim_{G\%} (\cG_3, w_1)$, taking $q = \frac{1}{2}$, but $\cG_1$ and $\cG_3$ are not isomorphic.
\end{example}

A vertex classifier such as a $\GPS$-network or an $\FO$-formula $\varphi(x)$ is \textbf{invariant under $\sim_{G\%}$} if for all pointed
graphs $(\cG_1, v_1)$ and $(\cG_2, v_2)$, 
$(\cG_1, v_1) \sim_{G\%} (\cG_2, v_2)$ implies that
$\cG_1 \models \varphi(v_1)$ if and 
only if $\cG_2 \models \varphi(v_2)$.
A layer-by-layer analysis of $\GPS$-networks shows the following.
\begin{restatable}{lemma}{lemgraphtransformerinvariance}\label{lem:gps-invariant-under-grg-bisim}
    Let $N$ be a soft-attention or average hard-attention $\GPS$-network.
    Then $N$ is invariant under $\sim_{G\%}$.
\end{restatable}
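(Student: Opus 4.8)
The plan is to prove, by induction on the layers of $N$, a statement stronger than bare invariance. Fix pointed graphs with $(\cG_1, v_1) \sim_{G\%} (\cG_2, v_2)$, so that there are a graded bisimulation $Z$ between $\cG_1$ and $\cG_2$ with $(v_1,v_2) \in Z$ and a rational $q > 0$ such that $|t^{\cG_1}| = q \cdot |t^{\cG_2}|$ for every graded bisimulation type $t$, where $t^{\cG} \colonequals \{\, u \in V(\cG) \mid \mn{tp}_{\cG}(u) = t \,\}$. Writing $\lambda_1^{(i)}, \lambda_2^{(i)}$ for the feature maps computed by the first $i$ layers of $N$ on $\cG_1$ and $\cG_2$, the invariant I would carry through is: \textbf{(a)} $\lambda_j^{(i)}(u)$ depends only on $\mn{tp}_{\cG_j}(u)$, and $\mn{tp}_{\cG_1}(u_1) = \mn{tp}_{\cG_2}(u_2)$ implies $\lambda_1^{(i)}(u_1) = \lambda_2^{(i)}(u_2)$; \textbf{(b)} for every vector $\bx$, the number of $u \in V_1$ with $\lambda_1^{(i)}(u) = \bx$ equals $q$ times the number of $u \in V_2$ with $\lambda_2^{(i)}(u) = \bx$. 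Note that (b) is a formal consequence of (a) together with the type-count identity, since the set of vertices realizing a fixed feature vector is a union of types (and $q>0$ forces a type to be realized in $\cG_1$ iff in $\cG_2$); so the real content is to push (a) through each layer, with (b) coming along for free. Once (a) and (b) hold at the top layer $i = k$, applying the row-wise classification head $C$ yields $N(\cG_1)(v_1) = N(\cG_2)(v_2)$, which is the claim.

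For the base case, $\lambda_j^{(0)} = P(\lambda_j)$, and (a) holds because the \textbf{atom} condition forces bisimilar vertices to carry the same label while $P$ acts row-wise. For the induction step I would treat each component of a $\GPS$-layer in turn: the self-attention module $\SA$, the message-passing layer $\MP$, the combination $\lambda^{(i+1)}_{\SA+\MP} = \lambda^{(i+1)}_{\SA} + \lambda^{(i+1)}_{\MP}$, the final $\FF$, and the three skip connections. All of these except $\SA$ are routine. The message-passing layer is exactly the $\GNN$ situation: graded bisimilar vertices---within a single graph or across the two---have the same multiset of out-neighbour bisimulation types (an immediate consequence of \textbf{graded forth} and \textbf{graded back}, using the standard fact that two vertices have the same type iff they are graded bisimilar), so by (a) they see the same multiset of out-neighbour feature vectors, hence compute the same $\AGG$-value and the same $\COM$-output, which gives (a) for $\lambda^{(i+1)}_{\MP}$. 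The $\MLP$s $\FF$ and $C$, and the pointwise additions (skip connections and $\lambda^{(i+1)}_{\SA} + \lambda^{(i+1)}_{\MP}$), act row-wise and therefore preserve (a) trivially.

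The heart of the argument---and what I expect to be the main obstacle---is the self-attention module. For a head $H$ with matrices $W_Q, W_K, W_V$ and attention function $\beta \in \{\softmax, \AH\}$, the score that vertex $u$ assigns to vertex $w$ is $\langle \lambda^{(i)}(u) W_Q,\ \lambda^{(i)}(w) W_K \rangle / \sqrt{d_h}$, which by (a) depends only on $\mn{tp}(u)$ and $\mn{tp}(w)$, and $\lambda^{(i)}(w) W_V$ depends only on $\mn{tp}(w)$. Hence, fixing $u$, the multiset $\{\!\{\, (\text{score of } w,\ \lambda^{(i)}(w) W_V) \mid w \in V_j \,\}\!\}$ is a function of $\mn{tp}(u)$ alone, and across the two graphs the $\cG_1$-version is the $q$-fold copy of the $\cG_2$-version. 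I would then check that both attention functions are insensitive to such uniform scaling of multiplicities. For $\softmax$, the head output at $u$ is $\big(\sum_w e^{s_w - b}\,\lambda^{(i)}(w)W_V\big)\big/\big(\sum_w e^{s_w - b}\big)$ with $s_w$ the score of $w$ and $b$ its maximum; scaling every multiplicity by $q$ multiplies numerator and denominator by $q$ and leaves $b$ unchanged (the set of occurring scores is unchanged), so the value is preserved. For $\AH$, the $\argmax$ over the scores is unaffected by multiplicities, and the averaged value $\frac{1}{|\cI|}\sum_{w \in \cI} \lambda^{(i)}(w) W_V$ over the set $\cI$ of $\argmax$ positions is preserved because $|\cI|$ and the sum both scale by $q$. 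Thus $H$'s output at $u$ depends only on $\mn{tp}(u)$ and agrees on type-equal vertices across the two graphs; concatenating heads and multiplying by $W_O$ preserves this, giving (a) for $\lambda^{(i+1)}_{\SA}$. Assembling the components completes the induction step, and the same argument, with the $\MP$ component simply dropped, applies verbatim to $\GT$s.
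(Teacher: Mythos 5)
Your proof is correct and takes essentially the same approach as the paper's: both proceed by induction on layers with the invariant that type-equal vertices (across the two graphs) receive identical feature vectors, use the graded back-and-forth conditions for the MP component, and exploit the ratio $q$ to show the softmax/AH normalizations cancel. The paper spells out the cancellation via explicit sums over types; you phrase it as invariance of both attention functions under uniform scaling of multiplicities of the (score, value) multiset — these are the same calculation.
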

It follows that GPS-networks  cannot distinguish 
$(G_1, v_1)$ and $(G_3, w_1)$ from Example~\ref{ex:bisim}. 
In contrast, $(G_1, v_1)$ and $(G_2, u_1)$ can be distinguished: by Lemma~\ref{lem:gmlg-to-real-gps}, 
one can employ a GPS-network equivalent to the $\GMLG$-formula $\DiamondG \neg p$.

We now give the announced van Benthem/Rosen theorem.\\[-4mm]
\begin{restatable}{theorem}{lemratiobisiminvariance}
    \label{lem:ratio-bisim-invariance}
    For every $\FO$-formula $\varphi(x)$ over $\Pi$, the following are equivalent:
    \begin{enumerate}
        \item $\varphi$ is invariant under $\sim_{G\%}$;
        \item $\varphi$ is equivalent to a $\GMLG$-formula over all
          (finite!) $\Pi$-labeled pointed graphs.
        \end{enumerate}
\end{restatable}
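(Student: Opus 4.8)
The direction $(2)\Rightarrow(1)$ is the routine one: by Lemma~\ref{lem:gps-invariant-under-grg-bisim}'s companion reasoning (or directly), I would check that every $\GMLG$-formula is invariant under $\sim_{G\%}$. An easy induction on formula structure suffices: the $\ML$-connectives and the graded diamonds $\Diamond_{\geq k}$ are invariant already under plain graded bisimulation $\sim$, and a subformula $\DiamondG \varphi$ depends only on \emph{whether} some vertex satisfies $\varphi$, i.e.\ on whether some graded bisimulation type $t$ with $t \models \varphi$ has nonzero count; since $\sim_{G\%}$ scales all type-counts by a fixed positive rational, the zero/nonzero pattern of counts is preserved, hence so is the truth of $\DiamondG\varphi$. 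This gives $(2)\Rightarrow(1)$.

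For the hard direction $(1)\Rightarrow(2)$, the plan is to mimic the standard van Benthem/Rosen argument over finite structures (as in \cite{otto2019}), but carry the global-ratio bookkeeping through it. First I would fix the finite signature $\Pi$ and the quantifier rank $m$ of $\varphi$, and work with the finitely many graded bisimulation types ``up to depth $m$'' (equivalently, $\GML$-types of modal depth $\le m$), of which there are finitely many, say $t_1,\dots,t_r$, each pinned down by a characteristic $\GML$-formula $\chi_i$ of depth $\le m$. The key reduction is a \emph{locality-plus-counting normal form}: I want to show that, on finite pointed graphs, $\sim_{G\%}$-invariance of an $\FO$-formula of rank $m$ forces its truth to depend only on (a) the depth-$m$ graded bisimulation type of the point, and (b) the \emph{ratios} among the counts $n_i := |\{v : \mn{tp}_\Gmc(v)=t_i\}|$ — and moreover to depend on those ratios only in a way expressible by finitely many threshold-type comparisons that $\GMLG$ can in fact only read off trivially (presence/absence), because genuine ratio comparisons are not $\FO$-definable.

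Concretely, the argument I would run is a compactness/Ehrenfeucht–Fraïssé hybrid in the style of Otto's proof of the finite van Benthem theorem: suppose $\varphi$ is not equivalent to any $\GMLG$-formula of the obvious ``bounded'' shape; then for each candidate bound one produces finite pointed graphs $(\cG_1,v_1) \not\models$-vs-$\models \varphi$ that agree on the depth-$m$ type of the point and have \emph{proportional} type-count vectors (by taking disjoint-union-style ``inflations'' — replacing $\cG$ by many disjoint isomorphic copies, which multiplies every $n_i$ by the number of copies and changes no graded bisimulation type, hence produces a $\sim_{G\%}$-partner). The inflation trick is the engine: it lets me blow up a small witness of non-equivalence while staying inside one $\sim_{G\%}$-class, contradicting invariance once the blow-up is large enough to swamp the quantifier rank $m$ via a Hanf/Gaifman-locality count. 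Assembling the finitely many surviving ``local+ratio'' distinctions into an explicit $\GMLG$-formula — a Boolean combination of the $\chi_i$ (for the point) and of $\DiamondG\chi_i$'s (for nonemptiness of each type, the only ratio information $\FO$ can isolate) — finishes the construction.

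The main obstacle I expect is precisely the interaction of \emph{ratios} with \emph{$\FO$-definability}: I must argue that an $\FO$-formula invariant under scaling of the count-vector cannot actually detect any nontrivial ratio (only the support, i.e.\ which $n_i$ are $0$), because for any fixed $\FO$ quantifier rank the formula cannot distinguish count-vectors $(n_1,\dots,n_r)$ from $(n_1',\dots,n_r')$ once all nonzero entries exceed the rank (a threshold/Hanf-locality fact), while $\sim_{G\%}$ can move any sufficiently large count-vector to any proportional one. Making this quantitative — choosing the inflation factor as a function of $m$ and $r$ so that both graphs land in the ``all counts large'' regime and in the same $\sim_{G\%}$-class — is the delicate step, and it is where the proof in Appendix~\ref{app:real-gps} presumably does its real work; the rest is standard type-counting bookkeeping.
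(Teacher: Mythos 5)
Your direction $(2)\Rightarrow(1)$ is fine (an induction or, as the paper does, an appeal to invariance under $\sim_G^{c,\ell}$ both work, and your observation that $\DiamondG$ only reads off the zero/nonzero pattern of type counts is the right one).

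For $(1)\Rightarrow(2)$, however, there is a genuine gap. Your ``engine'' --- inflating $\cG$ to $N\cdot\cG$ and then invoking a Hanf/Gaifman threshold argument --- accomplishes only the very first step of what is needed, and the rest would not go through. Two separate problems:

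First, inflation does not manufacture a $\sim_{G\%}$-violating pair. You start from graphs $(\cG_1,v_1)\models\varphi$, $(\cG_2,v_2)\not\models\varphi$ that agree on the \emph{depth-$m$} type of the point and have proportional type-count vectors, and then you inflate both. But $N\cdot\cG_1\sim_{G\%}\cG_1$ and $N\cdot\cG_2\sim_{G\%}\cG_2$; inflation never relates $\cG_1$ to $\cG_2$. To get a contradiction with $\sim_{G\%}$-invariance you would need $N\cdot\cG_1 \sim_{G\%} N\cdot\cG_2$, and that requires full graded bisimilarity of the points (not just depth-$m$ agreement) plus proportionality over \emph{all} graded bisimulation types --- neither of which your hypotheses supply. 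So the inflated pair may simply fail to be $\sim_{G\%}$-related, and no contradiction is reached.

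Second, the Hanf-locality step does not port to bisimulation types. Hanf/Gaifman locality controls FO-indistinguishability in terms of isomorphism types of bounded-radius \emph{neighborhoods}, not graded bisimulation types; vertices can share a depth-$m$ graded bisimulation type while having wildly different neighborhood isomorphism types (a vertex on a short cycle vs.\ one on a long path, say). Bridging that gap is exactly what the girth-raising Cayley-graph product in \cite{otto2004,otto2019} is for, and it is the heart of the matter, not bookkeeping. In the paper this is Lemma~\ref{lem:ratio-upgrade-inverse-bisim-to-fo}, followed by two further model transformations (a tree-shaped expansion in Lemma~\ref{lem:bisim-upgrade-up-ungraded-to-up-graded} to weaken the ``up'' direction to ungraded, and a history construction in Lemma~\ref{lem:ratio-upgrade-bisim-to-inverse-bisim} to drop the ``up'' direction entirely). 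The crucial and nontrivial point --- where the paper departs from Otto --- is that each of these constructions must itself be a $\sim_{G\%}$-preserving transformation, which is achieved by arranging that every vertex of the original graph gets duplicated exactly the same number of times (by $|G|$ in the Cayley step, by the number of $\ell$-history choices $n_H$ in the history step). Disjoint-union inflation is used only once, at the very start (Proposition~\ref{prop:disjointcopy}), to conclude via Otto's Proposition~19 that $\varphi$ is equivalent to a simple $\ell$-local Gaifman formula; everything after that needs the specialized constructions. Your sketch collapses all of this into ``standard type-counting bookkeeping,'' which mischaracterizes where the difficulty actually lies.
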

The direction ``$2 \Rightarrow 1$'' of
Theorem~\ref{lem:ratio-bisim-invariance} is easy to prove by a
straightforward induction on the structure of $\GMLG$-formulae. The difficult part is  to show the  ``$1 \Rightarrow 2$'' direction, that is, every FO
formula $\varphi(x)$  invariant under $\sim_{G\%}$ is
equivalent to a $\GMLG$-formula. 
To achieve this, we combine and
extend techniques from~\citep{otto2004} and~\citep{otto2019}.  The
former provides a van Benthem/Rosen theorem that links global (ungraded)
bisimulation to $\MLG$, 
and the latter a theorem of the same
kind that links graded bisimulation to $\GML$. 

Our proof consists
of a sequence of results saying that if an $\FO$-formula $\varphi(x)$ is
invariant under $\sim_{G\%}$, then it is also invariant under certain
other notions of bisimulation that become increasingly
weaker.\footnote{While this provides a good intuition, it is not
  strictly true. For technical reasons, the intermediate notions of
  bisimulation sometimes get stronger in certain respects,
  e.g.\ they may use up-and-down features as used for modal logics with
  the converse modality.}
  We finally arrive at a notion of bisimulation that is called global $c$-graded $\ell$-bisimulation, where $c$ is a counting bound and $\ell$ a depth bound, both derived from $\varphi(x)$.
  Importantly, this notion of bisimulation has only a finite number of bisimulation types,
  and each type can be distinguished from
  the others using a characteristic $\GMLG$-formula.
  We can thus 
  construct the desired $\GMLG$-formula by
  taking the disjunction of all characteristic formulae for bisimulation types in which  $\varphi(x)$ is true.
To make sure that the ratio-property of $\sim_{G\%}$ is respected,
we replace several constructions from~\citep{otto2004} with
more careful ones.

Combining Theorem~\ref{lem:ratio-bisim-invariance} and
Lemma~\ref{lem:gps-invariant-under-grg-bisim} completes the proof sketch of
Theorem~\ref{thm:real-GPS-GMLG}.
Moreover,
as a special case, the construction in the proof of
Lemma~\ref{lem:gmlg-to-real-gps} shows that $\PLG$-formulae can be translated
into $\GT$s with both soft-attention and average hard-attention. A minor extension of our
techniques used to prove Theorem~\ref{lem:ratio-bisim-invariance}, then shows the
following.

\begin{restatable}{theorem}{thmrealgtplg}\label{thm:real-GT-PLG}
Relative to $\FO$, the following have the same expressive power:
$\PLG$, soft-attention $\GT$s, average hard-attention $\GT$s.
\end{restatable}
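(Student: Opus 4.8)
The plan is to reuse, in specialized form, the two-sided argument behind Theorem~\ref{thm:real-GPS-GMLG}: in the edge-blind world of graph transformers the local modality disappears and graded bisimulation types collapse to propositional types, so $\GMLG$ becomes $\PLG$ and $\sim_{G\%}$ becomes a propositional, edge-oblivious variant. For the direction $\PLG \to \GT$ I would invoke Lemma~\ref{lem:gmlg-to-real-gps} directly, as already noted in the text: a $\PLG$-formula is a $\GMLG$-formula with no local diamond $\Diamond_{\geq k}$, so the construction never touches the message-passing layers and already produces a $\GT$, for both soft-attention and average hard-attention. Since every $\PLG$-formula is an $\FO$-formula, this gives one half of the ``relative to $\FO$'' equivalence for free.

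For the converse I would introduce the propositional analogue of $\sim_{G\%}$. Call $(\cG_1,v_1)$ and $(\cG_2,v_2)$ \emph{global-ratio propositionally bisimilar}, written $(\cG_1,v_1)\sim_{P\%}(\cG_2,v_2)$, if $\lambda_1(v_1)=\lambda_2(v_2)$ and there is a rational $q>0$ with $|\{v\in V_1\mid\lambda_1(v)=\tau\}| = q\cdot|\{v\in V_2\mid\lambda_2(v)=\tau\}|$ for every $\tau\subseteq\Pi$ --- with no condition on edges, reflecting that $\GT$s are bags-of-vertices models. The first step is an invariance lemma: every soft-attention or average hard-attention $\GT$ is invariant under $\sim_{P\%}$. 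This is the layer-by-layer argument of Lemma~\ref{lem:gps-invariant-under-grg-bisim} specialized to transformer layers: one shows inductively that after each layer a vertex's feature depends only on its label and on the ratios of the label-type counts. Perceptron, $\MLP$ and $\FF$ modules act row-wise; in a self-attention head, both $\AH$ and the stable $\softmax$ assign a vertex an attention weight that depends only on the types of the two vertices involved, and the common scaling factor $q$ cancels between numerator and denominator of the normalization, so the head's output is genuinely ratio-invariant.

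The second and main step is the $\PLG$-version of the van Benthem/Rosen theorem: every $\FO$-formula $\varphi(x)$ invariant under $\sim_{P\%}$ is equivalent, over all finite pointed $\Pi$-labeled graphs, to a $\PLG$-formula. (The converse, that $\PLG$-formulae are $\sim_{P\%}$-invariant, is a routine induction.) One can obtain this by specializing the machinery used for Theorem~\ref{lem:ratio-bisim-invariance}, but I expect a short direct proof. Deleting all edges of $\cG$ yields a graph that is $\sim_{P\%}$-related to $\cG$ (take $q=1$), so $\varphi$'s truth at $(\cG,v)$ depends only on the underlying $\Pi$-colored set with distinguished point $v$; on such structures an $\FO$-formula $\varphi$ of quantifier rank $r$ sees only the color of $v$ and the label-type counts truncated at $r$. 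Given $(\cG_1,v_1),(\cG_2,v_2)$ with $\lambda_1(v_1)=\lambda_2(v_2)$ and the same set $S$ of realized label-types, passing from each $\cG_i$ to the disjoint union of $r$ copies of $\cG_i$ is a $\sim_{P\%}$-step (with $q=1/r$) and makes every realized type have count $\geq r$; the two blow-ups, with their distinguished vertices, then agree on all $\FO$-formulas of quantifier rank $r$ with one free variable, so $\varphi$ agrees on them, hence on $(\cG_1,v_1)$ and $(\cG_2,v_2)$. Thus $\varphi$'s truth depends only on $\lambda(v)$ and $S$, and $\varphi$ is equivalent to the finite disjunction, over the pairs $(\tau,S)$ with $\tau\in S$ for which $\varphi$ holds, of $\beta_\tau \wedge \bigwedge_{\tau'\in S}\DiamondG\beta_{\tau'} \wedge \bigwedge_{\tau'\notin S}\neg\DiamondG\beta_{\tau'}$, where $\beta_\tau$ is the propositional formula defining type $\tau$ --- a $\PLG$-formula. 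Combining the invariance lemma and this theorem yields the remaining inclusion and completes the proof.

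The main obstacle is conceptual rather than computational: $\sim_{P\%}$ is strictly finer than $\PLG$-equivalence --- it separates graphs whose label-type count vectors realize the same types but are not proportional --- so it is essential that we restrict to $\FO$-definable properties, and the blow-up argument above is precisely what exploits the fact that $\FO$ cannot count past a fixed threshold to force an $\sim_{P\%}$-invariant $\FO$-formula down into $\PLG$. The only place a genuine (if short) calculation is needed is the $\softmax$ case of the invariance lemma, where one must verify ratio-invariance rather than mere invariance under the support of the type multiset.
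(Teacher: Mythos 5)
Your proof follows the same high-level decomposition as the paper: one direction via the observation that the $\GMLG$-to-$\GPS$ construction of Lemma~\ref{lem:gmlg-to-real-gps} never invokes message-passing for $\PLG$-formulae, and the other via an invariance lemma for a bags-of-vertices equivalence plus a van~Benthem/Rosen-style theorem. Your relation $\sim_{P\%}$ is literally the paper's $\sim_{\lambda\%}$ (Definition~\ref{def:label-ratio-equiv}), and your invariance lemma is Proposition~\ref{prop:gts-invariant-under-label-ratio}, obtained exactly as you describe by dropping the message-passing parts of the proof of Lemma~\ref{lem:gps-invariant-under-grg-bisim}.

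Where you genuinely diverge is in the proof of the van~Benthem/Rosen theorem (Lemma~\ref{lem:label-ratio-invariant-fragment-is-plg}). The paper first shows (Lemma~\ref{lem:upgrade-label-ratio-to-bisim}) that a $\sim_{\lambda\%}$-invariant $\FO$-formula is invariant under $\sim_G^{0,0}$, but gets there by invoking the full chain of machinery developed for the $\GMLG$ case --- Lemmas~\ref{lem:ratio-upgrade-inverse-bisim-to-fo}, \ref{lem:bisim-upgrade-up-ungraded-to-up-graded} and \ref{lem:ratio-upgrade-bisim-to-inverse-bisim} --- to obtain invariance under $\sim_G^{c',\ell'}$, and only then uses edge-deletion. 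You instead apply edge-deletion first, reducing to $\Pi$-colored sets with a distinguished point, and close the argument with an elementary Ehrenfeucht--Fra\"iss\'e/blow-up argument: multiply both structures so every realized colour class exceeds the quantifier-rank threshold, note that multiplication is a $\sim_{P\%}$-step, and read off the $\PLG$-formula directly as a disjunction of characteristic formulae $\beta_\tau \wedge \bigwedge_{\tau'\in S}\DiamondG\beta_{\tau'} \wedge \bigwedge_{\tau'\notin S}\neg\DiamondG\beta_{\tau'}$. This is shorter and self-contained and does not require the reader to have digested the $\GMLG$ proof; the paper's route is heavier but reuses lemmas that must be proved anyway for Theorem~\ref{thm:real-GPS-GMLG}, and it records the intermediate $\sim_G^{0,0}$-invariance, which is then fed into the generic Lemma~\ref{lem:gmlg-bisim-equivalence}. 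Both give the same conclusion.

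One small technical slip in your blow-up argument: blowing up by $r$ copies guarantees each realized colour has count $\geq r$, but to make the duplicator win the $r$-round game on colored sets with a distinguished element you need the realized colour classes to have size $\geq r+1$ in both structures (otherwise the spoiler exhausts the colour containing the distinguished vertex after $r$ fresh picks on the bigger side). This is harmless --- just blow up by $r+1$ (or any $N\ge r+1$) copies instead --- but as written the claim "then agree on all $\FO$-formulas of quantifier rank $r$" is off by one. Your final remark that $\sim_{P\%}$ is strictly finer than $\PLG$-equivalence, so that the $\FO$ restriction is essential, is correct and is exactly the point the paper illustrates with relative counting in Example~\ref{ex:reals:relativecounting}.
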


\section{Characterizing Float-Based Transformers}\label{Characterizing float-based transformers}

We give characterizations of $\GPS$-networks and $\GT$s based on floating-point numbers
via the logics $\GMLGC$ and $\PLGC$, respectively. 
Before the characterizations, we introduce floats and float-based $\GPS$-networks and $\GT$s.

\subsection{Floating-Point Numbers and Arithmetic}

We define the concepts of floating-point numbers based on the IEEE 754 standard \citep{IEEE754-standard}.
Let $p, q \in \Z_+$.
A \textbf{floating-point number} (over $p$ and $q$) is a string of the form 
\[
b_0 b_1 \cdots b_{p + q} \in \{0,1\}^{p+q+1}.
\]
The bit $b_0$ is called the \textbf{sign}, the string $\be = b_1 \cdots b_q$ the \textbf{exponent} and $\bs = b_{q+1} \cdots b_{p+q}$ the \textbf{significand}.
Let $a = 2^{p-1}$, $b = 2^{q-1}-1$, 
and let $e$ and $s$ be the non-negative integers represented in binary by $\be$ and~$\bs$. Then the above floating-point number is interpreted as the real number 
$
(-1)^{b_0}\frac{s}{a} 2^{e -b}.
$
As an exception, the float with $\bs = 0^p$, $\be = 1^q$ and $b_0=0$ (resp. $b_0 = 1$) corresponds to $\infty$ (resp. $-\infty$).  
When the context is clear, we identify a float with the real number (or $\infty$, $-\infty$) that it represents.
A float is \textbf{normalized} if $b_{q+1} \neq 0$, and \textbf{subnormalized} if $b_{q+1} = 0$ and $\be = 0^q$. 
A \textbf{floating-point format} $\cF(p, q)$ over $p$ and $q$ consists of all normalized and subnormalized floating-point numbers over $p$ and $q$ and the symbols $\infty$, $-\infty$, and $\NaN$ (`not-a-number'), and when clear we may write $\cF$ instead of $\cF(p,q)$.
Next, we discuss basic arithmetic operations over floating-point formats:
addition~$+$, subtraction~$-$, multiplication~$\cdot$, division~$\div$ 
and square root~$\sqrt{x}$.
The definition of each of the operations is to first "compute" to unlimited precision in real arithmetic (extended with $\infty$ and $-\infty$) 
and then rounding to the nearest float in the format, with ties rounding to the float with an even least significant bit.
Undefined results, such as $\frac{\infty}{\infty}$, are mapped to $\NaN$.
If any input is $\NaN$, the output is $\NaN$, i.e., our $\NaN$ is \emph{silent} and propagated through the computation. 
The exponential function $\exp(x)$ over floats is not a basic operation and is implemented in a standard way, using basic operations, range reductions and polynomial approximations. 
Background and a discussion on these concepts is in
Appendix~\ref{appendix: floats}.

\subsection{Float-Based Transformers}\label{section: Transformers with floats}

We introduce float-based $\GT$s, $\GPS$-networks and $\GNN$s. To define them, we replace reals with floats, but we must also carefully specify how float operations are performed. One 
reason is that many float operations (e.g. sum) are not associative due to rounding errors between operations. Thus, switching the order of operations can affect the outcome.
\begin{example}
    Consider the sum of the real numbers $-1$, $1$ and $4$ representable in the format $\cF(2,3)$: here we have $(-1 + 1) + 4 = 0 + 4 = 4$ but $-1 + (1 + 4) = -1 + 4 = 3$. Note that in the latter equation, the precise sum of $1$ and $4$ would be $5$, which is not representable in the format $\cF(2,3)$ and is thereby rounded to the nearest number; both $4$ and $6$ are equally near, and $4$ has the even least significant bit.
\end{example}
The $\softmax$ function, the sum aggregation function and some matrix multiplications in attention heads take a sum over the features of vertices in the studied graph, and are thus affected by this non-associativity issue.
In the worst case, this can violate the isomorphism invariance of these learning models, which is undesirable. 
For example, in typical real-life implementations, the set $V$ of vertices in the studied graph is associated with some implementation-related, implicit linear order $<^V$ (that is not part of the actual graph). 
Then isomorphism invariance can be violated if the sum aggregation sums in the order $<^V$.
Hence, it is better to order the floats instead of the vertices.
We make the natural assumption that floats are always summed in increasing order, which results in models that are 
isomorphism invariant. This is further justified by numerical stability  \citep{wilkinson, floats_robertazzi, higham_article}.

Given a floating-point format $\cF$, we let $\SUM_{\cF}$ 
denote the operation that maps a multiset $N$ of floats to the sum $f_1 + \cdots + f_{\ell}$ where each $f_i$ appears $N(f_i)$ times and the floats appear and are summed in increasing order.  We recall from \citep{ahvonen_neurips} the following important result on \emph{boundedness} of float sums.
\begin{proposition}\label{proposition: floating-point saturation}
    For all floating-point formats $\cF$, there exists a $k \in \N$ such that for all multisets $M$ over floats in $\cF$, we have $\SUM_\cF(M) = \SUM_\cF(M_{|k})$.
\end{proposition}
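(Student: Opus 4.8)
The key observation is that a floating-point format $\cF$ is a \emph{finite} set: there are only $2^{p+q+1}$ bit strings, hence only finitely many floats (plus $\infty, -\infty, \NaN$). Moreover, $\cF$ has a largest finite positive float, call it $f_{\max}$, and a smallest positive float, call it $f_{\min}$ (the smallest subnormalized number). The plan is to exploit the \emph{underflow} phenomenon: once a running partial sum is large enough, adding another copy of a small float either does nothing at all (it rounds back to the same value) or, at worst, it eventually saturates to $\infty$. Either way, the contribution of any individual float value $f$ to $\SUM_\cF(M)$ stops changing once $f$ occurs sufficiently many times in $M$.

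First I would fix notation: since $\SUM_\cF$ sums in increasing order of float value, group $M$ by its distinct values $f^{(1)} < f^{(2)} < \cdots < f^{(r)}$ occurring in $M$ (ignoring $\NaN$, which I handle separately). The sum is computed as $(((c_1 \text{ copies of } f^{(1)}) + c_2 \text{ copies of } f^{(2)}) + \cdots)$, where $c_j = M(f^{(j)})$. Within each block of equal values $f^{(j)}$, I would argue that there is a bound $k_j$, depending only on $\cF$, such that adding more than $k_j$ copies of $f^{(j)}$ to any running total produces the same result as adding exactly $\min\{c_j, k_j\}$ copies: indeed, the partial sums form a monotone (nondecreasing in absolute value, same-sign) sequence inside the finite set $\cF$, so after at most $|\cF|$ steps it becomes stationary (either it stabilizes because the small addend underflows relative to the growing total, or it reaches $\pm\infty$ and stays there). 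Taking $k := |\cF|$ (or any larger constant) works uniformly. Hence replacing $M$ by $M_{|k}$ changes $c_j$ to $\min\{c_j,k\}$ for each $j$, and by the stationarity argument applied block by block — crucially, the set of distinct values present is unchanged, so the summation order is unchanged — the final result is identical: $\SUM_\cF(M) = \SUM_\cF(M_{|k})$. The $\NaN$ case is immediate since $\NaN$ is silent and propagated: $\NaN \in M$ iff $\NaN \in M_{|k}$ (as $k \geq 1$), and then both sums are $\NaN$.

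The main obstacle is making the ``block by block'' stationarity argument fully rigorous, because the running total entering block $j$ depends on how many copies of $f^{(1)},\dots,f^{(j-1)}$ were summed, which changes when we pass to $M_{|k}$. The clean way around this is a single invariant: prove by induction on the blocks that the running total after processing blocks $1,\dots,j$ in $M$ equals the running total after processing blocks $1,\dots,j$ in $M_{|k}$. The inductive step reduces to the one-block claim: if two summations start from the \emph{same} accumulator value $t \in \cF$ and then add $c$ resp.\ $\min\{c,k\}$ copies of the same float $f$, they end at the same value — which follows from the monotone-sequence-in-a-finite-set observation, taking $k = |\cF|$ so the sequence $t, t+f, (t+f)+f, \dots$ has become stationary before step $k$. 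Care is needed that signed cancellation does not occur within a block: since all addends in a block are equal (hence same sign), the partial sums are monotone in the appropriate sense, so no oscillation is possible and ``stationary after $|\cF|$ steps'' is valid. This also tacitly uses that the order of summation is by value, so both $M$ and $M_{|k}$ process the blocks in the same sequence; that is exactly why the definition of $\SUM_\cF$ pins down the order.
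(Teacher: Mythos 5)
The paper does not actually prove Proposition~\ref{proposition: floating-point saturation}; it cites the result from \cite{ahvonen_neurips}, so there is no in-paper proof to compare against. Judged on its own terms, your argument is correct, and the two key moves — the block-by-block invariant tied to the fact that $\SUM_\cF$ sums in increasing value order and that the support of $M$ equals that of $M_{|k}$, and the one-block stationarity claim via a monotone sequence in the finite set $\cF$ — are exactly what is needed to make this rigorous. One slip: the parenthetical ``(nondecreasing in absolute value, same-sign)'' is false in general, since the accumulator entering a block may have the opposite sign from the block value $f$ (e.g.\ $t=-5$, $f=3$ gives $-5,-2,1,\dots$), so the partial sums need not preserve sign nor be monotone in absolute value. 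You do catch and repair this at the end (``monotone in the appropriate sense''): the correct invariant is simply that for finite $f>0$ the sequence $t_i$ is nondecreasing (and for $f<0$ nonincreasing) because rounding-to-nearest cannot cross the float $t_i$ in the wrong direction, and once $t_{i+1}=t_i$ the deterministic recurrence keeps it constant forever; combined with finiteness of $\cF$ this gives stationarity after at most $|\cF|$ steps. You should also note explicitly that $\pm\infty$ and $0$ blocks are stationary after a single addition and that a $\NaN$ cannot arise mid-block when $f$ is finite, so the sequence stays in the totally ordered part of $\cF$; as you already observed, a $\NaN$ occurring in $M$ makes both sums $\NaN$ since $k\geq 1$.
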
 
To see that the above proposition holds, observe that $\SUM_\cF$ repeatedly adds the same float the number of times it appears in the sum.
In the format $\cF(2, 2)$, the number $\frac{1}{2} \cdot 2^{-1} = 0.25$ is exactly representable.
Summing $0.25$ repeatedly in this format gives $0.25$ after one addition, $0.50$ after two additions, $0.75$ after three additions, $1.0$ after four additions, and $1.0$ after five additions, since $1.25$ is not in $\cF(2,2)$ and rounds to $1.0$. 
Thus, the sum ``saturates beyond a threshold''. Using this phenomenon, it is easy to obtain a proof of Proposition~\ref{proposition: floating-point saturation}

We say that an aggregation function $\AGG$ is \textbf{bounded} 
if there exists a $k \in \N$ such that $\AGG(M) = \AGG(M_{|k})$ for all $M$.
Apart from sum, also mean aggregation is similarly bounded. 
Furthermore, we assume that $\softmax$ is implemented for a floating-point format $\cF$ by using the above sum $\SUM_\cF$ in the denominator, and the remaining operations are carried out in the natural order, i.e.,
we first calculate the bias $b$, then values $x_j - b$, then the exponents, 
and finally the division.
Likewise, we assume $\AH$ is implemented for $\cF$ by
calculating the denominator in $\frac{1}{\abs{\cI_\bx}}$ using the same approach as
\citep{li2025characterizingexpressivitytransformerlanguage}, i.e.,
calculating it as $\SUM_\cF ( M )$, where $M$ is the multiset over $\cF$ consisting of precisely $\abs{\cI_\bx}$ instances of the float $1$,
and then performing the division.

\subsubsection{Float-Based Learning Models.}

\emph{Floating-point $\GT$s, denoted by $\GTF$,  are defined in the same way as $\GT$s based on reals, except that they use floats in feature vectors and float operations where the order of operations is as specified above. Likewise for $\GPS$-networks, $\GNN$s, $\MLP$s, etc.
We further assume that these models always use aggregation functions that are bounded. This is a natural assumption as sum, max and mean are all bounded on floats by the above findings.}

We call these learning models \textbf{simple} when the $\MLP$s are simple\footnote{An exception is the final Boolean vertex classifier, which is otherwise a simple $\MLP$ but uses the Heaviside function.}
and the aggregation functions are $\SUM_{\cF}$. In fact, $\GT$s and $\GPS$-networks were originally defined based on simple $\MLP$s \citep{Dwivedi,Rampasek}. 
We do not fix a single float format for all $\GTF$s, $\GPSF$-networks, $\GNNF$s, etc.; instead, each of them is associated with \emph{some} float format. In our translations, the format is assumed arbitrary when translating them into logics, but can be chosen freely in the other direction.

\subsection{Characterizations}\label{sect: float characterization}

Next, we provide logical characterizations for $\GTF$s and $\GPSF$-networks with both soft and average hard-attention. The characterizations are absolute, i.e., they do not require relativizing to a background logic such as FO. Our float-based $\GTF$s and $\GPSF$-networks also do not require step function activated $\MLP$s aside from the classification heads.

First, we make an observation about float-based multiplication relevant to our translation techniques. When multiplying two floating-point numbers that are very close to zero, \textbf{underflow} occurs: the exact result is so small that all significant bits are lost, and the output is $0$. For instance, underflow can occur in attention heads in some matrix multiplications. 
The following proposition demonstrates this phenomenon. In the proposition and the proof that follows, we identify each float with the real number that it represents.

\begin{restatable}{proposition}{floatunderflow}\label{proposition: floating-point underflow 2}
    Let $\cF$ be a floating-point format, let $f$ be the smallest positive float in $\cF$ and let $k$ be some even integer such that $\frac{k}{2}$ is accurately representable in $\cF$. For all $F \in \cF$, $\abs{F} \leq \abs{\frac{1}{k}}$ if and only if $F \cdot (\frac{k}{2} f) = 0$.
\end{restatable}

For the proof, note that since $\frac{k}{2}$ is accurately representable in $\cF$, 
then so is also the precise product $\frac{k}{2}f$.
Now, we observe that $\abs{F} \leq \abs{\frac{1}{k}}$ if and only if the precise product $F \cdot (\frac{k}{2} f)$ belongs to the closed interval $[-\frac{1}{2}f, \frac{1}{2}f]$. 
All numbers in this interval round to $0$.

Now, we give our logical characterization for $\GTF$s. Recall that float-based computing models by definition use bounded aggregation functions, and as explained, this is a natural assumption.
By `constant local aggregation functions', 
we intuitively mean that in message-passing layers, vertices cannot distinguish if a message was received from an out-neighbor or from any other vertex.

\begin{restatable}{theorem}{thmPLGCSGTAHGT}\label{theorem: PLGC = SGT = AHGT}
    The following have the same expressive power: 
        $\PLGC$, 
        soft-attention $\GTF$s and
        average hard-attention $\GTF$s (and
        $\GNNGCF$s with constant local aggregation functions).
        This also holds when the $\GTF$s and $\GNNGCF$s are simple.
\end{restatable}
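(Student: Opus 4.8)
The plan is to prove the theorem by establishing a cycle of simulations between the three (really four) model classes and the logic $\PLGC$, treating $\PLGC$ as the pivot. Concretely, I would show: (i) every $\PLGC$-formula is captured by a simple soft-attention $\GTF$ and by a simple average hard-attention $\GTF$; (ii) every soft-attention or average hard-attention $\GTF$ (simple or not) is captured by a $\PLGC$-formula; and (iii) the equivalence with $\GNNGCF$s with constant local aggregation is obtained either by a direct two-way translation or, preferably, by citing the existing characterization of $\GNNGCF$s from~\cite{ahvonen_neurips_arxiv} together with the $\PLGC$-side of (i)--(ii). Since every simple $\GTF$ is a $\GTF$, the ``simple'' refinement follows automatically once the formula-to-model direction in (i) already produces simple models, and the model-to-formula direction in (ii) holds for all $\GTF$s a fortiori.

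For direction (i), $\PLGC \to \GTF$, I would mimic the real-case construction behind Lemma~\ref{lem:gmlg-to-real-gps} but restricted to propositional logic with the counting global modality. Parse a $\PLGC$-formula into its subformulae; allocate one feature coordinate per subformula; process layers in order of subformula depth. Boolean connectives ($\neg$, $\land$) are handled by a simple $\ReLU$-activated $\MLP$ inside $\FF^{(i)}$ exactly as in propositional-logic GNN constructions. The novel ingredient is $\DiamondG_{\geq k}\psi$: here I would use a single attention head with $W_Q = W_K = 0$ so that all attention scores are equal, making both $\softmax$ and $\AH$ produce the uniform distribution $1/n$ over all $n$ vertices; setting $W_V$ to project out the coordinate holding the truth value of $\psi$, the head outputs $(\#\{u : \cG,u\models\psi\})/n$ in every row. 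This is a \emph{relative} count, not an absolute one — and this is exactly where the float semantics is essential. Here I invoke Proposition~\ref{proposition: floating-point underflow}: because the aggregation/attention denominator effectively multiplies by a small float and the significand has bounded precision, the ratio $m/n$ underflows to $0$ precisely when $m$ is below a fixed threshold independent of $n$, i.e., the model can detect ``$m \geq k$'' for the fixed constant $k$ built into the format, by first scaling so that $1/n$-sized contributions land in the underflow regime unless there are at least $k$ of them. Composing this with a threshold in the following $\FF$-layer recovers $\DiamondG_{\geq k}\psi$ as a Boolean coordinate. The classification head $C$ is the Heaviside-activated $\MLP$ reading off the root subformula's coordinate. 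One must check the construction stays within two perceptron layers per $\MLP$ (simplicity) and uses only $\SUM_{\cF}$ aggregation; both are arrangeable by standard gadgetry.

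For direction (ii), $\GTF \to \PLGC$, the key structural fact is boundedness: by Proposition~\ref{proposition: floating-point saturation} every float sum (hence every bounded aggregation, every $\softmax$ denominator computed via $\SUM_{\cF}$, and the $\AH$ normalizer) depends only on the $k$-restriction of the multiset of incoming feature vectors, for a fixed $k$ depending only on the format. Since feature vectors over a fixed format range over a \emph{finite} set $\cF^d$, each layer's update for a vertex $v$ is a function of: $v$'s own current feature vector and, for each possible feature vector $w\in\cF^d$, the number $\min(k, \#\{u : \lambda^{(i)}(u)=w\})$ of vertices globally carrying $w$ (capped at $k$). Crucially — and this is why $\GTF$s rather than $\GPSF$s give \emph{propositional} logic — a $\GTF$ has no message-passing layer, so no local out-neighbourhood counting enters; everything is either self-state or global capped counts. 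I would therefore prove by induction on $i$ that ``$\lambda^{(i)}(v) = w$'' is expressible in $\PLGC$ for every $w\in\cF^d$: the base case is propositional ($P$ is an $\MLP$ applied to the Boolean label vector, so $\lambda^{(0)}(v)=w$ is a Boolean combination of the $p\in\Pi$); the inductive step writes the new feature vector as a definable function of $v$'s old feature vector (a propositional condition, by IH) and finitely many global capped counts $\DiamondG_{=j}(\bigvee_{w:\,\lambda^{(i)}(u)=w}\,\text{``}\lambda^{(i)}(u)=w\text{''})$ for $j<k$ and $\DiamondG_{\geq k}(\cdots)$, all of which are $\PLGC$-formulae by IH. Finitely many exhaustive cases over $\cF^d$ give a finite $\PLGC$-formula per coordinate, and the output is the disjunction of cases where $C$ answers $1$.

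The main obstacle I anticipate is the float-arithmetic bookkeeping in direction (i): verifying that the underflow trick faithfully encodes ``$\geq k$'' uniformly in $n$. One must choose the scaling constant $F$ from Proposition~\ref{proposition: floating-point underflow} correctly, ensure the intermediate $\softmax$/$\AH$ computation genuinely yields (a known rescaling of) $m/n$ despite the non-associativity of $\SUM_{\cF}$ and the subtraction of the bias $b$ in the stable $\softmax$, and confirm that the post-processing threshold in $\FF$ is robust to the rounding that occurs at each step. A secondary subtlety is matching the exact constant $k$ of Proposition~\ref{proposition: floating-point saturation} with the constant $k$ appearing in the $\DiamondG_{\geq k}$ modalities — the logic must use \emph{its own} finite counting parameters while the model's boundedness parameter is dictated by the format; reconciling these (by padding the format in the formula-to-model direction, and by noting $\PLGC$ permits arbitrary $k$ in the other direction) needs care but is routine. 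For the $\GNNGCF$ clause, the cleanest route is to note that constant local aggregation collapses the message-passing layer to a function of the vertex's own state plus (capped) global counts — structurally identical to what a transformer layer contributes — so the same induction applies; alternatively one cites~\cite{ahvonen_neurips_arxiv} if it already yields $\PLGC$ or an equivalent for this restricted class.
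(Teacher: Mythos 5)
Your direction (ii) (translating a $\GTF$ into $\PLGC$ by using Proposition~\ref{proposition: floating-point saturation}, the finiteness of $\cF^d$, and an induction on layers tracking capped global counts) and your treatment of the $\GNNGCF$ clause match the paper's strategy closely. The gap is in direction (i), specifically in how you simulate $\DiamondG_{\geq k}\psi$.

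You propose setting $W_Q = W_K = 0$ so that softmax/AH produce uniform attention $\round(1/n')$, then projecting the truth value of $\psi$ via $W_V$, obtaining roughly $m/n$ where $m$ is the number of satisfying vertices, and then using underflow (Proposition~\ref{proposition: floating-point underflow}) to threshold at $k$. This does not work: the attention head's output is $\SUM_\cF$ of $m$ copies of $\round(1/n')\cdot(\text{value})$, and the underflow of each per-term product depends on $n$, not on $m$. For fixed $m\geq k$ and growing $n$, $\round(1/n')$ shrinks until saturation, so the per-term product (and hence the whole sum) can vanish regardless of $m$; conversely, for small $n$ even $m<k$ yields a non-zero sum. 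There is no $n$-independent threshold on ``$m/n$'' that decides $m\geq k$, which is the heart of why, in the real-valued setting, $\GT$s can only do \emph{relative} counting. What the paper's construction (Theorem~\ref{theorem: logic to GPS}) actually does is choose $W_Q=W_K$ with a very large entry $F$, so that before softmax the rows have entries $0$ or $\round(F^2)$; after the stable-softmax bias-and-exponentiate step, positions with $0$ become $e^{-\round(F^2)}$ which \emph{itself} underflows, and the normalization is therefore over only the $\ell$ satisfying vertices, making every relevant attention weight $\round(1/\ell')$ --- independent of $n$. The value matrix then injects the underflow constant $\tfrac{k}{2}f$, and Proposition~\ref{proposition: floating-point underflow 2} tests exactly $\round(1/\ell')\leq 1/k$, i.e.\ $\ell\geq k$. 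The paper also needs two further devices you omit: a separate all-zero-$W_Q$ head to distinguish $\ell=0$ (where the first head's rows behave like the $n$-normalized case), and an extra argument for the half of the thresholds $k$ where $\round(1/k)>1/k$, where the underflow test only brackets $\ell$ between $k-2$ (or $k-1$) and $k+1$ and a finer numerical distinction is needed, plus an extra layer to broadcast the decision to all vertices. As written, your direction (i) would need to be replaced by this more careful attention-head design.
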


    We provide a more detailed proof for Theorem~\ref{theorem: PLGC = SGT = AHGT} in 
    Appendix~\ref{Appendix: PLGC = SGT = AHGT},
    but we \emph{sketch} the proof here.
    In the direction from $\GTF$s to logic, 
    the general idea is that for each vertex $v$ we simulate its feature vector $\bx_v$ after each transformer layer by simulating each bit of $\bx_v$ by a single formula, i.e., a sequence of formulae simulates the whole vector $\bx_v$. As a last step, we combine these formulae recursively into a single formula that simulates the output of the classification head.
    There are two key insights for simulating bits. 
    First, each `local step' of a $\GTF$ where a vertex does not need to know the features of any other vertices (e.g. $\MLP$s or matrix products $XW_Q$, $XW_K$ and $XW_V$) can be expressed as a function $f_\cF \colon \cF^n \to \cF^m$.
    As floats are bit strings, we can identify $f_\cF$ with a \emph{partial} function $f_{\mathbb{B}} \colon \{0,1\}^{kn} \to \{0,1\}^{km}$, where $k$ is the number of bits in $\cF$.
    $\PL$ 
    is expressively complete for expressing Boolean combinations, i.e., 
    each function $g \colon \{0,1\}^n \to \{0,1\}$ has an equivalent $\PL$-formula as $g(\bx)$ is simply a Boolean combination of the values in $\bx$. 
    Thus, we can construct an equivalent $\PL$-formula for each output bit of $f_{\mathbb{B}}$, and the full function $f_{\mathbb{B}}$ can be simulated by a sequence of formulae.
    Second, for the remaining `non-local' steps, it suffices to know the features of other vertices in the `global sense', i.e., the edges of the graph are not used. 
    Due to Proposition~\ref{proposition: floating-point saturation}, 
    the float sums appearing in attention heads are bounded for some $k$, i.e., after $k$ copies of a float $F$, further instances of $F$ do not affect the sum. 
    Since the attention heads sum over the features of all vertices,
    it suffices for a vertex to be able to distinguish a bounded number of each possible feature vector appearing in the graph, and we can count up to this bound with the counting global modality.

    For the converse, to translate a $\PLGC$-formula $\varphi$ into a simple $\GTF$, we use a similar strategy as with reals: we compute the truth values of the subformulae of $\varphi$ one at a time, using multiple transformer layers per subformula.
    Again the truth value of each subformula is represented as $0$ or $1$ in the feature vector for each vertex, i.e., in the feature matrix there is a column for each subformula that is used to encode the truth value of the subformula in each vertex. There are also some auxiliary columns in the feature matrix.
    The translation involves making use of the properties of floats and floating-point operations.
    The operators $\neg$ and $\land$ are easy to handle by using the $\MLP$s of the transformer layers.
    The hardest part is to simulate modalities $\DiamondG_{\geq k}$ by using $\MLP$s \emph{and} attention modules. 
    Assume that we are computing the truth value of a subformula of the form $\DiamondG_{\geq k} \psi$ and the truth value of $\psi$ has already been encoded into a column $i$ of the current feature matrix. Then we build an attention head that checks if the number $\ell$ of $1$s in the column $i$ is at least $k$. Intuitively, we construct a query matrix $W_Q$ and a key matrix $W_K$ such that the matrix $\softmax\big( (X W_Q) (X W_K)^{\T} /\sqrt{d_h} \big)$ has an entry in every row that has the value $\frac{1}{\ell}$ (or a suitable rounded value). A similar construction is also possible by using the average-hard function. Then due to Proposition~\ref{proposition: floating-point underflow 2}, we can construct a value matrix which uses underflow to check if $\frac{1}{\ell} \leq \frac{1}{k}$. After that, by using MLPs, we can distinguish when $\ell \geq k$ and when $\ell < k$. 
    This completes the proof sketch.

Before characterizing $\GPSF$-networks, we prove a helpful characterization of float-based $\GNN$s.

\begin{restatable}{theorem}{thmGNNGML}\label{theorem: GNN GML}
    The following pairs have the same expressive power (denoted by $\equiv$): 
    \[
    \GNNF \equiv \GML, \GNNGF \equiv \GMLG \text{ and } \GNNGCF \equiv \GMLGC.
    \]
    This also holds when each type of $\GNNF$ is simple.
\end{restatable}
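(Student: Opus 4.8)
\textbf{Proof plan for Theorem~\ref{theorem: GNN GML}.}

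The plan is to prove the three equivalences in parallel, since they differ only in the global-readout component, and to treat each as two containments. For the direction from logic to $\GNNF$, I would adapt the classical construction (as in \cite{Barcelo_GNNs}) that builds a $\GNN$ computing, layer by layer, the truth values of the subformulae of a given $\GML$-formula $\varphi$, ordered by increasing modal depth. Each feature vector carries one coordinate per subformula; the initial $\MLP$ $P$ installs the truth values of the proposition letters; a message-passing layer handles one modal layer at a time, using the aggregation $\SUM_\cF$ to count out-neighbours satisfying a subformula $\psi$ and a simple $\ReLU$-$\MLP$ in the $\COM$ part to implement the threshold $\Diamond_{\geq k}\psi$ and the Boolean connectives $\neg,\land$. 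For $\GMLG$ and $\GMLGC$ I add, respectively, a non-counting and a counting global readout gadget $R$ to implement $\DiamondG\psi$ and $\DiamondG_{\geq k}\psi$; in the counting case $R$'s aggregation is $\SUM_\cF$ (which is bounded, so it can count only up to some fixed $k_\cF$, but that is harmless since it is applied after we have already computed the truth value of $\psi$ and can rescale), while in the non-counting case $R$'s aggregation is set-based. Two points need care in the float setting: first, since $\SUM_\cF$ saturates (Proposition~\ref{proposition: floating-point saturation}), the value returned for $\Diamond_{\geq k}\psi$ is only guaranteed correct when the true count is compared against the saturation bound — but $k$ is a fixed constant of the formula, so I choose the float format large enough that $\SUM_\cF$ of $k$ ones is still computed exactly, which is possible because Proposition~\ref{proposition: floating-point saturation} only bounds the number of \emph{copies} that matter, and small integers up to $k$ are representable; second, the classification head $C$ (a Heaviside-activated simple $\MLP$) reads off the coordinate for $\varphi$. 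This gives $\GML \to \GNNF$, $\GMLG \to \GNNGF$, $\GMLGC \to \GNNGCF$, and by inspection the construction uses only simple $\MLP$s.

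For the converse direction, $\GNNF \to \GML$ (and the two variants), I would follow the bit-simulation strategy already sketched in the excerpt for Theorem~\ref{theorem: PLGC = SGT = AHGT}, restricted to the message-passing setting. Fix a $\GNNF$ $G$ over $(\Pi,d)$ with float format $\cF$ of bit-width $k$. I prove by induction on the layer index $i$ that for every coordinate $j\in[d]$ and every bit position $b\in[k]$ there is a $\GML$-formula $\psi^{(i)}_{j,b}$ such that $\cG,v\models\psi^{(i)}_{j,b}$ iff the $b$-th bit of $\lambda^{(i)}_v(j)$ is $1$. The base case is immediate: $\lambda^{(0)}=P(\lambda)$ is a row-wise $\MLP$ application, hence a fixed partial Boolean function of the input bits (which are just the propositional atoms $p\in\Pi$), so each output bit is a Boolean combination expressible in $\PL\subseteq\GML$. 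For the inductive step, the local part — the $\COM$ $\MLP$ applied to the pair $(\lambda^{(i)}_v,\,\AGG(\{\!\{\lambda^{(i)}_u : (v,u)\in E\}\!\}))$ plus the skip connection — is again a Boolean function of the bits of its inputs. The only genuinely non-local ingredient is the aggregation $\AGG=\SUM_\cF$ over out-neighbours: by Proposition~\ref{proposition: floating-point saturation} this sum depends only on the $k_\cF$-restriction of the multiset of neighbour feature vectors, and there are finitely many possible feature vectors (at most $|\cF|^d$), each a fixed Boolean combination of the $\psi^{(i)}_{j,b}$'s; hence "at least $m$ out-neighbours have feature vector $\bw$" is $\Diamond_{\geq m}(\text{formula defining }\bw)$ for $m\le k_\cF$, and the value of the aggregated vector's each bit is then a Boolean combination of these graded diamonds — i.e.\ a $\GML$-formula. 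For $\GNNGF$ and $\GNNGCF$ the extra readout gadget aggregates over \emph{all} vertices instead of out-neighbours; the identical argument replaces $\Diamond_{\geq m}$ by $\DiamondG_{\geq m}$ — set-based aggregation in the non-counting case means only $\DiamondG_{\geq 1}=\DiamondG$ is needed, landing in $\GMLG$, whereas the counting case needs $\DiamondG_{\geq m}$ and lands in $\GMLGC$. Finally $G(\cG)=C(\lambda^{(k)})$ with $C$ a Heaviside-$\MLP$: its single output bit is again a Boolean function of the bits $\psi^{(k)}_{j,b}$, giving the desired formula equivalent to $G$.

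The main obstacle I anticipate is bookkeeping around \emph{finiteness and boundedness} in the $\GNNF\to$ logic direction: I must argue that at every layer the set of feature vectors that can actually occur across \emph{all} graphs is finite and uniformly bounded (it is, since $\cF$ is finite and $d$ is fixed), and that the saturation bound $k_\cF$ from Proposition~\ref{proposition: floating-point saturation} is a single constant usable at every layer and for both the local aggregation and the global readout — this is where the "bounded aggregation" hypothesis built into float-based models is essential. A secondary subtlety is that $\SUM_\cF$ sums in increasing float order, so the partial function $f_\cF$ realized by a layer is well-defined only once the multiset is fixed; but since the formula defining each possible neighbour-type is a Boolean combination of already-constructed formulae, the multiset (up to $k_\cF$-restriction) is determined by which types hold and with what capped multiplicity, so $f_\cF$ is still a fixed Boolean function of those finitely many graded-diamond formulae. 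The logic-to-$\GNNF$ direction is routine by comparison, the only care being to pick the float format large enough to represent the finitely many threshold constants appearing in $\varphi$ exactly, and to note that the skip connections (present by our definition of $\GNNF$) do not obstruct the construction since they can be compensated inside the $\COM$ $\MLP$. Throughout, simplicity of the $\MLP$s is preserved because all Boolean functions used have bounded arity (the number of bits $k$, the dimension $d$, and the constants $k_\cF$, $k$ are all fixed), and any Boolean function of a fixed number of inputs is computed by a simple $\ReLU$-$\MLP$.
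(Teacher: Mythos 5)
Your proposal is correct, but it takes a genuinely different route from the paper's own argument, so a brief comparison is in order.

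The paper proves Theorem~\ref{theorem: GNN GML} by \emph{reduction to prior work}. For $\GNNF \to \GML$ it first converts the constant-depth $\GNNF$ into a recurrent one (Proposition B.17 of \cite{ahvonen_neurips}), invokes the existing equivalence between recurrent float $\GNN$s and the graded modal substitution calculus $\GMSC$ (Lemma B.3 of \cite{ahvonen_neurips}), and then unrolls the resulting $\GMSC$-program for $n$ rounds to extract a $\GML$-formula. The $\GNNGCF$ and $\GNNGF$ cases reuse the analogous recurrent/$\GMSC{+}G$ results from \cite{ahvonen_neurips_arxiv} (Propositions D.8 and D.9), and the logic-to-$\GNNF$ direction cites the Barcel\'o et al.\ construction (Proposition~4.1 of \cite{Barcelo_GNNs}). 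Your proposal instead gives a \emph{direct, self-contained} argument: a layer-by-layer induction building a $\GML$-formula $\psi^{(i)}_{j,b}$ for the $b$-th bit of the $j$-th feature coordinate after layer $i$, with the local $\MLP$ handled by Boolean completeness of $\PL$ and the aggregation handled by graded diamonds $\Diamond_{\geq m}$ (resp.\ $\DiamondG$ or $\DiamondG_{\geq m}$ for the global readout variants), the range of $m$ capped at the saturation bound $k_\cF$ from Proposition~\ref{proposition: floating-point saturation}. This is exactly the bit-simulation strategy the paper uses in its proof of Theorem~\ref{theorem: PLGC = SGT = AHGT} (Lemmas~\ref{lemma: MLP to logic} and~\ref{lemma: attention module to logic}), so your route would make the two float characterizations methodologically uniform and avoid the detour through recurrent $\GNN$s and $\GMSC$. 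What the paper's reduction buys in exchange is brevity and reuse of already-verified results; what your direct proof buys is independence from \cite{ahvonen_neurips}/\cite{ahvonen_neurips_arxiv} and a transparent account of where each piece of syntax (graded diamond, global diamond, Boolean connective) comes from. Both directions of your argument are sound; the only phrasing I would tighten is the remark about ``rescaling'' after computing $\DiamondG_{\geq k}\psi$ — what you actually need (and correctly state a sentence later) is simply that $\cF$ be chosen so the integers $0,\ldots,K$ (for $K$ the maximum grade in $\varphi$) are exactly representable and the saturating sum distinguishes them, which is a free parameter in the logic-to-model direction.
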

    This theorem follows from the proof techniques of Theorem 3.2 of \citep{ahvonen_neurips}, which showed that (simple) recurrent float $\GNN$s are equally expressive as a recursive rule-based bisimulation invariant logic called the graded modal substitution calculus ($\mathrm{GMSC}$). 
    Unlike the float $\GNN$s in that paper, our $\GNNF$s are not recurrent, meaning they only scan the neighborhood of a vertex up to some fixed depth. The corresponding constant-iteration fragment of $\GMSC$ is $\GML$. The techniques in \citep{ahvonen_neurips} generalize for global readouts and modalities, see  
    Appendix~\ref{appendix: GML = GNNs}
    for the
    technical details of the proof.

We now characterize float-based $\GPS$-networks.

\begin{restatable}{theorem}{thmGMLGCSGPSAHGPSGNNG}\label{theorem: GMLGC = SGPS = AHGPS = GNNG}
    The following have the same expressive power: 
        $\GMLGC$, 
        soft-attention $\GPSF$-networks, 
        average hard-attention $\GPSF$-networks and 
        $\GNNGCF$s.
    This also holds in the case where the $\GPSF$-networks and $\GNNGCF$s are simple.
\end{restatable}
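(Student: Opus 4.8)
The plan is to prove all the claimed equalities by closing a cycle of effective simulations anchored at the equivalence $\GNNGCF \equiv \GMLGC$ already established in Theorem~\ref{theorem: GNN GML}. Concretely, I would show (i) every $\GMLGC$-formula is equivalent to a \emph{simple} soft-attention $\GPSF$-network and to a \emph{simple} average hard-attention $\GPSF$-network, and (ii) every soft-attention or average hard-attention $\GPSF$-network is equivalent to a $\GNNGCF$ (a simple one, when the $\GPSF$-network is simple). Together with Theorem~\ref{theorem: GNN GML} this gives, for each attention variant, the chain $\GMLGC \le \GPSF \le \GNNGCF \le \GMLGC$; hence all four objects collapse to the same expressive power, and since each simulation preserves simplicity, the ``simple'' clause follows as well. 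In particular the equivalence of soft- and average hard-attention $\GPSF$-networks is obtained for free by routing through $\GMLGC$.

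For direction (i), I would translate a $\GMLGC$-formula $\varphi$ bottom-up, reserving feature-vector coordinates for the truth values of the subformulae of $\varphi$ and computing them with a block of $\GPSF$-layers, in the style of the real-number construction behind Lemma~\ref{lem:gmlg-to-real-gps} and the float construction behind Theorem~\ref{theorem: PLGC = SGT = AHGT}. The connectives $\neg,\land$ are realized by the $\MLP$ $\FF$ of a layer; the local graded modalities $\Diamond_{\ge k}$ by the message-passing module $\MP$, exactly as for $\GML$ and $\GNN$s; and the counting global modalities $\DiamondG_{\ge k}$ by the self-attention module, reusing the underflow gadget of Theorem~\ref{theorem: PLGC = SGT = AHGT}: by Proposition~\ref{proposition: floating-point underflow} one builds a value matrix so that $\softmax$ (resp.\ $\AH$) applied to a column with exactly $\ell$ ones and all other entries $0$ yields an output that a thresholding $\MLP$ turns into a bit distinguishing $\ell \ge k$ from $\ell < k$. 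The same gadget works verbatim for both attention functions, so this direction is uniform in the attention type, and since every $\MLP$ used has two layers (ReLU, or Heaviside in the classification head), the resulting $\GPSF$-network is simple.

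For direction (ii), the message-passing module of each $\GPSF$-layer is already a $\GNN$ message-passing layer and $\FF$ is a local $\MLP$, so the only real content is to emulate a self-attention module by a message-passing layer with counting global readout. Here I would invoke boundedness: by Proposition~\ref{proposition: floating-point saturation}, every $\SUM_\cF$ occurring inside a head---the softmax denominator, the order-sensitive matrix products against $X W_V$, and (for $\AH$) the evaluation of $1/|\cI_\bx|$---depends only on the $k$-restriction of the multiset of its summands, for a fixed $k$ determined by the format. Hence the output row of a head at a vertex $v$ is a function of $v$'s own feature vector together with, for each of the finitely many possible feature vectors $\bbf \in \cF^d$, the saturated count $\min(k, |\{u : \text{feature}(u) = \bbf\}|)$. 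I would therefore insert a preliminary local $\MLP$ one-hot-encoding the current feature vector over $\cF^d$ (each indicator bit is a Boolean function of the input bits, hence realizable by a two-layer $\MLP$ as in Theorem~\ref{theorem: PLGC = SGT = AHGT}); the counting global readout then sums these indicator vectors, and since float sums are themselves bounded, every vertex receives exactly the saturated type-counts; finally the combine-$\MLP$ of the next message-passing layer recomputes the head outputs (folding in the output matrix $W_O$), and $\FF$ is applied. A few bookkeeping layers handle the skip connections and keep the current feature vector available; each step is Boolean or affine, so the construction yields a (simple) $\GNNGCF$.

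The main obstacle I expect is direction (ii), namely making the attention-to-counting-global-readout emulation precise: (a) confirming that \emph{every} sum inside an attention head is the increasing-order $\SUM_\cF$ and thus bounded, so that the head output genuinely factors through bounded type-counts, and (b) controlling the dimension blow-up and the skip-connection routing so that the reconstruction $\MLP$ has exactly the data it needs, uniformly for $\softmax$ and $\AH$. Once this correspondence is in place, direction (i) and the appeal to Theorem~\ref{theorem: GNN GML} are comparatively routine, and the cycle closes.
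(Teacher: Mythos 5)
Your proposal is correct, and its \emph{anchor} (Theorem~\ref{theorem: GNN GML}) and its use of saturation (Proposition~\ref{proposition: floating-point saturation}) and underflow (Proposition~\ref{proposition: floating-point underflow}) are exactly the paper's tools, but the \emph{decomposition} you choose is genuinely different. The paper establishes the cycle as $\GMLGC \le \GPSF$ (by showing a $\GPS$-layer of higher dimension can shift-simulate a transformer layer, an MP layer, or a bare $\MLP$, and then composing the $\PLGC \to \GTF$ translation of Theorem~\ref{theorem: logic to GPS} with the $\GML \to \GNNF$ translation of Lemma~\ref{lemma: GNNF = GML}), together with $\GPSF \le \GMLGC$ (by translating each attention module to $\PLGC$ via Lemma~\ref{lemma: attention module to logic} and each MP layer to $\GML$ via Lemma~\ref{lemma: GNNF = GML}), and then closes with $\GNNGCF \equiv \GMLGC$ from Theorem~\ref{theorem: GNN GML}. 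You instead route $\GMLGC \le \GPSF \le \GNNGCF \le \GMLGC$: your direction (i) is a fresh bottom-up layer construction (whereas the paper composes pre-existing translations modulo the shifting lemmas), and your direction (ii) is a direct $\GPSF \to \GNNGCF$ compilation via one-hot encoding over $\cF^d$ and a counting readout of saturated type-counts, whereas the paper never exhibits this compilation explicitly --- it obtains it only as the composite $\GPSF \to \GMLGC \to \GNNGCF$. Your route is more self-contained and exposes the ``attention is a bounded type-count readout'' intuition directly in the GNN world; the paper's route is more modular, hides the (astronomical but finite) $|\cF|^d$ dimension blow-up of the one-hot encoding inside the logic translation of Theorem~\ref{theorem: GNN GML}, and reuses the layer-shifting lemmas to avoid re-deriving the skip-connection bookkeeping. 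The two small gaps you flag are indeed the points to nail down, but both are handled by the paper's framework: (a) the paper's float conventions in Section~\ref{section: Transformers with floats} fix every internal sum in an attention head (including the $\AH$ denominator and the $XW_V$-side dot products) to be the increasing-order $\SUM_\cF$, so saturation applies throughout; (b) the skip-connection routing is precisely what the paper's shifting technique (Lemmas~\ref{lemma: simulating MLPs with transformer layers}, \ref{lemma: splitting MLPs}, \ref{lemma: simulating MLPs with GPS-layers}, \ref{lemma: transformer layers to GPS-layers}) packages, and you would need an analogous device on the $\GNNGCF$ side.
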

    The result follows from Theorems \ref{theorem: PLGC = SGT = AHGT} and \ref{theorem: GNN GML}. 
    Importantly, any transformer layer and message-passing layer can be simulated by a $\GPS$-layer of a higher dimension by appending the inputs and outputs of the transformer and message-passing layer with zeros on the $\GPS$ side.
    The technical details of the proof are in 
    Appendix~\ref{Appendix: GMLGC = SGPS = AHGPS = GNNG}.

We make some final observations.
As seen in Example~\ref{ex:reals:relativecounting}, `relative global counting' is expressible by real-based $\GT$s. However, 
the same construction does not work for $\GTF$s as, due to Proposition~\ref{proposition: floating-point saturation}, the softmax-function and average hard function lose accuracy in a drastic way with large graphs.
For the same reason, Lemma~\ref{lem:gps-invariant-under-grg-bisim} fails with floats; $\GTF$s and $\GPSF$-networks are not invariant under the bisimilarity $\sim_{G\%}$.
However, Theorems \ref{theorem: PLGC = SGT = AHGT}, \ref{theorem: GNN GML} and \ref{theorem: GMLGC = SGPS = AHGPS = GNNG} show that with float-based $\GT$s, $\GPS$-networks and $\GNN$s, `absolute counting' is possible (locally or globally depending on the model), since the matching logics can count.

We note that our results with floats immediately hold when restricted to word-shaped graphs, i.e., graphs where the domain is a prefix $[n]$ of positive integers, the edge relation is the successor relation over $[n]$, and $\lambda_\bw(v)$ is a singleton for each vertex. For example, a $\GT$ over word-shaped graphs is just an `encoder-only transformer without causal masking'. A popular example is BERT \citep{devlin2019bertpretrainingdeepbidirectional}, which is such a model inspired by \citep{NIPS2017_3f5ee243}. In 
Appendix~\ref{appendix: Transformers on Words},
we study unique hard-attention graph transformers over word-shaped graphs.
In Appendix~\ref{appendix: graph classification}, we also consider generalizations of our float results for graph classification tasks and non-Boolean classification tasks.

We also briefly discuss how our float results could be modified to cover positional encodings. Often, each $\GNNF$, $\GTF$, or $\GPSF$-network $A = (P, L^{(1)}, \ldots, L^{(k)}, C)$ (with input dimension $\ell$) is associated with a \textbf{positional encoding} (or $\mathrm{PE}$) $\pi$ over $\cF$, i.e., a mapping that assigns to each graph $\cG$ a function $\pi(\cG) \colon V(\cG) \to \cF^\ell$. 
For example, a popular $\mathrm{PE}$ is LapPE \citep{Rampasek}.
Now, $A$ with $\pi$ computes over $\cG$ a sequence of feature maps similarly to $A$ without $\pi$ (see Section~\ref{sec: GTs and GNNs}), but for each vertex $v$ in $\cG$, we define $\lambda^{(0)}_v \colonequals P(\lambda)_v + \pi(\cG)_v$. 
Our characterizations with floats can be modified to cover PEs
by simply adding proposition symbols to the logic that encode the PE of each vertex, see 
Appendix~\ref{appendix: Positional Encodings}
for more details.
Positional encodings are an important future topic that warrants further study.

\section{Conclusion}
\label{sect:concl}

We have given logical characterizations for GPS-networks and graph transformers,  based on reals and on floats. 
As future work, it would be interesting to lift all our characterizations from vertex to graph classification, and to more comprehensively study the expressive power of $\GPS$-networks and $\GT$s enriched with common forms of positional encodings such as graph Laplacians.
Our results in the float case in fact already lift to graph classification tasks and also to non-Boolean classification, and they also hold when restricted to word-shaped graphs; 
this is covered in 
Appendices~\ref{appendix: graph classification} and \ref{appendix: PEs and words}.
Another interesting
open question is whether, in the case of the
reals, every $\GPS$-network can be expressed as a $\GNNGC$.

\section{Acknowledgments}
Veeti Ahvonen was supported by the Vilho, Yrjö and Kalle Väisälä Foundation.
Damian Heiman was supported by the Magnus Ehrnrooth Foundation.
Antti Kuusisto was supported by the project \emph{Perspectives on computational logic}, funded by the Research Council of Finland, project number 369424. Carsten Lutz was supported by  DFG project LU 1417/4-1.

\bibliography{literature}

\appendix

\section{Preliminaries}

\subsection{Word-shaped graphs}\label{appendix: words}

For a word $\bw = w_1 \cdots w_n \in \Pi^+$, the \textbf{word-shaped} graph of $\bw$ is $\cG_\bw = (V_\bw, E_\bw, \lambda_\bw)$ where 
$V_\bw = [n]$, $E_\bw$ is the successor relation over $[n]$, and 
$\lambda_\bw(i) = \{w_i\}$ for all $i \in V_\bw$.

When graph transformers are restricted to word-shaped graphs, a $\GT$ becomes an ordinary ``encoder-only transformer without causal masking''. For example, the popular BERT \citep{devlin2019bertpretrainingdeepbidirectional} inspired by \citep{NIPS2017_3f5ee243} is such a model. 

\subsection{Transformers with unique hard-attention}\label{appendix: Transformers with unique hard-attention}

In Appendices~\ref{appendix: Proofs for float section} and \ref{appendix: PEs and words}, we also study transformers that use unique hard-attention instead of soft-attention or average hard-attention.
Given $\bx \in \R^p$, we let $\cI_\bx = \{\, i \in [p] \mid \bx_i = \max(\bx) \,\}$. The unique hard function $\UH \colon \R^+ \lpto \R^+$ is defined by
\[
\mathrm{UH}(\bx)_i \colonequals 
\begin{cases}
    1, &\text{if $i \in \min(\cI_\bx)$} \\
    0, &\text{otherwise}.
\end{cases}
\]
For example, given $\bx = (-1, 5, 10, 0, 10, 5)$, we have $\mathrm{UH}(\bx) = (0, 0, 1, 0, 0, 0)$.
For the implementation of $\UH$ with floats over a floating-point format $\cF$, we assume the function is exactly the same, i.e., the leftmost component with the maximum value $\max(\bx)$ is given the floating-point value $1 \in \cF$ and others are given $0 \in \cF$.
Analogously to the average hard function and the softmax function, attention heads that use the unique hard function are called unique hard-attention heads. The same naming applies to modules, graph transformers and $\GPS$-networks.

Based on the literature, graph transformers and $\GPS$-networks that use unique hard-attention are not really used in real-life applications. One of the reasons is that they are not invariant under isomorphism.

\begin{proposition}
    There is a unique hard-attention $\GT$ (and $\GTF$) and a unique hard-attention $\GPS$-network (and $\GPSF$-network) that is not isomorphism invariant. 
\end{proposition}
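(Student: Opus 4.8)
The idea is that $\UH$ resolves ties among maximal attention scores by selecting the \emph{smallest index}, and this index is dictated by the implicit linear order $<^V$ on the vertex set, which carries no graph-theoretic meaning. The plan is therefore to build a one-layer model whose single attention head forces \emph{every} vertex to attend exactly to ``vertex~$1$'', and then feed this model two isomorphic graphs that disagree on the label of whichever vertex happens to be first.

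In detail, fix $\Pi = \{p\}$ and internal dimension $d = 2$. Let the initial $\MLP$ $P$ map a vertex $v$ to $\lambda^{(0)}(v) = (b_v, 0)$, where $b_v = 1$ if $p \in \lambda(v)$ and $b_v = 0$ otherwise; this is a simple $\MLP$. Use a single unique hard-attention head $H$ of I/H dimension $(2,1)$ with $W_Q = W_K = \mathbf{0}$ and $W_V = [\,1,\ 0\,]^{\T}$, together with output matrix $W_O = [\,0,\ 1\,]$, and let $\FF$ (and, in the $\GPS$-layer, also $\MP$) be the constant-zero $\MLP$ (for $\MP$, take $\COM$ to output $0$ regardless of its input and $\AGG = \SUM_\cF$). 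Since $W_Q = W_K = \mathbf{0}$, the attention-logit matrix is the all-zero $n\times n$ matrix, so for each row the argmax set is all of $[n]$ and $\UH$ selects row~$1$; hence $H$ returns the scalar $b_1$ at every vertex, and after multiplying by $W_O$ and adding the skip connection one obtains, at vertex $v$, a feature vector whose second coordinate equals $b_1$ (the first coordinate equals $b_v$ for a $\GT$ and $2 b_v$ for a $\GPS$-network, which is irrelevant). Let the classification head $C$ output $1$ iff the second coordinate is positive, using the Heaviside function $\sigma$. Then the resulting $\GT$ $T$ satisfies $T(\cG)(v) = b_1$ for every vertex $v$ of every graph $\cG$; likewise for the $\GPS$-network obtained by the same recipe with $\MP$ trivialized.

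Now take the two graphs $\cG$ and $\cG'$ on vertex set $\{1,2\}$, each equipped with a self-loop on both vertices (so that all aggregations are over non-empty multisets), with $\lambda_{\cG}(1) = \{p\}$, $\lambda_{\cG}(2) = \emptyset$ and $\lambda_{\cG'}(1) = \emptyset$, $\lambda_{\cG'}(2) = \{p\}$. The transposition $f\colon 1 \leftrightarrow 2$ is an isomorphism $\cG \to \cG'$, yet $T(\cG)(1) = 1$ while $T(\cG')(f(1)) = T(\cG')(2) = 0$; hence $T$ is not isomorphism invariant. Since all weights used are $0$ or $1$ and every arithmetic step is an exact float operation (additions and multiplications involving only $0$ and $1$, and $\UH$ applied to an all-zero vector), the very same construction over an arbitrary floating-point format $\cF$ yields a unique hard-attention $\GTF$ and, with $\MP$ trivialized, a unique hard-attention $\GPSF$-network for which the same two graphs witness non-invariance. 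All four models can moreover be taken to be simple (with $C$ using $\sigma$, as permitted).

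There is essentially no obstacle here: the construction is elementary and does not rely on any of the machinery developed elsewhere in the paper. The only points that require a moment's care are checking that the chosen $\FF$ and $\MP$ are legitimate constant-zero $\MLP$s and message-passing layers, and verifying that composing $H$, $W_O$ and the skip connection yields exactly the claimed second coordinate $b_1$; both follow directly from the definitions of self-attention heads, transformer layers and $\GPS$-layers in Section~\ref{sec: GTs and GNNs}.
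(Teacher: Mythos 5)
Your construction is essentially the paper's: same labels $\Pi=\{p\}$, same two-vertex graphs that swap the $p$-label, and the same trick of setting $W_Q = W_K$ to zero so that unique hard-attention resolves the all-zero logit ties by always selecting vertex~$1$. Your write-up is in fact slightly more careful than the paper's: the paper asserts $H(X) = X$, whereas (as you correctly compute) every row of $H(X)$ equals $X_{1,*}$; either way the conclusion holds, and you additionally spell out the $W_O$, skip-connection, $\MP$/$\FF$, classification-head, and float details that the paper leaves as ``it is easy to design''.
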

\begin{proof}
    Let $\Pi = \{p\}$.
    Consider the following isomorphic $\Pi$-labeled graphs:
    \begin{itemize}
        \item $\cG_1 = ([2], \emptyset, \lambda_1)$, where 
        $\lambda_1(1) = \{p\}$ and $\lambda_1(2) = \emptyset$,
        \item $\cG_2 = ([2], \emptyset, \lambda_2)$, where 
        $\lambda_2(1) = \emptyset$ and $\lambda_2(2) = \{p\}$.
    \end{itemize}
    Now, consider a unique hard-attention head $H$ with $W_Q = W_K = [0]$ and $W_V = [1]$. Now, $H(X) = X$ (with reals and floats). By using $H$, it is easy to design a $\GT$ $G$ such that $G(\cG_1) = \lambda_1$ and $G(\cG_2) = \lambda_2$ even though $\cG_1$ and $\cG_2$ are isomorphic. An analogous result applies for $\GTF$s, $\GPS$-networks and $\GPSF$-networks.
\end{proof}
However, over word-shaped graphs (or simply over words), transformers based on unique hard-attention preserve isomorphism invariance, and their theoretical properties, such as expressive power, have been studied, for example, in \citep{HaoAF22, BarceloKLP24, Yang0A24}. Moreover, the unique hard-attention function could be used with vertex-ordered graphs (but note that we do not consider such graphs in this paper).

\subsection{Graph transformers and graph neural networks}

In the main body of this paper (i.e., outside of the appendix), we defined our graph neural networks and graph transformers as Boolean vertex classifiers. In a more general sense, we could allow our computing models to have an arbitrary (finite) output dimension. A graph transformer $(P, L^{(1)}, \ldots, L^{(k)}, C)$ \textbf{of I/H/O dimension $(p, d, q)$} (over the set $\Pi$ of vertex label symbols) refers to a graph transformer where $P$ is an $\MLP$ of I/O dimension $(p,d)$ and each $L^{(i)}$ is a transformer layer $d$ as defined in the main body of this paper and $C$ is an $\MLP$ (possibly not ReLU-activated) of I/O dimension $(d, q)$. Analogously, we define $\GPS$-networks \textbf{of I/H/O dimension $(p, d, q)$} over $\Pi$ and $\GNN$s \textbf{of I/H/O dimension $(p, d, q)$} over $\Pi$. For brevity, we can exclude $\Pi$ and leave it implicit when it is clear from the context. For example, when comparing two $\GPS$-networks with the same input dimension, we can assume that they are defined over the same set of vertex label symbols. 

\section{Proofs for Section~\ref{sec:real-gps}}
\label{app:real-gps}

\subsection{Proof of Lemma~\ref{lem:gmlg-to-real-gps}}

A \textbf{basic $\GPS$-layer} of dimension $d$ is a tuple $G = (\sigma, A, C, b,
H)$, where $\sigma \colon \R \to \R$ is an activation function, $A, C \in \R^{d \times d}$
are matrices, $b$ is a $d \times 1$ matrix, and $H$ is a
self-attention head of I/H dimension $(d, d)$.
Analogously to $\GPS$-layers as defined in Section~\ref{section: preliminaries}, $G$ computes the output feature map $\lambda'$ based on
an input feature map $\lambda$ as follows:
for each $v \in V$,
\[
    \lambda'(v) \colonequals \sigma \Big( \lambda(v) C + \big(\sum_{\mathclap{v' \in \neigh_{\cG}(v)}} \lambda(v') \big) A + H(\lambda)_{v, *} + b \Big).
\]
A \textbf{basic} $\GPS$-network is a tuple $(P, L^{(1)}, \ldots, L^{(k)}, C)$ that is defined like a GPS-network, except all layers are basic GPS-layers. Below, we assume that both $\GPS$-networks are defined over the same set $\Pi$ of vertex label symbols.

\begin{lemma}\label{lem:basic-gps-to-gps}
    For every basic $\GPS$-network $G$ of I/H/O dimension $(p, d, 1)$,
    there is a (non-basic) $\GPS$-network $\hat G$ of I/H/O dimension $(p, 2d, 1)$
    such that for all labeled graphs $\cG$, $G(\cG) = \hat G(\cG)$.
\end{lemma}
\begin{proof}
    Let $G = (P, L^{(1)}, \ldots, L^{(k)}, C)$ be a basic $\GPS$-network of I/H/O
    dimension $(p, d, 1)$.
    In constructing an equivalent non-basic $\GPS$-network $\hat G$, the main challenge is
    dealing with the skip-connections. For this, we use $d$ additional  
    hidden dimensions and maintain that the feature maps computed by $\hat G$
    contain the feature maps computed by $G$ in the first $d$ dimensions, and  0 in the other $d$ dimensions.
    The equivalent non-basic $\GPS$-network is then $\hat G = (\hat P, \hat L^{(1)}, \ldots, \hat L^{(k)}, \hat C)$,
    where 
    \begin{itemize}
        \item $\hat P$ is obtained from $P$ by adding $d$ output dimensions that have the value $0$,
        \item $\hat L^{(i)}$ is obtained from $L^{(i)}$ by
            \begin{itemize}
                \item Constructing $\mathrm{MP}^{(i)}$ to output $0$ in the first $d$ dimensions
                and also to output 
                $\lambda^{(i)}(v) C + \sum_{v' \in \neigh_{\cG}(v)} \lambda^{(i)}(v') A + b$
                in the second $d$ dimensions, by choosing sum as the aggregation and combination functions.
                \item Constructing $\mathrm{SA}^{(i)}$ to output $0$ in the first $d$ dimensions and
                $H(\lambda^{(i)})_v$ in the second $d$ dimensions by using $H$ as a single attention head.
                \item Constructing $\mathrm{FF}^{(i)}$ such that, on input
                $\mathbf{x}_1$ in the first $d$ dimension and $\mathbf{x}_2$ in
                the second $d$ dimensions, outputs $\sigma(\mathbf{x}_2) -
                \mathbf{x}_1$ in the first $d$ dimensions and $0$ in the second
                $d$ dimensions.
            \end{itemize}
        \item $\hat C$ is obtained from $C$ by adding $d$ input dimensions, which are ignored. \qedhere
    \end{itemize}
\end{proof}

\lemgmlgtorealgps*

\begin{proof}

Let $\Pi$ be a finite set of vertex label symbols and let $\varphi$ be a $\GMLG$
formula over $\Pi$. For constructing a $\GPS$-network that is equivalent to
$\varphi$, we extend the construction of \citep{Barcelo_GNNs} to $\GMLG$, which
will result in a basic $\GPS$-network that uses step-activation. Lemma~\ref{lem:basic-gps-to-gps} then
shows that an equivalent $\GPS$-network exists.

Let $\varphi_1, \ldots, \varphi_d$ be the subformulae of $\varphi$ ordered such that
if $\varphi_i$ is a subformula of $\varphi_j$, then $i \leq j$. Hence, $\varphi_d =
\varphi$. Let $p = |\Pi|$. We use a $\GPS$-network $N = (P, L^{(1)}, \ldots, L^{(d)},
C)$ over $(\Pi, d)$, where $L^{(1)}, \ldots, L^{(d)}$ are $d$ basic $\GPS$-layers. 
In fact, all basic $\GPS$-layers of $N$ will be exactly identical.\footnote{A GNN with
all layers identical is called \emph{homogeneous} in~\citep{Barcelo_GNNs}.} We
aim to achieve that, for every pointed $\Pi$-labeled graph $(\cG, v)$ with $\cG = (V, E, \lambda)$,
on which $N$ computes feature maps $\lambda^{(0)}, \ldots, \lambda^{(d)}$,
\begin{itemize}
 \item[($*$)] $\lambda^{(i)}(u)_j = 1$ if $\cG, u \models \varphi_j$ and
 $\lambda^{(i)}(u)_j = 0$ otherwise, for all $i,j$ with $1 \leq j \leq i \leq d$
 and all $u \in V$.
\end{itemize}
The final classification layer $C$ will then accept only vertices $u$ with $\lambda^{(d)}(u)_d = 1$.

The initial MLP $P$ is of dimension $(p, d)$ and is a projection such that, for
every vertex label symbol $s \in \Pi$, $1 \leq j \leq d$ and $u \in V$,
$\lambda^{(0)}(u)_j = 1$ if $\varphi_j = s$ and $s \in \lambda(u)$, and
$\lambda^{(0)}(u)_j = 0$ otherwise.

The basic $\GPS$-layers are constructed as follows. The activation function $\sigma$ is the Heaviside step function, that is,
\[
    \sigma(x) = \begin{cases} 1 & x > 0 \\ 0 & x \leq 0. \end{cases}
\]
In the self-attention head $H$, choose $\alpha = \mathrm{softmax}$, and set all
entries in $W_Q$ and $W_K$ to $0$. The latter ensures that in the computation of
$H(X)$, every vertex pays attention $\frac{1}{|V|}$ to every vertex.
Set the entries of $A$, $C$, $b$, and $W_V$, depending on the
subformulae of $\varphi$, as follows. In column $j$,
\begin{enumerate}
    \item if $\varphi_j \in \Pi$, then set $C_{jj} = 1$, 
    \item if $\varphi_j = \varphi_{k} \land \varphi_{k'}$, set $C_{k, j} = C_{k', j} =  1$ and $b_j = -1$,
    \item if $\varphi_j = \neg \varphi_k$, set $C_{k, j} = -1$ and $b_j = 1$,
    \item if $\varphi_j = \Diamond_{\geq c} \varphi_k$, set $A_{k, j} = 1$ and $b_j = -c + 1$,
    \item if $\varphi_j = \DiamondG \varphi_k$, set $(W_V)_{k, j} = 1$,
\end{enumerate}
and set all other values of column $j$ in $A$, $C$, $W_V$ and $b$ to $0$. Using
induction on $j$, one can now show that ($*$) is satisfied. The arguments for
Cases~1 to~4 can be found in \citep{Barcelo_GNNs}. Thus, we only explicitly treat
Case~5.

Thus, let $\varphi_j = \DiamondG \varphi_k$ and assume that $(*)$ holds for
$\varphi_k$, that is, $\lambda^{(i)}(u)_k = 1$ if $\cG, u \models \varphi_k$ and
$\lambda^{(i)}(u)_k = 0$ otherwise, for all $i$ with $k \leq i \leq d$ and all
$u \in V$.
Now consider any $u \in V$.
Recall that
\[
    \lambda^{(i + 1)}(u) \colonequals \sigma \Big( \lambda^{(i)}(u) C + \sum_{\mathclap{u' \in \neigh_{\cG}(v)}} \lambda^{(i)}(u') A + H(\lambda^{(i)})_{u, *} + b \Big).
\]
We are interested in $\lambda^{(i + 1)}(u)_j$. As $C_{k, j} = A_{k, j} =  b_j = 0$ for all $k$,
\[
    \lambda^{(i + 1)}(u)_j \colonequals \sigma \big( H(\lambda^{(i)})_{u, j} \big).
\]
By choice of $W_K$ and $W_Q$, $H(\lambda^{(i)})$ computes
\[
    \frac{\sum_{v \in V} \lambda^{(i)}(v)}{|V|} W_V.
\]
By $(*)$ all $\lambda^{(i)}(v)$ are either $0$ or $1$. Hence, by choice of $W_V$, $H(\lambda^{(i)})_{u,
j} > 0$ if and only if there is a vertex $u' \in V$ with $\lambda^{(i)}(u')_k = 1$.
By choice of $\sigma$ and $(*)$ it thus follows that
$\lambda^{(i + 1)}(u)_j = 1$ if $\cG, u \models \DiamondG \varphi_k$ and
$\lambda^{(i + 1)}(u)_j = 0$ otherwise, as required.

Observe that since all entries of $W_Q$ and $W_K$ are $0$, the same argument also applies if one chooses $\mathrm{AH}$ as the
attention function.
\end{proof}

\subsection{Proof of Lemma~\ref{lem:gps-invariant-under-grg-bisim}}

\lemgraphtransformerinvariance*

\begin{proof}
Let $N = (P, L^{(1)}, \ldots, L^{(k)}, C)$ be a GPS-network, and $(\cG_1, v_1), (\cG_2, v_2)$ $\Pi$-labeled
pointed graphs with $\cG_1 = (V_1, E_1, \lambda_1)$, $\cG_2 = (V_2, E_2,
\lambda_2)$, such that $(\cG_1, v_1) \sim_{G\%} (\cG_2, v_2)$.

For $j \in \{1, 2\}$, let $\lambda_j^{(0)},\dots,\lambda_j^{(k)}$  be the
feature maps of dimension $d$ computed by the layers of $N$ on $\cG_j$.
To prove the lemma, it
suffices to show  
 the following for  $0 \leq i \leq k$:
\begin{itemize}
    \item[($*$)]for all $u \in V_1$ and $v \in V_2$,  $\mn{tp}_{\cG_1}(u) = \mn{tp}_{\cG_2}(v)$ implies $\lambda^{(i)}_1(u) = \lambda^{(i)}_2(v)$.
\end{itemize}
We prove ($*$) by induction on $i$.

\smallskip
For the induction start, where $i = 0$, recall that $\mn{tp}_{\cG_1}(u) = \mn{tp}_{\cG_2}(v)$ implies that
$(\cG_1, u) \sim (\cG_2, v)$. Then ($*$) follows from Condition~\textbf{atom} of graded bisimulations and the definition of $\lambda^{(0)}_j$, $j \in \{1,2\}$.

\smallskip

For the induction step, assume that ($*$) holds for $i$ and let $u \in V_1$ and $v \in V_2$ such that
$\mn{tp}_{\cG_1}(u) = \mn{tp}_{\cG_2}(v)$. We have to show that 
$\lambda^{(i + 1)}_1(u) = \lambda^{(i + 1)}_2(v)$.
Recall that, for  $j \in \{1, 2\}$,
\[
\begin{aligned}
    \lambda^{(i+1)}_{j, B} &\colonequals \lambda^{(i)}_j + B^{(i+1)}\big( \lambda^{( i)}_j \big), \text{ where } B \in \{\SA, \MP\}, \\
    \lambda^{(i+1)}_{j, \SA+\MP} &\colonequals \lambda^{(i+1)}_{j, \SA} + \lambda^{(i+1)}_{j, \MP},\ \text{and}\\
    \lambda^{(i+1)}_{j} &\colonequals   \lambda^{(i+1)}_{j, \SA+\MP} + \FF^{(i+1)} \big( \lambda^{(i+1)}_{j, \SA+\MP} \big).
\end{aligned}
\]
As $+$ and $\mathrm{FF}^{(i + 1)}$ are applied row-wise and $\lambda^{(i)}_1(u) = \lambda^{(i)}_2(v)$ by the induction hypothesis, it suffices to show
that for all $u \in V_1$ and $v \in V_2$, $\mn{tp}_{\Gmc_1}(u) = \mn{tp}_{\Gmc_2}(v)$ implies
$\MP^{(i+1)} \big( \lambda^{(i)}_1 \big)(u) = \MP^{(i+1)} \big( \lambda^{(i)}_2 \big)(v)$
and
$\SA^{(i+1)} \big( \lambda^{(i)}_1 \big)(u) = \SA^{(i+1)} \big( \lambda^{(i)}_2 \big)(v)$. 

We begin with $\mathrm{MP}^{(i + 1)}$. Recall that, for $j \in \{1, 2\}$,
\begin{align*}
    &\MP^{(i+1)} \big( \lambda^{(i)}_j \big)(u) =  \COM^{(i+1)} \Big( \lambda^{( i)}_j(u), \gamma_j^{(i)}(u) \Big) \text{, where}\\
    &\gamma_j^{(i)}(u) \colonequals  \AGG^{(i+1)}  \big( \{\!\{ \lambda^{( i)}_j(u') \mid (u,u') \in E_j \}\!\} \big).
\end{align*}
From the induction hypothesis and the Conditions~\textbf{graded forth} and
\textbf{graded back} of graded bisimulations, it follows that 
\[
    \{\!\{ \lambda^{( i)}_1(u') \mid (u,u') \in E_1 \}\!\}  = \{\!\{ \lambda^{( i)}_2(v') \mid (v, v') \in E_2 \}\!\}.
\]
Thus, 
as 
$\lambda^{( i)}_1(u)  = \lambda^{(i)}_2(v)$ by induction hypothesis,
it further follows that
\[
\MP^{(i+1)} \big( \lambda^{(i)}_1 \big)(u) = \MP^{(i + 1)} \big( \lambda^{(i)}_2 \big)(v).
\]

\medskip

Now consider the self-attention module $\mathrm{SA}^{(i + 1)} = (H^{(1)},
\ldots, H^{(k')}, W_O)$ and recall that, for $j \in \{1, 2\}$, $\SA^{(i+1)}
\big( \lambda^{(i)}_j \big) = \cH\big( \lambda^{(i)}_j \big) W_O$ where  $\cH
\big( \lambda^{(i)}_j\big)$ is the concatenation of the matrices
$H^{(1)}\big(\lambda^{(i)}_j\big), \ldots, H^{(k')}\big(\lambda^{(i)}_j\big)$.
It thus suffices to show that for all $\ell \in [k']$, all  $u \in V_1$ and
$v \in V_2$, 
\[\mn{tp}_{\Gmc_1}(u) = \mn{tp}_{\Gmc_2}(v) \text{ implies }
H^{(\ell)}(\lambda^{(i)}_1)_{u, *} = H^{(\ell)}(\lambda^{(i)}_2)_{v, *}.\]
Thus, let $\ell \in [k']$ and
\[
H^{(\ell)}(X) \colonequals \alpha\left( \frac{(X W_Q) (X W_K)^{\T}}{\sqrt{d_h}} \right) (X W_V).
\] 
Consider the matrices
\[
        A = \frac{(\lambda^{(i)}_1 W_Q) (\lambda^{(i)}_1 W_K)^\T}{\sqrt{d_h}}
\]
and
\[
        B = \frac{(\lambda^{(i)}_2 W_Q) (\lambda^{(i)}_2 W_K)^\T}{\sqrt{d_h}}.
\]
To show that $H^{(\ell)}(\lambda^{(i)}_1)_{u, *} = H^{(\ell)}(\lambda^{(i)}_2)_{v, *}$, we must show  
that $\big(\alpha(A) \lambda^{(i)}_1 \big)_{u, *} = \big(\alpha( B) \lambda^{(i)}_2\big)_{v, *}$, for $\alpha \in \{\softmax, \AH\}$.

For $q > 0$ the rational number that shows global-ratio graded bisimilarity of
$(\cG_1, v_1)$ and $(\cG_2, v_2)$, the induction hypothesis implies that for all $\mathbf{x}
\in \mathbb{R}^{d}$:
\begin{equation*}
    |\{ u' \in V_1 \mid \lambda^{(i)}_1(u') = \mathbf{x} \} | = q \cdot | \{ v' \in V_2 \mid \lambda^{(i)}_2(v') = \mathbf{x} \} |. \tag{$\dag_1$}
\end{equation*}
Furthermore, observe that computing an entry $A_{u, u'}$ depends only on $\lambda^{(i)}_1(u)$
and $\lambda^{(i)}_1(u')$ (and the same holds for $B$ and~$\lambda^{(i)}_2$). It then follows from the
induction hypothesis that, for all $u' \in V_1$ and $v' \in V_2$, $\mn{tp}_{\Gmc_1}(u') = \mn{tp}_{\Gmc_2}(v')$ implies $A_{u, u'} = B_{v, v'}$.
Hence, for all $a \in \mathbb{R}$,
\begin{equation*}
    |\{ u' \in V_1 \mid A_{u, u'} = a \} | = q \cdot | \{ v' \in V_2 \mid B_{v, v'} = a \} |.
\end{equation*}
Therefore, if $\alpha = \softmax$,
\begin{equation*}
    \sum_{u' \in V_1} e^{A_{u, u'} - b} = q \cdot \sum_{v' \in V_2} e^{B_{v, v'} - b},
\end{equation*}
and, for all vertices $u' \in V_1$ and $v' \in V_2$ with $\mn{tp}_{\cG_1}(u') = \mn{tp}_{\cG_2}(v')$,
\begin{align*}
    \softmax(A)_{u, u'} & = \frac{e^{ A_{u, u'} - b}}{\sum_{u'' \in V_1} e^{A_{u, u''} - b} } \\
    & = \frac{e^{ B_{v, v'} - b}}{ q \cdot \sum_{v'' \in V_2} e^{B_{v, v''} - b} }  \\
    & = \frac{1}{q} \cdot \softmax(B)_{v, v'}.\tag{$\dagger_2$} 
\end{align*}
Let $\cT$ be the collection of all graded bisimulation types over $\Pi$. We obtain that
\begin{align*}
    (\softmax(&A) \lambda^{(i)}_1)_{u, *} \\
    & = \sum_{u' \in V_1} \softmax(A)_{u, u'} \* \lambda^{(i)}_1(u') \\
                           & = \sum_{t \in \cT} \sum_{\substack{u' \in V_1\\\mn{tp}_{\Gmc_1}(u') = t}} \softmax(A)_{u, u'} \lambda^{(i)}_1(u') \\
                            & = \sum_{t \in \cT} \Big (q \cdot \sum_{\mathclap{\substack{v' \in V_2\\\mn{tp}_{\Gmc_2}(v') = t}}} \frac{1}{q} \cdot \softmax(B)_{v, v'} \lambda^{(i)}_2(v') \Big )\\
                           & = q \cdot \sum_{v' \in V_2} \frac{1}{q} \cdot \softmax(B)_{v, v'} \lambda^{(i)}_2(v') & & \\
                           & = \sum_{v' \in V_2} \softmax(B)_{v, v'} \lambda^{(i)}_2(v') \\
                           & = (\softmax(B) \lambda^{(i)}_2)_{v, *}.
\end{align*}
Note that the third equation holds by ($\dagger_1$) and ($\dagger_2$).

\medskip

For $\alpha = \AH$, consider the sets of indices with maximal value $\cI_{A_{u,
*}}$ and $\cI_{B_{v, *}}$.
By $(\dag_1)$, $|\cI_{A_{u, *}}| = q \cdot |\cI_{B_{v, *}}|$, and thus, for $u'
\in \cI_{A_{u, *}}$ and $v' \in \cI_{B_{v, *}}$,
\[
    \AH(A)_{u, u'} = \frac{1}{ |\cI_{A_{u, *}}|} = \frac{1}{q \cdot |\cI_{B_{v, *}|}} = \frac{1}{q} \cdot \AH(B)_{v, v'}.
\]
Hence, using the same argument as for $\softmax$, we obtain $(\AH(A) \lambda^{(i)}_1)_{u, *} = (\AH(B) \lambda^{(i)}_2)_{v, *}$.
\end{proof}

\subsection{Proof of Theorem~\ref{lem:ratio-bisim-invariance}}

We begin by introducing another central notion of bisimulation, which is associated with $\GMLG$.

\begin{definition}[Global Graded Bisimulation]
A \textbf{global graded bisimulation} $Z$ between $\Pi$-labeled graphs $\cG_1
= (V_1, E_1, \lambda_1)$ and $\cG_2 = (V_2, E_2, \lambda_2)$ is a graded bisimulation
such that the following additional conditions are satisfied:

\begin{description}
    \item[global forth] for all $v_1 \in V_1$ there is a $v_2 \in V_2$ with $(v_1, v_2) \in Z$.
    
    \item[global back] for all $v_2 \in V_2$ there is a $v_1 \in V_1$ with $(v_1, v_2) \in Z$.
\end{description}

If there is a global graded bisimulation $Z$ between $\Pi$-labeled graphs
$\cG_1$ and $\cG_2$ with $(v_1, v_2) \in Z$, we write $(\cG_1, v_1) \sim_{G}
(\cG_2, v_2)$.
\end{definition}

In this section, we aim to show that every $\FO$-formula that is invariant under
$\sim_{G\%}$ is equivalent to a $\GMLG$-formula. This implies that these FO-formulae are also invariant under $\sim_{G}$, hence, relative to FO, $\sim_{G\%}$
and $\sim_{G}$ are the same.
To show this, we apply techniques by Otto~\citep{otto2004,otto2019}.

The \textbf{graded bisimulation game} that is associated with $\GML$ is played between two players,
\mn{spoiler} and \mn{duplicator} over two $\Pi$-labeled graphs $\cG_1 = (V_1,
E_1, \lambda_1)$ and $\cG_2 = (V_2, E_2, \lambda_2)$. Positions are pairs $(v_1, v_2) \in V_1
\times V_2$ and a single round played from this position allows \mn{spoiler}
to challenge \mn{duplicator}.
The \textbf{$c$-graded $\ell$-round bisimulation game}, consists of $\ell$
rounds. Each round consists of the following moves:
\begin{description}
    \item[graded down] \mn{spoiler} chooses a non-empty subset of $\neigh_{\cG_1}(v_1)$
    or $\neigh_{\cG_2}(v_2)$ of size at most $c$. \mn{duplicator} must respond with a
    matching subset of $\neigh_{\cG_2}(v_2)$ or $\neigh_{\cG_1}(v_1)$ on the opposite side of
    the same size.

    \mn{spoiler} then picks a vertex in the set proposed by \mn{duplicator} and
    \mn{duplicator} must respond by picking a vertex in the set proposed by
    \mn{spoiler}.
\end{description}
Either player loses in this round if stuck, and \mn{duplicator} loses as soon as
the current position $(v_1, v_2)$ violates $\lambda_1(v_1) = \lambda_2(v_2)$.

If \mn{duplicator} has a winning strategy for the  $c$-graded $\ell$-round
bisimulation game on $\Pi$-labeled graphs $\cG_1$, $\cG_2$ on starting position
$(v_1, v_2)$, we say that $(\cG_1, v_1), (\cG_2, v_2)$ are \textbf{$c$-graded
$\ell$-bisimilar}, written $(\cG_1, v_1) \sim^{c, \ell} (\cG_2, v_2)$. We say that $(\cG_1, v_1), (\cG_2, v_2)$ are \textbf{globally $c$-graded
$\ell$-bisimilar}, written $(\cG_1, v_1) \sim_G^{c, \ell} (\cG_2, v_2)$, if (i)~$(\cG_1, v_1) \sim^{c, \ell} (\cG_2, v_2)$ and (ii)~for every $u_1 \in V(\Gmc_1)$ there is a $u_2 \in V(\Gmc_2)$ such 
that $(\cG_1, u_1) \sim^{c, \ell} (\cG_2, u_2)$
and vice versa.

\medskip

Global $c$-graded $\ell$-bisimilarity is closely connected to $\GMLG$ as the following observations show.

\begin{lemma}\label{lem:bisim-implies-winning-strategy}
    Let $(\Gmc_1, v_1)$, $(\Gmc_2, v_2)$ be $\Pi$-labeled pointed graphs. If $(\Gmc_1, v_1)
    \sim (\Gmc_2, v_2)$, then, for all $c, \ell \geq 0$, $(\Gmc_1, v_1) \sim^{c,
    \ell} (\Gmc_2, v_2)$.
\end{lemma}

\begin{lemma}\label{lem:gmlg-invariance}
    Let $\varphi$ be a $\GMLG$-formula. Then, there are $c, \ell \geq 0$ such that
    $\varphi$ is invariant under $\sim_{G}^{c, \ell}$.
    If $\varphi$ is a $\PLG$-formula, then $\varphi$ is invariant under $\sim_{G}^{0, 0}$.
\end{lemma}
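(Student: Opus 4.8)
The plan is to argue by induction on the structure of $\varphi$, establishing the slightly more precise statement that every $\GMLG$-formula $\varphi$ is invariant under $\sim_G^{c,\ell}$ for some $c,\ell\ge 0$, with $c=\ell=0$ working whenever $\varphi\in\PLG$. Two auxiliary facts are used throughout, both easy. \emph{Monotonicity}: if $c\ge c'$ and $\ell\ge\ell'$ then $\sim^{c,\ell}\subseteq\sim^{c',\ell'}$ and hence $\sim_G^{c,\ell}\subseteq\sim_G^{c',\ell'}$, since enlarging the sets that \mn{spoiler} may choose or increasing the number of rounds only helps \mn{spoiler}. \emph{Transfer}: since condition~(ii) in the definition of $\sim_G^{c,\ell}$ speaks only about the two graphs and not about the distinguished points, one gets that if $(\cG_1,v_1)\sim_G^{c,\ell}(\cG_2,v_2)$, $c\ge c'$, $\ell\ge\ell'$, and $(\cG_1,w_1)\sim^{c',\ell'}(\cG_2,w_2)$ for some $w_1\in V_1$, $w_2\in V_2$, then $(\cG_1,w_1)\sim_G^{c',\ell'}(\cG_2,w_2)$; condition~(i) of the latter is immediate, and condition~(ii) follows from condition~(ii) of $(\cG_1,v_1)\sim_G^{c,\ell}(\cG_2,v_2)$ together with monotonicity.

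For the base case $\varphi\in\{\top,p\}$ take $c=\ell=0$: $\sim_G^{0,0}$ entails $\lambda_1(v_1)=\lambda_2(v_2)$ through condition~(i), so $p$ has the same truth value on both sides, and $\top,p\in\PLG$. For $\varphi=\neg\psi$ the pair $(c,\ell)$ that works for $\psi$ also works for $\varphi$; for $\varphi=\psi_1\wedge\psi_2$ take the componentwise maximum of the pairs supplied by the induction hypothesis for $\psi_1$ and $\psi_2$ and apply monotonicity. In both cases, if the immediate subformulae are in $\PLG$ then so is $\varphi$, and the pair stays $(0,0)$.

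The main case is $\varphi=\Diamond_{\ge k}\psi$. Let $(c',\ell')$ witness the induction hypothesis for $\psi$, and set $c=\max(c',k)$, $\ell=\ell'+1$. Suppose $(\cG_1,v_1)\sim_G^{c,\ell}(\cG_2,v_2)$ and $\cG_1,v_1\models\Diamond_{\ge k}\psi$, witnessed by pairwise distinct $u_1^1,\dots,u_1^k\in\neigh_{\cG_1}(v_1)$ with $\cG_1,u_1^j\models\psi$ for all $j$. By condition~(i), \mn{duplicator} has a winning strategy in the $c$-graded $\ell$-round game from $(v_1,v_2)$. Let \mn{spoiler} open a \textbf{graded down} round on the $\cG_1$-side by choosing the set $S_1=\{u_1^1,\dots,u_1^k\}$, which is legal as $1\le k\le c$; \mn{duplicator}'s winning strategy answers with some $S_2\subseteq\neigh_{\cG_2}(v_2)$ with $|S_2|=k$, i.e.\ with $k$ pairwise distinct successors of $v_2$. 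Now fix any $u_2\in S_2$: having \mn{spoiler} continue by picking $u_2$ forces \mn{duplicator} to pick some $u_1\in S_1$ and to win the remaining $(\ell-1)$-round game from $(u_1,u_2)$, so $(\cG_1,u_1)\sim^{c,\ell-1}(\cG_2,u_2)$; as $c\ge c'$ and $\ell-1=\ell'$, monotonicity gives $(\cG_1,u_1)\sim^{c',\ell'}(\cG_2,u_2)$, and Transfer upgrades this to $(\cG_1,u_1)\sim_G^{c',\ell'}(\cG_2,u_2)$. Since $u_1\in S_1$ we have $\cG_1,u_1\models\psi$, so the induction hypothesis yields $\cG_2,u_2\models\psi$. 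As $u_2\in S_2$ was arbitrary and $|S_2|=k$, this shows $\cG_2,v_2\models\Diamond_{\ge k}\psi$; the converse direction is symmetric.

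Finally, for $\varphi=\DiamondG\psi$ let $(c',\ell')$ witness $\psi$ and take $c=c'$, $\ell=\ell'$. If $(\cG_1,v_1)\sim_G^{c,\ell}(\cG_2,v_2)$ and $\cG_1,v_1\models\DiamondG\psi$, pick $u_1\in V_1$ with $\cG_1,u_1\models\psi$; condition~(ii) yields $u_2\in V_2$ with $(\cG_1,u_1)\sim^{c',\ell'}(\cG_2,u_2)$, Transfer upgrades this to $(\cG_1,u_1)\sim_G^{c',\ell'}(\cG_2,u_2)$, the induction hypothesis gives $\cG_2,u_2\models\psi$, and hence $\cG_2,v_2\models\DiamondG\psi$; the converse is symmetric. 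Since $\top$, $p$, $\neg$, $\wedge$, and $\DiamondG$ all keep the pair $(0,0)$, the $\PLG$ claim follows. I expect the only real subtlety to lie in the $\Diamond_{\ge k}$ case, where one must verify that \emph{every} vertex in \mn{duplicator}'s answer set $S_2$ satisfies $\psi$; this holds because \mn{duplicator}'s follow-up move is constrained to pick a vertex from \mn{spoiler}'s set $S_1$, all of whose elements are $\psi$-witnesses. The other recurring ingredient is the Transfer step, which converts the merely local relation $\sim^{c,\ell}$ back into the global $\sim_G^{c,\ell}$ by exploiting that condition~(ii) is point-independent.
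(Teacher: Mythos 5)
Your proof is correct and fills in a gap the paper leaves open: Lemma~\ref{lem:gmlg-invariance} is stated as an ``observation'' alongside Lemmas~\ref{lem:bisim-implies-winning-strategy} and~\ref{lem:gmlg-bisim-equivalence} without an explicit argument. The structural induction you give is the standard one, and all the steps check out. In particular, the two auxiliary facts (monotonicity of $\sim^{c,\ell}$ in $c,\ell$, and the ``Transfer'' step, which correctly exploits that condition~(ii) of $\sim_G^{c,\ell}$ depends only on the graphs, not on the distinguished vertices) are exactly what is needed to propagate the global side-condition through the induction. The $\Diamond_{\geq k}$ case is handled properly: since \mn{duplicator}'s follow-up pick is constrained to lie in \mn{spoiler}'s set $S_1 = \{u_1^1,\dots,u_1^k\}$, every $u_2 \in S_2$ is matched against a $\psi$-witness in $\cG_1$, and since $|S_2|=k$ with $S_2$ a set of out-neighbors of $v_2$, this yields $k$ pairwise distinct $\psi$-witnesses in $\neigh_{\cG_2}(v_2)$. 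The parameter bookkeeping ($c=\max(c',k)$, $\ell=\ell'+1$ for $\Diamond_{\geq k}$; parameters unchanged for $\neg,\wedge,\DiamondG$) is correct, and it is exactly the fact that $\DiamondG$ does not increase the parameters that gives the $\PLG$ half of the statement with $(c,\ell)=(0,0)$.
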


\begin{lemma}\label{lem:gmlg-bisim-equivalence}
    For every $c, \ell \geq 0$, the equivalence relation $\sim_{G}^{c, \ell}$ has finite
    index and for each equivalence class of $\sim_{G}^{c, \ell}$, there is a
    $\GMLG$-formula $\varphi$ that defines this equivalence class.
    If $\ell = 0$, then $\varphi$ is a $\PLG$-formula.
\end{lemma}

For the first step of our proof, we now introduce the up-and-down variant $\approx$ of
graded bisimulation, and its global variant $\approx_{G}$.  In a $\Pi$-labeled graph $\cG = (V, E, \lambda)$,
we use $\pred_\cG(v)$ to denote the set of \textbf{predecessors} of $v \in V$, that is, $\{v' \mid (v', v) \in E\}$.

\begin{definition}[Up-Down Graded Bisimulation]
    An \textbf{up-down graded bisimulation} $Z$ between $\Pi$-labeled graphs   $\cG_1
= (V_1, E_1, \lambda_1)$ and $\cG_2 = (V_2, E_2, \lambda_2)$ is a graded bisimulation
that satisfies the following additional conditions:

    \begin{description}
        \item[graded up forth] for all $(u_1, u_2) \in Z$, for all $k \geq 1$: for
        pairwise distinct $v_1, \ldots v_k \in \pred_{\cG_1}(u_1)$ there are
        pairwise distinct $v'_1, \ldots, v'_k \in \pred_{\cG_2}(u_2)$ such that
        $(v_1, v'_1), \ldots$, $(v_k, v'_k) \in Z$.
        \item[graded up back] for all $(u_1, u_2) \in Z$, for all $k \geq 1$: for
        pairwise distinct $v'_1, \ldots, v'_k \in \pred_{\cG_2}(u_2)$ there are
        pairwise distinct $v_1, \ldots, v_k \in \pred_{\cG_1}(u_1)$ such that $(v_1, v'_1), \ldots$, $(v_k, v'_k) \in Z$.
    \end{description}

If there is an up-down graded bisimulation $Z$ between $\Pi$-labeled
graphs $\cG_1$ and $\cG_2$ with $(v_1, v_2) \in Z$, we write $(\cG_1, v_1)
\approx (\cG_2, v_2)$.
\end{definition}

In analogy to global graded bisimulations, we also define \textbf{global up-down
graded bisimulations} $\approx_G$ in the obvious way using the
\textbf{global forth} and \textbf{global back} conditions.

Furthermore, we also define an up-down variant of graded bisimulation games. The
\textbf{up-down $c$-graded $\ell$-round bisimulation game} extends the $c$-graded $\ell$-round bisimulation game by equipping \mn{spoiler} with
the following additional move:

\begin{description}
    \item[graded up] \mn{spoiler} chooses a non-empty subset of $\pred_{\cG_1}(v_1)$
    or $\pred_{\cG_2}(v_2)$ of size at most $c$. \mn{duplicator} must respond with a
    matching subset of $\pred_{\cG_2}(v_2)$ or $\pred_{\cG_1}(v_1)$ on the opposite side of
    the same size.

    \mn{spoiler} then picks a vertex in the set proposed by \mn{duplicator} and
    \mn{duplicator} must respond by picking a vertex in the set proposed by
    \mn{spoiler}.
\end{description}

If \mn{duplicator} has a winning strategy for the up-down $c$-graded $\ell$-round
bisimulation game on $\Pi$-labeled graphs $\cG_1$, $\cG_2$ on starting position
$(v_1, v_2)$, we say that $(\cG_1, v_1), (\cG_2, v_2)$ are \textbf{up-down $c$-graded
$\ell$-bisimilar}, written $(\cG_1, v_1) \approx^{c, \ell} (\cG_2, v_2)$.
We say that $(\cG_1, v_1), (\cG_2, v_2)$ are \textbf{globally up-down $c$-graded
$\ell$-bisimilar}, written $(\cG_1, v_1) \approx_G^{c, \ell} (\cG_2, v_2)$, if (i)~$(\cG_1, v_1) \approx^{c, \ell} (\cG_2, v_2)$ and (ii)~for every $u_1 \in V(\Gmc_1)$ there is a $u_2 \in V(\Gmc_2)$ such 
that $(\cG_1, u_1) \approx^{c, \ell} (\cG_2, u_2)$
and vice versa.

\medskip

We now start with a small but useful observation about $\sim_{G\%}$.
For a graph $\Gmc = (V, E, \lambda)$, let $q \cdot \Gmc$, for $q \geq 1$ denote the $q$-fold disjoint union of $\Gmc$ with itself, that is,
\begin{align*}
    V(q \cdot \Gmc) &= V \times [q], \\
    E(q \cdot \Gmc) &= \{ ((v, i), (v', i)) \mid (v, v') \in E, 1 \leq i \leq q\}, \\
    \lambda(q \cdot \Gmc) &= \{ (v, i) \mapsto \lambda(v) \mid  (v, i) \in V(q \cdot \Gmc) \}.
\end{align*}
\begin{proposition}
\label{prop:disjointcopy}
    $\sim_{G\%}$ is preserved under disjoint copies, that is, for all $q \geq 1$ and $i$ with $1 \leq i \leq q$, 
    \[(\Gmc, v)
    \sim_{G\%} (q \cdot \Gmc, (v, i)).\]
\end{proposition}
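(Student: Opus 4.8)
The plan is to write down the obvious relation between $\Gmc$ and $q \cdot \Gmc$ that pairs each vertex with all of its copies, show it is a graded bisimulation, and then read off from it that the multiplicity of each graded bisimulation type in $q \cdot \Gmc$ is exactly $q$ times its multiplicity in $\Gmc$; the rational number witnessing the ratio condition of $\sim_{G\%}$ is then simply $1/q$.

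Concretely, I would first set $Z \colonequals \{\, (u,(u,j)) \mid u \in V(\Gmc),\ j \in [q] \,\}$ and verify the three clauses of a graded bisimulation. Clause \textbf{atom} is immediate from the definition $\lambda(q\cdot\Gmc)((u,j)) = \lambda(u)$. For \textbf{graded forth}, given $(u,(u,j)) \in Z$ and pairwise distinct $v_1,\dots,v_k \in \neigh_\Gmc(u)$, the vertices $(v_1,j),\dots,(v_k,j)$ are pairwise distinct, belong to $\neigh_{q\cdot\Gmc}((u,j))$ since edges of $q\cdot\Gmc$ preserve the copy index, and satisfy $(v_\ell,(v_\ell,j)) \in Z$. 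For \textbf{graded back} the point to notice is that $\neigh_{q\cdot\Gmc}((u,j)) \subseteq V(\Gmc) \times \{j\}$, so any pairwise distinct out-neighbours of $(u,j)$ have the form $(w_1,j),\dots,(w_k,j)$ with $w_1,\dots,w_k \in \neigh_\Gmc(u)$ pairwise distinct, and again $(w_\ell,(w_\ell,j)) \in Z$. Since $(v,(v,i)) \in Z$, this already yields $(\Gmc,v) \sim (q\cdot\Gmc,(v,i))$.

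Next I would use $Z$ a second time for the counting part: as $(u,(u,j)) \in Z$ for all $u$ and $j$, we have $(\Gmc,u) \sim (q\cdot\Gmc,(u,j))$, and since a graded bisimulation type is defined absolutely (as a maximal class of pairwise graded-bisimilar pointed graphs), this forces $\mn{tp}_{q\cdot\Gmc}((u,j)) = \mn{tp}_\Gmc(u)$ for every $u$ and $j$. Hence, for each graded bisimulation type $t$,
\[
    |\{\, u' \in V(q\cdot\Gmc) \mid \mn{tp}_{q\cdot\Gmc}(u') = t \,\}| = q \cdot |\{\, u \in V(\Gmc) \mid \mn{tp}_\Gmc(u) = t \,\}|,
\]
so the rational $1/q > 0$ witnesses the ratio condition in the definition of $\sim_{G\%}$, and therefore $(\Gmc,v) \sim_{G\%} (q\cdot\Gmc,(v,i))$.

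There is essentially no hard step here; the one place that deserves an explicit sentence is \textbf{graded back}, where one must observe that the disjoint union introduces no edges across distinct copies, so an out-neighbour of $(u,j)$ cannot leave copy $j$ — and this is precisely the same fact that makes $\mn{tp}_{q\cdot\Gmc}((u,j))$ independent of $j$ in the counting step.
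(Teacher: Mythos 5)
Your proof is correct, and the paper actually states this proposition without proof, so you are filling in a gap rather than replicating an argument. Your approach — exhibiting the relation $Z = \{(u,(u,j)) \mid u \in V(\Gmc),\ j \in [q]\}$, checking the three clauses of a graded bisimulation (with the key observation that the disjoint union introduces no cross-copy edges, so \textbf{graded back} goes through), and then reading off $\mn{tp}_{q\cdot\Gmc}((u,j)) = \mn{tp}_\Gmc(u)$ to get the exact $q$-to-$1$ correspondence of types with witness ratio $1/q$ — is the natural and essentially unique argument, and it is the same $Z$ the paper implicitly relies on in nearby constructions (e.g.\ at the end of the proof of Lemma~\ref{lem:ratio-upgrade-inverse-bisim-to-fo}).
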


This observation allows us to show the following first lemma.

\begin{lemma}\label{lem:ratio-upgrade-inverse-bisim-to-fo}
For every $\FO$-formula  $\varphi(x)$ that is invariant under $\sim_{G\%}$,
there are $c, \ell \geq 0$ such that $\varphi$ is invariant under
$\approx_{G}^{c, \ell}$.
\end{lemma}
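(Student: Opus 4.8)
Let $m$ be the quantifier rank of $\varphi$. It is enough to produce $c,\ell\ge 0$, depending only on $m$, such that any two pointed graphs with $(\cG_1,v_1)\approx_{G}^{c,\ell}(\cG_2,v_2)$ satisfy $\cG_1\models\varphi(v_1)$ iff $\cG_2\models\varphi(v_2)$. The plan is to interpose two pointed graphs $(\cG_1',v_1')$ and $(\cG_2',v_2')$ with
\begin{enumerate}
  \item $(\cG_i,v_i)\sim_{G\%}(\cG_i',v_i')$ for $i\in\{1,2\}$, and
  \item \mn{duplicator} wins the $m$-round Ehrenfeucht--Fra\"{\i}ss\'e game on $(\cG_1',v_1')$ and $(\cG_2',v_2')$, so that these agree on every $\FO$-formula of quantifier rank at most $m$.
\end{enumerate}
Given (1) and (2), invariance of $\varphi$ under $\sim_{G\%}$ together with $\mathrm{qr}(\varphi)=m$ yields
$\cG_1\models\varphi(v_1)\iff\cG_1'\models\varphi(v_1')\iff\cG_2'\models\varphi(v_2')\iff\cG_2\models\varphi(v_2)$,
which is the claim.

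\textbf{Step 1: making the graphs locally tree-like.}
Fix $\ell$ and $c$ as large functions of $m$, as dictated by the locality analysis of~\cite{otto2004,otto2019}: $\ell$ is a Gaifman radius for quantifier rank $m$, and $c$ is a Hanf threshold, inflated enough that two depth-$\ell$ balls which are graded bisimilar up to grade $c$ are already indistinguishable by $\FO$-formulae of rank $m$. Replace each $\cG_i$ by a finite covering $\widehat{\cG}_i \to \cG_i$ whose underlying undirected graph has girth larger than $2\ell$, so that every $\ell$-ball in $\widehat{\cG}_i$ is a tree; such high-girth covers exist, for instance via random lifts. A covering map is locally bijective, hence is a global up-down graded bisimulation, and all of its fibres have equal size; consequently $\widehat{\cG}_i$ realises the same graded bisimulation types as $\cG_i$, each with its count scaled by the same factor, so $(\cG_i,v_i)\sim_{G\%}(\widehat{\cG}_i,\widehat{v}_i)$ for any lift $\widehat{v}_i$ of $v_i$. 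Since $\approx_{G}^{c,\ell}$ is transitive and is implied by $\approx_G$, the hypothesis $(\cG_1,v_1)\approx_{G}^{c,\ell}(\cG_2,v_2)$ and the covering bisimulations give $(\widehat{\cG}_1,\widehat{v}_1)\approx_{G}^{c,\ell}(\widehat{\cG}_2,\widehat{v}_2)$.

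\textbf{Step 2: locality and conclusion.}
On the locally tree-like graphs $\widehat{\cG}_1, \widehat{\cG}_2$, winning $\approx^{c,\ell}$ at a pair of vertices says precisely that their depth-$\ell$ balls are graded bisimilar up to grade $c$; here \mn{spoiler}'s \emph{up}-moves to predecessors are essential, as they pin down the backward branches of these trees, which the plain downward game $\sim^{c,\ell}$ would leave free -- this is the point at which the up-down refinement of graded bisimulation is needed. The global forth/back conditions of $\approx_{G}^{c,\ell}$ further guarantee that the multisets of ``depth-$\ell$ ball, grade $c$'' types realised in $\widehat{\cG}_1$ and $\widehat{\cG}_2$ have the same support. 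To equalise the corresponding counts up to the threshold $c$, invoke Proposition~\ref{prop:disjointcopy}: a disjoint copy of a graph is $\sim_{G\%}$-equivalent to it, scales all ball-type counts by the copy number, and leaves the ball around the distinguished vertex unchanged; replacing $\widehat{\cG}_i$ by a suitable disjoint copy -- copy numbers a common multiple, and still in the original $\sim_{G\%}$-class by transitivity of $\sim_{G\%}$ -- produces $\cG_1', \cG_2'$ whose depth-$\ell$ ball profiles agree up to threshold $c$. By the choice of $\ell$ and $c$, a Hanf-style Ehrenfeucht--Fra\"{\i}ss\'e argument then gives (2).

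\textbf{Main obstacle.}
The genuine difficulty is that every step above must remain inside the $\sim_{G\%}$-class of the original graph, which leaves essentially only two legal moves: passing to a covering -- whose fibres have equal size, hence rescaling all bisimulation-type counts uniformly -- and taking disjoint copies, which is again a uniform rescaling. The constructions of~\cite{otto2004}, which add and delete vertices freely to normalise local-type multiplicities, must therefore be replaced by ones that only ever rescale those multiplicities; and one must verify that the \emph{bounded} information that survives -- the $\approx^{c,\ell}$-types and the $c$-truncated ball profiles -- is still rich enough to drive the $m$-round $\FO$ game. Reconciling these two demands, namely compatible rescalings on the two sides together with enough local fidelity for the locality argument, is the technical heart of the lemma.
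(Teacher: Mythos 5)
Your plan follows the same overall strategy as the paper and as Otto (2004): replace each $\cG_i$ by a high-girth cover (so that $\ell$-balls are trees), check that this stays inside the $\sim_{G\%}$-class, and then apply a locality argument. The paper differs mainly in bookkeeping: it first reduces $\varphi$ to a simple $\ell$-local Gaifman formula via Otto's Proposition~19 (using invariance under disjoint copies, which your Proposition~\ref{prop:disjointcopy}-based observation also gives), and then applies Otto's Lemma~35 directly, with the only new work being to check that Otto's subdivision-plus-Cayley-product construction preserves $\sim_{G\%}$. Your substitution of ``random high-girth lifts'' for Otto's explicit construction is inessential; either yields locally tree-like covers whose fibers scale all bisimulation-type counts uniformly.

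The genuine gap is in Step~2, where you conclude ``a Hanf-style Ehrenfeucht--Fra\"{\i}ss\'e argument then gives~(2).'' The data you have established is that after blow-up, both graphs realise the same set of \emph{$c$-graded $\ell$-ball types}, each with multiplicity at least $c$, and the distinguished vertices have matching such types. But Hanf's theorem, and the usual Hanf-style EF strategy, works with exact \emph{isomorphism types} of $\ell$-balls, not with types truncated at grade~$c$. Two vertices can have the same $c$-graded $\ell$-ball type and yet non-isomorphic $\ell$-balls (e.g.\ degrees $c{+}5$ versus $c{+}500$); establishing that matching the $c$-truncated profiles still lets duplicator win the $m$-round first-order game --- for suitably chosen $c,\ell$ as functions of $m$ --- is precisely the content of Otto's Lemma~35 and is not a corollary of standard Hanf locality. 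Your ``Main obstacle'' paragraph correctly identifies this as the technical heart of the lemma, but the proof itself does not close it; you either need to reprove that truncated ball profiles suffice, or follow the paper in routing everything through the simple-$\ell$-local-Gaifman normal form and invoking Otto's Lemma~35 for the endgame. Note also that trying to establish full rank-$m$ elementary equivalence of $\cG_1',\cG_2'$ (as your plan states) is unnecessarily strong: the paper only needs agreement on the single Gaifman formula equivalent to $\varphi$, which is what Otto's lemma delivers.
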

\begin{proof}
  This is essentially already proved by Otto in \citep{otto2004}. What is shown there is
  that if an $\FO$-formula $\varphi(x)$  is invariant under $\hat\sim_{G}$,
  then there are $c, \ell \geq 0$ such that $\varphi$ is invariant under
  $\approx_{G}^{c, \ell}$, where
  $\hat \sim_G$ is defined like $\sim_G$, but with \textbf{graded back} and \textbf{graded forth}
  replaced by non-graded versions. Note that this is a stronger precondition
  since $(\Gmc,v) \sim_{G\%} (\Gmc',v')$ implies   $(\Gmc,v)
  \mathrel{\hat\sim_G} (\Gmc',v')$.
Nevertheless, exactly the same constructions also establish Lemma~\ref{lem:ratio-upgrade-inverse-bisim-to-fo}. Below, we describe Otto's proof in some more detail, and discuss why it yields Lemma~\ref{lem:ratio-upgrade-inverse-bisim-to-fo}. For full details, 
we refer the reader to~\citep{otto2004}. 

We start with introducing some relevant notions. Let $\Gmc=(V,E,\lambda)$ be a $\Pi$-labeled graph. Then
\begin{itemize}

    \item 
    the \textbf{distance} between vertices $v,v' \in V$, denoted $d(v,v')$ is the length of the shortest path between $v$ and $v'$ in \Gmc viewed as an undirected graph, and $\infty$ if no such path exists;
    
    \item the \textbf{neighborhood of radius $\ell \geq 0$} around a vertex $v$ in a graph \Gmc is
$$
   N^\ell(v)=\{ v' \in V \mid
   d(v,v') \leq \ell \}.
$$
\end{itemize}
We next define $\FO$-formulae in Gaifman form and their dimensions:
\begin{itemize}

    \item  an $\FO$-formula $\varphi(x)$ is 
\textbf{$\ell$-local} if it is equivalent to its relativization to $N^\ell(v)$;

    \item an $\FO$-formula
$\varphi(x)$ is  a
\textbf{simple $\ell$-local Gaifman formula} if it is
  a Boolean combination of (i)~$\ell$-local formulae $\psi(x)$ and
(ii)~formulae $\exists y \, \vartheta(y)$ with $\vartheta$ an $\ell$-local formula; the \textbf{local quantifier rank} of $\varphi$ is the maximum
quantifier rank of all constituting formulae $\psi,\vartheta$.
\end{itemize}
Let $\varphi(x)$ be an $\FO$-formula that is 
invariant under $\sim_{G\%}$. Then by Proposition~\ref{prop:disjointcopy} $\varphi(x)$ is invariant under disjoint copies. By Proposition~19 of \citep{otto2004}, this implies
that $\varphi(x)$ is equivalent to a simple $\ell$-local Gaifman formula
$\psi(x)$, for some $\ell \geq 0$. 

We can then use the proof of Lemma~35 of \citep{otto2004}, without modifying any constructions, to show that together with invariance under $\sim_{G\%}$,  this implies that $\varphi(x)$ is invariant under $\approx^{c,\ell}_G$ where $c$ is the local quantifier rank of $\psi$. This implies Lemma~\ref{lem:ratio-upgrade-inverse-bisim-to-fo}. We next give some details.
Assume to the contrary of what we have to show that there
are $\Pi$-labeled pointed graphs $(\Gmc_1,v_1)$ and $(\Gmc_2,v_2)$ such
that $(\Gmc_1,v_1) \approx^{c,\ell}_G (\Gmc_2,v_2)$,
$\Gmc_1 \models \varphi(v_1)$, and $\Gmc_2 \not\models \varphi(v_2)$. We show
how to transform $(\Gmc_1,v_1)$ and $(\Gmc_2,v_2)$ into $\Pi$-labeled pointed
graphs $(\Hmc_1,u_1)$ and $(\Hmc_2,u_2)$ such that the following conditions are
satisfied:
\begin{enumerate}

    \item $(\Gmc_i,v_i) \sim_{G\%} (\Hmc_i,u_i)$ for $i \in \{1,2\}$;

    \item there is no simple $\ell$-local Gaifman formula $\chi(x)$ such that $\Hmc_1 \models \chi(u_1)$ and $\Hmc_2 \not\models \chi(u_2)$.
    
\end{enumerate}
This clearly yields the desired contradiction: from
$\Gmc_1 \models \varphi(v_1)$ and $\Gmc_2 \not\models \varphi(v_2)$ and Point~1, we obtain 
$\Hmc_1 \models \varphi(u_1)$, and $\Hmc_2 \not\models \varphi(u_2)$, in contradiction to Point~2 and $\varphi$ being equivalent to $\psi$.

In \citep{otto2004}, $(\Hmc_1,u_1)$ and
$(\Hmc_2,u_2)$ are obtained from $(\Gmc_1,v_1)$ and $(\Gmc_2,v_2)$ in three steps, which sufficiently increase the length of all cycles in $\Gmc_1$ and $\Gmc_2$. Recall that the \textbf{girth} of a graph is the minimal length of cycles in that graph. The first step introduces extra vertices to ensure that all cycles have at least length 3, the second step takes the product with the Caley graph of a group that has high girth which increases the length of all cycles of length at least 3, and the last step removes the extra vertices introduced by the first step.

Formally, the first step is to transform $(\Gmc_1,v_1)$ and $(\Gmc_2,v_2)$ into $\Pi \cup \{ X,Y\}$-labeled pointed graphs $(\Imc_1,v_1)$ and $(\Imc_2,v_2)$ where $X,Y$ are fresh vertex labels. For $i \in \{1,2\}$, define\footnote{We replace every edge with a path of length~3. In \citep{otto2004}, a path of length~2 is used. We believe that this is a mistake as  a reflexive edge in
$\Gmc_i$ then results in two symmetric edges in $\Imc_i$, but this is not allowed in a graph that is simple according to Definition~27 in \citep{otto2004}.}
$$
\begin{array}{rcl}
    V(\Imc_i) &=& V(\Gmc_i) \cup \{ v_e,u_e \mid e \in E(\Gmc_i) \} \\[1mm]
    E(\Imc_i) &=& \{ (v,v_e),(v_e,u_e),(u_e,u) \mid \\[1mm]
    && \qquad \qquad e = (v,u) \in E(\Gmc_i) \} \\[1mm]
    \lambda(\Imc_i) &=& \{v \mapsto L \mid v \mapsto L \text{ in } \lambda(\Gmc_i) \} \; \cup \\[1mm]
    && \{v_e \mapsto \{X\},
    u_e \mapsto \{Y\} \mid e \in E(\Gmc_i) \}.
\end{array}
$$
We next take the product with (the Cayley graph of) a finite group of high girth.  Let $(G, \circ)$ be a finite group of sufficiently high girth and $g \colon E(\Imc_i) \rightarrow G$
an embedding such that $\{g(e) \mid e \in E(\Imc_i)\} \cap \{g(e)^{-1} \mid e \in E(\Imc_i) \} = \emptyset$. 
We set
$$
\begin{array}{rcl}
    V(\Imc'_i) &=& V(\Imc_i) \times G \\[1mm]
    E(\Imc'_i) &=& \{ ((v,a),(u,a \circ g(e)))  \mid \\[1mm]
    && \qquad \qquad e=(v,u) \in E(\Imc_i), \ a \in G \} \\[1mm]
    \lambda(\Imc'_i) &=&  \{(v,a) \mapsto L \mid v \mapsto L \text{ in } \lambda(\Imc_i),\ a \in G \}.
\end{array}
$$
In the third step, we `reverse' the effect of the first step and move back to $\Pi$-labeled pointed graphs by setting
$$
\begin{array}{rcl}
    V(\Hmc_i) &=& \{ (v,a) \mid 
    (v,a) \in V(\Imc'_i)\text{ with } v \in V(\Gmc_i) \}  \\[1mm]
    E(\Hmc_i) &=& \{ ((v,a_1),(u,a_4)) \mid \\[1mm] 
    && \quad ((v,a_1),(v_e,a_2)), ((v_e,a_2),(u_e,a_3)), \\[1mm]
    && \quad \quad ((u_e,a_3),(u,a_4)) \in E(\Imc'_i), \\[1mm]
    && \hspace{-3em} X \in \lambda(\Imc'_i)((v_e, a_2)), \text{ and } Y \in \lambda(\Imc'_i)((u_e, a_3)) 
    \} 
\end{array}
$$
and taking $\lambda(\Hmc_i)$ to be the restriction of $\lambda(\Imc'_i)$ to the vertices in $\Hmc_i$. We choose an $a \in G$ and set $u_1=(v_1,a)$
and $u_2 = (v_2,a)$.

It is proved in \citep{otto2004} (Lemma~35) that $(\Gmc_1,v_1) \approx^{c,\ell}_G (\Gmc_2,v_2)$
implies Point~2 above. It thus remains to prove Point~1, that is, for $i \in \{1,2\}$
(i)~$(\cG_i, v_i)
    \sim
    (\cH_i, u_i)$ and (ii)~there exists
    a rational $q > 0$ such that for each graded bisimulation type $t$ over $\Pi$,
    \[
        \begin{aligned}
            &|\{ v \in V(\Gmc_i) \mid \mn{tp}_{\Gmc_i}(v)=t\}| \\
       & \qquad \qquad = q \cdot  |\{ v \in V(\Hmc_i) \mid \mn{tp}_{\Hmc_i}(v)=t\}|.
        \end{aligned}
    \]
    It is easy to verify that
$$
 Z = \{ (v,(v,a)) \mid (v,a) \in V(\Hmc_i) \}
$$
is a graded bisimulation. Together with the choice of $u_1$ and $u_2$, this yields Point~(i). It also implies
$\mn{tp}_{\Gmc_i}(v)=\mn{tp}_{\Hmc_i}(v,a)$ for all $(v,a) \in V(\Hmc_i)$
and thus yields Point~(ii) for $q=\frac{1}{|G|}$.
\end{proof}

In the next step of the proof, we show invariance under an up-down bisimulation relation that can count
out-neighbors, but not predecessors.

\begin{definition}[Up-Ungraded Down-Graded Bisimulation]
    An \textbf{up-ungraded down-graded bisimulation} $Z$ between $\Pi$-labeled graphs   $\cG_1
= (V_1, E_1, \lambda_1)$ and $\cG_2 = (V_2, E_2, \lambda_2)$ is a graded bisimulation
that satisfies the following additional conditions:

    \begin{description}
        \item[ungraded up forth] for all $(u_1, u_2) \in Z$ and
        $v_1 \in \pred_{\cG_1}(u_1)$, there is a $v_2 \in \pred_{\cG_2}(u_2)$
        such that $(v_1,v_2) \in Z$.
        \item[ungraded up back] for all $(u_1, u_2) \in Z$ and
        $v_2 \in \pred_{\cG_2}(u_2)$, there is a $v_1 \in \pred_{\cG_1}(u_1)$
        such that $(v_1,v_2) \in Z$.
    \end{description}

If there is an up-ungraded down-graded bisimulation $Z$ between $\Pi$-labeled
graphs $\cG_1$ and $\cG_2$ with $(v_1, v_2) \in Z$, we write $(\cG_1, v_1)
\approx^\downarrow (\cG_2, v_2)$.
\end{definition}
In analogy to global graded bisimulations, we also define \textbf{global up-ungraded down-graded bisimulations} $\approx_{G}^\downarrow$ in the obvious way using the conditions
\textbf{global forth} and \textbf{global back}.

Furthermore, we define a variant of graded bisimulation games that corresponds to $\approx^\downarrow$. The
\textbf{up-ungraded down-$c$-graded $\ell$-round bisimulation game} extends the 
$c$-graded $\ell$-round bisimulation game by equipping \mn{spoiler} with
the following additional move:

\begin{description}
    \item[ungraded up] \mn{spoiler} chooses an element of $\pred_{\cG_1}(v_1)$
    or $\pred_{\cG_2}(v_2)$. \mn{duplicator} must respond with an element of
    $\pred_{\cG_2}(v_2)$ or $\pred_{\cG_1}(v_1)$ on the opposite side.
\end{description}

If \mn{duplicator} has a winning strategy for the up-ungraded down-$c$-graded $\ell$-round
bisimulation game on $\Pi$-labeled graphs $\cG_1$, $\cG_2$ on starting position
$(v_1, v_2)$, we say that $(\cG_1, v_1), (\cG_2, v_2)$ are \textbf{up-ungraded down-$c$-graded
$\ell$-bisimilar}, written $(\cG_1, v_1) \approx^{c\downarrow, \ell} (\cG_2, v_2)$.
We say that $(\cG_1, v_1), (\cG_2, v_2)$ are \textbf{globally up-ungraded down-$c$-graded
$\ell$-bisimilar}, written $(\cG_1, v_1) \approx_G^{c\downarrow, \ell} (\cG_2, v_2)$, if (i)~$(\cG_1, v_1) \approx^{c\downarrow, \ell} (\cG_2, v_2)$ and (ii)~for every $u_1 \in V(\Gmc_1)$ there is a $u_2 \in V(\Gmc_2)$ such 
that $(\cG_1, u_1) \approx^{c\downarrow, \ell} (\cG_2, u_2)$
and vice versa.
\begin{lemma}\label{lem:bisim-upgrade-up-ungraded-to-up-graded}
Every $\FO$-formula $\varphi(x)$ that is invariant under $\sim_{G\%}$ and
under $\approx_{G}^{c, \ell}$, with $c,\ell \geq 0$, is also invariant under
$\approx_{G}^{c\downarrow, \ell}$.
\end{lemma}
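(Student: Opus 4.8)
The plan is to argue by contradiction, following the pattern of the preceding lemmas and of Otto~\cite{otto2004}: reduce invariance under the \emph{coarser} relation $\approx_G^{c\downarrow,\ell}$ to the already-available invariance under the \emph{finer} relation $\approx_G^{c,\ell}$, paying for the reduction with a model transformation that is absorbed by invariance under $\sim_{G\%}$.

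Concretely, suppose $\varphi(x)$ is invariant under $\sim_{G\%}$ and under $\approx_G^{c,\ell}$ but not under $\approx_G^{c\downarrow,\ell}$. Then there are $\Pi$-labeled pointed graphs $(\cG_1,v_1)\approx_G^{c\downarrow,\ell}(\cG_2,v_2)$ with $\cG_1\models\varphi(v_1)$ and $\cG_2\not\models\varphi(v_2)$. I would build pointed graphs $(\cH_1,u_1)$, $(\cH_2,u_2)$ with (1)~$(\cG_i,v_i)\sim_{G\%}(\cH_i,u_i)$ for $i\in\{1,2\}$ and (2)~$(\cH_1,u_1)\approx_G^{c,\ell}(\cH_2,u_2)$. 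Granting this, invariance under $\sim_{G\%}$ and (1) give $\cH_1\models\varphi(u_1)$ and $\cH_2\not\models\varphi(u_2)$, while invariance under $\approx_G^{c,\ell}$ and (2) give $\cH_1\models\varphi(u_1)$ iff $\cH_2\models\varphi(u_2)$, a contradiction. So the whole content is the construction of the $(\cH_i,u_i)$.

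The construction should make the predecessor structure of $\cH_i$ ``$c$-saturated'': every vertex realizes each of its predecessor types with multiplicity $0$ or at least $c$. The naive way to attempt this is, for each vertex $u$ of $\cG_i$, to add a bounded number of graded-bisimilar copies (``clones'') of $u$ whose outgoing edges point into the original part of $\cH_i$, and to leave the originals' outgoing edges untouched. Since no outgoing edge is added to an original vertex, each vertex keeps its out-neighbour multiset up to graded bisimilarity, so $(\cG_i,v_i)\sim(\cH_i,u_i)$ with $u_i=v_i$; and since every vertex is copied the same number of times, all $\mn{tp}$-multiplicities get scaled by one fixed factor, giving the ratio condition in~(1). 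On a $c$-saturated $\cH_i$, a \mn{spoiler} move in the $\approx_G^{c,\ell}$-game challenging up to $c$ distinct predecessors can be answered using the $\approx_G^{c\downarrow,\ell}$-winning strategy on $\cG_1,\cG_2$: group the challenged predecessors by game type, answer one representative of each group by the ungraded-up strategy, and use the $\ge c$ available copies on the opposite side to keep the responses distinct.

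I expect the main obstacle to be that this naive construction does not quite work while respecting $\sim_{G\%}$. The trouble is that when the ungraded-up strategy maps two distinct original predecessors of a vertex to the same predecessor on the other side, \mn{duplicator} is forced to answer with copies sitting in different ``layers'' of $\cH_i$, and one then has to stop the game from reaching an original vertex paired with a copy that has no predecessors at all (which would strand \mn{duplicator} at the next up-move). Repairing this is exactly the place where, as the paper notes, Otto's constructions from~\cite{otto2004} must be replaced by more careful ones: I would combine the cloning with an unfolding step — a product with a high-girth group, as in the proof of Lemma~\ref{lem:ratio-upgrade-inverse-bisim-to-fo} — so that these dangerous configurations disappear, while keeping the copy-count uniform so that the type ratios, and hence $\sim_{G\%}$, are still preserved. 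Checking that the refined construction simultaneously achieves~(1) and~(2) is the technical heart of the argument.
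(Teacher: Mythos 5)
Your high-level template is exactly right and matches the paper's: reduce invariance under $\approx_G^{c\downarrow,\ell}$ to invariance under $\approx_G^{c,\ell}$ by transforming $(\cG_1,v_1)$, $(\cG_2,v_2)$ into $(\cH_1,u_1)$, $(\cH_2,u_2)$ satisfying (1)~$(\cG_i,v_i)\sim_{G\%}(\cH_i,u_i)$ and (2)~$(\cH_1,u_1)\approx_G^{c,\ell}(\cH_2,u_2)$. You also correctly diagnose the obstacle with naive cloning: copies with no predecessors strand \mn{duplicator} at the next \textbf{graded up} move. However, the repair you propose --- combining cloning with a high-girth group product as in Lemma~\ref{lem:ratio-upgrade-inverse-bisim-to-fo} --- does not work and is not what the paper does. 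The group product is a bijection on incoming edges: a vertex $(u,a)$ in the product has predecessors $(v, a\cdot g(e)^{-1})$, one per predecessor $v$ of $u$ in $\cG$, so the predecessor multiplicities are preserved, not $c$-saturated. It unfolds cycles but does nothing to ensure $\geq c$ copies of each predecessor type, and it does not eliminate the dead-end copies created by the cloning step.

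What the paper uses instead is a depth-bounded word-indexed layering: set $\cH_i$ to have vertices $(v,w)$ with $w\in W_{c,\ell}$ (words over $\{1,\dots,c\}$ of length $\le\ell$), edges $((v,w\cdot b),(u,w))$ for $(v,u)\in E_i$ (plus a closed $\varepsilon$-layer with $((v,\varepsilon),(u,\varepsilon))$), and inherited labels. Every vertex at depth $<\ell$ then has exactly $c$ copies of each predecessor, vertices at depth $\ell$ have none, and every $(v,w)$ retains the out-neighbor structure of $v$, so projection onto the first coordinate is a graded (down-only) bisimulation; since every $v$ gets the same number $|W_{c,\ell}|$ of copies, $\sim_{G\%}$ holds with $q=1/|W_{c,\ell}|$. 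Point~(2) is then shown by a game argument that tracks the invariant $(*)$ on word lengths: \mn{duplicator} starts at $\varepsilon$, each \textbf{graded up} move increases depth by at most one, and the game runs for only $\ell$ rounds, so the pair never reaches a mismatch in which one side is at the dead-end depth $\ell$ while the other is not. Getting this depth-bounding invariant right --- rather than a girth-based unfolding --- is what makes the construction go through, and your proposal as written does not supply it.
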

\begin{proof}
Let $\varphi(x)$ be an $\FO$-formula that is invariant under $\sim_{G\%}$ and
$\approx_{G}^{c, \ell}$ with $c, \ell \geq 0$.
Assume to the contrary of what we have to show that there exist $\Pi$-labeled pointed graphs
$(\Gmc_1, v_1)$ and $(\Gmc_2, v_2)$ such that $(\Gmc_1, v_1)
\approx_{G}^{c\downarrow, \ell} (\Gmc_2, v_2)$, $\Gmc_1 \models \varphi(v_1)$
and $\Gmc_2 \not\models \varphi(v_2)$.
We construct new graphs $\Hmc_1, \Hmc_2$ as follows. 

Let $W_{c, \ell}$ be the finite  set of all words over the alphabet $\{1, \ldots, c\}$ that are of
length at most $\ell$, and let $\cdot$ denote concatenation.
For $i \in \{1, 2\}$ and $\Gmc_i = (V_i, E_i, \lambda_i)$, set $\Hmc_i = (V'_i, E'_i, \lambda'_i)$, where
\begin{align*}
    V'_i = {} & \{ (v, w) \mid v \in V_i, w \in W_{c, \ell}\}, \\
    E'_i = {} & \{ ((v, w \cdot b), (u, w)) \mid (v, u) \in E_i, \\
    & \qquad \qquad \qquad b \in [c], w \cdot b \in W_{c, \ell}\} \cup {} \\
         &\{ ((v, \varepsilon), (u, \varepsilon)) \mid (v, u) \in E_i\}, \\
    \lambda'_i = {} & \{ (v, w) \mapsto \lambda_i(v) \mid (v, w) \in V'_i\}.
\end{align*}

We show that $\Hmc_1, \Hmc_2$ satisfy the following conditions:
\begin{enumerate}
    \item $(\Gmc_i, v_i) \sim_{G\%} (\Hmc_i, (v_i, w))$ for all $w \in W_{c,
    \ell}$ and $i \in \{1, 2\}$;
    
    \item $(\Hmc_1, (v, w_1)) \approx^{c, k} (\Hmc_2, (u, w_2))$ for
    all $k \leq \ell$ and $w_1, w_2 \in W_{c, \ell}$ such that
    \begin{itemize}
        \item[$(*)$] if $|w_1| \neq |w_2|$, then $\max( |w_1|, |w_2|) \leq \ell - k$,
    \end{itemize} 
    and all $v, u$ such that $(\Gmc_1, v) \approx^{c\downarrow, k} (\Gmc_2, u)$;

    \item $(\Hmc_1, (v_1, \varepsilon)) \approx_{G}^{c, \ell} (\Hmc_2, (v_2, \varepsilon))$.
\end{enumerate}

This yields the desired contradiction: from Point~1 we can conclude that
$\Hmc_1 \models \varphi(v_1, \varepsilon)$ and $\Hmc_2 \not\models
\varphi(v_2, \varepsilon)$, but this together with Point~3 contradicts that $\varphi$ is invariant
under $\approx_{G}^{c, \ell}$. We use Point~2 to show Point~3.

\medskip
\noindent
\textit{Proof of Point~1}. Observe that for all $i \in \{1, 2\}$,
\[
    \{(v, (v, w)) \mid (v, w) \in V'_i\}
\]
is a graded (downwards-only) bisimulation between $\Gmc_i$ and $\Hmc_i$. Hence, for
all $(v, w) \in V'_i$, $\mn{tp}_{\Gmc_i}(v) = \mn{tp}_{\Hmc_i}(v, w)$. Note
that, as before, $\mn{tp}$ refers to the \emph{downwards-only} graded bisimulation
type.
Point~1 then follows from the fact that for every vertex $v \in V_i$, there are
exactly $|W_{c, \ell}|$ vertices $(v, w) \in V'_i$.

\medskip
\noindent
\textit{Proof of Point~2}. We show Point~2 by induction on $k$. In the case $k = 0$, $(\Hmc_1, (v, w_1))
\approx^{c, k} (\Hmc_2, (u, w_2))$ is immediate by definition of $\lambda'_i$.
Now let $k > 0$, and assume that the up-down $c$-graded $k$-round bisimulation
game starts in position $((v, w_1), (u, w_2))$ such that $(*)$ is satisfied.

If \mn{spoiler} makes a \textbf{graded down} move in $\Hmc_1$, all vertices
they select must be of shape $(v', w_1')$ with either $w_1 = w_1' =
\varepsilon$, or $w_1 = w_1' \cdot b$ for some $b \in [c]$.
As $(\Gmc_1, v) \approx^{c\downarrow, k}(\Gmc_2, u)$ and by construction of $\Hmc_1$ and $\Hmc_2$,   for each
$(v', w_1')$ selected by \mn{spoiler}, there must be a distinct $(u', w_2') \in V'_2$ with
$w_2 = w_2' = \varepsilon$ or $w_2 = w_2' \cdot b$ for some $b \in [c]$, such that
$(\Gmc_1, v') \approx^{c\downarrow, k - 1}(\Gmc_2, u')$.
Note that $\max(|w_1'|, |w_2'|) \leq \max(|w_1|, |w_2|) \leq \ell - k$ and
therefore $(*)$ holds for every possible new position. The induction hypothesis
then yields that \mn{duplicator} has a winning strategy for the $k - 1$-round
game from any possible new position, as required. The symmetric argument applies if
\mn{spoiler} makes a \textbf{graded down} move in $\Hmc_2$.

If \mn{spoiler} makes a \textbf{graded up} move in $\Hmc_1$, they select at most
$c$ distinct predecessors of $(v, w_1)$, which are of the form $(v', w_1 \cdot b)$ with $b \in [c]$, or possibly $(v', \varepsilon)$ if $w_1 = \varepsilon$. As
$(\Gmc_1, v) \approx^{c\downarrow, k}(\Gmc_2, u)$, there must be, for each
such $v'$ that occurs in the set selected by \mn{spoiler}, a corresponding predecessor
 $u'$ of $u$ such that $(\Gmc_1, v') \approx^{c\downarrow, k -
1}(\Gmc_2, u')$. Now, for each $(v', w_1 \cdot b)$ or $(v', \varepsilon)$ in the set selected by
\mn{spoiler}, \mn{duplicator} can select an answer $(u', w_2 \cdot a)$, starting with
$a = 1$ and increasing $a$ such that all choices are distinct. This is
possible because $w_1$ and $w_2$ satisfy $(*)$:
If $|w_1| = |w_2|$, then the existence of predecessors of $(v, w_1)$ implies that
$|w_1| = |w_2| < \ell$ by construction of $\Hmc_1$, and thus
$(u, w_2)$ has predecessors of the shape $(u', w_2 \cdot a)$ by
construction of $\Hmc_2$.
If $|w_1| \neq |w_2|$, then $(*)$ implies that $\max(|w_1|, |w_2|)
\leq \ell - k \leq \ell - 1$ and therefore $(u, w_2)$ also has vertices of the shape
$(u', w_2 \cdot a)$ as predecessors by construction of $\Hmc_2$.  
Furthermore, \mn{spoiler} selects at most $c$ vertices and $(u', w_2 \cdot a) \in V'_2$
for $a \in [c]$.

Now \mn{spoiler} selects one vertex from \mn{duplicator}'s set, and \mn{duplicator}
can respond accordingly. Let $((v', w_1'), (u', w_2'))$ be the resulting
position. If $|w_1'| = |w_2'|$, then $(*)$ holds. If $|w_1'| \neq |w_2'|$, then
$\max(|w_1'|, |w_2'|) \leq \max(|w_1|, |w_2|) + 1 \leq \ell - (k - 1)$. Thus
$(*)$ also holds for $k - 1$. As furthermore $(\Gmc_1, v') \approx^{c\downarrow, k -
1}(\Gmc_2, u')$ the induction hypothesis implies that \mn{duplicator} has a
winning strategy for the $k - 1$ round game from this position. A symmetric argument applies if
\mn{spoiler} makes a \textbf{graded up} move in $\Hmc_2$.

\medskip
\noindent
\textit{Proof of Point~3}. 
$(\Hmc_1, (v_1, \varepsilon)) \approx^{c, \ell} (\Hmc_2, (v_2, \varepsilon))$ follows
from Point~2 and the fact that $(\Gmc_1, v_1) \approx^{c\downarrow, \ell} (\Gmc_2,
v_2)$.
Let $(v, w)$ be any vertex in $\Hmc_1$. As $(\Gmc_1, v_1) \approx_{G}^{c
\downarrow, \ell} (\Gmc_2, v_2)$, there must be a vertex $u$ in $\Gmc_2$ such that
$(\Gmc_1, v) \approx^{c\downarrow, \ell} (\Gmc_2, u)$. Consider the vertex $(u, w)$ in $\Hmc_2$.
As the second components of $(v, w)$ and $(u, w)$ are identical, $(*)$ of Point~2 is satisfied. Thus,
$(\Hmc_1, (v, w)) \approx^{c, \ell} (\Hmc_2, (u, w))$ as required.
The symmetric argument applies to any vertex $(u, w)$ in $\Hmc_2$.
\end{proof}

In the last step of our proof, we finally show invariance under $\sim_{G}^{c, \ell}$.

\begin{lemma}\label{lem:ratio-upgrade-bisim-to-inverse-bisim}
Every $\FO$-formula $\varphi(x)$ that is invariant under $\sim_{G\%}$ and
under $\approx_{G}^{c\downarrow, \ell}$, with $c,\ell \geq 0$, is also invariant under
$\sim_{G}^{c, 2 \ell}$.
\end{lemma}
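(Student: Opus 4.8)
We argue by contraposition, following the template of Lemmas~\ref{lem:ratio-upgrade-inverse-bisim-to-fo} and~\ref{lem:bisim-upgrade-up-ungraded-to-up-graded}. Assume $(\Gmc_1,v_1) \sim_{G}^{c,2\ell} (\Gmc_2,v_2)$ while $\Gmc_1 \models \varphi(v_1)$ and $\Gmc_2 \not\models \varphi(v_2)$; we derive a contradiction. The plan is to construct $\Pi$-labeled pointed graphs $(\Hmc_1,u_1)$ and $(\Hmc_2,u_2)$ such that
\begin{enumerate}
\item $(\Gmc_i,v_i) \sim_{G\%} (\Hmc_i,u_i)$ for $i \in \{1,2\}$, and
\item $(\Hmc_1,u_1) \approx_{G}^{c\downarrow,\ell} (\Hmc_2,u_2)$.
\end{enumerate}
Then Point~1 and $\sim_{G\%}$-invariance of $\varphi$ give $\Hmc_1 \models \varphi(u_1)$ and $\Hmc_2 \not\models \varphi(u_2)$, contradicting Point~2 and $\approx_{G}^{c\downarrow,\ell}$-invariance of $\varphi$.

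For the construction, $\Hmc_i$ is obtained from $\Gmc_i$ by a \emph{bounded-depth unfolding in the predecessor direction}, performed uniformly over all vertices, in the spirit of the construction $V_i \times W_{c,\ell}$ in the proof of Lemma~\ref{lem:bisim-upgrade-up-ungraded-to-up-graded}: the vertex set is $V_i \times W$ for a suitable finite set $W$ of ``descent tags'', and the edges are arranged so that (i)~every copy $(v,w)$ has the same graded bisimulation type as $v$ in $\Gmc_i$, and (ii)~the tag $w$ faithfully records the downward moves that produced the current vertex, so that passing to a predecessor of $(v,w)$ ``undoes'' exactly one recorded step. If necessary, one may additionally pass to a product with the Cayley graph of a finite group of sufficiently high girth, as in the proof of Lemma~\ref{lem:ratio-upgrade-inverse-bisim-to-fo}; this preserves types and, since the fibre projection is a graded bisimulation, also multiplicities, while removing short cycles so that the portion of $\Hmc_i$ relevant to an $\ell$-round game is locally a tree-unfolding of $\Gmc_i$. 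We take $u_i$ to be the copy of $v_i$ with empty tag. Point~1 is then routine: the relation $\{(v,x) \mid x\in V(\Hmc_i) \text{ a copy of } v\}$ is a graded bisimulation relating $(v_i,u_i)$, and by~(i) together with the uniformity of the construction every vertex of $\Gmc_i$ has the same number of copies in $\Hmc_i$, so the rational $q$ required by $\sim_{G\%}$ is this common count.

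Point~2 is the heart of the argument. We transform a winning strategy of \mn{duplicator} in the (downward-only) $c$-graded $2\ell$-round game on $\Gmc_1,\Gmc_2$ from $(v_1,v_2)$ into one in the up-ungraded down-$c$-graded $\ell$-round game on $\Hmc_1,\Hmc_2$ from $(u_1,u_2)$, as well as the global condition. \mn{duplicator} keeps the two tags synchronized and maintains, besides the current $\Gmc$-game position, a stack of the earlier $\Gmc$-positions traversed; a \textbf{graded down} move in the $\Hmc$-game projects to a \textbf{graded down} move in the $\Gmc$-game (costing at most two rounds because of the tagging), whereas an \textbf{ungraded up} move in the $\Hmc$-game is, in the locally tree-like $\Hmc_i$, forced and is answered by popping the stack to the corresponding earlier $\Gmc$-position. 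Since an $\ell$-round $\Hmc$-game realizes an up-down walk of length at most $\ell$ and each of its steps costs at most two rounds of the $\Gmc$-game, a $2\ell$-round $\Gmc$-game suffices; a bookkeeping invariant on tag lengths near the frontier of the unfolding, analogous to condition~$(*)$ in the proof of Lemma~\ref{lem:bisim-upgrade-up-ungraded-to-up-graded}, guarantees there is always room to answer an up-move, and the global condition of $\approx_{G}^{c\downarrow,\ell}$ transfers from that of $\sim_{G}^{c,2\ell}$ since every vertex of $\Hmc_i$ projects to a vertex of $\Gmc_i$.

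The main obstacle is the design of $\Hmc_i$: it must simultaneously be graded bisimilar to $\Gmc_i$ at $v_i$ with a \emph{uniform} ratio of graded bisimulation types, so that $\sim_{G\%}$-invariance of $\varphi$ can be applied, and have predecessor structure tame enough that the purely downward $2\ell$-round game on $\Gmc_1,\Gmc_2$ can answer the upward moves of the $\ell$-round game on $\Hmc_1,\Hmc_2$. Reconciling these two requirements is precisely where the off-the-shelf unfoldings of~\cite{otto2004} must be replaced by the more careful, ratio-preserving variants.
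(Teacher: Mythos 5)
Your high-level contrapositive setup and the two target properties of the pair $(\Hmc_1,u_1)$, $(\Hmc_2,u_2)$ are exactly right, and you correctly recognise that the crux is constructing an $\Hmc_i$ that is simultaneously $\sim_{G\%}$-equivalent to $\Gmc_i$ and has ``tame'' predecessor structure. But that construction is precisely what is left unresolved, and the sketched ingredients do not obviously deliver it. A ``descent tag'' $w$ recording the downward moves that reached the current vertex faces an immediate problem: the start vertex $u_i$ has an empty tag, so if \mn{spoiler} opens with an {\bf ungraded up} move there is no recorded step to undo and no stack to pop. The paper avoids this by taking tags to be $\ell$-\emph{history choices} $\chi$ (functions assigning to every vertex of $\Gmc_i$ one of its $\ell$-histories), not records of the path actually walked. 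Edges of $\Hmc_i$ are then defined so that following an edge updates $\chi$ deterministically; the payoff is Point~4 of the paper's proof: every $(v,\chi)$ has $\chi(v)$ as its unique maximal $\ell$-history, which makes up-moves answerable without any game-strategy stack. Since $\Hmc_i$'s vertex set is the full product of $V(\Gmc_i)$ with the set of all history choices, each vertex of $\Gmc_i$ is duplicated exactly the same number $n_H$ of times, which is what yields $\sim_{G\%}$.

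The Cayley-graph high-girth product you invoke is a tool from the \emph{other} direction (Lemma~\ref{lem:ratio-upgrade-inverse-bisim-to-fo}, upgrading local Gaifman invariance to $\approx_G^{c,\ell}$) and does not substitute for history choices here: high girth makes neighbourhoods tree-like but does not make predecessor types well-defined, and it does not by itself reconcile predecessor control with the ratio property. Your accounting ``each step of the $\Hmc$-game costs at most two rounds of the $\Gmc$-game'' is also not how the factor $2$ arises in the paper: the first $\ell$ of the $2\ell$ rounds of the $\sim_G^{c,2\ell}$ game are spent walking down the (unique maximal) history path to reach the right starting position in $\Hmc_{3-i}$, and only the remaining $\ell$ rounds track the $\ell$-round $\approx_G^{c\downarrow,\ell}$ game. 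You explicitly flag the need for ``more careful, ratio-preserving variants'' of Otto's unfolding, which is correct diagnosis, but that is the content of the lemma rather than a detail to be filled in; as written, the proposal identifies the obstacle without overcoming it.
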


\begin{proof}
    The proof is inspired by the unique history construction from Otto's proof of Lemma 40 in~\citep{otto2004}. We can, however, not use that
    construction as is because given a $\Pi$-labeled pointed graph $(\Gmc,v)$ we want to produce a
    $\Pi$-labeled pointed graph $(\Hmc,u)$ such that $(\Gmc,v) \sim_{G\%} (\Hmc,u)$, and Otto's
    construction does not achieve this. 
    We thus adapt it in a suitable way. 

    Let $\varphi(x)$ be an $\FO$-formula that is
    invariant under $\sim_{G\%}$ and  $\approx_{G}^{c\downarrow, \ell}$, with $c,\ell \geq 0$.
    Assume to the contrary of what we have to 
    show that there exist pointed graphs  $(\Gmc_1,v_1)$ and $(\Gmc_2,v_2)$ such
     that $(\Gmc_1,v_1) \sim_{G}^{c, 2 \ell} (\Gmc_2,v_2)$,
    $\Gmc_1 \models \varphi(v_1)$, and $\Gmc_2 \not\models \varphi(v_2)$. 
    We construct new graphs $\Hmc_1,\Hmc_2$ as follows. 

    Let $\Pi$ be a finite set of vertex label symbols.
    A \textbf{$c$-graded $\ell$-bisimulation type over $\Pi$  is a maximal set $t$ of $\Pi$-labeled pointed graphs such
that $(\Gmc,v) \sim^{c, \ell} (\Gmc',v')$ for all $(\Gmc,v),(\Gmc',v') \in t$.}

For a $\Pi$-labeled pointed
graph $(\Gmc,v)$, we use $\mn{tp}^{c, \ell}_\Gmc(v)$ to denote the unique $c$-graded
$\ell$-bisimulation type $t$ over $\Pi$ such that $(\Gmc,v) \in t$.
If $u_1,\dots,u_k$ is a path in $\Gmc$ with $k \leq \ell$, then we call the
sequence $h=\mn{tp}^{c, \ell}_{\Gmc}(u_1),\ldots,\mn{tp}^{c, \ell}_{\Gmc}(u_k)$ an
\textbf{$\ell$-history} of $u_k$ in~$\Gmc$. We say that $h$ is \textbf{maximal} if
$k=\ell$ or $u_1$ does not have any predecessors in $\Gmc$.\footnote{Our
histories are defined in a more liberal way than in \citep{otto2004}, where they
must be 
maximal.}
We use $\mn{tail}(h)$ to denote $\mn{tp}^{c, \ell}_{\Gmc}(u_k)$. Note that a 
vertex may have multiple $\ell$-histories, of varying lengths, and that every
vertex has at least one maximal $\ell$-history.
We say that an $\ell$-history $h_2$ is a \textbf{continuation} of an $\ell$-history $h_1=t_1,\dots,t_k$ if one of the following conditions holds:
 \begin{itemize}
   
    \item $k < \ell$ and  $h_2=t_1,\dots,t_k,t$ for some $t$;

    \item $k=\ell$  and $h_2=t_2,\dots,t_k,t$ for some $t$.

\end{itemize}
An \textbf{$\ell$-history choice} for \Gmc is a function $\chi$ that assigns to
each vertex $v$ in \Gmc a (not necessarily maximal) $\ell$-history of $v$ in~\Gmc.

For $i \in \{1,2\}$,
 the graph $\Hmc_i$ is defined as follows:
\begin{itemize}

    \item $V(\Hmc_i)$ consists of all pairs $(v,\chi)$ with $v \in V(\Gmc_i)$
    and $\chi$ a history choice for~$\Gmc_i$;

    \item for every edge $(v,u) \in E(\Gmc_i)$ and every vertex  $(v,\chi) \in V(\Hmc_i)$,
    we find an edge $((v,\chi),(u,\chi')) \in E(\Hmc_i)$ where
    $\chi' =\chi$ except that $\chi'(u)$ is the 
 unique continuation of $\chi(v)$ with $\mn{tail}(\chi'(u))=\mn{tp}^{c, \ell}_{\Gmc_i}(u)$;

    \item each vertex  $(v,\chi) \in V(\Hmc_i)$ carries the same vertex labels as the
    vertex $v \in V(\Gmc_i)$.
    
\end{itemize}
Informally, a main purpose of $\ell$-history choices is to ensure that every
vertex from $\Gmc_i$ gets duplicated exactly the same number of times in $\Hmc_i$. This number is $n_H$, the number of $\ell$-history choices for $\Gmc_i$.

For $i \in \{1,2\}$, choose $u_i$ to be some vertex $(v_i,\chi) \in V(\Hmc_i)$ such that $\chi(v_i) = \mn{tp}^{c, \ell}_{\Gmc_i}(v_i)$.
We show the following, for all $i \in \{1,2\}$:
\begin{enumerate}

   \item  $\mn{tp}_{\Gmc_i}(v)=\mn{tp}_{\Hmc_i}(v,\chi)$  for all $(v,\chi) \in V(\Hmc_i)$;
  
    \item $(\Gmc_i,v_i) \sim_{G\%} (\Hmc_i,u_i)$;

    \item $(\Gmc_1,v) \sim^{c, 2 \ell} (\Gmc_2,v')$
      implies
      $(\Hmc_1,(v,\chi)) \sim^{c, 2 \ell} (\Hmc_2,(v',\chi'))$
    for all $(v,\chi) \in V(\Hmc_1)$ and $(v',\chi') \in V(\Hmc_2)$;

    \item each vertex $(v,\chi) \in V(\Hmc_i)$
    has unique maximal $\ell$-history $\chi(v)$;

     \item $(\Hmc_1,(v_1,\chi)) \approx_{G}^{c\downarrow, \ell}
       (\Hmc_2,(v_2,\chi'))$ whenever
       $\chi(v_1)=\chi'(v_2) = \mn{tp}^{c, \ell}_{\Gmc_1}(v_1) = \mn{tp}^{c, \ell}_{\Gmc_2}(v_2)$
       for $i \in \{1,2\}$. 
 
\end{enumerate}
Note that the equality $\mn{tp}^{c, \ell}_{\Gmc_1}(v_1) =
\mn{tp}^{c, \ell}_{\Gmc_2}(v_2)$
in Point~5 holds since $(\Gmc_1, v_1)
\sim_{G}^{c, 2\ell} (\Gmc_2, v_2)$.

The above gives the desired contradiction. From Point~2 and $\varphi(x)$
being invariant under $\sim_{G\%}$ it follows that $\Hmc_1 \models \varphi(u_1)$ and
$\Hmc_2 \not\models \varphi(u_2)$. But this is a contradiction to Point~5
because $\varphi$ is invariant under $\approx_{G}^{c\downarrow, \ell}$. The purpose of
Points~1,~3 and~4 is to support the proof of Points~2 and~5. 

\medskip \noindent 
\emph{Proof of Point~1}. Point~1 follows from the observation that
\[\{(v,(v,\chi)) \mid (v,\chi) \in V(\Hmc_i) \}\]
is a graded bisimulation between $\Gmc_i$ and $\Hmc_i$. 

\medskip \noindent 
\emph{Proof of Point~2}. Point~2
follows from Point~1 and the fact that  for every vertex $v \in V(\Gmc_i)$, there are
exactly $n_H$
vertices $(v,\chi) \in V(\Hmc_i)$.

\medskip
\noindent
\emph{Proof of Point~3}. 
Let $v \in V(\Gmc_1)$ and $v' \in V(\Gmc_2)$ such that $(\Gmc_1,v) \sim^{c, 2 \ell} (\Gmc_2,v')$. Then \mn{duplicator} has a winning strategy $S$ for the  $c$-graded $2\ell$-round bisimulation game on $\Gmc_1,\Gmc_2$ from
starting position $(v,v')$. Let $(v,\chi) \in V(\Hmc_1)$ and
$(v',\chi') \in V(\Hmc_2)$. We can use $S$ to
identify a winning strategy $S'$ for \mn{duplicator}  in the $c$-graded $2\ell$-round bisimulation game on $\Hmc_1,\Hmc_2$ from
starting position $((v,\chi),(v',\chi'))$. In fact, $S'$ simply mimics the responses of \mn{duplicator} in $S$.

Let the current game
position be $((u,\xi),(u',\xi'))$ in the second game and assume
that \mn{spoiler} plays $X \subseteq \neigh_{\Hmc_1}((u,\xi))$ with $|X| \leq c$ in a {\bf graded down} move. 
Consider
the corresponding position $(u,u')$ in the first game and set $X_0 = \{ w \mid (w,\rho) \in X \}$. By construction
of $\Hmc_1$, we have $X_0  \subseteq \neigh_{\cG_1}(u)$ and $|X| = |X_0|$.
Thus  \mn{spoiler} can play $X_0$ in the first game using the same
kind of move, and \mn{duplicator} has a response $Y_0 \subseteq \neigh_{\cG_2}(u')$ with $|Y_0| = |X_0|$.
In the second game, \mn{duplicator} then plays the set $Y$ that consists
of all elements $(w,\rho)$ such that
$w \in Y_0$ and $\rho=\xi'$ except that
$\rho(w)$ is the unique continuation of $\xi'(u')$ with $\mn{tail}(\rho(w))=\mn{tp}^{c, \ell}_{\Gmc_{2}}(w)$.
Clearly, $|Y| = |X|$.

As part of the same move, \mn{spoiler}
then chooses an element $(w',\xi'') \in Y$. In the first game, they may choose
$w' \in Y_0$, and \mn{duplicator} has a response $w \in X_0$. By definition of $X_0$ there is
a $(w,\rho) \in X$. By construction of $\Hmc_1$ and choice of $X$,
there is in fact a unique such $(w,\rho)$ with
$\rho=\xi$ except that
$\rho(w)$ is the unique continuation of $\xi(u)$ with $\mn{tail}(\rho(w))=\mn{tp}^{c, \ell}_{\Gmc_{1}}(w)$.
In the second game, \mn{duplicator}
replies with choosing $(w,\rho)$.
The case where \mn{spoiler} plays a set $X \subseteq \neigh_{\Hmc_2}((u',\xi'))$ is symmetric.
It can be
verified that the described strategy $S'$ is indeed winning.

\medskip \noindent
\emph{Proof of Point~4}. We have to argue that if  $(v,\chi) \in V(\Hmc_i)$ and
$h$ is a maximal $\ell$-history of  $(v,\chi)$ in $\Hmc_i$, then $h=\chi(v)$.
Thus let $(v,\chi) \in V(\Hmc_i)$, and let $\chi(v)=t_1,\dots,t_k$. 
Take any path $(u_1, \chi_1),\dots, (u_m, \chi_m)=(v,\chi)$ in
$\Hmc_i$ that gives rise to a maximal $\ell$-history
$h=t'_1,\dots,t'_m$ of $(v,\chi)$. 
Note that $\chi_m(u_m)=\chi(v)$ and, by construction of $\Hmc_i$, $\mn{tp}^{c, \ell}_{\Hmc_i}((v, \chi)) = \mn{tail}(\chi(v))$.

For $0 \leq z \leq k$, let $\mn{head}_z(\chi(v))$ denote
the result of dropping from $\chi(v)$ all but the trailing $z$ types
and let $\mn{tail}_z(\chi_j(u_j))$ denote the result of dropping
from $\chi_j(u_j)$ all but the leading $z$ types. 
Using the construction of $E(\Hmc_i)$ and the fact that $\chi_m(u_m)=\chi(v)$, one
may verify that one of the following must hold:
\begin{enumerate}

\item[(a)] $m < \ell$ and $\chi_z(u_z) = \mn{head}_{z}(\chi(v))$ for $1 \leq z \leq m$;

\item[(b)] $m= \ell$ and $\mn{tail}_z(\chi_z(u_z)) = \mn{head}_{z}(\chi(v))$ for $1 \leq z \leq m$. 

\end{enumerate}
In both cases, by definition of $V(\Hmc_i)$ and Point~1, this implies
$t_{k-z}=t'_{m-z}$ for $0 \leq z < \min\{k,m\}$. 
It thus remains
to show that $m=k$.

First assume to the contrary that $m>k$. Then we must clearly have $k < \ell$.
Since $\chi_m(u_m)=\chi(v)$, the length of $\chi_m(u_m)$ is strictly smaller than $\ell$. But then the length of $\chi_j(u_j)$ is strictly smaller than
that of $\chi_{j+1}(u_{j+1})$ for $1 \leq j < m$. As a consequence, $m < \ell$. We are thus in Case~(a)
above and therefore $\chi_{m-(k-1)}(u_{m-(k-1)})$ must be a sequence of length~1. By construction
of $E(\Hmc_i)$, the vertex $(u_{m-{(k-1)}}, \chi_{m - (k - 1)})$ therefore has no predecessor in
$\Hmc_i$, in contradiction to the fact that $(u_{m - k}, \chi_{m - k})$ is such a
predecessor.

Now assume that $k>m$. 
Then clearly $m < \ell$. We first observe that since $h$
is a maximal $\ell$-history, $\chi_1(u_1)$ is a sequence of length~1.  Assume
otherwise. By definition of $V(\Hmc_i)$, this implies that $u_1$ has a
predecessor $w$ in $\Gmc_i$ which has $\mn{head}_1(\chi_1(u_1))$ as an
$\ell$-history. But then by definition of $E(\Hmc_i)$,
$(w,\rho)$ is a predecessor of $(u_1, \chi_1)$ in $\Hmc_1$ for every $\rho$ such that
$\rho=\chi_1$ except that $\chi_1(u_1)$ is the unique continuation of $\rho(w)$ with $\mn{tail}(\chi_1(u_1))=\mn{tp}^{c, \ell}_{\Gmc_i}(u_1)$. There is clearly at least one such $\rho$. But since $m < \ell$
this implies that $\mn{tp}^{c, \ell}_{\Gmc_i}(w),h$ is 
an $\ell$-history of $(v,\chi)$ in
$\Hmc_i$, in contradiction to $h$ being maximal. We have thus shown that $\chi_1(u_1)$ is a sequence
of length~1. But then in both Case~(a) and Case~(b) above, $\mn{head}_{m-1}(\chi(v))$ is a
sequence of length~1, implying that $k=m$. Contradiction.

\medskip
\noindent
\emph{Proof of Point~5}. Let  $i \in \{1,2\}$. We show that
\begin{enumerate}

\item[(a)] for every vertex $(v,\chi) \in V(\Hmc_i)$, there is a vertex
  $(v',\chi') \in V(\Hmc_{3-i})$ such that \mn{duplicator} has a winning
  strategy in the up-ungraded down-$c$-graded $\ell$-round
  bisimulation game on $\Hmc_1,\Hmc_2$ starting from position $((v,\chi),(v',\chi'))$, and
  
\item[(b)] if $v=v_1$ and $\chi(v_1)=\mn{tp}^{c, \ell}_{\Gmc_1}(v_1)$, then we can
  choose $(v',\chi')$ such that
  $v'=v_2$ and $\chi'(v_2)=\mn{tp}^{c, \ell}_{\Gmc_2}(v_2)$. 

\end{enumerate}
This clearly yields Point~5. Thus let $(v,\chi) \in V(\Hmc_i)$. We want to choose a vertex $(v',\chi') \in V(\Hmc_{3-i})$ that satisfies the conditions in Points~(a) and~(b).
Importantly, we want to choose $(v',\chi')$ so that 
\begin{enumerate}

    \item[(c)] $(\Gmc_i,v) \sim^{c,\ell} (\Gmc_{3-i},v')$ and

    \item[(d)] $\chi(v)=\chi'(v')$.
    
\end{enumerate}
By Point~4, $(v,\chi)$ has unique maximal $\ell$-history $\chi(v)$ in $\Hmc_i$.
Let $\chi(v)=t_1,\dots,t_k$. We thus find a path $x_1,\dots,x_k=(v,\chi)$ in
$\Hmc_i$ such that $\mn{tp}^{c, \ell}_{\Hmc_i}(x_r)=t_r$ for
$1 \leq r \leq k$, and $k=\ell$ or $x_1$ has no predecessor in $\Hmc_i$.
Let
$x_r = (u_r,\chi_r)$ for $1 \leq r \leq k$. Since
$(\Gmc_1, v_1) \sim_{G}^{c, 2\ell} (\Gmc_2, v_2)$, we can
find in $\Gmc_{3-i}$ a vertex $w$ such that
$(\Gmc_i, u_1) \sim^{c, 2\ell} (\Gmc_{3-i}, w)$. Note that by definition of $\ell$-histories, $h'=\mn{tp}_{\Gmc_{3-i}}^\ell(w)$ is a (possibly not maximal) $\ell$-history of $w$ in
$\Gmc_{3-i}$. 
By definition of $\Hmc_{3-i}$, we thus find a vertex
$y_1=(w,\chi')$ in $\Hmc_{3-i}$ such 
that $\chi'(w)=h'$.  From
Point~3, we obtain
$(\Hmc_1,x_1) \sim^{c, 2\ell} (\Hmc_2,y_1)$. Exercising the
\textbf{graded forth} property~$k$ times, following the path $x_1,\dots,x_k$ in
$\Hmc_i$, we find in $\Hmc_{3-i}$ a path $y_1,\dots,y_k$ such that
$(\Gmc_i, x_r) \sim^{c, 2\ell-r} (\Gmc_{3-i}, y_r)$ for
$1 \leq r \leq k$. As the vertex $(v',\chi')$ in Point~(a) above, we
choose $y_k$. Note that Point~(b) above is satisfied automatically.
Also note that Points~(c) and~(d) are satisfied. For Point~(d), this is the
case because, by choice of $(v',\chi')$, this vertex has $\ell$-history
$\chi(v)$ in $\Hmc_{3-i}$, no matter
whether $k=\ell$ or $x_1$ has no predecessor in $\Hmc_i$. By Point~(4),
this implies that $\chi'(v')=\chi(v)$, as required.

\smallskip

It remains to argue that \mn{duplicator} has a winning strategy $S'$ in the up-ungraded down-$c$-graded $\ell$-round bisimulation game on
$\Hmc_i,\Hmc_{3-i}$ from position $(v,\chi),(v',\chi')$. Key to this 
strategy is to ensure the following invariants: if
$(u_1,\xi_1),(u_2,\xi_2)$ is the position reached after $m \in
  \{0,\dots,\ell \}$ rounds, then 
\begin{itemize}

\item[(i)] $(\Gmc_1,u_1) \sim^{c,\ell-m} (\Gmc_2,u_2)$ and
  
\item[(ii)] the length $\ell-m$-suffixes 
  of $\xi_1(u_1)$ and $\xi_2(u_2)$ are identical.

\end{itemize}
Note that these invariants are satisfied at the initial position due to Points~(c) and~(d). We now describe~$S'$. Let the current game
position be $(u_1,\xi_1),(u_2,\xi_2)$ and assume
that \mn{spoiler} plays $X \subseteq \neigh_{\Hmc_i}((u_i,\xi_i))$ with $|X| \leq c$ in a {\bf graded down} move, $i \in \{1,2\}$. By Invariant~(i), \mn{duplicator} has a winning strategy $S$ in the down-$c$-graded $\ell-m$-bisimulation game starting in position $(u_1, u_2)$.  
Consider
the set $X_0 = \{ w \mid (w,\xi) \in X \}$. By construction
of $\Hmc_i$, we have $X_0\subseteq \neigh_{\Gmc_i}(u_i)$ and $|X_0| \leq c$
and thus  \mn{spoiler} can play $X_0$ in $S$ using the same
kind of move.  \mn{duplicator} has a response $Y_0 \subseteq \neigh_{\Gmc_{3 - i}}(u_{3-i})$.
For $S'$, \mn{duplicator} then plays the set $Y$ that consists
of all elements $(w,\rho)$ such that
$w \in Y_0$ and $\rho=\xi_{3 - i}$ except that
$\rho(w)$ is the unique continuation of $\xi_{3-i}(u_{3-i})$ with $\mn{tail}(\rho(w))=\mn{tp}^{c, \ell}_{\Gmc_{3-i}}(w)$. 
As part of the same move, \mn{spoiler}
then chooses an element $(w',\xi') \in Y$. In $S$, they may choose
$w' \in Y_0$, and \mn{duplicator} has a response $w \in X_0$. By definition of $X_0$ there is
a $(w,\rho) \in X$. 
In $S'$, \mn{duplicator}
replies with choosing such a $(w,\rho)$.

The second case
is that  \mn{spoiler} plays $(w_i,\rho_i) \in V(\Hmc_i)$, in an {\bf
  ungraded up} move, $i \in \{1,2\}$. By Invariant~(ii), the length
$\ell-m$-suffixes
of $\xi_1(u_1)$ and $\xi_2(u_2)$ are identical. By Point~4
and since the choice of $(w_i,\rho_i)$ shows that
$(u_i,\xi_i)$ has a predecessor in $\Hmc_i$,
also $(u_{3-i},\xi_{3-i})$ has a predecessor
$(w_{3-i},\rho_{3-i})$ in $\Hmc_{3-i}$. Moreover, by (ii) and Point~4, the length $\ell-(m+1)$-suffix of $\rho_{3-i}(w_{3-i})$ is independent of which
of the possibly many predecessors we choose
and in particular, Invariant~(ii) is again satisfied.
It is also easy to see that Invariant~(ii) being satisfied implies that so is Invariant~(i):
as the $0$-suffixes of $\rho_1(w_1)$ and $\rho_2(w_2)$ must be identical, it
follows from Point~4 that $\mn{tp}^{c, \ell}_{\Hmc_1}( (w_1, \rho_1) ) =
\mn{tp}^{c, \ell}_{\Hmc_2}((w_2, \rho_2))$ and from Point~1 and Lemma~\ref{lem:bisim-implies-winning-strategy} that
$\mn{tp}^{c, \ell}_{\Gmc_1}(w_1) = \mn{tp}^{c, \ell}_{\Gmc_2}(w_2)$. 

At this point, it is not hard to 
verify that the described strategy $S'$ is indeed winning.
\end{proof}

Using Lemmas~\ref{lem:ratio-upgrade-inverse-bisim-to-fo}, \ref{lem:bisim-upgrade-up-ungraded-to-up-graded}, and~\ref{lem:ratio-upgrade-bisim-to-inverse-bisim}, we can now complete the proof of Theorem~\ref{lem:ratio-bisim-invariance}.

\lemratiobisiminvariance*

\begin{proof} ``$1 \Rightarrow 2$''.
    If $\varphi$ is invariant under $\sim_{G\%}$, then by
    Lemma~\ref{lem:ratio-upgrade-inverse-bisim-to-fo}, there are $c, \ell$ such
    that $\varphi$ is invariant under $\approx_{G}^{c, \ell}$. 
    By Lemma~\ref{lem:bisim-upgrade-up-ungraded-to-up-graded}, $\varphi$ is then also invariant under
    $\approx_{G}^{c\downarrow, \ell}$.   
    It then follows
    from Lemma~\ref{lem:ratio-upgrade-bisim-to-inverse-bisim} that there are
    also $c', \ell'$ such that $\varphi$ is invariant under $\sim_{G}^{c',
    \ell'}$.

    By Lemma~\ref{lem:gmlg-bisim-equivalence}, $\sim_{G}^{c', \ell'}$ has a finite
    number of equivalence classes, and each equivalence class can be defined
    using a $\GMLG$-formula.
    Thus, a $\varphi' \in \GMLG$ with $\varphi \equiv
    \varphi'$ can be obtained by taking the disjunction of the formulae that
    define the equivalence classes which are models of $\varphi$.

    \medskip
    
    \noindent
    ``$2 \Rightarrow 1$'':
    If $\varphi \equiv \varphi'$ for some $\varphi' \in
    \GMLG$, then by Lemma~\ref{lem:gmlg-invariance},
    there are $c, \ell$ (which can be determined from $\varphi'$)  such that 
    $\varphi$ is invariant under $\sim_{G}^{c, \ell}$.
    Thus, by Lemma~\ref{lem:bisim-implies-winning-strategy}, $\varphi$ is also invariant under $\sim_{G}$ and $\sim_{G\%}$.
\end{proof}

A slight variation of the above proof also shows the following.

\begin{corollary}\label{prop:standard-bisim-invariance}
    For every $\FO$-formula $\varphi(x)$ over $\Pi$, the following are equivalent:
    \begin{enumerate}
        \item $\varphi$ is invariant under $\sim_{G}$;
        \item $\varphi$ is equivalent to a $\GMLG$-formula over all
          (finite!) $\Pi$-labeled pointed graphs.
        \end{enumerate}
\end{corollary}

To see this, note that every $\FO$-formula that is invariant under $\sim_{G}$ is also
invariant under $\sim_{G\%}$, so the ``$1 \Rightarrow 2$'' direction follows from Theorem~\ref{lem:ratio-bisim-invariance}. The ``$2 \Rightarrow 1$'' direction
simply follows from the fact that all $\GMLG$-formulae are invariant under $\sim_{G}$.

\medskip
We note that Theorem~\ref{lem:ratio-bisim-invariance} and Corollary~\ref{prop:standard-bisim-invariance} also have  interesting applications regarding GNNs with global readout. In particular,   \cite{Barcelo_GNNs} left open the non-trivial question of the precise expressive power of GNNs with global readout, relative to FO; see also \cite{DBLP:journals/corr/abs-2508-06091}. It is easy to see that GNN+Gs are invariant
under $\sim_G$ (recall that GNN+G refers to GNNs with \emph{non-counting} global readout), and thus Corollary~\ref{prop:standard-bisim-invariance} together with a straightforward translation of $\GMLG$ into $\GNNG$ in the style of \cite{Barcelo_GNNs} (see also the proof of Lemma~\ref{lem:gmlg-to-real-gps}) yields the
following.
\begin{corollary}
   Relative to $\FO$, $\GNNG$ and $\GMLG$  have the same expressive power.
\end{corollary}
Another relevant variation is to consider arithmetic mean as the aggregation function for global readout and to stick with sum for local aggregation. Let us use GNN+GM to refer to this version of GNNs. One can verify that GNN+GM is invariant under  $\sim_{G\%}$. Thus, 
Theorem~\ref{lem:ratio-bisim-invariance} together with a 
straightforward variation of the proof of Lemma~\ref{lem:gmlg-to-real-gps} yields the
following.
\begin{corollary}
   Relative to $\FO$, \textnormal{GNN+GM} and $\GMLG$  have the same expressive power.
\end{corollary}
The expressive power of $\GNNGC$ remains an interesting open problem.

\subsection{Proof of Theorem~\ref{thm:real-GT-PLG}}

\thmrealgtplg*

To capture the $\FO$ vertex properties that can be expressed by $\GT$s, we define an
equivalence relation on graphs under which $\GT$s are invariant.

\begin{definition}[Label-ratio equivalence]\label{def:label-ratio-equiv}
    Two $\Pi$-labeled pointed graphs $(\Gmc_1, v_1)$, $(\Gmc_2, v_2)$ are
    \textbf{label-ratio equivalent}, written $(\Gmc_1, v_1) \sim_{\lambda\%}
    (\Gmc_2, v_2)$, if $\lambda_1(v_1) = \lambda_2(v_2)$ and there exists a
    rational number $q > 0$ such that for all $t \subseteq \Pi$,
    \[
        |\{v \in V(\Gmc_1) \mid \lambda_1(v) = t\}| = q \cdot |\{v \in V(\Gmc_2) \mid \lambda_2(v) = t\}|.
    \]
\end{definition}

Intuitively, Definition~\ref{def:label-ratio-equiv} is that of global-ratio graded
bisimilarity without the \textbf{graded forth} and
\textbf{graded back} conditions of graded bisimulations. Directly from the definitions of the two relations, one can show that
invariance under $\sim_{\lambda\%}$ implies invariance under $\sim_{G\%}$.

In the proof of Lemma~\ref{lem:gps-invariant-under-grg-bisim}, the 
\textbf{graded forth} and \textbf{graded back} conditions are only used in the part of the
proof that is concerned with message passing modules. As $\GT$s do not contain
 message passing modules, the proof of Lemma~\ref{lem:gps-invariant-under-grg-bisim} also yields the following.

\begin{proposition}\label{prop:gts-invariant-under-label-ratio}
Let $T$ be a soft-attention or average hard-attention $\GT$. Then $T$ is
invariant under $\sim_{\lambda\%}$.
\end{proposition}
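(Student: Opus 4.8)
The plan is to rerun the proof of Lemma~\ref{lem:gps-invariant-under-grg-bisim}, making exactly two changes: delete every step concerned with message-passing modules (a $\GT$ has none), and replace the graded-bisimulation type $\mn{tp}_{\cG}(\cdot)$ by the bare vertex label $\lambda(\cdot)$ wherever it occurs. Concretely, fix a soft- or average-hard-attention $\GT$ $T = (P, L^{(1)}, \ldots, L^{(k)}, C)$ with $L^{(i)} = (\SA^{(i)}, \FF^{(i)})$ of dimension $d$, and pointed graphs $(\cG_1, v_1) \sim_{\lambda\%} (\cG_2, v_2)$ with witnessing rational $q > 0$; let $\lambda_j^{(0)}, \ldots, \lambda_j^{(k)}$ be the feature maps that $T$ computes on $\cG_j$. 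Since $T(\cG_j)(v_j) = C(\lambda_j^{(k)}(v_j))$ and $\lambda_1(v_1) = \lambda_2(v_2)$ (part of the definition of $\sim_{\lambda\%}$), it suffices to prove, by induction on $i \in \{0,\dots,k\}$, the conjunction of: (a) any two vertices of $\cG_1$ or of $\cG_2$ carrying the same label set carry the same level-$i$ feature vector; and (b) $|\{u \in V_1 \mid \lambda_1^{(i)}(u) = \mathbf{x}\}| = q\cdot|\{v \in V_2 \mid \lambda_2^{(i)}(v) = \mathbf{x}\}|$ for every $\mathbf{x} \in \R^{d}$. These are precisely the $\sim_{\lambda\%}$-analogues of the invariant $(*)$ and the ratio identity $(\dag_1)$ in the proof of Lemma~\ref{lem:gps-invariant-under-grg-bisim}.

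For the base case $i = 0$, the map $\lambda_j^{(0)} = P(\lambda_j)$ depends only on a vertex's label set, which gives (a), and (b) follows by summing the defining equalities of $\sim_{\lambda\%}$ (namely $|\{u \in V_1 \mid \lambda_1(u) = t\}| = q\cdot|\{v \in V_2 \mid \lambda_2(v) = t\}|$) over all label sets $t$ with $P(t) = \mathbf{x}$. For the conclusion, instantiating (a) at $i = k$ with $\lambda_1(v_1) = \lambda_2(v_2)$ gives $\lambda_1^{(k)}(v_1) = \lambda_2^{(k)}(v_2)$, hence $T(\cG_1)(v_1) = C(\lambda_1^{(k)}(v_1)) = C(\lambda_2^{(k)}(v_2)) = T(\cG_2)(v_2)$, as required.

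For the induction step, the skip connections and $\FF^{(i+1)}$ act row-wise, so by (a) at level $i$ it is enough to show that $\SA^{(i+1)}$ maps label-equal vertices to equal outputs, after which (b) at level $i+1$ follows from (a) at level $i+1$ together with the defining label-ratio property of $\sim_{\lambda\%}$, summed over label classes exactly as in the base case. To see that $\SA^{(i+1)}$ preserves (a), copy the self-attention part of the proof of Lemma~\ref{lem:gps-invariant-under-grg-bisim} verbatim, reading ``$\lambda(\cdot)$'' for ``$\mn{tp}_{\cG}(\cdot)$'': an attention score $A_{u,u'}$ depends only on $\lambda_1^{(i)}(u)$ and $\lambda_1^{(i)}(u')$, so by (a) it matches the corresponding score $B_{v,v'}$ along label-equal pairs; hence by (b) each real value $a$ occurs along row $A_{u,*}$ exactly $q$ times as often as along row $B_{v,*}$, so $\softmax(A)_{u,u'} = \tfrac1q\,\softmax(B)_{v,v'}$ for label-equal $(u',v')$ (the $\AH$ case is identical, using $|\cI_{A_{u,*}}| = q\,|\cI_{B_{v,*}}|$); and the displayed chain of equalities of that proof, now grouping the sum over label classes rather than over bisimulation-type classes, delivers $(\softmax(A)\lambda_1^{(i)})_{u,*} = (\softmax(B)\lambda_2^{(i)})_{v,*}$, i.e.\ equal $\SA^{(i+1)}$-outputs. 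The message-passing paragraph of the original proof---the only place where the \textbf{graded forth} and \textbf{graded back} conditions are invoked---is simply omitted.

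I do not expect a real obstacle; the whole content is checking that the self-attention argument of Lemma~\ref{lem:gps-invariant-under-grg-bisim} uses nothing about $\cG_1,\cG_2$ beyond (i) equality of the current feature vectors of label-equal vertices and (ii) a single uniform multiplicative factor $q$ between matched classes, and in particular never touches the edge relation---this is exactly the information that $\sim_{\lambda\%}$ retains and that $\sim_{G\%}$ refines via the graded forth/back conditions. Put differently, a $\GT$ is a $\GPS$-network with its (only edge-sensitive) message-passing modules removed, and since those were handled in Lemma~\ref{lem:gps-invariant-under-grg-bisim} solely through the graded forth/back conditions, excising them turns $\sim_{G\%}$-invariance into $\sim_{\lambda\%}$-invariance. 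This is precisely the invariance that Theorem~\ref{thm:real-GT-PLG} then feeds, via Theorem~\ref{lem:ratio-bisim-invariance}, into pinning down the logic $\PLG$.
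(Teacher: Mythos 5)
Your proof is correct and takes essentially the same approach as the paper: the paper's proof of this proposition is a two-sentence remark that the graded forth/back conditions in the proof of Lemma~\ref{lem:gps-invariant-under-grg-bisim} are used only for the message-passing modules, so excising those modules yields invariance under $\sim_{\lambda\%}$, and your write-up simply carries out that re-run in detail, correctly replacing $\mn{tp}_\cG(\cdot)$ by $\lambda(\cdot)$ and rederiving the analogues of $(*)$ and $(\dag_1)$.
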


We now show a counterpart of Theorem~\ref{lem:ratio-bisim-invariance}
for~$\sim_{\lambda\%}$.

\begin{lemma}\label{lem:upgrade-label-ratio-to-bisim}
    Every $\FO$-formula $\varphi(x)$ that is invariant under $\sim_{\lambda\%}$ is also
    invariant under $\sim_G^{0, 0}$.
\end{lemma}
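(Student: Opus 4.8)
The plan is to exploit the fact that, unlike $\sim_{G\%}$, the relation $\sim_{\lambda\%}$ is completely blind to the edge relation: by Definition~\ref{def:label-ratio-equiv}, whether $(\cG_1,v_1) \sim_{\lambda\%} (\cG_2,v_2)$ depends only on the two vertex labels $\lambda_1(v_1),\lambda_2(v_2)$ and on the two multisets of vertex labels, not on the edges at all. So I may replace each input graph by an \emph{edgeless} graph in which the label multiplicities have been uniformly scaled up, and thereby reduce the whole statement to a classical Ehrenfeucht--Fra\"iss\'e argument.

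Concretely, let $\varphi(x)$ be an $\FO$-formula invariant under $\sim_{\lambda\%}$ and let $r$ be its quantifier rank, and take pointed graphs with $(\cG_1,v_1) \sim_G^{0,0} (\cG_2,v_2)$. First I would unwind the definitions: since the $c$-graded $\ell$-round bisimulation game with $\ell=0$ has no rounds, $\sim^{0,0}$ holds exactly when the starting position passes the label test, so $(\cG_1,v_1) \sim_G^{0,0} (\cG_2,v_2)$ amounts to $\lambda_1(v_1) = \lambda_2(v_2) =: t_0$ together with the fact that $\cG_1$ and $\cG_2$ realize exactly the same set $S \subseteq \cP(\Pi)$ of vertex labels. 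For $t \in S$, write $a_t \geq 1$ and $b_t \geq 1$ for the number of vertices labeled $t$ in $\cG_1$ and in $\cG_2$ respectively.

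Next I would invoke the standard fact that there is a threshold $k = k(r)$ (for instance $k(r)=r+1$) with the following property: any two pointed edgeless $\Pi$-labeled graphs that agree on the label of their point and, for every $t \subseteq \Pi$, on the number of non-point vertices labeled $t$ \emph{up to} $k$ (i.e.\ these two numbers are equal, or both $\geq k$) satisfy exactly the same $\FO$-formulas $\psi(x)$ of quantifier rank $\leq r$; this is a routine Ehrenfeucht--Fra\"iss\'e argument, since on edgeless structures duplicator only ever needs to supply a fresh vertex of a prescribed label. I then fix an integer $N \geq k+1$ and let $\cH_i$, for $i \in \{1,2\}$, be the edgeless $\Pi$-labeled graph with exactly $N a_t$ (respectively $N b_t$) vertices labeled $t$ for each $t \in S$ and no further vertices, and let $u_i$ be any vertex of $\cH_i$ labeled $t_0$. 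Two checks remain. (i)~$(\cG_i,v_i) \sim_{\lambda\%} (\cH_i,u_i)$: the point labels agree, and for every $t$ the label-count in $\cG_i$ is $\frac{1}{N}$ times that in $\cH_i$, so $q = \frac{1}{N}$ witnesses $\sim_{\lambda\%}$. (ii)~$(\cH_1,u_1)$ and $(\cH_2,u_2)$ satisfy the same $\FO$-formulas of quantifier rank $\leq r$: both graphs are edgeless, both points are labeled $t_0$, and the numbers of non-point vertices labeled $t$ are $N a_t$ versus $N b_t$ for $t \neq t_0$ (both $\geq N \geq k$), $N a_{t_0} - 1$ versus $N b_{t_0} - 1$ for $t = t_0$ (both $\geq N - 1 \geq k$), and $0$ versus $0$ for $t \notin S$, so the threshold hypothesis holds.

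Chaining these facts gives the lemma: $\cG_1 \models \varphi(v_1)$ iff $\cH_1 \models \varphi(u_1)$ (invariance under $\sim_{\lambda\%}$) iff $\cH_2 \models \varphi(u_2)$ (equal quantifier-rank-$r$ theories, and $\varphi$ has quantifier rank $r$) iff $\cG_2 \models \varphi(v_2)$ (invariance under $\sim_{\lambda\%}$), which is exactly invariance of $\varphi$ under $\sim_G^{0,0}$. I do not expect a real obstacle here beyond getting the bookkeeping right; the edge-blindness of $\sim_{\lambda\%}$ lets the whole argument collapse to one classical Ehrenfeucht--Fra\"iss\'e observation, in sharp contrast to the delicate edge-sensitive constructions (group products, unique histories) needed for the $\sim_{G\%}$ analogue in Lemmas~\ref{lem:ratio-upgrade-inverse-bisim-to-fo}--\ref{lem:ratio-upgrade-bisim-to-inverse-bisim}.
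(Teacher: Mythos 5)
Your proposal is correct, and it takes a genuinely different and more elementary route than the paper. The paper's proof of this lemma \emph{reuses} the heavy machinery of Lemmas~\ref{lem:ratio-upgrade-inverse-bisim-to-fo}, \ref{lem:bisim-upgrade-up-ungraded-to-up-graded}, and~\ref{lem:ratio-upgrade-bisim-to-inverse-bisim} (Gaifman locality, products with Cayley graphs of high girth, unique-history constructions): it first upgrades $\sim_{\lambda\%}$-invariance---via $\sim_{G\%}$-invariance---to invariance under $\sim_G^{c',\ell'}$ for some $c',\ell'$, and only then strips the edges (with $q=1$, no scaling), observing that on edgeless graphs the graded-bisimulation game trivializes because \textsf{spoiler} has no legal moves, so $\sim_G^{0,0}$ implies $\sim_G^{c',\ell'}$. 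You, by contrast, bypass all of that: after stripping edges you additionally \emph{scale} the label multiplicities by $N$, which is exactly what $\sim_{\lambda\%}$ lets you do for free, and then close the argument with the textbook Ehrenfeucht--Fra\"iss\'e fact that edgeless colored structures with large enough per-color counts are $r$-equivalent. Your route is self-contained, avoids any appeal to the delicate $\sim_{G\%}$ lemmas, and makes transparent why this lemma is much easier than its $\sim_{G\%}$ analogue; the only thing it requires in addition is the (standard, but not stated in the paper) EF threshold fact for unary structures, whereas the paper's route only reuses results it has already established. Both are valid; yours is arguably the cleaner proof of this particular statement, though it does not share infrastructure with the rest of the section the way the paper's does.

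One minor bookkeeping remark: when you verify the threshold hypothesis you should note that $a_{t_0}\geq 1$ and $b_{t_0}\geq 1$ (which holds since $v_1,v_2$ themselves have label $t_0$), so that $Na_{t_0}-1$ and $Nb_{t_0}-1$ are indeed $\geq N-1\geq k$; and the underlying vocabulary still contains the binary symbol $E$, so you should observe (as you implicitly do) that $E$ is empty in both $\Hmc_1$ and $\Hmc_2$, reducing the EF game to the purely unary case. Neither of these is a gap, just a line worth spelling out.
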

\begin{proof}
    Let $\varphi(x)$ be an $\FO$-formula that is
    invariant under $\sim_{\lambda\%}$. Further assume, to the contrary of what
    we have to show, that there are $\Pi$-labeled pointed graphs $(\Gmc_1, v_1)$, $(\Gmc_2,
    v_2)$ such that $(\Gmc_1, v_1) \sim_G^{0, 0} (\Gmc_2, v_2)$,
    $\Gmc_1 \models \varphi(v_1)$, and $\Gmc_2 \not\models \varphi(v_2)$.

    Since $\varphi$ is invariant under $\sim_{\lambda\%}$, it is also invariant
    under $\sim_{G\%}$. We can thus first apply
    Lemma~\ref{lem:ratio-upgrade-inverse-bisim-to-fo} to conclude that there are
    $c, \ell$ such that $\varphi$ is invariant under $\approx_G^{c, \ell}$, Lemma~\ref{lem:bisim-upgrade-up-ungraded-to-up-graded} to conclude that $\varphi$ is invariant under $\approx_G^{c\downarrow, \ell}$ and
    then Lemma~\ref{lem:ratio-upgrade-bisim-to-inverse-bisim} to conclude that
    there are $c', \ell'$ such that $\varphi$ is invariant under $\sim_G^{c',
    \ell'}$.

    From $\Gmc_1$ and $\Gmc_2$, we now construct graphs $\Hmc_1$, $\Hmc_2$ such that
    \begin{enumerate}
        \item $(\Gmc_i, v_i) \sim_{\lambda\%} (\Hmc_i, v_i)$ for all $i \in \{1, 2\}$;
        \item $(\Hmc_1, v_1) \sim_G^{c', \ell'} (\Hmc_2, v_2)$.
    \end{enumerate}

    This shows the desired contradiction. Point~1 and the fact that $\varphi(x)$ is
    invariant under $\sim_{\lambda\%}$ implies that $\Hmc_1 \models
    \varphi(v_1)$ and $\Hmc_2 \not\models \varphi(v_2)$. This then contradicts that
    $\varphi$ is invariant under $\sim_G^{c', \ell'}$ and Point~2.

    For $\Gmc_1 = (V_1, E_1, \lambda_1)$ and $\Gmc_2 = (V_2, E_2, \lambda_2)$,
    the graphs $\Hmc_1$ and $\Hmc_2$ can simply be obtained by setting
    $\Hmc_1 = (V_1, \emptyset, \lambda_1)$ and $\Hmc_2 = (V_2, \emptyset,
    \lambda_2)$, that is, removing all edges from $\Gmc_1$ and $\Gmc_2$.
    As $\sim_{\lambda\%}$ only considers $\lambda_1$ and $\lambda_2$, one can
    then verify that Point~1 holds, via the rational number $q = 1$.
    For Point~2, observe that no vertex in $\Hmc_1$ and $\Hmc_2$ has any
    successors, which means that
    there is no position in
    which \mn{spoiler} can make any moves. As in addition $(\Gmc_1, v_1)
    \sim_G^{0, 0} (\Gmc_2, v_2)$ implies that $(\Hmc_1, v_1) \sim_G^{0, 0}
    (\Hmc_2, v_2)$, Point~2 follows.
\end{proof}

\begin{lemma}\label{lem:label-ratio-invariant-fragment-is-plg}
For any $\FO$-formula $\varphi(x)$ over $\Pi$, the following are equivalent:
\begin{enumerate}
    \item $\varphi$ is invariant under $\sim_{\lambda\%}$;
    \item $\varphi$ is equivalent to a $\PLG$-formula $\varphi'$ over all $\Pi$-labeled pointed graphs.
\end{enumerate}
\end{lemma}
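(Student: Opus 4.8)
The plan is to mirror the proof of Theorem~\ref{lem:ratio-bisim-invariance}, but with $\GMLG$ replaced by $\PLG$ and $\sim_{G\%}$ replaced by the coarser $\sim_{\lambda\%}$, so that the relevant bisimilarity collapses all the way down to the $0$-round, $0$-graded global bisimilarity $\sim_{G}^{0,0}$. A first auxiliary observation I would record is that $\sim_{\lambda\%}$ implies $\sim_{G}^{0,0}$: if $(\Gmc_1, v_1) \sim_{\lambda\%} (\Gmc_2, v_2)$ with witnessing rational $q > 0$, then $\lambda_1(v_1) = \lambda_2(v_2)$, and since $q > 0$ a label set $t$ is realized by some vertex of $\Gmc_1$ exactly when it is realized by some vertex of $\Gmc_2$; unwinding the definition, this is precisely what $(\Gmc_1, v_1) \sim_{G}^{0,0} (\Gmc_2, v_2)$ asserts, since no rounds are played and hence only the labels at the distinguished vertices and the set of realized label sets matter.

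For the direction ``$2 \Rightarrow 1$'', suppose $\varphi \equiv \varphi'$ for a $\PLG$-formula $\varphi'$. By the second part of Lemma~\ref{lem:gmlg-invariance}, $\varphi'$, and hence $\varphi$, is invariant under $\sim_{G}^{0,0}$; by the auxiliary observation, $\varphi$ is then also invariant under $\sim_{\lambda\%}$.

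For ``$1 \Rightarrow 2$'', suppose $\varphi$ is invariant under $\sim_{\lambda\%}$. The genuinely hard work is already packaged in Lemma~\ref{lem:upgrade-label-ratio-to-bisim}, which gives that $\varphi$ is invariant under $\sim_{G}^{0,0}$. I would then apply Lemma~\ref{lem:gmlg-bisim-equivalence} with $c = \ell = 0$: the relation $\sim_{G}^{0,0}$ has finite index, and since $\ell = 0$ each of its equivalence classes is defined by a $\PLG$-formula. Taking the disjunction of those $\PLG$-formulae that define equivalence classes all of whose members satisfy $\varphi$ (invariance under $\sim_{G}^{0,0}$ guarantees this is well-defined) yields a $\PLG$-formula equivalent to $\varphi$, as desired.

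I do not expect a real obstacle here: all the delicate model-theoretic constructions (the Gaifman-locality step, the Cayley-graph/high-girth product, the unique-history construction) live in Lemmas~\ref{lem:ratio-upgrade-inverse-bisim-to-fo}, \ref{lem:bisim-upgrade-up-ungraded-to-up-graded} and~\ref{lem:ratio-upgrade-bisim-to-inverse-bisim}, and they are already invoked on our behalf through Lemma~\ref{lem:upgrade-label-ratio-to-bisim}. The only thing demanding a little care is the bookkeeping in the auxiliary observation: correctly reading off what ``$\sim_{G}^{0,0}$'' means from the definition of the global $c$-graded $\ell$-round game, and checking that the empty-round game is indeed coarse enough that $\sim_{\lambda\%}$ lands inside it (this is exactly the place where relative label-counting is thrown away and only the \emph{existence} of each label set survives).
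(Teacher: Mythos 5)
Your proof matches the paper's almost step for step: both directions proceed through $\sim_G^{0,0}$ via Lemma~\ref{lem:upgrade-label-ratio-to-bisim} and Lemma~\ref{lem:gmlg-bisim-equivalence} (with $c=\ell=0$) for ``$1\Rightarrow 2$'', and via Lemma~\ref{lem:gmlg-invariance} plus the refinement $\sim_{\lambda\%}\subseteq\sim_G^{0,0}$ for ``$2\Rightarrow 1$''. The only difference is cosmetic: you make the refinement observation explicit, while the paper leaves it implicit in the final ``Thus'' of the $2 \Rightarrow 1$ direction.
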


\begin{proof} ``$1 \Rightarrow 2$''. If $\varphi$ is invariant under 
    $\sim_{\lambda\%}$, then by Lemma~\ref{lem:upgrade-label-ratio-to-bisim},
    $\varphi$ is also invariant under $\sim_G^{0, 0}$.
    By Lemma~\ref{lem:gmlg-bisim-equivalence}, $\sim_G^{0, 0}$ has a finite number of equivalence
    classes, and each equivalence class can be defined using a $\PLG$-formula.
    Thus, a $\PLG$-formula $\varphi'$ with $\varphi \equiv \varphi'$ can be obtained
    by taking the disjunction of the formulae that define the equivalence
    classes that contain pointed graphs which satisfy $\varphi$.

    \medskip

    \noindent ``$2 \Rightarrow 1$''.
    If $\varphi \equiv \varphi'$ for some $\varphi' \in \PLG$, then by
    Lemma~\ref{lem:gmlg-invariance}, $\varphi$ is invariant under $\sim_G^{0,
    0}$. Thus, $\varphi$ is also invariant under $\sim_{\lambda\%}$.
\end{proof}

The first direction of Theorem~\ref{thm:real-GT-PLG} can now be proved by
combining Proposition~\ref{prop:gts-invariant-under-label-ratio} with
Lemma~\ref{lem:label-ratio-invariant-fragment-is-plg}. For the second direction,
observe that in the proof of Lemma~\ref{lem:gmlg-to-real-gps}, message passing
layers are only required to express subformulae of the form $\Diamond_{\geq k}
\varphi$. All other types of subformulae can be expressed solely with MLPs and
(soft-attention or average hard-attention) self-attention heads. Hence, for
every $\PLG$-formula, there is also an equivalent $\GT$.

\section{Floating-point preliminaries}

\subsection{Floating-point numbers and arithmetic}\label{appendix: floats}

Here we define more formally the floating-point arithmetic operations used in this paper. 

Let $\cF$ be a floating-point format.
The arithmetic operations $+$, $-$, $\cdot$ and $\div$ over a floating point format $\cF$ are functions of the form $\cF \times \cF \to \cF$ and they are computed as follows. Let $\star$ be one of these operations. First, as discussed in the main section, if one of the inputs is $\NaN$, the output is also $\NaN$. Otherwise, we take the precise operation of $\star$ w.r.t. the real arithmetic extended with $\infty$ and $-\infty$ and then round the precise result by using the ``round to nearest, ties to even'' method, which means that we round to the nearest number in the format $\cF$ as though there was no upper bound for the exponent (i.e., we allow numbers that exceed the maximum exponent in $\cF$). With ties we round to the number with an even least significant digit. If the maximum exponent is exceeded, we round to $\infty$ or $-\infty$ depending on the sign. 
Analogously, we define the arithmetic operation $\sqrt{x}$ over $\cF$ which is a function of the type $\cF \to \cF$. Moreover, in the case where the operation leads to an undefined number, i.e., $\frac{\pm \infty}{\pm \infty}$, $\frac{\pm \infty}{\mp \infty}$, $0 \cdot \pm \infty$, $\pm \infty \mp \infty$, the output is $\NaN$.
In the IEEE754 standard, it is suggested that these operations, $+$, $-$, $\cdot$, $\div$ and $\sqrt{x}$, are taken as basic operations, and other operations can be defined in terms of these operations (or taken directly as basic operations).

We already discussed an implementation of the average hard-attention function with floats, where the denominator is obtained via the floating-point sum. Another possible technique would be to calculate the denominator by rounding the real size of the set into $\cF$ directly, but it is not clear whether integers greater than any float in $\cF$ should round down to the greatest non-infinite float in $\cF$ or result in overflow. Our characterizations hold in the former case, which we will discuss in the proofs, and we leave the latter as an interesting open question.

We note that our definition of floating-point formats includes two floats that represent the number zero: one with a positive sign and one with a negative sign. For the operations, we preserve the sign in the preceding calculations if possible, and otherwise default to positive zero. Whether we include one or both zeros in formats does not change our results on expressive power.

We assume that the exponential function $\exp(x)$ over $\cF$ is defined by using range reductions and polynomial approximations; this is done in a similar way as in \citep{DBLP:journals/toms/Tang89} and in the popular math library fdlibm as follows.\footnote{Moreover, the exponent function was analyzed from the circuit complexity perspective and defined in an analogous way in \citep{HESSE2002695, chiang2025transformersuniformtc0}.}
Informally, the algorithm consists of four steps:
\begin{enumerate}
    \item \textbf{Check for exceptions:} If $x$ is too large, the output is $\infty$, and if $x$ is too small, the output is $0$. When $x$ is too close to zero, the output is $1$. If $x$ is $\NaN$, the output is $\NaN$.
    \item \textbf{Range reduction:} Given an $x \in \cF$, we first compute an integer $k = \lfloor \frac{x}{\ln 2} \rfloor \in \cF$ and $r = x - k \log 2$. 
    \item \textbf{Polynomial approximation:} We approximate $\exp(r)$ by using the Taylor approximation of $\exp(x)$ of a small degree. As shown in \citep{taylor-approximation}, even the Taylor series of $\exp(x)$ of degree $6$ suffices for accurate results with small input values. Another popular choice is the Remes algorithm to find a small polynomial that approximates $\exp(x)$ with small input values. Thus, we can assume that the polynomial used for the approximation is fixed and has some constant degree $c \in \N$.
    \item \textbf{Combination:} We set that $\exp(x) = 2^k \cdot \exp(r)$.
\end{enumerate}

Step $3$ is often evaluated by using addition, multiplication and Horner's rule, i.e., a polynomial 
\[
a_0 + a_1 x + \cdots + a_n x^n,
\]
over $\cF$, where $a_0, \dots, a_n \in \cF$,
is evaluated as 
\[
a_0 + x\big(a_1 + x(a_2 + \cdots + x( a_{n-1} + x a_n) \cdots ) \big).
\]
In the last step of the algorithm, $2^k$ is trivial to compute, since $k$ is an integer and the base of the format is $2$. More complicated implementations are also possible for us, e.g., we could directly implement the source code of the exponent function of the math library fdlibm.

\subsection{Interpreting labeling functions as float feature maps}\label{appendix: Interpreting labeling functions as float feature maps}

In this section, we discuss another type of translation for floats, where the labeling of a graph may be interpreted in a different way.

Given a floating-point format $\cF(p, q)$, an \textbf{$(\cF^d, \Pi)$-labeled graph} 
$(V, E, \lambda)$ 
refers to a $\Pi$-labeled graph with $\abs{\Pi} = d(p + q + 1)$; in this case, the labeling function $\lambda$ can be identified with a floating-point feature map $\lambda_\cF \colon V \to \cF^d$ in the following natural way.
We first split $\Pi$ into $d$ subsets $P_1, \dots, P_d$ of equal size such that $P_1$ contains the $p+q+1$ least elements of $\Pi$ (with respect to $<^\Pi$), $P_2$ contains the next $p+q+1$ elements, and so forth. For each $P_i$, we can order its elements with respect to $<^\Pi$ into a sequence $s_i$ and interpret this sequence in each vertex as a floating-point number over $p$ and $q$ based on which symbols the vertex is labeled with. For each such float (which may be neither normalized, subnormalized, $\infty$, $-\infty$ nor $\NaN$), there is a corresponding float in $\cF$ that is interpreted as the same real value. Thus, $(s_1, \dots, s_d)$ can be interpreted as a vector in $\cF^d$ depending on which symbols a given vertex is labeled with.
We can leave $\Pi$ implicit and refer to $(\cF^d, \Pi)$-labeled graphs as $\cF^d$-labeled graphs.
Thus, a computing model over $\cF$ (e.g., a $\GPSF$-network or a $\GTF$) can run over such a graph 
by interpreting its labeling function as a floating-point feature map, instead of transforming the labeling function into the corresponding binary valued feature map.

\subsection{Other classification heads}\label{appendix: Other classification heads}

Here we discuss definitions for types of classification other than Boolean vertex classification. In particular, we consider Boolean graph classification, non-Boolean vertex classification and non-Boolean graph classification.

First, we consider graph classification as opposed to vertex classification. 
A \textbf{Boolean graph classification head} 
is a readout gadget where the $\MLP$ is a Boolean vertex classification head.
As with Boolean vertex classifiers, note that the $\MLP$ of the readout gadget is not assumed to be $\ReLU$-activated, meaning that it can use, e.g., step functions. For example, \citep{grohe_transformers} use readout gadgets as final classification heads when graph classification tasks are considered. 

Both Boolean vertex classifiers and Boolean graph classifiers can be generalized further. Any $\MLP$ can be considered a \textbf{(general) vertex classifier}. Likewise, any readout gadget can be considered a \textbf{(general) graph classifier}. Here instead of $0$ or $1$, we classify vertices and graphs using feature vectors, i.e., the output dimension can be any positive integer.

Now we can modify all the variants of $\GPS$-networks, $\GT$s and $\GNN$s by replacing the Boolean vertex classifiers with any of the classifiers discussed here with the same input dimension as the Boolean vertex classifier. The above classifiers are defined analogously for floats, and we can modify $\GTF$s, $\GPSF$-networks and $\GNNF$s analogously.

We also define fragments of our logics that exclusively define graph properties.
Let $\cL$ be one of the logics discussed in Section~\ref{section: logics}. The set of \textbf{$\Pi$-formulae $\psi$ of $\cL^*$ }is defined according to the following grammar:
\[
    \psi \coloncolonequals \DiamondG_{\geq k} \varphi \,|\, \neg \, \psi \,|\, \psi \land \psi \,|\, \DiamondG_{\geq k} \psi,
\]
where $\varphi$ is a $\Pi$-formula of $\cL$. The semantics of $\cL^*$ is defined in the natural way.

\subsection{Equivalence of graph classifiers and general classifiers}
\label{appendix: Other notions on equivalence}

Here we introduce concepts of equivalence and expressive power that account for graph classification and non-Boolean classification.

A \textbf{graph property} is simply a vertex property $\lambda \colon V \to \{0,1\}$ such that for some $b \in \{0,1\}$, $\lambda(v) = b$ for all $v \in V$.
The concepts of equivalence between learning models and logics from Section~\ref{sec: equivalence} extend in a natural way for graph classification by considering graph properties and graph classifiers instead of vertex properties and vertex classifiers.

An \textbf{$m$-ary feature update over $\Pi$} is simply an isomorphism invariant mapping $U$ that takes a $\Pi$-labeled graph $\cG = (V, E, \lambda)$ as input and outputs a new feature map $\lambda' \colon V \to \{0,1\}^m$.
In the case of floats, given a floating-point format $\cF$ and $p, q \in \N$, an \textbf{$m$-ary feature update over $\cF^p$} is simply an isomorphism invariant mapping $U_{\cF}^p$ that takes an $\cF^p$-labeled graph $\cG = (V, E, \lambda)$ as input and outputs a new feature map $\lambda' \colon V \to \cF^m$. In the case with floats we can leave the set of vertex labels implicit and omit it. From the perspective of logics, a feature update is just a query.

Note that our computing models with general classification heads (e.g., $\GPS$-networks and $\GT$s with non-Boolean vertex classification heads) are essentially just feature updates. Analogously, sequences of formulae of our logics can be seen as classes of feature updates, i.e., a sequence $(\varphi_1, \ldots, \varphi_k)$ of $\Pi$-formulae of a logic $\cL$ defines a $k$-ary feature update over $\Pi$.

Let $\mathrm{id}_{\cF} \colon \{0,1\}^* \to \cF^*$ be a function that maps each binary string $\bb$ to a floating point string $\bbf$ of equal length such that $\bbf(i) = b_\cF$ iff $\bb(i) = b$, where $b_\cF$ denotes the corresponding float string of $b$.
Given a feature update $U_1$ over $\Pi$ and a feature update $U_2$ over $\cF^d$, we say that $U_2$ is \textbf{equivalent to $U_1$} (w.r.t. $\mathrm{id}_\cF$), if for each $\Pi$-labeled graph $(V, E, \lambda)$,
the feature map $U_2(V, E, \mathrm{id}_\cF(\lambda)) $ is the same as
$\mathrm{id}_\cF(U_1(V, E, \lambda))$. 
Respectively, a feature update $U_1$ over $\Pi$ is \textbf{equivalent to} a feature update $U_2$ over $\cF^d$, if $U_1$ defines the same feature update as $U_2$ over $\Pi$-labeled graphs (recall that floats are just binary strings).

Given a class $\cL$ of feature updates over $\Pi$ (e.g. sequences of formulae of a logic) and a class $\cC$ of feature updates over $\cF^d$ for any $d$ (e.g. $\GT$s with non-Boolean vertex classification heads), we say that $\cL$ and $\cC$ \textbf{have the same expressive power} (w.r.t. the general classification), if for each feature update $U \in \cL$ there is an equivalent feature update $U' \in \cC$, and vice versa.
Analogously, two classes $\cC_1$ and $\cC_2$ of computing models \textbf{have the same expressive power}, if for each feature update $U_1 \in \cC_1$ there is a feature update $U_2 \in \cC_2$ that defines the same feature update, and vice versa.

\section{Proofs for Section \ref{Characterizing float-based transformers}}\label{appendix: Proofs for float section}

In this section, we give the full proofs of the results in Section~\ref{Characterizing float-based transformers}.
The translations in this section that are given in terms of vertex classification also generalize for graph classification. In the case of logics, this means replacing the logic $\cL$ appearing in a result with the logic $\cL^*$\footnote{Alternatively, we could simply restrict the concept of equivalence to those formulae of $\cL$ that define graph properties, but $\cL^*$ gives us a proper syntax.} (see Appendix~\ref{appendix: Other classification heads} for the definition of $\cL^*$ and Appendix \ref{appendix: graph classification} for the formal analysis how our results generalize for graph classification). The translations that are given in terms of Boolean classification also generalize for general classifiers (see Appendix \ref{appendix: graph classification}). In the case of logics, this means considering sequences of formulae instead of a single formula.

Regarding the results of this section where we translate logic formulae into $\GTF$s, $\GPSF$-networks, etc., we make one very important assumption, namely that that the inputs of $\GTF$s, transformer layers, $\GPSF$-networks, $\GPS$-layers, $\GNNF$s, message-passing layers, $\MLPF$s, perceptron layers, etc. \emph{are always non-negative}. Their outputs may contain negative values, but such values are always turned non-negative before they are given as input to anything else. Note that this does not restrict our translations from logics to computing models, since when simulating formulae, our architectures start with floats corresponding to $1$s and $0$s.

\subsection{Proof of Theorem \ref{theorem: PLGC = SGT = AHGT}}\label{Appendix: PLGC = SGT = AHGT}

In this section, we give the proof of Theorem \ref{theorem: PLGC = SGT = AHGT}. 

\thmPLGCSGTAHGT*

We start by proving the following lemma. 

\begin{restatable}{lemma}{lemMLPPL}\label{lemma: MLP PL}
    The following have the same expressive power (w.r.t. general vertex classification): sequences of $\PL$-formulae, $\MLPF$s and $\ReLU$-activated $\MLPF$s.
\end{restatable}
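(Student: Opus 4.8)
The plan is to prove the three-way equivalence by establishing two translation directions, since $\ReLU$-activated $\MLPF$s are by definition a subclass of $\MLPF$s. I will show (i) every $\PL$-formula (in the sense of a feature update, i.e., a tuple of formulae) is equivalent to a $\ReLU$-activated $\MLPF$, and (ii) every $\MLPF$ is equivalent to some $\PL$-feature update. Recall from Appendix~\ref{appendix: Proofs for float section} the standing assumption that inputs are non-negative, and recall the bit-level view: a float format $\cF$ over $p,q$ has $k \colonequals p+q+1$ bits, so an input $\Pi$-labeling encoding of an $\cF^d$-labeled graph identifies each vertex feature vector with a binary string of length $kd$, and the $\PL$ side reasons about these individual bits (via label symbols), exactly as in the proof sketch of Theorem~\ref{theorem: PLGC = SGT = AHGT}.

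For direction (i), let $g \colon \{0,1\}^{kn} \to \{0,1\}$ be any Boolean function; then $g$ is a Boolean combination of its input bits, hence definable by a $\PL$-formula, and conversely each $\PL$-formula over $n$-many $\cF$-valued inputs (viewed bitwise over $kn$ Boolean coordinates) computes such a $g$. So it suffices to show every Boolean function $g \colon \{0,1\}^{m} \to \{0,1\}$ is computed by a $\ReLU$-activated $\MLPF$ on the float encodings $0_\cF, 1_\cF$ of its inputs. This is a standard construction: write $g$ in disjunctive normal form, and for each conjunctive clause build a two-layer $\ReLU$ sub-network that outputs $1_\cF$ exactly when all the required input bits are $1_\cF$ and the forbidden ones are $0_\cF$ --- e.g.\ a perceptron computing $\sum_{\text{positive literals}} x_i - \sum_{\text{negative literals}} x_i$ shifted by a bias so that the clause is satisfied iff the value is positive, then threshold via a second $\ReLU$ layer; finally take a (bounded) disjunction of the clause-outputs, again with a $\ReLU$ threshold. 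One must check that all intermediate values lie in $\cF$ and that the relevant small integers (clause sizes, biases) are representable, which holds for any reasonable format; a clean way is to keep all intermediate quantities in $\{0,1,\dots,m\}$ so no rounding occurs. Since $\MLPF$s are applied row-wise, doing this coordinatewise over all output bits, and concatenating, yields a $\ReLU$-activated $\MLPF$ equivalent to the given $\PL$-feature update.

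For direction (ii), let $F$ be an arbitrary $\MLPF$ of I/O dimension $(n,m)$ over a format $\cF$. Since $F$ operates on feature vectors of floats, and each float is a $k$-bit string, $F$ induces a \emph{partial} function $f_{\mathbb B} \colon \{0,1\}^{kn} \to \{0,1\}^{km}$ (partial because not every bit string is a valid non-$\NaN$ float combination, but on the inputs we care about --- strings arising from actual float encodings --- it is total). Each output bit of $f_{\mathbb B}$ is then a Boolean function of the $kn$ input bits, hence expressible by a $\PL$-formula; collecting these $km$ formulae gives the desired $\PL$-feature update. The only subtlety is that $F$ is defined by perceptron layers over $\R^{d_O \times d_I}$ with real weights and biases, whereas we are computing over $\cF$: here I invoke that each perceptron layer, when its inputs are restricted to $\cF$, computes a fixed function $\cF^{d_I} \to \cF^{d_O}$ (the real affine map followed by $\ReLU$ or identity, then rounding to $\cF$), and the composition of finitely many such functions is again a fixed function $\cF^{n} \to \cF^{m}$; reading it bitwise gives $f_{\mathbb B}$. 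I expect direction (i) --- specifically the bookkeeping to guarantee no unintended rounding occurs inside the $\ReLU$ network and that the needed constants are in $\cF$ --- to be the main (though routine) obstacle; direction (ii) is essentially immediate once the bitwise viewpoint is set up. Chaining (i) and (ii) with the trivial inclusion of $\ReLU$-activated $\MLPF$s into $\MLPF$s completes the proof.
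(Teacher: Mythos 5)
Your proposal is correct and decomposes the problem the same way the paper does: both rely on the bit-level view of floats and the Boolean completeness of $\PL$, and both establish the two nontrivial directions ($\PL$ to $\ReLU$-activated $\MLPF$ and arbitrary $\MLPF$ back to $\PL$), using the trivial inclusion for the third. Your direction (ii) is essentially identical to the paper's Lemma~\ref{lemma: MLP to logic}: an $\MLPF$ acts locally, hence induces a function on bit strings, and each output bit is a Boolean function of the input bits, thus $\PL$-definable.

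Your direction (i), however, takes a genuinely different construction than the paper's Lemma~\ref{lemma: PL to MLP}. You abstract to an arbitrary Boolean function $g$, put it in disjunctive normal form, and build a \emph{constant-depth} (but potentially exponentially wide) $\ReLU$ network that evaluates each clause via an integer-linear form followed by $\ReLU$-thresholding and then $\ReLU$-thresholds the bounded disjunction. The paper instead works \emph{syntax-directed}: it enumerates the subformulae of $\varphi$, devotes one hidden component per subformula, and processes one level of formula depth per layer, realizing $\neg\psi_j$ as $\ReLU(-x_j + 1)$ and $\psi_j \land \psi_k$ as $\ReLU(x_j + x_k - 1)$. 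The paper's construction is thus linear in the size of $\varphi$ with depth equal to the formula depth, whereas yours is depth-$O(1)$ but can blow up in width via the DNF. For expressive power equivalence neither difference matters, and your remark on keeping intermediate values as small integers (so the chosen float format induces no rounding, exploiting the freedom to pick the format in the logic-to-model direction) addresses the only delicate point. Both approaches are sound; the paper's is somewhat tighter in resource terms, while yours is more semantics-first and more obviously format-agnostic.
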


We first show the translation from $\MLPF$s to $\PL$. This translation is given in terms of general classification, as described in Appendix~\ref{appendix: Other classification heads}, as $\MLP$s do not give Boolean vertex classifications by default.

\begin{lemma}\label{lemma: MLP to logic}
    For each $\MLPF$, we can construct an equivalent 
    sequence of $\PL$-formulae. For each Boolean vertex classifier on floats, we can define an equivalent $\PL$-formula.
\end{lemma}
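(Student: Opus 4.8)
The plan rests on two elementary observations. First, an $\MLPF$ performs no graph-level computation: it is applied row-wise, so its output at a vertex $v$ is $F(\bx_v)$, a function of the feature vector $\bx_v\in\cF^{d_I}$ of $v$ alone, independent of the edges and of the other vertices. Second, the elements of a fixed floating-point format $\cF(p,q)$---including $\infty$, $-\infty$ and $\NaN$---are strings of a fixed length $k:=p+q+1$, so under this identification an $\MLPF$ $F$ of I/O dimension $(d_I,d_O)$ is literally a partial Boolean function $f_{\mathbb{B}}\colon\{0,1\}^{kd_I}\to\{0,1\}^{kd_O}$, total on the set of bit strings encoding genuine float vectors---which is the only kind of input it is ever presented with as a vertex classifier.

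First I would record that $\PL$ is expressively complete for Boolean functions: for every $g\colon\{0,1\}^n\to\{0,1\}$ and propositional variables $x_1,\dots,x_n$, the disjunctive normal form
\[
   \bigvee_{\substack{\bb\in\{0,1\}^n\\ g(\bb)=1}}\Big(\bigwedge_{\substack{i\in[n]\\ \bb_i=1}}x_i\ \wedge\ \bigwedge_{\substack{i\in[n]\\ \bb_i=0}}\neg x_i\Big)
\]
defines $g$. Applying this coordinatewise to $f_{\mathbb{B}}$---extended however we like on the never-occurring non-float inputs---yields $\PL$-formulae $\psi_1,\dots,\psi_{kd_O}$, one per output bit, in the propositional variables given by the vertex label symbols $p_1,\dots,p_m$ of the (float-)labeled input graph in the order $<^\Pi$, where the $i$-th variable names the $i$-th bit of the feature vector $\bx_v$. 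Since, by construction of the feature map, that $i$-th bit is exactly the truth value of $p_i$ at $v$, the tuple $(\psi_1,\dots,\psi_{kd_O})$ computes at every vertex of every such graph precisely the bit string of $F(\bx_v)$; this is exactly the equivalence of feature updates from Appendix~\ref{appendix: Other notions on equivalence}. A Boolean vertex classifier on floats is then the case $d_O=1$, its Heaviside-activated last layer delivering a single output bit, so a single formula $\psi_1$ suffices.

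I do not expect a genuine obstacle here: the argument is routine once one commits to treating floats and $k$-bit strings interchangeably. The only points needing care are bookkeeping---fixing a coherent bit encoding of $\cF$ (in particular for $\NaN$ and the infinities), observing that the partiality of $f_{\mathbb{B}}$ is harmless, and aligning bit positions with the ordering $<^\Pi$ of label symbols so that the $\psi_j$ are literally $\Pi$-formulae. Everything substantive follows from the row-wise definition of $\MLPF$s and the Boolean completeness of $\PL$; the companion statement Lemma~\ref{lemma: MLP PL} then follows by additionally carrying out the (equally routine, circuit-style) converse translation of $\PL$-formulae into $\ReLU$-activated $\MLPF$s.
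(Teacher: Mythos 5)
Your proposal is correct and follows essentially the same route as the paper's (very terse) proof: note that an $\MLPF$ acts row-wise and hence is a function on fixed-length bit strings, then invoke Boolean completeness of $\PL$ to simulate each output bit. The paper's proof is two sentences and simply asserts the lookup-table / Boolean-completeness argument; you have filled in the bookkeeping (fixing the bit encoding of $\cF$, handling the partiality on non-float bit patterns, aligning bit positions with $<^\Pi$), which is all consistent with the paper's intent and with the equivalence notion of Appendix~\ref{appendix: Other notions on equivalence}. One pedantic quibble: the output of a Boolean vertex classification head is a single \emph{float} (namely $0_\cF$ or $1_\cF$), not literally a single bit, so the ``single formula'' is really the one formula that decides whether the output float equals $1_\cF$; this is a cosmetic point and does not affect correctness.
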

\begin{proof}
    This follows from the Boolean completeness of $\PL$.
    First, consider that an $\MLP$ only performs floating-point operations locally, i.e., it does not involve communication between vertices. Thus, the $\MLP$ can be expressed as a function $f_\cF \colon \cF^n \to \cF^m$, where $n$ is the input dimension and $m$ the output dimension of the $\MLP$. Since floats are bit strings, this means we can interpret $f_\cF$ as a function $f_\B \colon \B^{kn} \to \B^{km}$, where $k$ is the number of bits in the floats in $\cF$. Since each bit in the output of $f_\B$ can be expressed as a Boolean combination of the bits in the input, it follows that a sequence of $\PL$-formulae can simulate the $\MLP$. In the case of Boolean vertex classifiers, a single formula is enough.
\end{proof}

We next show the translation from $\PL$ to $\ReLU$-activated $\MLP$s, which can also be derived from Theorem 15 in \citep{ahvonen_CSL}.

\begin{lemma}\label{lemma: PL to MLP}
    For each $\Pi$-formula $\varphi$ of $\PL$,
    we can construct an equivalent real or floating-point
    $\ReLU$-activated Boolean vertex classification head.
    Moreover, for each sequence of $\PL$-formulae, we can construct an equivalent $\ReLU$-activated $\MLP$.
\end{lemma}
\begin{proof}
    The idea is to split $\varphi$ into its subformulae and calculate them  one at a time in successive layers of the $\MLP$.
    
    Let $\psi_1, \dots, \psi_d$ be an enumeration of the subformulae of 
    $\varphi$
    such that 
    $\psi_1 = \varphi$.
    We construct the $\MLP$ $M$ as follows. All hidden dimensions of $M$ are $d$; the $i$th component intuitively corresponds to $\psi_i$. The number of hidden layers is the 
    formula depth of $\varphi$
    and we calculate one formula depth per layer.
    
    The first layer is constructed as follows. If $\psi_i$ is a proposition symbol, then $M$ performs the identity transformation to that proposition symbol while placing it in the correct component. If $\psi_i = \top$, then the $i$th component becomes $1$. All other components become zero.

    Next, consider the $\ell$th hidden layer. If the formula depth of $\psi_i$ is not $\ell$, then the $i$th component is copied from the previous layer. Otherwise, if $\psi_i = \neg \psi_j$ for some $j$, then the $i$th component is obtained by multiplying the $j$th component by $-1$ and adding bias $1$. If $\psi_i = \psi_j \land \psi_k$ for some $j$ and $k$, then the $i$th component is obtained by multiplying the $j$th and $k$th components and adding them together with the bias $-1$.
    It is simple to prove by induction that the vector component corresponding to the truth value of a particular formula becomes correctly calculated after a particular layer of the $\MLP$, with the first component (which corresponds to $\varphi$) becoming correctly calculated after the second-to-last layer.

    For the final layer, we copy only the 
    value of $\varphi$
    from the previous layer (i.e., we copy the first 
    component
    and ignore the rest).
\end{proof}

Now we move our focus to transformers.
We start by establishing that the $\softmax$ function saturates over floats in a similar way as the sum of a multiset of floats (recall Proposition \ref{proposition: floating-point saturation}).
For a vector $\bv$, let $\bv_k^+$ denote the set of all vectors obtained from $\bv$ by adding additional components (to the right of $\bv$) containing elements that already appear at least $k$ times in $\bv$. For example, if $\bv = (2,3,2,1,3,3)$ and $k = 2$, then $(2,3,2,1,3,3,2,2)$ and $(2,3,2,1,3,3,3,3,2)$ are in $\bv^+_2$, but $(2,3,2,1,3,3,1,2,2)$ is not in $\bv^+_2$. We let $\softmax_{\cF}$ and $\AH_\cF$ denote the implementations of $\softmax$ and $\AH$ with floats in the floating-point format $\cF$ as described in Section \ref{section: Transformers with floats}.

\begin{proposition}\label{proposition: saturation of softmax}
    For all floating-point formats $\cF$, there exists a $k \in \N$ such that for all vectors $\bv$ over floats in $\cF$, we have $\softmax_\cF(\bv)_i = \softmax_\cF(\bu)_i$ and $\AH_\cF(\bv)_i = \AH_\cF(\bu)_i$ for all $\bu \in \bv_k^+$, where $i$ ranges over the components of $\bv$.
\end{proposition}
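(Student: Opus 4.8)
The plan is to reduce everything to Proposition~\ref{proposition: floating-point saturation} on boundedness of float sums. Fix a format $\cF$ and let $k$ be the constant provided by that proposition, so that $\SUM_\cF(M) = \SUM_\cF(M_{|k})$ for every multiset $M$ over $\cF$. I claim this same $k$ works here. Let $\bv$ be a vector over $\cF$ and $\bu \in \bv_k^+$, so $\bu$ is obtained from $\bv$ by appending components whose values each already occur at least $k$ times in $\bv$. The first, easy observation is that $\argmax(\bu) = \argmax(\bv) =: b$: every appended value already occurs in $\bv$ and is therefore $\le b$, while at least one copy of $b$ survives in $\bu$. Consequently the set of ``attended'' original components is unaffected, and in particular the biases used inside $\softmax_\cF$ and $\AH_\cF$ are the same for $\bv$ and $\bu$.

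For $\softmax_\cF$, recall that by the fixed order of operations the denominator is $D_\bv = \SUM_\cF(N_\bv)$ with $N_\bv = \{\!\{\, \exp_\cF(\bv_j - b) \mid j \,\}\!\}$, and $\softmax_\cF(\bv)_i = \exp_\cF(\bv_i - b) \div D_\bv$. Since subtraction by $b$ and $\exp_\cF$ are genuine functions on $\cF$, the numerator $\exp_\cF(\bv_i - b)$ is unchanged when we pass to $\bu$ for every original component $i$. Moreover, each value $f$ appended to $\bv$ contributes the value $\exp_\cF(f - b)$ to $N_\bu$, and since $f$ occurs at least $k$ times in $\bv$, the value $\exp_\cF(f - b)$ already occurs at least $k$ times in $N_\bv$. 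Hence $N_\bu$ arises from $N_\bv$ by increasing only multiplicities that are already $\ge k$, so $(N_\bu)_{|k} = (N_\bv)_{|k}$, and therefore $D_\bu = \SUM_\cF(N_\bu) = \SUM_\cF((N_\bu)_{|k}) = \SUM_\cF((N_\bv)_{|k}) = D_\bv$ by Proposition~\ref{proposition: floating-point saturation}. Thus $\softmax_\cF(\bu)_i = \softmax_\cF(\bv)_i$ for every component $i$ of $\bv$.

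For $\AH_\cF$, note that $\cI_\bv \subseteq \cI_\bu$ and $\cI_\bu \setminus \cI_\bv$ consists only of appended components, so for $i$ a component of $\bv$ we have $i \in \cI_\bv$ iff $i \in \cI_\bu$; outside $\cI_\bv$ both outputs are $0$. For $i \in \cI_\bv$, by the fixed implementation $\AH_\cF(\bv)_i = 1 \div \SUM_\cF(M_\bv)$ where $M_\bv$ is the multiset of $|\cI_\bv|$ copies of the float $1$, and similarly for $\bu$. If $\cI_\bu = \cI_\bv$ there is nothing to prove. Otherwise some appended component has value $b$, which by the definition of $\bv_k^+$ forces $b$ to occur at least $k$ times in $\bv$; hence $k \le |\cI_\bv| \le |\cI_\bu|$, so $(M_\bv)_{|k}$ and $(M_\bu)_{|k}$ are both just $k$ copies of $1$. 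Again by Proposition~\ref{proposition: floating-point saturation}, $\SUM_\cF(M_\bv) = \SUM_\cF(M_\bu)$, and so the two $\AH_\cF$ values coincide.

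The argument is essentially mechanical; the only points that require a bit of care are that $\exp_\cF$ and subtraction are honest functions on $\cF$ (so that appended values of a given float inherit the multiplicity bound on the side of the numerator multiset and of $M$) and the $\AH_\cF$ bookkeeping, where the denominator multiset genuinely changes but only by adding copies of $1$ beyond the saturation threshold. I do not anticipate a substantive obstacle beyond getting these two observations stated precisely.
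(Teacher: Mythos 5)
Your proof is correct and follows essentially the same route as the paper's: reduce both cases to Proposition~\ref{proposition: floating-point saturation} via the observation that appending values already present $\ge k$ times leaves the $k$-restriction of the relevant multiset (the softmax denominator multiset, resp.\ the multiset of $1$s in the $\AH$ denominator) unchanged. The paper states this in two sentences; you have simply spelled out the bookkeeping (shared bias $b$, the fact that $\exp_\cF(\cdot - b)$ is a function so multiplicities transfer to the image multiset, and the $\cI_\bv \subseteq \cI_\bu$ analysis), all of which is sound.
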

\begin{proof}
    This follows in a straightforward way from Proposition \ref{proposition: floating-point saturation}, since adding more elements to $\bv$ only affects the denominator in the equation of $\softmax$ and $\AH$, and the denominator consists of the saturating sum $\SUM_{\cF}$.

    Next, we discuss the alternate method of implementing the average hard-attention with floats discussed in Appendix~\ref{appendix: floats},
    where the denominator is obtained by rounding the real size of the set $\cI_{\bv}$ into the floating-point format directly.
    We only need to know one of each float to determine which of the elements of $\bv$ should map to a non-zero value. Furthermore, to determine the values in these positions, only a bounded number of positions in $\bv$ can have the largest value in $\bv$ before the denominator in the description of $\AH$ becomes the largest finite float in $\cF$, at which point further instances of that number will not change $\AH_{\cF}(\bv)$.
\end{proof}

As we have already given the translation from $\MLPF$s to $\PL$, we next show a translation from attention modules on floats to $\PLGC$. The translation is given in terms of general vertex classification as defined in Section~\ref{appendix: Other classification heads}, since attention modules do not give Boolean vertex classifications.

\begin{lemma}\label{lemma: attention module to logic}
    For each floating-point soft or average hard-attention module, there exists an equivalent sequence of $\PLGC$-formulae.
\end{lemma}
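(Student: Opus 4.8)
The plan is to show that the output $\SA(\bx)_{v,*}$ of an attention module at a vertex $v$ depends only on two pieces of data: the feature vector $\bx_v$ at $v$, and a fixed bounded restriction $M_{|K}$ of the multiset $M = \{\!\{\, \bx_u \mid u \in V \,\}\!\}$ of \emph{all} feature vectors occurring in the graph. Granting this, the result follows from two standard facts: the counting global modality $\DiamondG_{\geq c}$ is exactly what is needed to read off $M_{|K}$ (since $\cF^d$ is finite), and $\PL$ is Boolean-complete, so any function of finitely many bits is $\PLGC$-definable once those bits are named by subformulae. We view the input feature map as a $\Pi$-labeling with $|\Pi| = d(p+q+1)$ as in Appendix~\ref{appendix: Interpreting labeling functions as float feature maps}, and we produce one $\PLGC$-formula per output bit of $\SA$. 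The crux is the first step; the main obstacle there is tracking how the two nested occurrences of $\SUM_\cF$ (the softmax denominator, and the attention-weighted sum over vertices) interact with the max-bias $b$, so that a \emph{single} global bound $K$ works.

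Step~1 (locality up to global multiplicities). I would fix a head $H^{(\ell)}$ and argue as follows. The entry $A_{v,u} = \frac{(\bx_v W_Q)(\bx_u W_K)^{\T}}{\sqrt{d_h}}$ is a fixed function of the pair $(\bx_v,\bx_u)$ only, so the bias $b = \argmax_u A_{v,u}$ is a function of $\bx_v$ together with the \emph{support} of $M$, i.e.\ of $(\bx_v, M_{|1})$. For $\alpha = \softmax$, the denominator equals $\SUM_\cF$ of the multiset $\{\!\{\, e^{A_{v,u}-b} \mid u \in V \,\}\!\}$; each value $e^{A_{v,u}-b}$ depends only on $(\bx_v,\bx_u)$ (as $b$ does), so grouping vertices by feature vector shows that for every value $z$ its multiplicity in this multiset is $\sum_{\by:\,e^{A_{v,\by}-b}=z} M(\by)$. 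By Proposition~\ref{proposition: floating-point saturation} only a $k_1$-restriction of the multiset matters, and one checks that $\min\!\big(\sum_{\by\in S} M(\by),\, k_1\big)$ is a function of $\big(\min(M(\by),k_1)\big)_{\by\in S}$; hence the denominator, and therefore each attention weight $\softmax(A_{v,*})_u$, is a function of $(\bx_v,\bx_u,M_{|k_1})$. Consequently each summand $\softmax(A_{v,*})_u\cdot(\bx_u W_V)$ of $H^{(\ell)}(\bx)_{v,*}$ is a function of $(\bx_v,\bx_u,M_{|k_1})$, and applying Proposition~\ref{proposition: floating-point saturation} once more to this (componentwise) sum over $u$ yields a further bound $k_2$, so $H^{(\ell)}(\bx)_{v,*}$ is a function of $(\bx_v, M_{|\max(k_1,k_2)})$. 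For $\alpha = \AH$ the reasoning is identical, using that $1/\abs{\cI_{A_{v,*}}}$ is computed as $1/\SUM_\cF(M')$ with $M'$ consisting of $\abs{\cI_{A_{v,*}}}$ copies of $1$: Proposition~\ref{proposition: saturation of softmax} (or Proposition~\ref{proposition: floating-point saturation}) bounds the number of relevant copies, and $\min(\abs{\cI_{A_{v,*}}}, k_3)$ is again a function of $(\bx_v, M_{|k_3})$ by the same $\min$-of-sums observation. Finally $\SA(\bx)_{v,*} = \cH(\bx)_{v,*} W_O$ is obtained from the head outputs by a fixed per-vertex linear map, so taking $K$ to be the maximum of all the bounds gives the claim: $\SA(\bx)_{v,*} = g(\bx_v, M_{|K})$ for a fixed function $g$ and fixed $K$.

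Step~2 (reading $M_{|K}$ logically) and Step~3 (assembly). Since $\cF$ is finite, enumerate $\cF^d = \{\by^{(1)},\dots,\by^{(N)}\}$. For each $j\in[N]$ let $\delta_j$ be the $\PL$-formula (a conjunction of literals over the $d(p+q+1)$ bit symbols) that holds at a vertex iff its feature vector is $\by^{(j)}$. Then for $c\in[K]$ the $\PLGC$-formula $\DiamondG_{\geq c}\,\delta_j$ holds (at every vertex, being global) iff at least $c$ vertices carry feature vector $\by^{(j)}$; the Boolean values of $\{\, \DiamondG_{\geq c}\,\delta_j \mid j\in[N],\ c\in[K] \,\}$ therefore determine, and are determined by, $M_{|K}$. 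By Step~1, each of the $d(p+q+1)$ output bits of $\SA(\bx)_{v,*}$ is a Boolean function of the bit symbols true at $v$ (encoding $\bx_v$) together with the truth values of these $\DiamondG$-formulae (encoding $M_{|K}$). As $\PLGC$ is closed under Boolean combinations and contains all bit symbols as well as every $\DiamondG_{\geq c}\,\delta_j$, each output bit is expressible by a $\PLGC$-formula, and the resulting list of formulae is the desired sequence equivalent to the attention module.

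I expect Step~1 to be the only nontrivial part: one must be careful that the dependence is genuinely on $M_{|K}$ and not on the full multiset $M$, which requires the $\min$-of-sums lemma at each place a $\SUM_\cF$ is replaced by its $k$-restriction, and one must handle the max-bias $b$ (which needs only $M_{|1}$) before the two saturation arguments. The soft-attention and average hard-attention cases differ only in how $b$ and the attention weights are formed, and for the alternative float implementation of $\AH$ discussed in Appendix~\ref{appendix: floats} the same bound is supplied by Proposition~\ref{proposition: saturation of softmax}; everything else — Boolean completeness of $\PL$ and the counting argument — is routine.
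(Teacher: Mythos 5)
Your proof is correct and uses essentially the same approach as the paper: both arguments hinge on the saturation of $\SUM_\cF$ (Propositions~\ref{proposition: floating-point saturation} and~\ref{proposition: saturation of softmax}) to conclude that the attention-module output at a vertex depends only on its own feature vector and a bounded restriction of the multiset of all feature vectors, which $\DiamondG_{\geq c}$ can detect and $\PL$'s Boolean completeness can then decode. Your Step~1 --- the explicit factorization $\SA(\bx)_{v,*}=g(\bx_v,M_{|K})$ together with the $\min$-of-sums observation --- makes precise what the paper argues by an informal backwards walk through the computation (with the coarser intermediate bound $k\,|\cF|$).
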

\begin{proof}
    Intuitively, to simulate an attention module $\SA$, due to Proposition~\ref{proposition: floating-point saturation}, it is enough for each vertex to know how many of each float feature vector occurs in the graph up to some bound that only depends on the floating-point format, not on the size of the graph. We will analyze the steps of the attention module.
    During the construction, we will also use $\FO$ instead of $\PLGC$ to simulate small steps of a given attention head, but ultimately we will get rid of $\FO$ and turn its formulae into $\PLGC$.

    Let $\cG = (V, E, \lambda)$ be a graph, let $X$ be the feature matrix of $\cG$ and let $\cF$ be the floating-point format of $\SA$. Let $n$ be the number of vertices and $d$ the dimension of feature vectors in $\cG$, let $d_h$ be the hidden dimension and $k$ the number of attention heads in $\SA$ and let $W_O$ be the output matrix as it appears in the definition of attention modules.

    We start by considering a single attention head $H$ in $\SA$ with weight matrices $W_Q$, $W_K$ and $W_V$ as they appear in the definition of attention heads.
    First, consider the matrix products $XW_Q$, $XW_K$ and $XW_V$; since floating-point formats are finite and these matrix products do not involve communication between vertices, they can be expressed as $\PL$-formulae by the Boolean completeness of $\PL$ by having a single formula per column of the product. 
    For brevity, we let $Q = XW_Q$, $K = (X W_K)^\T$, $V = X W_V$ and $O = W_O$.
    Now, let $Y$ denote $(X W_Q) (X W_K)^\T$, let $Z$ denote $\frac{Y}{\sqrt{d_h}}$ and let $S$ denote $\softmax(Y)$.

    First, consider the matrix product $Y$ of $Q$ and $K$. We know that $Y_{i,j} = \sum_{\ell = 1}^{d_h} Q_{i,\ell} K_{\ell, j}$. Intuitively, the matrix $Y$ can be viewed as an edge-labeled graph with vertices $[n]$ obtained from $Q$ and $K$, where the edge $(i,j)$ is weighted with $Y_{i,j}$.
    In terms of $\FO$, each edge-weight can be expressed as a sequence $(\varphi_1(x,y), \ldots, \varphi_m(x,y))$ of quantifier-free formulae, where $m$ is the length of floating-point numbers in $\cF$. Intuitively, $\varphi_i(x,y)$ encodes the $i$th bit of the float between the vertices $x$ and $y$. Since quantifier-free $\FO$ is Boolean-complete, such a sequence of formulae clearly exists. In later steps, we will turn this sequence of $\FO$-formulae into a sequence of $\PLGC$-formulae.

    Second, consider the matrix $Z$. We obtain $Z_{i,j}$ as $\frac{Y_{i,j}}{\sqrt{d_h}}$, and this division can be simulated by a sequence of quantifier-free $\FO$-formulae (with two free variables) due to the Boolean completeness of $\FO$.

    Third, we consider the matrix $S$. We note that $S_{i,*}$ is obtained as $\softmax(Z_{i,*})$, which in turn is obtained as follows. Let $b_i = \max\{\, Z_{i,j} \mid j \in [n] \}$. We get that $Z_{i,j} = \frac{e^{Z_{i,j} - b_i}}{\sum_{\ell = 1}^{n} e^{Z_{i,\ell} - b_i}}$. (Note that due to Proposition~\ref{proposition: saturation of softmax}, only a bounded number of each float in a vector needs to be known to calculate the output of $\softmax$ for each vertex.) Due to the Boolean completeness of $\FO$, this step is possible to simulate by a sequence of quantifier-free $\FO$-formulae (with two free variables).
    This step of the analysis can be done analogously for the case where the attention module uses average hard-attention instead of soft-attention.

    Fourth, we consider the output $H(X) = SV$ of the attention head $H$. We obtain $H(X)_{i,j} = \sum_{\ell = 1}^{n} S_{i,\ell} V_{\ell,j}$. Due to Proposition~\ref{proposition: floating-point saturation}, it is enough to know a bounded number of pairs of floats $S_{i, \ell}$ and $V_{\ell,j}$ to compute $H(X)_{i,j}$. In terms of $\FO$, this means that we quantify the sequences of formulae that encode $S$. However, since the number of floats is bounded, this is also possible by using the counting global modality. Thus, since $S$ can be expressed as a sequence of quantifier-free $\FO$-formulae (with two free variables), we can express $H(X)$ by using a sequence of formulae of $\PLGC$ instead of $\FO$. 

    Fifth, we consider the final output $\SA(X) = \cH(X)W_O$ of the attention module, where $\cH(X)$ is the concatenation of $H_1(X), \dots, H_k(X)$ where $H_1, \dots, H_k$ are the attention heads of $\SA$. 
    Just like with $XW_Q$, $XW_K$ and $XW_V$, this is easy to express due to the Boolean completeness of $\PL$.
\end{proof}

Combining the translations from $\MLP$s and attention modules to logics, we obtain a translation from $\GTF$s to $\PLGC$.

\begin{theorem}\label{theorem: PLGC to GT}
    For each floating-point soft or average hard-attention graph transformer, there exists an equivalent $\PLGC$-formula.
\end{theorem}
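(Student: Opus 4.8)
The plan is to establish Theorem~\ref{theorem: PLGC to GT} by assembling the per‑component translations of Lemmas~\ref{lemma: MLP to logic} and~\ref{lemma: attention module to logic} through an induction over the layers of the transformer; the analytic content (boundedness of the float sums and of $\softmax_\cF$/$\AH_\cF$, via Proposition~\ref{proposition: saturation of softmax}) is already absorbed into those lemmas, so what remains is essentially a bookkeeping argument. Fix a soft‑ or average hard‑attention $\GTF$ $T=(P,L^{(1)},\dots,L^{(k)},C)$ over $(\Pi,d)$ whose float format $\cF$ has $m$ bits, and let $\lambda^{(0)},\dots,\lambda^{(k)}$ be the feature maps it computes on an input $\Pi$‑labeled graph. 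I would construct, by induction on $i$, for each component index $c\in[d]$ and each bit position $j\in[m]$, a $\PLGC$‑formula $\varphi^{(i)}_{c,j}$ over $\Pi$ such that for every pointed $\Pi$‑labeled graph $(\cG,v)$ we have $\cG,v\models\varphi^{(i)}_{c,j}$ iff the $j$‑th bit of the $c$‑th component of $\lambda^{(i)}_v$ equals $1$. Applying the classification head $C$ at the end then yields the single equivalent $\PLGC$‑formula.

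For the base case $i=0$ we have $\lambda^{(0)}=P(\lambda)$; since $P$ is an $\MLPF$ and the labeling $\lambda$ is the $\B^{\abs{\Pi}}$‑valued feature map read off from the atoms of $\Pi$, Lemma~\ref{lemma: MLP to logic} provides a sequence of $\PL$‑formulae (hence $\PLGC$‑formulae) computing each bit of $\lambda^{(0)}$, which are the $\varphi^{(0)}_{c,j}$. The same observation takes care of every \emph{local} step met later: the skip‑connection additions $\lambda^{(i+1)}_{\SA}:=\lambda^{(i)}+\SA^{(i+1)}(\lambda^{(i)})$ and $\lambda^{(i+1)}:=\lambda^{(i+1)}_{\SA}+\FF^{(i+1)}(\lambda^{(i+1)}_{\SA})$ are float operations applied per vertex (carried out in the order fixed in Section~\ref{section: Transformers with floats}), and $\FF^{(i+1)}$ is an $\MLPF$; each such step is a function $\cF^a\to\cF^b$ on a single vertex's bits, so by the Boolean completeness of $\PL$ (Lemma~\ref{lemma: MLP to logic}) it is obtained by substituting the formulae already built for its arguments, staying inside $\PLGC$.

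For the induction step, view $\lambda^{(i)}$ as an $\cF^d$‑labeled graph, i.e.\ a $\Pi'$‑labeled graph with one fresh symbol per bit position; Lemma~\ref{lemma: attention module to logic} then gives a sequence of $\PLGC$‑formulae over $\Pi'$ computing the bits of $\SA^{(i+1)}(\lambda^{(i)})$ — this is the step that uses the saturation of $\softmax_\cF$/$\AH_\cF$ and the counting global modality. Substituting the formula $\varphi^{(i)}_{c',j'}$ for the corresponding symbol of $\Pi'$ and invoking a straightforward substitution lemma for $\PLGC$ (which holds because relabeling a graph changes neither its vertex set nor its edge set, so the cases of $\Diamond_{\geq k}$ and of nested $\DiamondG_{\geq k}$ follow by routine induction on formula structure), we get $\PLGC$‑formulae over $\Pi$ for the bits of $\SA^{(i+1)}(\lambda^{(i)})$. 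Passing these, together with the $\varphi^{(i)}_{c,j}$, through the two local steps above yields the $\varphi^{(i+1)}_{c,j}$; since $d$, $m$, $k$ are fixed, the construction terminates.

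Finally, $C$ is a Boolean vertex classifier on floats, so by the second part of Lemma~\ref{lemma: MLP to logic} there is a $\PL$‑formula over the bit‑atoms of $\lambda^{(k)}$ defining it; substituting the $\varphi^{(k)}_{c,j}$ gives a $\PLGC$‑formula equivalent to $T$, and the argument is literally the same for soft‑attention and for average hard‑attention since Lemma~\ref{lemma: attention module to logic} covers both. I expect the only mildly delicate points to be (a) precisely formulating and verifying the $\PLGC$ substitution lemma in the presence of the counting global modality, and (b) the bookkeeping confirming that the two skip‑connection additions are genuine per‑vertex float functions of the argument bits (so that they are legitimate Boolean functions), rather than anything analytic — all the hard work on unbounded sums has been pushed into the preceding lemmas.
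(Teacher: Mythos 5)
Your proposal is correct and takes the same approach as the paper's proof, which is simply the one-line statement that the result ``follows directly from Lemmas~\ref{lemma: MLP to logic} and~\ref{lemma: attention module to logic}.'' You have merely made explicit the layer-by-layer induction, the bit-level bookkeeping, and the substitution step that the paper leaves implicit.
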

\begin{proof}
    This follows directly from Lemmas~\ref{lemma: MLP to logic} and~\ref{lemma: attention module to logic}.
\end{proof}

Next, we show our translation from $\PLGC$ to $\GTF$s. First note that while we have given a translation from $\PL$ to $\MLP$s, there is no restriction on the number of layers of the $\MLP$. Typically the $\MLP$s appearing in graph transformers and $\GPS$-networks are assumed to be simple as defined in Section~\ref{sec: GTs and GNNs}, and the hidden dimension is restricted to at most twice the input/output dimension, though the dimension restriction is easy to work around by increasing the dimension of the surrounding architecture (for instance, if we want to build a $\GTF$-layer with an $\MLP$ of hidden dimension $d$, we simply construct a $\GTF$ of dimension $2d$). Moreover, while we can translate $\PL$ to $\MLPF$s by Lemma~\ref{lemma: PL to MLP}, we also have to give a translation from $\PL$ to $\GTF$s, as the architecture of the $\GTF$s could conceivably ruin the translation. However, we will show that there is no problem, as increasing the inner dimension of the computing models makes it possible to simulate them in the sense defined next.

We start by discussing the notion of \emph{shifting a feature update} (for discussion on feature updates, see Appendix~\ref{appendix: Other notions on equivalence}). The intuition is that we construct a feature update of higher dimension that simulates the lower-dimension feature update but moves the result from the first elements of the input vector to the last elements of the output vector (or the other way around); the remaining elements of inputs and outputs are assumed to be zeros.
For the formal definition, let $\cF$ be a floating-point format, let $\fG[\cF,d]$ be the class of $\cF^d$-labeled graphs, let $L[\cF,d]$ be the class of $d$-dimensional feature maps over $\cF$ and let $d' \geq d$. For each $\cG \in \fG[\cF,d]$, let $\cG_{r}$ (resp. $\cG_{\ell}$) denote the $\cF^{d'}$-featured graph obtained from $\cG$ by adding $d' - d$ columns of zeros to the right (resp. left) of the feature matrix of $\cG$. 
For each $\lambda \in L[\cF,d]$, we define $\lambda_r$ and $\lambda_\ell$ analogously.
Now, let $f \colon \fG[\cF,d] \to L[\cF,d]$ and $f' \colon \fG[\cF,d'] \to L[\cF,d']$ be feature updates.
If for each $\cG \in \fG[\cF,d]$ we have $f'(\cG_{r}) = f(\cG)_{\ell}$, then we say that $f'$ \textbf{shifts $f$ to the right} (by $d'-d$). Likewise if $f'(\cG_{\ell}) = f(\cG)_{r}$ for each $\cG \in \fG[\cF,d]$, then we say that $f'$ \textbf{shifts $f$ to the left} (by $d'-d$).
On the other hand, if $f'(\cG_{r}) = f(\cG)_{r}$, then we say that $f'$ is \textbf{prefix equivalent} to $f$.

Now, we show that a transformer layer can shift an $\MLP$.

\begin{lemma}\label{lemma: simulating MLPs with transformer layers}
    For each $\MLP$ $M$ of I/O dimension $d$, we can construct a soft-attention or average hard-attention transformer layer $T$ of dimension $2d$ that shifts $M$ to the right (or left). If $M$ is simple, then $T$ is simple.
\end{lemma}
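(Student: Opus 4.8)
The plan is to let the self-attention sublayer of $T$ do nothing and to make the feed-forward $\MLP$ of $T$ carry out the whole computation of $M$, using the mandatory skip connection around that $\MLP$ to move the result into the correct block of coordinates. Concretely, for the ``shift to the right'' version, an input to $T$ is a feature map whose first $d$ coordinates carry the features of the input graph $\cG$ and whose last $d$ coordinates are $0$, and we want the output to be $0$ in the first $d$ coordinates and $M$ applied to the original features in the last $d$.

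First I would take the attention module $\SA$ of $T$ to consist of a single head with $W_Q$, $W_K$, $W_V$ and $W_O$ all equal to the zero matrix. Since the feature entries occurring in our constructions are finite and non-$\NaN$, both $\softmax$ and $\AH$ applied to all-zero score rows return finite values, and multiplying by the zero value matrix yields the zero matrix; hence $\SA(X)$ is the zero matrix. Writing $\lambda$ for the input feature map and $\lambda_{\SA} := \lambda + \SA(\lambda)$, the skip connection around $\SA$ therefore leaves the feature map unchanged, i.e.\ $\lambda_{\SA} = \lambda$, because $f + 0 = f$ for every finite float $f$.

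Next I would build the feed-forward $\MLP$ $\FF$ of dimension $2d$ so that on input $\bx = (\bx_{1..d}, \mathbf{0})$ it outputs $(-\bx_{1..d},\, M(\bx_{1..d}))$; then the skip connection around $\FF$ produces the output feature map $\lambda' = \lambda_{\SA} + \FF(\lambda_{\SA}) = (\bx_{1..d} - \bx_{1..d},\ \mathbf{0} + M(\bx_{1..d})) = (\mathbf{0},\, M(\bx_{1..d}))$, which is exactly $M(\cG)$ shifted to the right. To construct such an $\FF$, write $M = (P^{(1)}, \dots, P^{(m)})$ and let $\FF$ run a verbatim copy of $M$ on coordinates $1, \dots, d$ of $\bx$ (ignoring the zero block) while, in parallel, carrying the vector $\bx_{1..d}$ unchanged through all layers — this is possible because the inputs are non-negative, so $\ReLU$ acts as the identity on the carried block — and, in the final (identity-activated) layer, negating the carried block. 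The result is an $\ReLU$-activated $\MLP$ with $m$ layers; if $M$ is simple (two layers), so is $\FF$, and then $T = (\SA, \FF)$ is simple. The ``shift to the left'' version is completely symmetric, swapping the roles of the two coordinate blocks.

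The one point requiring care is that the float computation performed by $\FF$ reproduces that of $M$ \emph{exactly}, rather than only approximately. This rests on two observations. First, padding the weight matrices and bias vectors of $M$ with zeros introduces into the fixed-dimension dot products computed inside $\FF$ only extra summands equal to $0$ (and factors $0 \cdot f = 0$ for finite $f$), and adding $0$ to a float sum never changes it regardless of where it occurs in the summation order; hence the relevant output entries are computed bit-for-bit as in $M$. Second, the two cancellations effected by the skip connections, namely $f + 0 = f$ and $f + (-f) = 0$, are exact in IEEE~754 for finite $f$. Since all feature entries arising in the situations where this lemma is invoked are finite, these identities apply, and the finiteness of the features is exactly what makes all of these exact-arithmetic steps go through.
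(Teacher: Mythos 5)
Your proof is correct and follows essentially the same route as the paper's: make the self-attention module output the zero matrix so that the first skip connection is a no-op, then build $\FF$ to carry a negated copy of the input block alongside a verbatim run of $M$, so that the second skip connection cancels the old block and leaves $M$'s output in the new one. The only cosmetic difference is that you zero out all of $W_Q,W_K,W_V,W_O$ while the paper zeroes out only $W_O$; both yield $\SA(X)=0$.
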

\begin{proof}
    We ``skip'' the attention module by using the skip connection wrapped around it and simulate $M$ with the $\MLP$ while leveraging the increased dimension to neutralize the effect of the second skip connection.

    We construct a transformer layer $T = (\SA, \FF)$ that shifts $M$ to the right, as shifting to the left is analogous. The module $\SA$ simply outputs a zero matrix (this is possible by setting $W_O$ to be a zero matrix). Due to the skip connection wrapped around the attention module, $\FF$ now receives the same input as $T$. The $\MLP$ $\FF$ is obtained from $M$ as follows. The number of layers of $\FF$ is the same as $M$, and we add $d$ to the dimensions of each layer. The output of $M$ is computed in identical fashion by $\FF$, but the output is placed in the last $d$ components of the output vector (which is possible via simple manipulations of the weight matrices). The extra dimensions in each layer are used to remember the first $d$ components of the input of $\FF$ in each layer (i.e., we perform an identity transformation to them in each layer, which is possible because we have assumed that all inputs are non-negative and they are thus unaffected by $\ReLU$). In the final layer these remembered values are multiplied by $-1$ and placed in the first $d$ components of the output vector (there is no $\ReLU$ on the final layer, so these values of the output are non-positive). Now, if the last $d$ columns of the feature matrix of the input were zero columns, then the skip connection wrapped around the $\MLP$ does not affect the last $d$ components of the output and cancels out the first $d$ components of the output.
\end{proof}

Now we can translate $\PL$ to $\GTF$s, but not yet to simple $\GTF$s. To translate to this simpler architecture, we need a way of breaking an $\MLP$ down into multiple simple $\MLP$s, which can be carried out in a sequence. This simply means that the output of an $\MLP$ is given as input to the next $\MLP$ in the sequence.

\begin{lemma}\label{lemma: splitting MLPs}
    For each $\ReLU$-activated $\MLP$ $M$ with $k$ layers, I/O-dimension $d$ and maximum hidden dimension $d_h$, we can construct a sequence $(M_1, \dots, M_{k-1})$ of simple $\MLP$s with I/O-dimension $d' \colonequals \max\{d, d_h\}$ that are, as a sequence, prefix equivalent to $M$.
\end{lemma}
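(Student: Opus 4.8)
The plan is to read $M$ as an alternating composition of affine maps and componentwise $\ReLU$, and to regroup that composition into $k-1$ blocks, each of which is a simple $\MLP$ (an affine map, then $\ReLU$, then an affine map). Write $M = (P^{(1)}, \ldots, P^{(k)})$, let $W^{(i)}, b^{(i)}$ be the weight matrix and bias of $P^{(i)}$, let $L_i \colon x \mapsto W^{(i)} x + b^{(i)}$ be the associated affine map, and let $R \colonequals \ReLU$ (applied componentwise). Since $P^{(1)}, \ldots, P^{(k-1)}$ use $\ReLU$ while $P^{(k)}$ uses the identity activation,
\[
    M = L_k \circ R \circ L_{k-1} \circ R \circ \cdots \circ R \circ L_1,
\]
a composition containing exactly $k-1$ copies of $R$. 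We may assume $k \geq 2$, since a one-layer $\MLP$ is affine (and in all our applications $M$ has at least two layers).

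I would then define the blocks as follows. Let $M_1$ be the simple $\MLP$ with perceptron layers $(W^{(1)}, b^{(1)}, \ReLU)$ and $(W^{(2)}, b^{(2)}, \mathrm{id})$, so $M_1 = L_2 \circ R \circ L_1$; and for $2 \leq j \leq k-1$ let $M_j$ be the simple $\MLP$ with layers $(I, 0, \ReLU)$ and $(W^{(j+1)}, b^{(j+1)}, \mathrm{id})$, so $M_j = L_{j+1} \circ R$. A straightforward induction on $j$ gives $M_j \circ \cdots \circ M_1 = L_{j+1} \circ R \circ L_j \circ R \circ \cdots \circ R \circ L_1$ for $1 \leq j \leq k-1$, and at $j = k-1$ the right-hand side is exactly $M$. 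Each $M_j$ has precisely two perceptron layers, the first $\ReLU$-activated and the second using the identity, hence is simple, and there are $k-1$ of them. The point to check is that the leading $\ReLU$ of $M_j$ (for $j \geq 2$) acts on the output of the identity-activated last layer of $M_{j-1}$, which is precisely where $M$ applies the $\ReLU$ following $L_j$; so no $\ReLU$ is lost or spuriously introduced, even though intermediate block outputs may be negative.

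Finally I would make all blocks have I/O dimension exactly $d'$. If $(d_0, \ldots, d_k)$ are the layer dimensions of $M$, then $d_0 = d_k = d$ and $d_1, \ldots, d_{k-1} \leq d_h$, so every layer dimension is at most $d' = \max\{d, d_h\}$. Each perceptron layer $(W, b, \alpha)$ occurring in some $M_j$, of I/O dimension $(a, c)$, is padded to I/O dimension $(d', d')$ by appending $d' - a$ zero columns to $W$ on the right, $d' - c$ zero rows at the bottom of $W$, and $d' - c$ zeros to $b$; the new columns ignore the extra input coordinates and the new rows produce extra output coordinates equal to $\alpha(0) = 0$ for $\alpha \in \{\ReLU, \mathrm{id}\}$. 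Hence each padded $M_j$, run on the genuine feature vector followed by zeros, returns the genuine output of $M_j$ followed by zeros; chaining this along the sequence shows that on input $x$ padded with zeros to dimension $d'$ the sequence returns $M(x)$ padded with zeros, i.e.\ the sequence $(M_1, \ldots, M_{k-1})$ is prefix equivalent to $M$. I expect the only delicate step to be the regrouping: one must check that the affine/$\ReLU$ alternation factors into exactly $k-1$ simple blocks with every activation landing in the right place; the dimension padding is routine bookkeeping and the cases $k \leq 2$ are immediate.
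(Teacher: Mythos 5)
Your proposal is correct and takes essentially the same approach as the paper: split the $k$-layer $\ReLU$-activated $\MLP$ into $k-1$ simple two-layer blocks by inserting an identity padding layer into all blocks except one, then zero-pad every perceptron layer to I/O dimension $(d', d')$. The only difference is cosmetic placement of the padding layer -- the paper sets $M_i = \mathrm{id} \circ R \circ L_i$ for $i < k-1$ and $M_{k-1} = L_k \circ R \circ L_{k-1}$, whereas you set $M_1 = L_2 \circ R \circ L_1$ and $M_j = L_{j+1} \circ R \circ \mathrm{id}$ for $j \geq 2$; both regroup the same alternating composition and both rely on $\ReLU(0) = 0$ to make the padding inert.
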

\begin{proof}
    We simply separate the layers of the $\MLP$ and transform each one into a simple $\MLP$. 
    
    The first layer of $M_i$ performs the same transformation from the prefix of its input vector to the prefix of its output vector as the $i$th layer of $M$ would (the I/O dimensions of the $i$th layer of $M$ matching the lengths of the prefixes). The remaining components of the output vector of the first layer are zeros. The second layer then simply performs an identity transformation to each component, with the exception of $M_{k-1}$, where the second layer instead performs the same operation as the final layer of $M$ (again w.r.t. prefixes).
\end{proof}

Lemmas \ref{lemma: simulating MLPs with transformer layers} and \ref{lemma: splitting MLPs} together mean that a $k$-layer $\MLP$ can be simulated by $k-1$ simple transformer layers. This is achieved through alternation by the odd layers shifting the simple $\MLP$s to the right and even layers shifting them to the left or vice versa.

Now we are almost ready to show our translation from $\PLGC$ to $\GTF$s. Before this, we require a couple of lemmas showing that $\MLP$s can check some simple binary conditions on inputs, i.e., whether an element of the input vector is greater than or equal to some specific float.

\begin{lemma}\label{lemma: checking > with an MLP}
    Let $\cF$ be a floating-point format and let $F \in \cF$. We can construct a $4$-layer 
    Boolean vertex classifier of I/O dimension $(d,1)$
    that for an input vector $(x_1, \dots, x_d)$ outputs $(1)$ if $x_i \geq F$ and $(0)$ otherwise.
\end{lemma}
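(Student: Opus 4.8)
The plan is to reduce the comparison ``$x_i \geq F$'' to a single linear threshold on $x_i$ and then read off the sign of that threshold. Recall that, by the standing assumption of this appendix, we may take the inputs to be non-negative, so we only have to handle non-negative floats $x_i \in \cF$. If the real value represented by $F$ is $\leq 0$ (in particular if $F$ is $0$ or $-\infty$), then $x_i \geq F$ holds for every non-negative $x_i$, and the constant-$1$ classifier (trivially realizable with four layers) suffices; if $F = \NaN$ the statement is vacuous. So assume from now on that $F$ represents a positive real, possibly $\infty$, and let $F^-$ denote the predecessor of $F$ in the linear order of $\cF$ restricted to its non-negative elements, i.e.\ the largest element of $\cF$ whose value lies in $[0,F)$ (when $F = \infty$, this is the largest finite float $F_{\max}$). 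Such an $F^-$ always exists, is finite, and is non-negative. The elementary observation driving the construction is that, for every non-negative $x_i \in \cF$,
\[
    x_i \geq F \iff x_i > F^- \iff x_i - F^- > 0,
\]
the middle equivalence being exactly the statement that $F^-$ is the predecessor of $F$.

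Next I would check that floating-point subtraction respects this comparison: if $y$ denotes the float computed as $x_i - F^-$ in $\cF$, then $y > 0$ when $x_i \geq F$, and $y \leq 0$ (as a real) otherwise. If $x_i \leq F^-$, the exact real difference $x_i - F^-$ is $\leq 0$, and rounding a non-positive real to the nearest float yields a non-positive float, so $y \leq 0$. If $x_i \geq F$, then the exact real difference is $\geq F - F^-$, and since $F$ and $F^-$ are consecutive in the non-negative part of $\cF$, this gap is at least $s$, the smallest positive float in $\cF$ (the minimal distance between two distinct floats is attained among the subnormals). Rounding a real that is $\geq s$ to the nearest float therefore cannot underflow to $0$, so $y \geq s > 0$. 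For $F = \infty$ one checks directly that $\infty - F_{\max} = \infty > 0$ while $x_i - F_{\max} \leq 0$ for every finite $x_i \leq F_{\max}$.

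The classifier is now assembled from four perceptron layers. Layer~1 has for its weight matrix the $1 \times d$ matrix with a $1$ in column $i$ and $0$s elsewhere (so its linear part outputs exactly $x_i$, with no rounding), bias $-F^-$, and activation $\ReLU$; it outputs $\ReLU(y)$, which is $\geq s > 0$ when $x_i \geq F$ and is $0$ when $x_i < F$. Layers~2 and~3 are ``pass-through'' layers (weight $1$, bias $0$, activation $\ReLU$): on a non-negative input they return it unchanged, and they serve only to reach the prescribed four-layer format. Layer~4 has weight $1$, bias $0$, and activation the Heaviside step $\sigma$; it outputs $\sigma(\ReLU(y)) = 1$ exactly when $y > 0$, i.e.\ exactly when $x_i \geq F$. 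In particular the boundary value $x_i = F^- < F$ produces $y \leq 0$ and hence output $0$, as required.

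The step needing the most care is the floating-point claim in the second paragraph: that the single subtraction $x_i - F^-$ never has its sign flipped or collapsed to $0$ by rounding. This follows from the fact that two distinct floats differ by at least the smallest positive (subnormal) float, so the difference $F - F^-$ cannot underflow; the degenerate cases $F \leq 0$, $F = \infty$ and $F = \NaN$ are treated separately as above. Everything else---the bookkeeping of the four layers, and the fact that multiplying a float by $1$ and adding $0$ are exact operations---is routine.
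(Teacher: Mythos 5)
Your construction is correct as far as it goes, and the floating-point reasoning (that the exact gap between two distinct non-negative floats is at least the smallest positive subnormal, so the subtraction $x_i - F^-$ cannot have its sign collapsed by rounding) is sound. But you lean on the Heaviside step in the last layer, and that shortcut undermines the purpose of the lemma. The paper's proof of Lemma~\ref{lemma: checking > with an MLP} deliberately produces a $\ReLU$-only MLP (ReLU on the first three layers, identity on the last). This matters because the lemma is not only used as a final classification head: in the proof of Theorem~\ref{theorem: logic to GPS} it is applied to construct the intermediate $\mathrm{FF}$-modules inside transformer layers (``the $\MLP$ of this layer normalizes these two columns such that all positive values are replaced with $1$s\dots this is possible by Lemma~\ref{lemma: checking > with an MLP}''), and those modules must be $\ReLU$-activated (indeed simple) so that Lemmas~\ref{lemma: simulating MLPs with transformer layers} and~\ref{lemma: splitting MLPs} apply. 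The paper even states explicitly that ``Checking each of these conditions individually can be done by $\MLP$s that only use the $\ReLU$ activation function by Lemmas~\ref{lemma: checking > with an MLP} and~\ref{lemma: checking = with an MLP}''. So the four layers in the statement are not a ``format'' constraint to be padded with pass-throughs; they are genuinely needed to simulate a step using only ReLU.

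There is also a more local problem with replacing your layers 2--4 by the paper's ReLU-only trick, because your Layer~1 has the wrong polarity. The paper computes $\ReLU(F - x_i)$, which yields $0$ exactly when $x_i \geq F$ and a positive value otherwise; Layer~2 then applies $\ReLU(-z + f)$ (with $f$ the smallest positive float) to collapse every positive $z$ to $0$ and to send $0$ to the \emph{single} value $f$; Layers~3--4 scale that fixed value $f$ up to $1$ by multiplying with two carefully chosen powers of two (a two-step scaling to avoid overflow). Your Layer~1 instead computes $\ReLU(x_i - F^-)$, whose nonzero output is not a single value but an arbitrary positive float depending on $x_i$; feeding that into the same Layer~2 flips the polarity the wrong way, and you would need an extra negation and hence an extra layer to recover. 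Additionally, by the paper's definition of $\MLP$ the last perceptron layer must use the identity activation, so even under a generous reading your Heaviside would have to sit on Layer~3 rather than Layer~4. In short: your reduction to a single threshold and the observation about the predecessor float are fine, but you need to realize the thresholding with ReLUs alone, and for that the paper's sign convention and the explicit two-layer rescaling of $f$ to $1$ are the missing ingredients.
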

\begin{proof}
    We use $\ReLU$ and negative weights to flatten all values $x \geq F$ to $1$ and all values $x < F$ to $0$. With $\R$, this would not be possible, since there is no greatest $x \in \R$ such that $x < F$, but such a float exists in $\cF$.

    For the first layer of the $\MLP$, we want to squish all values at least $F$ to a single value. Working with $\R$ we could use weight $-1$ and bias $F$ for the $i$th component (and $0$ for others) to squish all values greater than $F$ to $0$, while all values less than $F$ would become positive values. However, with floats this only works for some values of $F$, as performing this operation for the smallest number in the format (or greatest if $F$ is negative) may result in overflow\footnote{Overflow means that the maximum exponent of the floating-point format is exceeded. In these cases, the operation will output $\infty$ or $-\infty$.} for large enough values of $F$. If $F$ is a large enough value to cause overflow, then we divide both the weight and bias by $2$; since floats are in base $2$, the resulting numbers are precisely representable in the format.
    After this first layer, the value is $0$ if $x_i \geq F$ and otherwise some positive value. 
    
    Next, let $f$ be the smallest positive floating-point number in $\cF$. For the second layer, we want to squish positive values (i.e. the case where $x_i < F$) to a single value. We use the weight $-1$ and bias $f$. After this second layer, the value is $f$ if $x_i \geq F$ and $0$ otherwise. 
    
    It is now easy to define two more successive layers transforming $f$ into $1$ by using positive weights and biases $0$. For example, if $e_{\max}$ denotes the greatest possible (non-biased) exponent in $\cF$, then the weight of the first layer might be $0.10 \cdots 0 \times 2^{e_{\max}}$ transforming $f$ into $0.10\cdots0 \times 2^{-p}$. The weight of the second layer could then be $0.10 \cdots 0 \times 2^{p+2}$ transforming $0.10\cdots0 \times 2^{-p}$ to $1$.
\end{proof}

\begin{lemma}\label{lemma: checking = with an MLP}
    Let $\cF$ be a floating-point format and let $F \in \cF$. We can construct a $6$-layer 
    Boolean vertex classifier of I/H/O-dimension $(d,2,1)$
    that for an input vector $(x_1, \dots, x_d)$ outputs $(1)$ if $x_i = F$ and $(0)$ otherwise.
\end{lemma}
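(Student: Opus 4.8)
The plan is to combine Lemma~\ref{lemma: checking > with an MLP} with a few more perceptron layers that subtract off the contribution of all floats strictly greater than $F$. First I would observe that checking $x_i = F$ amounts to checking $x_i \geq F$ \emph{and} $x_i < F'$, where $F'$ is the smallest float in $\cF$ that is strictly greater than $F$; such an $F'$ exists because $\cF$ is finite (and if $F$ is the largest finite float, $x_i = F$ is already equivalent to $x_i \geq F$, handled directly). So the construction runs the $4$-layer classifier of Lemma~\ref{lemma: checking > with an MLP} twice in parallel: one copy testing $x_i \geq F$, producing a bit $b_1$, and one copy testing $x_i \geq F'$, producing a bit $b_2$. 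Both copies operate on the same input component $x_i$ and can be laid out side by side, so after $4$ layers we have the two bits $b_1, b_2 \in \{0,1\}$ sitting in the two hidden components (this is where the hidden dimension $2$ in the statement comes from). The output $x_i = F$ is then exactly $b_1 - b_2$: if $x_i = F$ then $b_1 = 1$ and $b_2 = 0$; if $x_i > F$ then $b_1 = b_2 = 1$; if $x_i < F$ then $b_1 = b_2 = 0$. This final subtraction is one more perceptron layer (weight matrix $[1,-1]$, bias $0$, identity activation), and since the difference is always in $\{0,1\}$ no $\ReLU$ rectification is needed.

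The main bookkeeping point is the layer count: Lemma~\ref{lemma: checking > with an MLP} gives $4$ layers, and we need one extra layer for the subtraction, giving $5$; the statement asks for $6$, which leaves a comfortable slack (e.g.\ one can pad with an identity layer, or simply note that the $\geq F'$ branch may internally need the weight/bias halving trick from Lemma~\ref{lemma: checking > with an MLP} executed possibly one layer later — in any case $6$ layers suffice). The only subtlety worth flagging is that the two parallel branches must not interfere: since each branch of Lemma~\ref{lemma: checking > with an MLP} only reads the component $x_i$ (all other input weights are $0$) and writes into its own dedicated hidden slot, placing them side by side in a width-$2$ hidden layer is routine, and the non-negativity assumption on inputs (stated in Appendix~\ref{appendix: Proofs for float section}) is exactly what makes the $\ReLU$ identity-transformation steps inside Lemma~\ref{lemma: checking > with an MLP} go through.

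I do not expect a genuine obstacle here — the lemma is essentially a corollary of the previous one plus the finiteness of $\cF$, which guarantees the existence of the "next float" $F'$. The one thing to be careful about is the edge case where $F$ is the maximal finite float of $\cF$ (so there is no $F' \in \cF$ with $F' > F$ other than $\infty$): in that case one takes $F' = \infty$ and notes that $x_i \geq \infty$ holds only for $x_i = \infty$, or more simply observes that $x_i = F$ is equivalent to $x_i \geq F$ on the finite floats, so the $\geq F$ test of Lemma~\ref{lemma: checking > with an MLP} alone already does the job (using fewer layers, which we pad to $6$). With that case dispatched, the general construction above completes the proof.
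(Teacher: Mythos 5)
Your construction is correct and reaches the same conclusion, but by a slightly different decomposition than the paper. The paper's proof tests the two inequalities $x_i \geq F$ and $x_i \leq F$ in the two hidden slots (the latter being the sign-flipped analogue of Lemma~\ref{lemma: checking > with an MLP}), then combines them in a fifth layer by summing with bias $-1$, followed by a sixth identity layer (since the final perceptron layer is unactivated). Your version instead tests $x_i \geq F$ and $x_i \geq F'$ for $F'$ the successor of $F$ in $\cF$ and subtracts. Both combinations yield the same bit and the same $6$-layer, width-$2$ shape. What the paper's choice buys is that it avoids any reference to a ``next float'': there is no edge case when $F$ is the largest finite float or when $F = \infty$, because $\leq F$ and $\geq F$ are always well defined for the given $F$, so no special-casing is needed.

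On that edge case: your fallback for maximal finite $F$ is not quite right as stated. You say ``$x_i = F$ is equivalent to $x_i \geq F$ on the finite floats,'' but $\cF$ contains $\infty$, so $x_i \geq F$ alone would also accept $x_i = \infty$. Your alternative of taking $F' = \infty$ and testing $x_i \geq \infty$ in the second branch does give the right truth table, but one would then need to check that the construction of Lemma~\ref{lemma: checking > with an MLP} remains well behaved with bias $F = \infty$ (e.g.\ that $-\infty + \infty$ does not produce $\NaN$ in the first layer); the paper sidesteps this question entirely by never leaving the threshold $F$ itself. So your route is sound modulo this extra care at the boundary, while the paper's $\{\geq F, \leq F\}$ decomposition is the cleaner one to write down.

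Minor layer-count note: the paper explicitly spends the fourth extra layer as an identity pass because the last perceptron layer of an $\MLP$ has no activation; your ``pad with an identity layer'' comment amounts to the same thing, so the $6$-layer claim is fine either way.
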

\begin{proof}
    We use the construction from the proof of Lemma \ref{lemma: checking > with an MLP} in the two hidden components. The first checks if $x_i \geq F$ and the second checks if $x_i \leq F$. We add one more layer to the $\MLP$ that takes the sum of these two components with bias $-1$ (and one more layer with an identity transformation because there is no $\ReLU$ on the final layer).
\end{proof}

We ease into our translation from $\PLGC$ to $\GTF$s by first considering a translation from $\PLG$ to unique hard-attention $\GTF$s as it uses a similar general strategy as the case for soft or average hard-attention, but is also both simpler and shorter. Note that, in the below theorem (and others like it), the final simple Boolean vertex classifier does not have to use any step functions, as the $\ReLU$ will suffice.

\begin{theorem}\label{theorem: logic to hard-attention GT}
    For each $\PLG$-formula, we can construct an equivalent simple unique hard-attention $\GTF$.
\end{theorem}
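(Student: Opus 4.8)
```latex
\begin{proof}[Proof plan]
\textbf{Overall approach.} The plan is to translate a $\PLG$-formula $\varphi$ into a simple unique hard-attention $\GTF$ by processing the subformulae of $\varphi$ one at a time, using a fixed number of transformer layers per subformula, and maintaining the invariant that after the subformula $\psi_i$ has been handled, some dedicated component of every vertex's feature vector carries the float $1$ exactly when $\psi_i$ holds there and $0$ otherwise. This mirrors the real-valued construction of Lemma~\ref{lem:gmlg-to-real-gps}, but now everything must be implemented with simple $\MLP$s and unique hard-attention heads over a chosen float format $\cF$, so we lean on the auxiliary machinery just developed: Lemma~\ref{lemma: PL to MLP} for propositional combinations, Lemmas~\ref{lemma: simulating MLPs with transformer layers} and~\ref{lemma: splitting MLPs} to realize multi-layer $\MLP$s inside simple transformer layers (by alternately shifting left and right through the extra $d$ dimensions), and Lemmas~\ref{lemma: checking > with an MLP} and~\ref{lemma: checking = with an MLP} for the final step functions needed to rectify fractional attention outputs back to clean $0/1$ bits.

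\textbf{Key steps.} First I would fix a suitable float format $\cF$ (it can be chosen freely in this direction) and enumerate the subformulae $\psi_1,\dots,\psi_d$ of $\varphi$ so that subformulae precede superformulae; the feature dimension will be $d$ (plus slack for the shifting tricks). The initial $\MLP$ $P$ projects the vertex labels into the components of atomic subformulae. For $\neg$ and $\wedge$ we invoke Lemma~\ref{lemma: PL to MLP} (or directly the elementary weight/bias constructions) to compute the new component from already-computed ones; since this is an $\MLP$ step with no inter-vertex communication, Lemmas~\ref{lemma: simulating MLPs with transformer layers} and~\ref{lemma: splitting MLPs} let us carry it out inside a bounded number of \emph{simple} transformer layers, zeroing out the attention module via $W_O = 0$ and using the first skip connection so $\FF$ sees the layer input. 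The crucial case is $\psi_i = \DiamondG\psi_k$: here we use a unique hard-attention head with $W_Q = W_K = 0$ so that every vertex attends to the single lexicographically-first vertex realizing the maximal value; by placing the already-computed truth value of $\psi_k$ (as a $0/1$ float) into the relevant column via $W_V$, the head's output at every vertex equals the $\psi_k$-value of that argmax vertex, which is $1$ iff some vertex satisfies $\psi_k$. Since unique hard-attention selects the maximum, the output is already exactly $1$ or $0$ (no fractional values arise, in contrast to soft-attention or average hard-attention), so only a trivial rectification via an $\MLP$ layer (or even none) is needed; if any cleanup is required, Lemma~\ref{lemma: checking = with an MLP} provides it. Finally, the classification head $C$ reads off the component for $\psi_1 = \varphi$, and is a simple Boolean vertex classifier (the $\ReLU$ suffices, no step function needed).

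\textbf{Main obstacle.} The main technical nuisance, rather than a deep obstacle, is the bookkeeping of the dimension-shifting: each simple $\MLP$ fragment from Lemma~\ref{lemma: splitting MLPs} outputs its result in a prefix of the vector, while Lemma~\ref{lemma: simulating MLPs with transformer layers} moves data alternately to the right and left halves of a doubled dimension, so one must carefully track in which half the running feature vector currently lives after each layer and align the attention head's $W_V$ and $W_O$ (and the next $\FF$'s weight matrices) accordingly. One must also verify that the non-negativity convention on inputs is preserved throughout—which it is, since truth values are $0/1$ and the only negative intermediate values are produced and consumed within a single $\MLP$ before being made non-negative again—and that the unique hard-attention head behaves correctly even when all $W_V$-relevant values are $0$ (then the output column is uniformly $0$, correctly signaling that no vertex satisfies $\psi_k$), which follows from the explicit definition of $\UH$. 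Assembling these pieces, each subformula costs a constant number of simple transformer layers, and the total is a simple unique hard-attention $\GTF$ equivalent to $\varphi$.
\end{proof}
```
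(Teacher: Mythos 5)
Your overall skeleton matches the paper (iterate over subformulae in a $2d$-dimensional feature space, handle $\neg,\wedge$ with $\MLP$s via Lemmas~\ref{lemma: PL to MLP}, \ref{lemma: simulating MLPs with transformer layers}, \ref{lemma: splitting MLPs}, and reserve the attention module for $\DiamondG$). But the attention-head construction for $\DiamondG\psi_k$ is wrong, and this is the one step that is genuinely specific to unique hard-attention, so it cannot be waved through.

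You set $W_Q = W_K = 0$, as in the real-valued soft-attention construction of Lemma~\ref{lem:gmlg-to-real-gps}. For $\softmax$ or $\AH$ this yields uniform attention, so the output at every vertex is the average of the $\psi_k$-column and a step function recovers $\DiamondG\psi_k$. For $\UH$ it does not: with $W_Q=W_K=0$ every row of the score matrix $\frac{(XW_Q)(XW_K)^\T}{\sqrt{d_h}}$ is the zero vector, so $\cI_{\bx}=[n]$ for every row and $\UH$ deterministically selects index $\min(\cI_\bx)=1$. The head therefore returns the $\psi_k$-value of vertex $1$ at \emph{every} vertex, not $1$ iff \emph{some} vertex satisfies $\psi_k$. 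Your claim that the head attends to the vertex ``realizing the maximal value'' and that this value ``is $1$ iff some vertex satisfies $\psi_k$'' confuses the argmax over the \emph{scores} (all zero here) with an argmax over the \emph{values} $XW_V$; $\UH$ only sees the former. The fix, as in the paper's proof, is to make the scores carry the $\psi_k$-information: take $W_Q$ to be the indicator of the $\top$-column (so $XW_Q$ is all $1$s) and $W_K=W_V$ the indicator of the $\psi_k$-column. Then every row of the score matrix equals $(X_{*,k'})^\T$ where $k'$ is the $\psi_k$-column, so $\UH$ selects a vertex with $\psi_k$-value $1$ whenever one exists (and vertex $1$, whose value is then $0$, otherwise), and $Z X W_V$ is the all-ones vector iff $\DiamondG\psi_k$ holds. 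The rest of your bookkeeping about shifting, non-negativity, and the classification head is fine, but without this correction the $\DiamondG$ case fails outright.
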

\begin{proof}
The idea is to use a single transformer layer to simulate a single subformula of the $\PLG$-formula. We use the $\MLP$s to simulate the operators of $\PL$ and transformer layers to simulate the non-counting global modality.

Let 
$\varphi$ be a $\PLG$-formula
and let $\psi_1, \dots, \psi_d$ be an enumeration of the subformulae of 
$\varphi$
including proposition symbols and $\top$ such that 
$\psi_1 = \varphi$.
We construct an $\ordo(d)$-layer graph transformer over any floating-point format $\cF$.

First, we give a general idea of the construction.
The initial $\MLP$ transforms the feature vectors into vectors that have two components $i$ and $2i$ for each subformula $\psi_i$. For the first of these, the $\MLP$ preserves the truth values of proposition symbols, setting the component corresponding to $\top$ to $1$. All other components are set to $0$ (including the last $d$ components). The transformer layers calculate the truth values of the subformulae starting from simple subformulae and moving to more complex ones. The layer we construct depends on the subformula $\psi_i$ under evaluation, i.e., on whether $\psi_i$ is a $\PL$-formula or of the type $\DiamondG \psi_j$ for some $j \in [d]$. Both before and after each evaluation of a subformula, the feature matrix is a binary matrix where subformulae not yet calculated have a corresponding column of $0$s and those already calculated have a corresponding column of $1$s and $0$s. (Additionally, either the first or last $d$ columns are zero columns, as the evaluated values may shift from left to right and vice versa.) Finally, the classification head just copies the first element of the feature vector.

Now, we start the construction.
The operators of $\PL$ are handled by the $\MLP$s; this is possible due to Lemmas \ref{lemma: PL to MLP}, \ref{lemma: simulating MLPs with transformer layers} and \ref{lemma: splitting MLPs} and is also what necessitates the hidden dimension $2d$ as the values shift from left to right or vice versa. All that is left is to define a transformer layer that simulates the non-counting universal modality. 
Assume that we have to simulate a formula of type $\psi_i \colonequals \DiamondG \psi_j$. Given a matrix $X \in \mathbb{B}^{n \times d}$, assume that column $k$ contains the truth values of $\top$ (i.e., column $k$ is a column of only $1$s). Now, let $W_Q$ be the Boolean-valued $(d \times 1)$-matrix where only the $k$th row is $1$. Likewise, $W_K$ is the Boolean-valued matrix where only the $j$th row is $1$. We obtain that
\[
Y = \frac{XW_Q (XW_K)^\T}{\sqrt{1}}
\]
is an $(n \times n)$-matrix, where each row contains the transpose of column $j$ of $X$. Now, $Z = \mathrm{UH}(Y)$ gives an $(n \times n)$-matrix, where a single column contains only ones and others are zero columns; if there is at least one vertex where $\psi_j$ is true, then the column of $1$s corresponds to one such vertex. Let $W_V = W_K$. Now, $Z (X W_V)$ gives an $(n \times 1)$-matrix that contains only ones if $\psi_j$ is true in at least one vertex and otherwise the vector is a zero vector. Finally, the matrix $W_O$ is the $(1 \times d)$-matrix where exactly the $i$th element is $1$ and others are $0$s. Thus, the attention module outputs a matrix where the $i$th column contains the truth values of $\psi_i$ and other columns are zero columns. The skip connection is used to recover all previously calculated columns. Finally we ``skip'' the $\MLP$ of the layer, i.e., the $\MLP$ multiplies every component by $0$ and thus outputs a zero vector; the so-far calculated columns are then recovered via the skip connection of the $\MLP$.
\end{proof}

Next, we show our translation from $\PLGC$ to soft and average hard-attention $\GTF$s.
This case is naturally more involved than the case with unique hard-attention; the key insight involves leveraging floating-point underflow.
We first recall the related Proposition~\ref{proposition: floating-point underflow 2}.

\floatunderflow*

Using the above proposition, we now give the translation from $\PLGC$ to $\GTF$s.

\begin{theorem}\label{theorem: logic to GPS}
    Given a 
    formula
    of $\PLGC$, we can construct an equivalent simple soft-attention or simple average hard-attention $\GTF$. 
\end{theorem}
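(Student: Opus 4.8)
The plan is to imitate the strategy used in Theorem~\ref{theorem: logic to hard-attention GT} for unique hard-attention, but now deal with the fact that $\softmax$ and $\AH$ spread their mass over several positions rather than concentrating on one. As before, I would enumerate the subformulae $\psi_1,\dots,\psi_d$ of the given $\PLGC$-formula $\varphi$ with $\psi_1=\varphi$, fix an arbitrary floating-point format $\cF$, and build an $\ordo(d)$-layer simple transformer over $\cF$. The initial $\MLP$ sets up a $2d$-dimensional feature vector whose ``active half'' records, in the $i$th coordinate, the truth value of $\psi_i$ (proposition symbols copied in, $\top$ set to $1$, everything else $0$). Each transformer layer evaluates one subformula, maintaining the invariant that the feature matrix is Boolean with a column of $1$s/$0$s for each already-evaluated subformula and a zero column for the rest (with the active half alternating between the first $d$ and last $d$ columns). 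Boolean connectives $\neg,\land$ are handled entirely inside the $\FF$ of a layer via Lemmas~\ref{lemma: PL to MLP}, \ref{lemma: simulating MLPs with transformer layers} and~\ref{lemma: splitting MLPs}, exactly as in the unique-hard case; the attention module is ``skipped'' for these layers by zeroing $W_O$. The classification head copies the first coordinate.

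The new content is the treatment of a subformula $\psi_i=\DiamondG_{\geq k}\psi_j$. Here I want a soft- (or average-hard-) attention head that, reading a Boolean matrix $X$ in which column $j$ holds the truth values of $\psi_j$ and some column holds $\top$ (all $1$s), outputs $1$ in position $i$ exactly when column $j$ contains at least $k$ ones. Choosing $W_Q$ to select the all-$1$s column and $W_K$ to select column $j$ makes $Y=XW_Q(XW_K)^\T/\sqrt{d_h}$ the $n\times n$ matrix each of whose rows is the transpose of column $j$ of $X$; so each row of $\softmax_\cF(Y)$ (resp.\ $\AH_\cF(Y)$) is a vector that puts weight on exactly the $\ell$ positions where $\psi_j$ holds, where $\ell=|\{u\mid \cG,u\models\psi_j\}|$. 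For $\softmax_\cF$, because the inputs are only $0$ and $1$, that weight is $\tfrac{e}{\ell e + (n-\ell)}$ at each $\psi_j$-position, and by Proposition~\ref{proposition: saturation of softmax} this saturates: once $\ell$ exceeds the threshold $k_\cF$ of $\cF$, the value is some fixed float; for $\AH_\cF$ the weight is $1/\ell$ computed via $\SUM_\cF$, again saturating. In either case, the key point is that the attention weight at a $\psi_j$-position, call it $w(\ell)$, is a strictly decreasing function of $\ell$ on $\{1,\dots\}$ up to saturation, with $w(\ell)\ge w(k)>w(k+1)\ge\cdots$ for $\ell<k$ versus $\ell\ge k$ (this needs a small monotonicity check for floats, but both implementations round a monotone real quantity, so it holds; one must also handle $\ell$ large enough that the value becomes literally $0$). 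Then the attention-head output at a $\psi_j$-position is $Z X W_V$ with $W_V$ selecting column $j$, i.e.\ exactly $\ell\cdot w(\ell)$; but I actually want the per-row output at \emph{every} vertex, which is the same value $S(\ell):=\ell\cdot w(\ell)$ (or its float approximation) since every row of $Y$ is identical. The task reduces to: use Proposition~\ref{proposition: floating-point underflow} / Proposition~\ref{proposition: floating-point underflow 2} to scale $S(\ell)$ through $W_O$ (or through the following $\FF$) by a tiny float so that the scaled value underflows to $0$ precisely when $\ell<k$ and is nonzero when $\ell\ge k$ — since $S$ is monotone, there is a crisp cutoff $S(k-1)$ vs.\ $S(k)$, and picking the multiplier $F$ from Proposition~\ref{proposition: floating-point underflow} with threshold just below $S(k)$ does it. A final $\MLP$ (Lemma~\ref{lemma: checking > with an MLP}) rectifies any nonzero positive value to $1$ and $0$ to $0$, placing the result in coordinate $i$; the skip connections recover the previously computed columns, and the layer's $\FF$ is ``skipped''. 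For average hard-attention the argument is identical with $w(\ell)=1/\ell$ and $S(\ell)=1$ when $\ell\ge1$ — wait, that would make $S$ constant, so for $\AH$ I instead use $w(\ell)$ itself (the $1/\ell$, before multiplying by $X$) as the monotone quantity, by choosing $W_V$ to pick the all-$1$s column instead of column $j$, so the head output is $\sum_{\text{$\psi_j$-positions}} w(\ell)\cdot 1 = \ell/\ell = 1$; that again is constant, so one must be slightly more careful: pick $W_V$ to select column $j$, obtaining $\ell\cdot(1/\ell)$ — same problem. The clean fix for $\AH$ is to observe that $\AH_\cF$ at a non-maximal position is $0$ and at a maximal position is $1/\ell$ exactly, so feeding column $j$ through $W_V$ yields $\sum 1/\ell = 1$ when $\ell\ge 1$; to detect $\ell\ge k$ one instead reads off $1/\ell$ at a single position — achieved by having $W_V$ select, e.g., the all-$1$s column so the head output at every vertex is $\ell\cdot(1/\ell)=1$ if $\ell\ge1$... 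I will therefore route the denominator differently: since $\AH_\cF$ computes $1/\ell$ via $\SUM_\cF$ and $1/\ell$ is itself a float in a row of $\AH_\cF(Y)$, I can use $W_V$ to pull out a \emph{single} entry of $\AH_\cF(Y)$ (by a value matrix that is $1$ only in the all-$1$s column, combined with the fact that exactly the $\psi_j$-rows of $Z$ are nonzero) — more simply, since all rows of $Y$ agree, I can add an auxiliary always-false column and select it so the head output equals $\max$-mass $=1/\ell$ at every vertex when $\ell\ge1$ and is left undefined when $\ell=0$; a preliminary $\MLP$ coordinate handles $\ell=0$ separately via the $\top$-column sum. Then $1/\ell$ is the monotone quantity and Proposition~\ref{proposition: floating-point underflow 2} gives the underflow cutoff between $1/(k-1)$ and $1/k$, so multiplying by $(\tfrac{k}{2}f)$ (suitably) underflows iff $1/\ell\le 1/k$ iff $\ell\ge k$.

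The soundness proof is then an induction on the position $i$ in the enumeration, showing the stated Boolean-matrix invariant is preserved: layers for $\PL$-subformulae inherit correctness from Lemma~\ref{lemma: PL to MLP} together with the shifting Lemmas~\ref{lemma: simulating MLPs with transformer layers} and~\ref{lemma: splitting MLPs}, and layers for $\DiamondG_{\geq k}\psi_j$ are correct by the monotonicity-plus-underflow analysis above; correctness of the classification head is immediate. Everything uses only simple $\MLP$s (two $\ReLU$-layers, except the Heaviside classification head) and simple attention, so the resulting $\GTF$ is simple. I expect the main obstacle to be exactly the point I stumbled over above for $\AH$: cleanly arranging, for both $\softmax_\cF$ and $\AH_\cF$ simultaneously, a single scalar quantity read out of the attention head that is a \emph{strictly monotone} function of $\ell$ on the relevant range and has a guaranteed gap between its value at $\ell=k-1$ and at $\ell=k$, so that Proposition~\ref{proposition: floating-point underflow} gives an underflow multiplier with the exact cutoff $\ell\ge k$; once that scalar is isolated, the rest is routine bit-manipulation by simple $\MLP$s. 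A secondary fiddly point is handling the degenerate case $\ell=0$ (empty $\psi_j$-extension, where $\softmax$/$\AH$ of an all-$0$ row still behaves predictably but needs to be checked) and the case where $k=0$ (the formula $\DiamondG_{\geq 0}\psi_j$ is a tautology, handled trivially by the $\MLP$), but these are easily dispatched with an extra always-true coordinate seeded by the initial $\MLP$.
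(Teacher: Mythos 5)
Your overall plan is the same as the paper's (walk through subformulae layer by layer, handle $\land,\neg$ with the $\FF$-MLPs via the shifting lemmas, and use attention plus underflow for $\DiamondG_{\geq k}$), but the attention-head construction has three genuine gaps, each of which the paper's proof addresses with a specific device you are missing.

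First, your softmax analysis is wrong without a large scale factor in the query/key matrices. With $W_Q,W_K$ Boolean as you propose, the row of $Y$ at a $\psi_j$-position becomes $0$ after the bias subtraction, but the non-$\psi_j$ positions become only $-1/\sqrt{d_h}$, so after exponentiation they contribute $e^{-1/\sqrt{d_h}}>0$ to the denominator. Hence the attention weight at a $\psi_j$-position is $\round\bigl(1/(\ell' + (n-\ell)'\cdot e^{-1/\sqrt{d_h}})\bigr)$, which depends on both $\ell$ and $n$ — there is no single scalar ``strictly monotone in $\ell$'' to threshold. The paper fixes this by putting a huge float $F$ into $W_Q$ and $W_K$ (with $\round(F^2)\neq\infty$), so the biased non-$\psi_j$ entries become $-\round(F^2)$ and $e^{-\round(F^2)}$ rounds to $0$; only then does the weight collapse to $\round(1/\ell')$, a function of $\ell$ alone. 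You need this scaling for the argument to even start.

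Second — and this is exactly the obstacle you flag at the end — scaling the head \emph{output} by a tiny float (through $W_O$ or the following $\FF$) cannot work, because the per-row output is $\SUM_\cF$ of $\ell$ copies of the weight; for $\AH$ that sum is $\round(1/\ell')\cdot\ell'=1$ (constant), and for the $\softmax$ case it saturates to a constant for large $\ell$. None of your attempted fixes (selecting an always-false column, ``pulling out a single entry of $\AH_\cF(Y)$'') produces $1/\ell$ as an attention-head output: the head always outputs $Z X W_V$, and any column of $X$ you select gives either $0$ or a sum of $\ell$ weights, never a single weight. The paper's resolution is to place the tiny multiplier $\tfrac{k}{2}f$ (with $f$ the smallest positive float) \emph{inside} $W_V$: then the matrix product computes the pairwise products $\round(1/\ell')\cdot\tfrac{k}{2}f$ \emph{before} the final sum, so each summand already underflows to $0$ exactly when $\round(1/\ell')\leq 1/k$ (Proposition~\ref{proposition: floating-point underflow 2}), and the sum is $0$ iff underflow happened, for both $\softmax_\cF$ and $\AH_\cF$.

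Third, you assume a clean cutoff between $\ell=k-1$ and $\ell=k$ exists for a suitable multiplier, but the underflow threshold is pinned to $1/k$ by Proposition~\ref{proposition: floating-point underflow 2}, and $\round(1/k)$ may round \emph{above} $1/k$. In that case $\round(1/k)\cdot\tfrac{k}{2}f$ does not underflow, so the naive test reports ``$\ell<k$'' when in fact $\ell=k$. Roughly half of all even $k$ have this problem, and the paper spends the larger half of its proof precisely here: it computes the nearby bounds $k+1$ and $k-1$ (or $k-2$) by the easy construction, then distinguishes the two or three remaining values of $\ell$ by reading off $\ell\cdot f$ from the non-underflowed products and comparing with an $\MLP$ against the specific floats $(k-2)f,(k-1)f,kf$. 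Your proposal has no substitute for this case split. (You also dispatch $\ell=0$ and the vertex-dependence of the first head's output a bit casually; the paper needs a second attention head and an extra layer to broadcast the result, though with your all-ones $W_Q$ choice — once the $F^2$ scaling is added — the rows of $Y$ agree across vertices, which would in fact simplify this part.)
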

\begin{proof}
    First, we describe the proof strategy informally.
    The idea is to use the $\MLP$ to handle Boolean connectives and self-attention to simulate counting global modalities $\DiamondG_{\geq k}$. Due to Proposition~\ref{proposition: floating-point underflow 2}, we can handle half the possible values of $k$ by simply checking if underflow occurs in the output of a particular attention head; if the output is $0$ (i.e., underflow occurs), it signals that the column of the input matrix has at least $k$ $1$s. 
    For the other half of the possible values of $k$, the above attention head cannot identify if there are at least $k$ $1$s in a column, but it can check a nearby upper and lower bound for the number of $1$s. For the two or three values falling between the bounds, we can use some additional numerical analysis to identify the number of $1$s.

    Now, we may start the formal proof.
    In this proof, addition, subtraction, multiplication and division are always assumed to be exact (i.e., $\frac{1}{k}$ refers to the precise value even if $k$ is a floating-point number), and the rounding operations inherent in floating-point arithmetic are always made explicit wherever they need to be performed. Thus, for each $x \in \R$ we let $\round(x)$ denote the rounded value of $x$ in $\cF$.

    \textbf{General architecture of the graph transformer}

    Let 
    $\varphi$ be a $\PLGC$-formula.
    We mostly follow the same general architecture as in the proof of Lemma \ref{theorem: logic to hard-attention GT} but with the following distinctions. We choose $\cF$ to be a floating-point format such that for the maximum grade $K$ appearing in the global modalities of 
    $\varphi$,
    all integers $k \in [K]$ can be represented precisely in $\cF$ and the number $\frac{1}{k}$ rounds to a different value for each $k \in [K]$. The hidden dimension is $2(d+4)$ (where $d$ is the number of subformulae of 
    $\varphi$
    including $\top$). The factor $2$ is because we use Lemma~\ref{lemma: simulating MLPs with transformer layers} to simulate $\MLP$s, which requires shifting. However, this means that there is always one half of the input which consists of $0$s that do not cause problems in attention heads, so we may treat the construction as having dimension $d+4$ except where Lemma~\ref{lemma: simulating MLPs with transformer layers} is applied. The last four of the $d+4$ columns are auxiliary and used to help compute some of the other columns. The initial $\MLP$ sets these auxiliary positions to $0$. 
    
    As in the proof of Lemma \ref{theorem: logic to hard-attention GT}, each transformer layer again only focuses on a single subformula $\psi_i$ of 
    $\varphi$,
    but we may require more than a single transformer layer (in a row) per subformula of the type $\DiamondG_{\geq k} \psi_j$. 
    Again the transformer layer depends on whether the subformula $\psi_i$ under evaluation is a $\PL$-formula or of the form $\DiamondG_{\geq k} \psi_j$ for some $j, k$.
    The operators of $\PL$ are again handled by the $\MLP$s, made possible by Lemmas \ref{lemma: PL to MLP}, \ref{lemma: simulating MLPs with transformer layers} and \ref{lemma: splitting MLPs}, so all that is left is to define transformer layers that simulate the counting universal modality.

    \textbf{Simulating counting universal modalities:}

    If $\psi_i \colonequals \DiamondG_{\geq k} \psi_j$ for some previously computed subformula $\psi_j$, then we make use of multiple transformer layers in a row. 
    The argument has two main cases: the simple case is the one where $k$ is an even number and $\round(\frac{1}{k}) \leq \frac{1}{k}$ (i.e., $\frac{1}{k}$ does not round upward in $\cF$), and the more complicated case is the one where $\round(\frac{1}{k}) > \frac{1}{k}$. The cases where $k$ is an odd number can be reduced to the two cases for even numbers, so we will only consider them briefly.

    \textbf{The case $\round(\frac{1}{k}) \leq \frac{1}{k}$:}
    
    First, consider the case where $k$ is an even number other than zero and $\round(\frac{1}{k}) \leq \frac{1}{k}$. We start by constructing a single transformer layer consisting of two self-attention heads. The first of these checks if there are at least $k$ $1$s in column $j$ of the feature matrix, and it is constructed as follows.
    \begin{enumerate}
        \item Let $F \in \cF$ be the greatest floating-point number in $\cF$ such that $\round(F^2) \neq \infty$. The query matrix $W_Q$ and key matrix $W_K$ are identical $((d+4) \times 1)$-matrices (i.e., vectors), where only the $j$th element is $F$ and others are $0$s. Thus, before $\softmax$, we have an $(n \times n)$-matrix where each row is either a zero vector or the $j$th column of the Boolean input matrix $X$ (multiplied by $\round(F^2)$).
        \item Next, before $\softmax$, the rows are biased according to the maximum element on the row. For the zero rows, there is no change. For the other rows, each $0$ is replaced with $-\round(F^2)$ and each $\round(F^2)$ is replaced with $0$.
        \item Now, applying $\softmax$ to a row of $0$s gives a row of $\round(\frac{1}{n'})$ where $n'$ is the sum of a multiset of $n$ $1$s, where $n$ is the number of rows in $X$. For the other rows, let $\ell$ be the number of $1$s in column $j$ of $X$, let $M$ be a multiset of $\ell$ $1$s and let $\ell'$ denote $\mathrm{SUM}_{\cF}(M)$ where $\mathrm{SUM}_{\cF}$ is the saturating sum from Proposition \ref{proposition: floating-point saturation}. The application of $\softmax$ will then give a row where each $-\round(F^2)$ is replaced with $0$ and each $0$ is replaced with $\round(\frac{1}{\ell'})$. This is because $e^{-\round(F^2)}$ rounds to $0$ and thus $\softmax$ gives an even probability distribution for the remaining positions, calculated by first taking the saturating sum of values $e^0 = 1$.
        \begin{itemize}
            \item The analysis in step 3 above is identical for average hard-attention, when the denominator of $\AH$ is obtained as a multiset of $1$s. On the other hand, consider the implementation of $\AH$ discussed in Appendix~\ref{appendix: floats}, where the size of the set in the denominator is rounded into the floating-point format directly. In this case, the analysis is also identical except that $n' = \min\{\round(n), F_{\max}\}$ and $\ell' = \min\{\round(\ell), F_{\max}\}$, where $n$ and $\ell$ are the same as above and $F_{\max}$ is the largest non-infinite floating-point number in the format. This is because, according to the implementation, $n$ and $\ell$ are rounded into the format, but not to $\infty$.
        \end{itemize}
        \item Let $f$ be the smallest positive floating-point number in $\cF$. The value matrix $W_V$ is defined in the same way as $W_Q$ and $W_K$, except that we use $\frac{k}{2}f$ instead of $F$. Thus, the matrix product $V = X W_V$ gives the $j$th column of $X$ where each $1$ is replaced with $\frac{k}{2}f$.
        \item The product of the $(n \times n)$-matrix $N$ resulting from step $3$ and the $(n \times 1)$-matrix $V$ from step $4$ thus multiplies each column where $\round(\frac{1}{\ell'})$ appears with $\frac{k}{2}f$ and then takes the sum of each row. 
        Now the matrix product gives us $\round(\frac{1}{\ell'})\frac{k}{2}f$
        which by Proposition~\ref{proposition: floating-point underflow 2} rounds to $0$ exactly when $\round(\frac{1}{\ell'}) \leq \frac{1}{k}$. Recall that $\round(\frac{1}{k}) \leq \frac{1}{k}$ and since $\frac{1}{h}$ rounds to a different value for all $h \in [K]$ we must have $\round(\frac{1}{k-1}) > \frac{1}{k}$.
        This means that $\round(\frac{1}{\ell'}) \leq \frac{1}{k}$ is equivalent to $\ell' \geq k$.
        Now $NV$ is a zero vector if and only if the formula $\psi_j$ is true in at least $k$ vertices of the graph (or if the formula is true in zero of the $n$ vertices and $n \geq k$).
    \end{enumerate}
    
    From the above constructed attention head, we see that we still have to check if zero vertices satisfy $\psi_j$; we do this with a second attention head. This second head will output a zero vector if and only if there are zero formulae that satisfy $\psi_j$; otherwise, the elements of the vector will all be some other singular value. For this attention head, we define the query and key matrices as zero matrices, which results in the output of $\softmax$ being a matrix where each element is $\round(\frac{1}{n'})$ for some $n' \leq n$ (this is because $\softmax$ calculates $n'$ as a sum of a multiset of $n$ $1$s, and the sum of a multiset of floats saturates by Proposition \ref{proposition: floating-point saturation}). The value matrix $W_V'$ is the $((d+4) \times 1)$-matrix where the $j$th row is $1$ and others are zero; the matrix $XW_V'$ is thus the $j$th column of $X$. The output of the attention head is thus a vector where each element is $\mathrm{SUM}_{\cF}(M)$ where $M$ is a multiset of $\ell$ copies of $\round(\frac{1}{n'})$, where $\ell$ is the number of vertices satisfying $\psi_j$. Since $\round(\frac{1}{n'})$ is never zero, this means that the output is a vector of $0$s if $\ell = 0$ and otherwise each component is some positive value. 
    By the same analysis, this construction works for average hard-attention. For the alternate implementation of $\AH$ discussed in Appendix~\ref{appendix: floats}, the same analysis holds with the difference that $n' = \min\{\round(n), F_{\max}\}$ where $F_{\max}$ is the greatest float in the format other than $\infty$.

    Now, to check if $\ell \geq k$ (where $\ell$ is the number of $1$s in column $j$ of $X$), we simply have to check if the output of the first attention head is a zero vector and the output of the second attention head is a non-zero vector. Unfortunately, if $0 < \ell < k \leq n$, then the output of the first attention head may contain both $0$s and non-zero values, meaning that some rows may currently be under a false impression that $\ell \geq k$. We will use a second transformer layer to distribute the necessary information between all vertices. First, we finish the current layer by defining the weight matrix $W_O$ to store the output vectors of the two attention heads in two separate columns $d+1$ and $d+2$. Next, the $\MLP$ of this layer normalizes these two columns such that all positive values are replaced with $1$s (this is possible by Lemma \ref{lemma: checking > with an MLP}; if we desire a simple graph transformer, then we can simply use multiple consecutive layers to accomplish this task by Lemmas \ref{lemma: simulating MLPs with transformer layers} and \ref{lemma: splitting MLPs}.)

    The second layer uses the second attention head of the first layer, except that it now checks column $d+1$ instead of column $j$, i.e., the $(d+1)$th row of the value matrix is $1$ and others are $0$. If the column $d+1$ has at least one $1$, then all elements of the output vector of the attention head become non-zero. The weight matrix $W_O'$ places these values in the column $d+3$. Now, the $\MLP$ once again normalizes positive values in column $d+3$ to $1$. Then, we use a third layer that does not do anything in 
    the self-attention module, but the $\MLP$ checks that the values in the columns $d+2$ and $d+3$ are both $1$. A simple sum of the values is enough; if $\ell \geq k$, then the sum is $1$. Otherwise the sum is $0$ or $2$, because it is not possible that $\ell = 0$ and $0 < \ell < k$. This result is placed in the $i$th component. The $\MLP$ also resets columns $d+1$ through $d+4$ by multiplying them in the final layer of the $\MLP$ with $-1$ (whence they will be eliminated by the skip connection) such that these columns can be used again in later layers.

    On the other hand, assume that $k$ is an odd number and $\round(\frac{1}{k-1}) > \frac{1}{k-1}$. Then $\round(\frac{1}{k-1})\frac{k-1}{2}f \geq \frac{1}{2}f$ but $\round(\frac{1}{k})\frac{k-1}{2}f < \frac{1}{2}f$ because we assumed that $\frac{1}{k-1}$ and $\frac{1}{k}$ round to different values in $\cF$ and $\frac{1}{k} < \frac{1}{k-1}$. Thus, the above construction can now be used to check if at least $k$ vertices satisfy $\psi_j$.

    \textbf{The case $\round(\frac{1}{k}) > \frac{1}{k}$:}

    Now assume that $k$ is an even number such that $\round(\frac{1}{k}) > \frac{1}{k}$ (the case where $k$ is an odd number such that $\round(\frac{1}{k-1}) \leq \frac{1}{k-1}$ is similar). We can use the above construction to calculate $\DiamondG_{\geq k+1} \psi_j$. We can also calculate either $\DiamondG_{\geq k-1} \psi_j$ or $\DiamondG_{\geq k-2} \psi_j$ depending on whether $\round(\frac{1}{k-2}) > \frac{1}{k-2}$. Thus, in order to calculate $\DiamondG_{\geq k} \psi_j$, we can simply calculate the upper bound ($k+1$) and the lower bound ($k-1$ or $k-2$), and then separate the remaining cases, i.e., we need a way of distinguishing the cases where the formula is satisfied in $k$, $k-1$ and $k-2$ vertices. To do this, we simply use the above construction for $k$. 
    Recall that $\round(\frac{1}{k}) > \frac{1}{k}$ and $\frac{1}{h}$ rounds to a different value for all $h \in [K]$. 
    If $k-1$ is the lower bound, then we have $\frac{1}{k-1} < \round(\frac{1}{k-1}) < \frac{1}{k-2}$. Thus for all $k > 3$,
    \[
        \begin{aligned}
            \frac{1}{2} = \frac{k}{2} \cdot \frac{1}{k} &< \frac{k}{2} \cdot \round\bigg{(}\frac{1}{k}\bigg{)} \\
            &< \frac{k}{2} \cdot \round\bigg{(}\frac{1}{k-1}\bigg{)} < \frac{k}{2} \cdot \frac{1}{k-2} < \frac{3}{2}.
        \end{aligned}
    \]
    Likewise, if $k-2$ is the lower bound, then we have $\round(\frac{1}{k-2}) \leq \frac{1}{k-2}$ and thus for all $k > 3$,
    \[
        \begin{aligned}
            \frac{1}{2} = \frac{k}{2} \cdot \frac{1}{k} &< \frac{k}{2} \cdot \round\bigg{(}\frac{1}{k}\bigg{)} \\
            &< \frac{k}{2} \cdot \round\bigg{(}\frac{1}{k-1}\bigg{)} \\
            &< \frac{k}{2} \cdot \round\bigg{(}\frac{1}{k-2}\bigg{)} \leq \frac{k}{2} \cdot \frac{1}{k-2} < \frac{3}{2}.
        \end{aligned}
    \]
    Thus, multiplying $\round(\frac{1}{k-1})$ or $\round(\frac{1}{k})$ with $\frac{k}{2}f$ gives $f$ in each case (because we round to the nearest number) and likewise for $\round(\frac{1}{k-2})$ when $k-2$ is the lower bound. Accordingly, the output vector is then $(k-2)f$, $(k-1)f$ or $kf$. 
    The only even values of $k$ this analysis does not account for are $k = 0$ and $k = 2$. Because $\round(\frac{1}{2}) = \frac{1}{2}$, the case $k = 2$ does not need to be examined here, and we already showed in the previous case how to construct an attention head that checks for $k = 0$.

    For this last case, we thus end up with four separate columns $d+1$ through $d+4$ that need to be combined; column $d+1$ determines if zero vertices satisfy $\psi_j$,column $d+2$ checks that at least $k-2$ or $k-1$ vertices satisfy $\psi_j$, column $d+3$ checks that at least $k+1$ vertices satisfy $\psi_j$ and column $d+4$ gives a distinct value for the cases $k-2$, $k-1$ and $k$. To output $1$, we thus want the column $d+1$ to be positive, the column $d+2$ to be $0$, and either the column $d+3$ to be $0$ or the column $d+4$ to be the distinguished value corresponding to $k$. Checking each of these conditions individually can be done by $\MLP$s that only use the $\ReLU$ activation function by Lemmas \ref{lemma: checking > with an MLP} and \ref{lemma: checking = with an MLP} (i.e., for each condition the output will be $1$ if it is cleared and $0$ otherwise). Now, one more $\MLP$ layer can check that the conditions are met.
    by taking the sum of the components and adding the bias $-2$.
    We also add one more transformer layer that, as in the easier case, distributes the information (whether $\ell \geq k$) to all vertices, since only vertices where $\psi_j$ is true know this. 
    Finally, the $\MLP$ again resets the columns $d+1$ through $d+4$. This concludes the description for handling the counting global modality as all cases have been covered.
\end{proof}

As the final piece of Theorem~\ref{theorem: PLGC = SGT = AHGT}, we characterize $\GNNGCF$s with constant local aggregation functions. We utilize Theorem~\ref{theorem: GNN GML} and though Theorem~\ref{theorem: GNN GML} is only proven in the next appendix, the proof of Theorem~\ref{theorem: GNN GML} is independent of the result below.

\begin{theorem}
    $\GNNGCF$s with constant local aggregation functions have the same expressive power as $\PLGC$. This also applies when the $\GNNGCF$s are simple.
\end{theorem}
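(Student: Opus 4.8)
The plan is to prove the two inclusions separately, using Theorem~\ref{theorem: GNN GML} for the direction from networks to logic, and a direct simulation --- an easy variant of the construction behind Theorem~\ref{theorem: logic to GPS} --- for the converse.

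For the direction from $\GNNGCF$s with constant local aggregation to $\PLGC$, the key observation is that such a network $N$ is blind to the edge relation. Indeed, if the local aggregation function is the constant $c$, then every message-passing layer sends $\lambda^{(i)}_v$ to $\lambda^{(i)}_v + \COM(\lambda^{(i)}_v, c)$, independently of $\neigh_\cG(v)$ (also at dead-ends, where $\AGG$ of the empty multiset is still $c$). A straightforward induction over the layers then shows that the feature vector $N$ computes at a vertex $v$ of $\cG = (V, E, \lambda)$ depends only on $\lambda(v)$ and the multiset $\{\!\{\lambda(u) \mid u \in V\}\!\}$; call two pointed graphs \emph{label-multiset equivalent} when they agree in these two data. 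So $N$ is invariant under label-multiset equivalence, and by Theorem~\ref{theorem: GNN GML} there is a $\GMLGC$-formula $\psi$ equivalent to $N$, which is then invariant under label-multiset equivalence as well. Let $\psi'$ be obtained from $\psi$ by replacing every subformula $\Diamond_{\geq k}\chi$ by its truth value on edgeless graphs, i.e.\ by $\top$ if $k = 0$ and by $\neg\top$ if $k \geq 1$. Then $\psi'$ is a $\PLGC$-formula, $\psi$ and $\psi'$ agree on all edgeless graphs, and an easy induction shows that the truth of any $\PLGC$-formula at a vertex is independent of the edges. Writing $\cG^\circ$ for the edgeless graph on the same vertices and labels as $\cG$, we get $\cG, v \models \psi$ iff $\cG^\circ, v \models \psi$ (invariance of $\psi$, since $(\cG,v)$ and $(\cG^\circ,v)$ are label-multiset equivalent) iff $\cG^\circ, v \models \psi'$ (agreement on edgeless graphs) iff $\cG, v \models \psi'$ (edge-independence of $\psi'$). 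Hence $N$ is equivalent to $\psi' \in \PLGC$, and the argument works verbatim for simple $\GNNGCF$s.

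For the converse, given $\varphi \in \PLGC$ I would build a simple $\GNNGCF$ with constant local aggregation by mimicking the proof of Theorem~\ref{theorem: logic to GPS}; this is considerably simpler here since a genuine counting global readout is available, so the underflow trick of that proof is not needed. Fix a float format $\cF$ in which all integers up to the maximal grade $K$ of $\varphi$ are exactly representable, enumerate the subformulae $\psi_1 = \varphi, \ldots, \psi_d$ (including proposition symbols and $\top$), and dedicate component $i$ of the feature vectors to $\psi_i$. The initial $\MLP$ initialises the proposition and $\top$ components; the remaining subformulae are then computed in order of increasing size, each by a block of message-passing-with-counting-global-readout layers whose local aggregation is a fixed constant, so that the $\COM$-$\MLP$ acts as an $\MLP$ on the vertex's own feature. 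A Boolean subformula $\neg\psi_j$ or $\psi_j \land \psi_{j'}$ is computed by such a $\COM$-$\MLP$ (Boolean completeness, as in Lemma~\ref{lemma: PL to MLP}), while $\DiamondG_{\geq k}\psi_j$ is computed by the readout gadget: the sum aggregation $\SUM_\cF$ returns, in component $j$, the count $\ell$ of vertices satisfying $\psi_j$ (exactly when $\ell \leq K$, and otherwise a value still $\geq k$ since the partial sums are nondecreasing), and the readout $\MLP$ thresholds it against $k$ via Lemma~\ref{lemma: checking > with an MLP}; the two skip connections around the message-passing part and the readout part of each layer are neutralised as in the earlier proofs. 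To obtain simple $\MLP$s, one splits each of these $\MLP$s into a sequence of two-layer $\MLP$s (Lemma~\ref{lemma: splitting MLPs}) and spreads it over several layers, carrying intermediate values past the skip connections by the dimension-doubling/shifting idea of Lemma~\ref{lemma: simulating MLPs with transformer layers}, now applied to message-passing layers rather than transformer layers. Finally the classification head reads component $1$, which holds the truth value of $\psi_1 = \varphi$.

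I expect the only genuinely fiddly point to be this last step of the converse: reconciling the constant-local-aggregation requirement with the simple-$\MLP$ requirement and the two skip connections present in every $\GNNGCF$ layer, so that a multi-layer $\MLP$ computation is faithfully distributed over simple layers without intermediate values being corrupted. This is routine but careful bookkeeping, adapting Lemmas~\ref{lemma: simulating MLPs with transformer layers} and~\ref{lemma: splitting MLPs} to the $\GNNGCF$ setting. The direction from networks to logic, by contrast, is essentially immediate from Theorem~\ref{theorem: GNN GML} once one notices that constant local aggregation makes the network ignore edges, which is what forces the target logic down from $\GMLGC$ to $\PLGC$.
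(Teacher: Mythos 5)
Your proof is correct, and for the network-to-logic direction it takes a genuinely different route from the paper. The paper dispatches this theorem in three sentences by inspecting the translations from~\cite{ahvonen_neurips_arxiv} underlying Theorem~\ref{theorem: GNN GML}: it observes that, applied to a $\GNNGCF$ with constant local aggregation, that translation happens to emit formulae containing no $\Diamond_{\geq k}$, hence $\PLGC$, and conversely that the $\GMLGC$-to-$\GNNF$ translation produces constant local aggregation when fed a $\PLGC$-formula. Your approach instead uses Theorem~\ref{theorem: GNN GML} as a black box to obtain some $\GMLGC$-formula $\psi$, then argues semantically: constant local aggregation makes $N$ invariant under label-multiset equivalence, so $\psi$ is too; replacing each $\Diamond_{\geq k}\chi$ by its fixed truth value on edgeless graphs yields a $\PLGC$-formula $\psi'$ that agrees with $\psi$ on edgeless graphs; and the chain $\cG,v\models\psi \Leftrightarrow \cG^\circ,v\models\psi \Leftrightarrow \cG^\circ,v\models\psi' \Leftrightarrow \cG,v\models\psi'$ closes the argument. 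This is a clean and self-contained extraction that does not require opening the cited construction, at the cost of being longer. For the converse direction your direct construction (counting readout computes the column count, the readout $\MLP$ thresholds via Lemma~\ref{lemma: checking > with an MLP}, and the Boolean cases are handled by $\COM$) is in the same spirit as the paper's one-line remark but spelled out; the bookkeeping around skip connections and simple $\MLP$s that you flag as the fiddly point is exactly the kind of work Lemmas~\ref{lemma: simulating MLPs with transformer layers}--\ref{lemma: splitting MLPs} are designed to handle, and transposing them to the $\GNNGCF$ setting is routine. Overall both proofs are sound; yours is the more transparent one on the forward direction.
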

\begin{proof}
    Theorem~\ref{theorem: GNN GML} states that $\GNNF$s with counting global readout have the same expressive power as $\GMLGC$.
    If the $\GNNF$s have constant aggregation functions, then the translation in \citep{ahvonen_neurips} (which we apply for Theorem~\ref{theorem: GNN GML}) results in a $\PLGC$-formula. Likewise, the translation from $\GMLGC$ to $\GNNF$s results in a $\GNNF$ with a constant local aggregation function if the formula is from $\PLGC$.
\end{proof}

\subsection{Proof of Theorem \ref{theorem: GNN GML}}\label{appendix: GML = GNNs}

In this section, we provide the proof of Theorem~\ref{theorem: GNN GML}. 

\thmGNNGML*

First, we note that Theorem 3.2 in \citep{ahvonen_neurips} showed that recurrent $\GNNF$s and the logic $\GMSC$ have the same expressive power. A simple modification of the proofs of \citep{ahvonen_neurips} gives us the following result.

\begin{lemma}\label{lemma: GNNF = GML}
    $\GNNF$s, simple $\GNNF$s and $\GML$ have the same expressive power.
\end{lemma}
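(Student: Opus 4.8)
Lemma~\ref{lemma: GNNF = GML} is the statement that float-based $\GNN$s (both in their general form and in their simple form) and the logic $\GML$ have the same expressive power. The plan is to deduce this from the machinery already available in \cite{ahvonen_neurips}, where recurrent float $\GNN$s were shown to coincide in expressive power with the graded modal substitution calculus $\GMSC$ (Theorem~3.2 there). The key conceptual point is that our $\GNNF$s are \emph{not} recurrent: a $\GNNF$ with $k$ message-passing layers only explores the $k$-neighbourhood of a vertex, so it corresponds to the constant-iteration fragment of $\GMSC$, and this fragment is exactly $\GML$. So the proof has two halves, mirroring the two directions of the equivalence, and in each half the task is to adapt an existing construction of \cite{ahvonen_neurips} so that it respects the bound on the number of iterations/layers.

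\textbf{From $\GNNF$ to $\GML$.} First I would take an arbitrary $\GNNF$ $G = (P, L^{(1)}, \dots, L^{(k)}, C)$ with bounded aggregation functions (recall Proposition~\ref{proposition: floating-point saturation} and that boundedness is built into our definition). The argument of \cite{ahvonen_neurips} translating a recurrent float GNN into a $\GMSC$-program applies essentially verbatim: each feature vector coordinate after layer $i$ is a function of the (bounded) multiset of neighbour feature vectors and the vertex's own feature vector from layer $i-1$, and since floats are bit strings and the relevant sums saturate, each bit of each coordinate can be defined by a rule that only uses the graded modalities $\Diamond_{\geq m}$ up to a fixed threshold $m$. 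The one modification is bookkeeping: because $G$ runs for a fixed number $k$ of layers rather than iterating to a fixed point, the resulting schema of rules only needs to be unfolded $k$ times, and the unfolding of a $\GMSC$-program to a fixed finite depth is a $\GML$-formula. The classification head $C$ is a Boolean vertex classifier, which by Lemma~\ref{lemma: MLP to logic} (Boolean completeness of $\PL$) is a $\PL$-formula in the bits of the final feature vector; substituting the $\GML$-formulae for those bits gives the desired $\GML$-formula. The same argument applies when $G$ is simple, since simplicity only restricts the shape of the $\MLP$s, which the translation handles uniformly.

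\textbf{From $\GML$ to simple $\GNNF$.} For the converse, given a $\GML$-formula $\varphi$ I would enumerate its subformulae $\psi_1, \dots, \psi_d$ in order of increasing depth and build a $\GNNF$ with $\ordo(d)$ message-passing layers, each computing the truth value of one subformula and placing it in a dedicated coordinate, following the template of the real-based construction in Lemma~\ref{lem:gmlg-to-real-gps} but now over a suitable float format and without the global ($\DiamondG$) case. The Boolean connectives $\neg$ and $\wedge$ are handled by the combination $\MLP$s (using Lemmas~\ref{lemma: PL to MLP}, \ref{lemma: simulating MLPs with transformer layers}-type splitting is not needed here since there is no attention, but Lemma~\ref{lemma: PL to MLP} and the splitting Lemma~\ref{lemma: splitting MLPs} let us keep the $\MLP$s simple). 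The modality $\Diamond_{\geq c}\psi_j$ is handled by the aggregation function: summing the $\{0,1\}$-valued $\psi_j$-coordinate over out-neighbours and checking, with an $\MLP$ in the style of Lemma~\ref{lemma: checking > with an MLP}, whether the (saturated) sum is at least $c$; here we pick the float format large enough that all grades $c$ occurring in $\varphi$ are representable and the relevant saturated sums distinguish `$<c$' from `$\geq c$'. The skip connections are dealt with exactly as in the real-based proof. Finally, the classification head reads off the coordinate for $\psi_1 = \varphi$. This produces a simple $\GNNF$, which is a fortiori a $\GNNF$, closing the equivalence.

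\textbf{Anticipated main obstacle.} The routine parts are the Boolean bookkeeping and the skip-connection handling, which are essentially copied from earlier proofs. The delicate point is the float arithmetic in the $\Diamond_{\geq c}$ step of the $\GML\to\GNNF$ direction: one must choose the float format so that the saturating sum $\SUM_\cF$ of a multiset of $1$s (or of $0$s and $1$s) lands on values that an $\ReLU$-activated $\MLP$ can cleanly threshold at each grade $c \leq$ the maximum grade of $\varphi$, and verify that the saturation threshold from Proposition~\ref{proposition: floating-point saturation} does not collapse two distinct grades onto the same float. Equivalently, on the $\GNNF\to\GML$ side, one must make sure the bound on the graded modalities needed in the rules is genuinely finite and depends only on $\cF$ and not on the input graph — again a direct consequence of Proposition~\ref{proposition: floating-point saturation}, but it is the place where care is required. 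Once this is pinned down, everything else is a matter of invoking the cited results from \cite{ahvonen_neurips} with the observation that $k$ layers correspond to $k$-fold unfolding, i.e.\ to the constant-iteration fragment $\GML$ of $\GMSC$.
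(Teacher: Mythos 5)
Your proposal is correct and follows essentially the same route as the paper: for the $\GNNF \to \GML$ direction you invoke the recurrent-$\GNN$-to-$\GMSC$ translation of \cite{ahvonen_neurips} and observe that a $k$-layer model corresponds to the $k$-fold unfolding (the paper routes this slightly more formally through their Proposition~B.17, which converts an $n$-layer $\GNN$ into a recurrent one, before applying Lemma~B.3); and for $\GML \to$ simple $\GNNF$ you rebuild the Barcel\'o-et-al.\ one-layer-per-subformula construction over floats, which is exactly what the paper cites (Proposition~4.1 of \cite{Barcelo_GNNs} / Lemma~B.5 of \cite{ahvonen_neurips}). The one small bookkeeping item the paper handles that you leave implicit is the mismatch between the accepting condition in \cite{ahvonen_neurips} (a finite set of accepting feature vectors) and the classification-head $\MLP$ used here, but your appeal to Lemma~\ref{lemma: MLP to logic} for the classification head covers the substance of that point.
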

\begin{proof}
    Note that some of the concepts used in this proof are taken from \citep{ahvonen_neurips}.
    First, we translate a $\GNNF$ into $\GML$ as follows. 
    First, we use Proposition B.17 in \citep{ahvonen_neurips} which states that a $\GNN$ with $n$ layers can be modified into a recurrent $\GNN$ that gives the same output after exactly $n$ rounds of iteration.
    The construction in the proposition is for reals, but it is easy to see that it also works for floating-point $\GNN$s (i.e., we can turn an $n$-layer floating-point $\GNN$ into an equivalent recurrent floating-point $\GNN$ that uses the same layer repeatedly).
    We must note that the $\GNNF$s in \citep{ahvonen_neurips} do not include skip connections, but they do allow arbitrary combination functions, which means that we can simply treat the skip connections as part of the combination function when using Proposition B.17.
    By Lemma B.3 and Theorem 3.2 of \citep{ahvonen_neurips}, a recurrent $\GNNF$ can be translated into a GMSC-program, which are a recursive generalization of $\GML$. We can run this program for exactly $n$ rounds, which gives us an equivalent GML-formula, obtained as the disjunction of the $n$th iteration formulae of the appointed predicates of the program.

    For the translation from $\GML$ to $\GNNF$s, we can simply use the proof of Proposition 4.1 in \citep{Barcelo_GNNs}, as floating-point numbers and the saturating floating-point sum are sufficient for the construction given there. This direction also follows from Lemma B.5 in \citep{ahvonen_neurips}. Note that the $\MLP$s of the $\GNNF$ use the truncated $\ReLU$ as the activation function, i.e., the function $\ReLU^*(x) = \min\{1,\max\{0,x\}\}$. We could also use the ordinary $\ReLU$ by adding a couple of additional layers to the $\MLP$s, and we could obtain a $\GNNF$ using simple $\MLP$s by splitting the computation of the non-simple $\ReLU$-activated $\MLP$ into multiple message-passing layers.

    One important thing to note is that the accepting condition for the $\GNNF$s in \citep{ahvonen_neurips} specifies a set of so-called accepting feature vectors. Trivially this accepting condition can simulate any $\MLP$ that gives a Boolean vertex classification by simply listing all the feature vectors that are accepted by the $\MLP$. For the other direction, note that the constructions in Proposition 4.1 in \citep{Barcelo_GNNs} and Lemma B.5 in \citep{ahvonen_neurips} are made such that it is possible to determine whether a feature vector is accepting or not simply by checking if a single element of the feature vector is $0$ or $1$, which means that the Boolean vertex classifier of our $\GNNF$s only needs to project this single element.
\end{proof}

Next we show the generalization of Lemma \ref{lemma: GNNF = GML} for counting global readouts and counting global modalities.

\begin{lemma}\label{lemma: GNNF+GC = GML+GC}
    $\GNNGCF$s, simple $\GNNGCF$s and $\GMLGC$ have the same expressive power.
\end{lemma}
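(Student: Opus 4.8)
The plan is to mirror the proof of Lemma~\ref{lemma: GNNF = GML} essentially verbatim, replacing every result of \cite{ahvonen_neurips} invoked there by its counterpart for \emph{counting} global readout from \cite{ahvonen_neurips_arxiv} (in particular Theorem~D.15), and replacing $\GMSC$ and $\GML$ by the corresponding recursive and constant-iteration logics equipped with the counting global modality.

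\textbf{From $\GNNGCF$s to $\GMLGC$.}
First I would turn a $k$-layer $\GNNGCF$ into an equivalent \emph{recurrent} $\GNNGCF$ that applies one and the same message-passing layer with counting global readout $k$ times and then halts; this is the counting-global-readout analog of Proposition~B.17 of \cite{ahvonen_neurips}, and the construction carries over unchanged, since the skip connections and the readout gadget $R$ can be absorbed into the (arbitrary) combination function, and since all aggregation functions---local and global---are assumed bounded by Proposition~\ref{proposition: floating-point saturation}. Next I would apply the translation of recurrent $\GNNF$s with counting global readout into the recursive rule-based logic of \cite{ahvonen_neurips_arxiv} (the analog of Lemma~B.3 of \cite{ahvonen_neurips}), obtaining a program whose body uses only $\GMLGC$-operators. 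Running this program for exactly $k$ rounds and taking the disjunction of the $k$-th iteration formulae of the appointed predicates yields an equivalent $\GMLGC$-formula, because the constant-iteration fragment of that recursive logic is precisely $\GMLGC$, just as the constant-iteration fragment of $\GMSC$ is $\GML$.

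\textbf{From $\GMLGC$ to $\GNNGCF$s.}
Here I would use the construction translating $\GMLGC$ into $\GNN$s with counting global readout from \cite{ahvonen_neurips_arxiv} (which generalizes the proof of Proposition~4.1 of \cite{Barcelo_GNNs}); it works over floats because the only global aggregation it needs is a \emph{bounded} count, which is realized by the saturating float sum $\SUM_\cF$. As in Lemma~\ref{lemma: GNNF = GML}, the resulting $\GNNGCF$ can be made $\ReLU$-activated instead of truncated-$\ReLU$-activated by inserting a constant number of extra layers, and \emph{simple} by splitting each non-simple $\ReLU$-activated $\MLP$ into several message-passing layers; the Boolean vertex classification head then only needs to project a single component whose value is $0$ or $1$, which matches the accepting-feature-vector convention used in \cite{ahvonen_neurips_arxiv}. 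Combining the two directions gives Lemma~\ref{lemma: GNNF+GC = GML+GC}, and hence (together with Theorem~\ref{theorem: PLGC = SGT = AHGT}) Theorem~\ref{theorem: GNN GML}.

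\textbf{Main obstacle.}
I do not expect any single construction to be hard; the bulk of the work is bookkeeping. The one point that genuinely needs checking is that the counting global modality $\DiamondG_{\geq k}$ matches exactly the expressive content of a bounded global readout: one must verify that the boundedness constant $k$ of the (non-set-based) global aggregation function lets a vertex reconstruct the value of the readout gadget from the multiset of feature vectors restricted to multiplicity $k$, and that each such restricted count is a Boolean combination of formulae of the form $\DiamondG_{\geq j}$ with $j \le k$. This is implicit in \cite{ahvonen_neurips_arxiv}; once it is in place, the argument is routine given Lemma~\ref{lemma: GNNF = GML}, so the proof chiefly consists of citing the correct statements of \cite{ahvonen_neurips_arxiv} and adapting them from the recurrent to the constant-iteration setting.
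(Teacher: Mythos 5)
Your proposal matches the paper's proof: the paper likewise appeals to the same two-step pipeline used for Lemma~\ref{lemma: GNNF = GML}, substituting the counting-global-readout analogs (Propositions~D.8 and~D.9 in \cite{ahvonen_neurips_arxiv}, which underlie Theorem~D.15) for the corresponding results of \cite{ahvonen_neurips}. One small slip at the end: Theorem~\ref{theorem: GNN GML} does not require Theorem~\ref{theorem: PLGC = SGT = AHGT}; it follows from Lemmas~\ref{lemma: GNNF = GML}, \ref{lemma: GNNF+GC = GML+GC} and \ref{lemma: GNNF+G = GML+G} alone.
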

\begin{proof}
    This follows via a simple modification to the proof of Theorem 3.2 in \citep{ahvonen_neurips}. We simply have to include a counting global readout in the recurrent float $\GNN$s and a counting global modality in the logic $\GMSC$, and the two can be used to simulate one another in the proofs. In the case of the R-simple recurrent $\GNN$s in \citep{ahvonen_neurips}, this simply involves an additional sum aggregation of all feature vectors in the graph and an additional weight matrix in the combination functions.
\end{proof}

Finally, we prove the last part of Theorem~\ref{theorem: GNN GML}. In the lemma below, note that simple $\GNNGF$s use a version of the floating-point sum that only sums one of each feature vector in the multiset.

\begin{lemma}\label{lemma: GNNF+G = GML+G}
    $\GNNGF$s, simple $\GNNGF$s and $\GMLG$ have the same expressive power.
\end{lemma}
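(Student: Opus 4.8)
The plan is to follow the same route as in the proofs of Lemmas~\ref{lemma: GNNF = GML} and~\ref{lemma: GNNF+GC = GML+GC}, replacing everywhere the counting global readout (resp.\ the counting global modality $\DiamondG_{\geq k}$) by its non-counting, i.e.\ set-based, counterpart (resp.\ $\DiamondG$). The key observation that makes this work over floats is the following: a message-passing layer with non-counting global readout has an aggregation function $\AGG$ that is set-based, so its output on a graph depends only on the \emph{set} of distinct feature vectors occurring at the vertices; since feature vectors over a fixed floating-point format $\cF$ range over a finite set, this set is one of finitely many subsets of $\cF^d$, and hence the readout output is a Boolean combination of the atomic facts ``the feature vector $\mathbf{v}$ occurs somewhere in the graph'', each of which is exactly what $\DiamondG$ can express. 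This is the non-counting analogue of the finiteness argument used for $\DiamondG_{\geq k}$ in Lemma~\ref{lemma: GNNF+GC = GML+GC}, and it is what separates the float case from the real case (cf.\ the attention-based construction in Lemma~\ref{lem:gmlg-to-real-gps}, which is needed precisely because over $\R$ there are infinitely many possible feature vectors).

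For the direction from $\GNNGF$s to $\GMLG$, I would first turn an $n$-layer $\GNNGF$ into an equivalent recurrent $\GNNGF$ performing exactly $n$ iterations of a single layer, exactly as in the proof of Lemma~\ref{lemma: GNNF = GML} (absorbing the skip connections into the combination and readout functions, via the analogue of Proposition~B.17 of \cite{ahvonen_neurips}). I would then translate this recurrent $\GNNGF$ into a $\GMSC$-program extended with the non-counting global modality, along the lines of the corresponding translation in \cite{ahvonen_neurips_arxiv} for $\GNNF$s with counting global readout: each bit of the feature vector is computed by a rule, and the dependence of the (set-based) readout on the occurring feature vectors is expressed, by the observation above, using finitely many applications of $\DiamondG$ to Boolean combinations of the bit-rules. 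Running this program for exactly $n$ rounds unfolds the recursion and yields a formula of the constant-iteration fragment of $\GMSC$ with the non-counting global modality, which is precisely $\GMLG$ (just as constant-iteration $\GMSC$ is $\GML$, and constant-iteration $\GMSC$ with the counting global modality is $\GMLGC$).

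For the converse direction, from $\GMLG$ to (simple) $\GNNGF$s, I would reuse the construction underlying Lemma~\ref{lemma: GNNF = GML} (the construction of Proposition~4.1 of \cite{Barcelo_GNNs} / Lemma~B.5 of \cite{ahvonen_neurips}) for the $\GML$-fragment, computing the truth values of the subformulae $\psi_1,\dots,\psi_d$ of the given $\GMLG$-formula one per layer, and handle a subformula $\psi_i = \DiamondG\psi_j$ by inserting a message-passing layer with non-counting global readout whose message-passing part is the identity and whose readout gadget uses a set-based aggregation (e.g.\ pointwise $\max$) on the coordinate storing the truth value of $\psi_j$, so that this coordinate of the aggregate is $1$ iff $\psi_j$ holds at some vertex, followed by an $\MLP$ that places this value in coordinate $i$; the skip connections are managed with auxiliary coordinates and sign-flipping exactly as in the earlier constructions. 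Splitting the resulting $\MLP$s via Lemma~\ref{lemma: splitting MLPs} makes the model simple. Since simple $\GNNGF$s are a special case of $\GNNGF$s and the construction just described already produces simple $\GNNGF$s, the three formalisms have the same expressive power.

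I expect the main obstacle to be the bookkeeping in the first direction: verifying that the translation of a recurrent $\GNNGF$ with set-based readout really lands in $\GMSC$ extended only by the \emph{non-counting} global modality (and not, inadvertently, by the counting one), and that the $n$-round unfolding collapses exactly to $\GMLG$. This is entirely parallel to the counting case treated in Lemma~\ref{lemma: GNNF+GC = GML+GC}, so the effort lies in checking the non-counting analogues of the relevant propositions of \cite{ahvonen_neurips_arxiv} rather than in any genuinely new idea.
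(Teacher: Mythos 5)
Your proposal is correct and follows essentially the same route as the paper: the paper's proof likewise reduces to modifying Propositions D.8 and D.9 of \cite{ahvonen_neurips_arxiv} so that only the non-counting global readout and modality are used, and then applying the techniques of Lemmas~\ref{lemma: GNNF = GML} and~\ref{lemma: GNNF+GC = GML+GC}. You have just spelled out more concretely the key observation (set-based readout over a finite float format depends only on which feature vectors occur, each such fact being $\DiamondG$-expressible) and the converse construction (max-aggregated readout for $\DiamondG$) that the paper leaves implicit as ``easy to modify.''
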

\begin{proof}
    When modifying the constructions in \citep{ahvonen_neurips}, we notice that if the aggregation function $\AGG$ in the global readout is set-based, i.e., $\AGG(M) = \AGG(M_{|1})$ for all multisets $M$, then only the non-counting global modality is required to simulate it and vice versa, only a set-based aggregation is required to simulate a non-counting global modality.
\end{proof}

\subsection{Proof of Theorem \ref{theorem: GMLGC = SGPS = AHGPS = GNNG}}\label{Appendix: GMLGC = SGPS = AHGPS = GNNG}

In this section, we give the proof of Theorem \ref{theorem: GMLGC = SGPS = AHGPS = GNNG}.

\thmGMLGCSGPSAHGPSGNNG*

For the direction from $\GPSF$-networks to logic, we only need to append the analysis of $\GTF$s to account for the message-passing layers. This is rather trivial since $\GNNF$s translate into $\GML$ and the skip connection and architecture do not get in the way.

\begin{theorem}
    For each floating-point soft or average hard-attention $\GPS$-network, there exists an equivalent sequence of $\GMLGC$-formulae.
\end{theorem}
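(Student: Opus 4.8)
The plan is to reuse the layer-by-layer bit-simulation strategy already carried out for $\GTF$s in Theorem~\ref{theorem: PLGC to GT}, and to absorb the additional message-passing modules of a $\GPS$-layer via Theorem~\ref{theorem: GNN GML}. Fix a soft- or average hard-attention $\GPSF$-network $N = (P, L^{(1)}, \dots, L^{(k)}, C)$ over a floating-point format $\cF$ with $L^{(i)} = (\SA^{(i)}, \MP^{(i)}, \FF^{(i)})$ of dimension $d$. We maintain, for each layer index $i \in \{0, \dots, k\}$ and each bit position $j$ of the $d$-dimensional float feature vector, a $\GMLGC$-formula $\psi^{(i)}_j$ such that for every pointed graph $(\cG, v)$, $\cG, v \models \psi^{(i)}_j$ iff the $j$-th bit of $\lambda^{(i)}(v)$ equals $1$. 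Once the $\psi^{(k)}_j$ are constructed, the classification head $C$, being a Boolean vertex classifier on floats, is by Lemma~\ref{lemma: MLP to logic} expressible by a $\PL$-formula in the bits of $\lambda^{(k)}$; substituting each $\psi^{(k)}_j$ for the corresponding proposition symbol yields the desired $\GMLGC$-formula (respectively a sequence of them in the general-classification/graph-classification variant).

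The induction step decomposes a single $\GPS$-layer into its constituent operations, each of which translates into a sublogic of $\GMLGC$ when its feature-vector inputs are treated as fresh proposition symbols. The initial $\MLP$ $P$ and the feed-forward $\MLP$s $\FF^{(i)}$, being local, translate to $\PL$-formulae by Lemma~\ref{lemma: MLP to logic}; the row-wise float additions implementing the skip connections, i.e.\ $\lambda^{(i+1)}_B = \lambda^{(i)} + B^{(i+1)}(\lambda^{(i)})$ for $B \in \{\SA, \MP\}$, $\lambda^{(i+1)}_{\SA+\MP} = \lambda^{(i+1)}_{\SA} + \lambda^{(i+1)}_{\MP}$, and $\lambda^{(i+1)} = \lambda^{(i+1)}_{\SA+\MP} + \FF^{(i+1)}(\lambda^{(i+1)}_{\SA+\MP})$, are local float functions and hence likewise $\PL$-definable. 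The self-attention module $\SA^{(i+1)}$ translates, by Lemma~\ref{lemma: attention module to logic}, to a sequence of $\PLGC$-formulae in the bits of its input. Finally, a single message-passing layer $\MP^{(i+1)}$, viewed as a feature update $\cF^d \to \cF^d$, is expressible by a sequence of $\GML$-formulae: this is exactly the direction from $\GNNF$s to $\GML$ established in Theorem~\ref{theorem: GNN GML} (a message-passing layer is a one-round recurrent-$\GNNF$ computation, so the $\GMSC$-based translation applies bit-by-bit). Composing, within a layer, the translations of $\SA^{(i+1)}$, $\MP^{(i+1)}$, the additions, and $\FF^{(i+1)}$ in the order dictated by the definition of a $\GPS$-layer, and then substituting for the input proposition symbols the formulae $\psi^{(i)}_j$ from the previous layer, produces the formulae $\psi^{(i+1)}_j$; these are $\GMLGC$-formulae because $\PL, \PLGC, \GML$ are all syntactic fragments of $\GMLGC$ and $\GMLGC$ is closed under substitution of formulae for proposition symbols.

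The only genuine obstacle is this composition/substitution bookkeeping: one must verify that each per-step translation indeed lands in a fragment of $\GMLGC$ (which is where the interface "inputs as abstract propositions" matters), and that substituting $\GMLGC$-formulae back in does not escape $\GMLGC$ — in particular a $\DiamondG_{\geq k}$ occurring in a $\PLGC$- or $\GML$-translation may end up applied to a $\GMLGC$-formula, which is still a legal $\GMLGC$-formula, and correctness is immediate from compositionality of the semantics. The substantive content that makes the message-passing part go through without adding further global operators is precisely Theorem~\ref{theorem: GNN GML}: message-passing layers add only graded \emph{local} modalities. The argument is uniform in the choice of soft- versus average hard-attention, since Lemma~\ref{lemma: attention module to logic} covers both, it specializes to simple $\GPSF$-networks (a subclass), and it transfers verbatim to the general-classification and graph-classification settings by replacing single formulae with sequences of formulae throughout.
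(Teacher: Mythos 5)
Your proposal is correct and takes essentially the same route as the paper: the paper's proof is the one-liner ``follows directly from Theorem~\ref{theorem: PLGC to GT} and Lemma~\ref{lemma: GNNF = GML},'' and you have simply unfolded it into the bit-level induction that both of those rely on (Lemma~\ref{lemma: MLP to logic} for $\MLP$s and skip-additions, Lemma~\ref{lemma: attention module to logic} for attention modules, the $\GNNF$-to-$\GML$ direction for message-passing layers), together with the closure-under-substitution observation needed to stitch the pieces into $\GMLGC$.
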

\begin{proof}
    This follows directly from Theorem \ref{theorem: PLGC to GT} and Lemma~\ref{lemma: GNNF = GML}.
\end{proof}

Because we have already translated $\PLGC$ to $\GTF$s, $\PL$ to $\MLPF$s and $\GML$ to $\GNNF$s, our proof will consist of translating transformer layers, $\MLPF$s and message-passing layers into $\GPSF$-networks. We start with the translation from $\MLPF$s to $\GPSF$-networks. Note that the lemmas of this section apply for both reals and floats.

\begin{lemma}\label{lemma: simulating MLPs with GPS-layers}
    For each $\MLP$ $M$ of I/O dimension $d$, we can construct a $\GPS$-layer $G$ of dimension $3d$ that shifts $M$ to the right (or left). If $M$ is simple, then $G$ is simple.
\end{lemma}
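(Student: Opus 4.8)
The plan is to mirror the construction of Lemma~\ref{lemma: simulating MLPs with transformer layers} (which achieves the analogous statement for transformer layers), but now inside a $\GPS$-layer $(\SA, \MP, \FF)$, where the extra complication is that a $\GPS$-layer applies \emph{two} skip-connected sublayers before $\FF$: first $\lambda^{(i+1)}_\SA = \lambda^{(i)} + \SA(\lambda^{(i)})$ and $\lambda^{(i+1)}_\MP = \lambda^{(i)} + \MP(\lambda^{(i)})$, then $\lambda^{(i+1)}_{\SA+\MP} = \lambda^{(i+1)}_\SA + \lambda^{(i+1)}_\MP$, and only then $\FF$ with its own skip connection. So on input $\bx$ (in the appropriate block of coordinates), the value fed to $\FF$ is $2\bx + \SA(\bx) + \MP(\bx)$ rather than just $\bx$, and $\FF$'s output is added once more to that quantity. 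This doubling and the presence of two residual copies is exactly why we need dimension $3d$ rather than the $2d$ that sufficed for transformer layers.

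Concretely, I would partition the $3d$ coordinates into three blocks of size $d$: block~$A$ (coordinates $1,\dots,d$) will hold the input and, eventually, the (shifted) output in block~$C$; block~$B$ will be a scratch block; block~$C$ (coordinates $2d+1,\dots,3d$) will receive the shifted output. First, set both $\SA$ and $\MP$ to output the zero matrix: for $\SA$ take $W_O = 0$, and for $\MP$ take the combination $\MLP$ $\COM$ to be identically zero (its output dimension is $d$, which here means the ambient $3d$, so we just output zeros), with any bounded aggregation. Then $\lambda^{(i+1)}_{\SA+\MP} = 2\lambda^{(i)}$, i.e.\ the input to $\FF$ is exactly $2\bx$ with $\bx$ the original feature vector supported on block~$A$ (and zeros on $B,C$, as guaranteed by the shifting convention and the assumption that inputs are non-negative). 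Now construct $\FF$ from $M$ as follows: first layer scales block~$A$ by $\tfrac12$ to recover $\bx$ (absorb this into $M$'s first weight matrix), then runs $M$'s computation on that block, writing $M(\bx)$ into block~$C$; simultaneously, $\FF$ carries a copy of $2\bx$ through in block~$B$ using identity transformations (valid since all inputs are non-negative, so $\ReLU$ is transparent), and in the final (identity-activated) layer outputs $-2\bx$ in block~$A$, $0$ in block~$B$, and $M(\bx)$ in block~$C$. After the outer skip connection of $\FF$, block~$A$ becomes $2\bx + (-2\bx) = 0$, block~$B$ becomes $0$, and block~$C$ becomes $0 + M(\bx) = M(\bx)$, so the layer shifts $M$ to the right by $2d$. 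Shifting to the left is entirely symmetric (swap the roles of blocks~$A$ and~$C$). If $M$ is simple (two layers, $\ReLU$-activated), then the only danger is that the carry-through and rescaling inflate the layer count; but both the $\tfrac12$-rescaling of the input and the $-2\bx$ negation can be folded into the two existing weight matrices of $M$ without adding layers, and the identity carry of block~$B$ is just more rows in those same two layers, so $\FF$ remains a simple $\MLP$ and hence $G$ is simple.

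The main obstacle I anticipate is bookkeeping the residual arithmetic precisely so that block~$A$ cancels exactly: because a $\GPS$-layer has the factor-$2$ blow-up from the doubled residual before $\FF$, the ``remember and subtract'' trick from Lemma~\ref{lemma: simulating MLPs with transformer layers} has to subtract $2\bx$, not $\bx$, and one must check that this $2\bx$ is available at the last layer of $\FF$ (hence the dedicated block~$B$, which is the reason $3d$ and not $2d$ dimensions are needed). A secondary point to verify is that zeroing out $\MP$ is legitimate under the paper's definition — the message-passing layer's combination $\MLP$ can be taken to output zeros regardless of the (bounded) aggregation, and the aggregation itself is never consulted, so no assumption on the local aggregation function is violated. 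Everything else is routine linear algebra on weight matrices together with the non-negativity-of-inputs convention that makes $\ReLU$ act as the identity on carried values.
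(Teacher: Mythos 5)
Your overall plan is close to the paper's: zero out $\SA$ with $W_O = 0$, keep the full $3d$-dimensional bookkeeping so that the two residual sums in a $\GPS$-layer cancel through the outer skip connection, and realize $M$ inside $\FF$. The crucial point of departure is how $\FF$ recovers a clean copy of $\bx$ to feed into $M$. You also set $\MP$ to output zero, so that $\FF$ receives $(2\bx, 0, 0)$, and then try to undo the factor of $2$ by folding a $\tfrac12$-rescaling into $M$'s first weight matrix.

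That rescaling step is where the proposal has a genuine gap. The lemma lives in the float section, so $M$ is an arbitrary $\MLPF$ over some fixed format $\cF$ and your $\GPS$-layer must be built over the same $\cF$. Replacing $W_1$ by $\tfrac12 W_1$ requires that $\tfrac12 (W_1)_{ij}$ be exactly representable in $\cF$ for every entry; this can fail, e.g.\ when $(W_1)_{ij}$ is subnormal with odd significand, in which case halving rounds and loses a bit, so $\big(\tfrac12(W_1)_{ij}\big)\cdot(2\bx_j)$ need not equal $(W_1)_{ij}\cdot\bx_j$ and the simulated network diverges from $M$. Performing the rescaling instead as a dedicated first layer of $\FF$ that literally computes $\tfrac12(2\bx) = \bx$ \emph{is} exact in $\cF$ (base-$2$ halving inverts the doubling unless $2\bx$ already overflowed), but it costs an extra perceptron layer and breaks the ``$M$ simple $\Rightarrow G$ simple'' claim. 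The paper avoids both problems by \emph{not} zeroing $\MP$: its combination $\MLP$ is taken to copy the first $d$ coordinates into the middle block, so after the two residual sums $\FF$ receives $(2\bx,\bx,0)$, reads the clean $\bx$ from block $B$, and runs $M$'s weights completely unmodified; the output $(-2\bx,-\bx,M(\bx))$ then cancels exactly under the final skip connection with no rescaling anywhere. If you replace your zeroed $\MP$ with this copy-into-the-middle $\MP$, the rest of your bookkeeping (block $C$ receives the shifted output, the negations are folded into $M$'s last layer, identity carry-throughs pass through $\ReLU$ thanks to the non-negativity convention) goes through.
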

\begin{proof}
    The proof is very similar to the proof of Lemma \ref{lemma: simulating MLPs with transformer layers}. We construct a $\GPS$-layer $G = (\SA, \MP, \FF)$ that shifts $M$ to the right as shifting to the left is analogous. We construct $\MP$ to copy the first $d$ elements of a vertex's feature vector and place them in the middle $d$ components of the output vector, other components being $0$. The module $\SA$ again outputs a zero matrix. Now the middle $d$ components of the input of $\FF$ are the same as the first $d$ components of the input of $G$. The $\MLP$ $\FF$ is constructed the same as before, placing the output of $M$ (w.r.t. the middle $d$ components of the input) into the last $d$ vector components of the output vector, while the other components of the output cancel out the input (which is achieved by remembering these inputs in each layer).
\end{proof}

Lemmas \ref{lemma: splitting MLPs} and \ref{lemma: simulating MLPs with GPS-layers} together mean that a $k$-layer $\MLP$ can be simulated by $k-1$ simple $\GPS$-layers.

Next, we show how $\GPS$-layers can simulate a message-passing layer.

\begin{lemma}
    For each message-passing layer $\mathrm{M}$ of dimension $d$, we can construct a $\GPS$-layer $G$ of dimension $3d$ that shifts $\mathrm{M}$ to the right (or left). If $\mathrm{M}$ uses sum aggregation, then $G$ is simple.
\end{lemma}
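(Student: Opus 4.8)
The plan is to mimic the construction used for $\MLP$s in Lemma~\ref{lemma: simulating MLPs with GPS-layers}, except that now the ``local step'' we wish to reproduce is a message-passing layer $\mathrm{M} = (\COM, \AGG)$ rather than an $\MLP$. As in that proof, we work in dimension $3d$: the first $d$ coordinates carry the incoming feature (to be cancelled by the outer skip connection at the end), the middle $d$ coordinates are a scratch area, and the last $d$ coordinates will hold the shifted output $\mathrm{M}(\lambda)(v)$. We construct $G = (\SA, \MP, \FF)$ as follows. The attention module $\SA$ outputs the zero matrix (set $W_O = 0$), so it contributes nothing; its surrounding skip connection simply passes the input through.

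First I would build $\MP$ inside $G$ so that it performs the aggregation of the \emph{original} message-passing layer but writes the result into the middle $d$ coordinates. Concretely, since the input feature of every vertex $u'$ has its ``real'' data in the first $d$ coordinates and zeros elsewhere, choosing the aggregation function of $\MP$ to be $\AGG$ applied to the first $d$ coordinates and the combination $\MLP$ of $\MP$ to output $0$ in the first $d$ and last $d$ coordinates and the value
\[
    \COM\big(\lambda^{(i)}(v)_{\text{first }d},\ \AGG(\{\!\{\lambda^{(i)}(u')_{\text{first }d}\mid (v,u')\in E\}\!\})\big)
\]
in the middle $d$ coordinates reproduces one application of $\mathrm{M}$ in the middle block. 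Here I rely on the fact (as used in the proof of Lemma~\ref{lem:basic-gps-to-gps}) that the aggregation and combination of a $\GPS$-layer's $\MP$ component may be chosen freely, in particular to be $\AGG$ and a suitable $\MLP$; and on the convention, stated in Appendix~\ref{appendix: Proofs for float section}, that inputs are always non-negative, so nothing breaks under $\ReLU$. Because of the skip connection around $\MP$, after this step the feature of $v$ has the original data in the first $d$ coordinates, $\mathrm{M}(\lambda)(v)$ in the middle $d$ coordinates, and $0$ in the last $d$ coordinates.

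Next I would let $\FF$ do the shifting and cancellation, exactly as in Lemma~\ref{lemma: simulating MLPs with GPS-layers}: $\FF$ is an $\MLP$ that, in each layer, remembers the first $d$ coordinates of its input (an identity transformation, harmless under $\ReLU$ since inputs are non-negative), copies the middle $d$ coordinates to the last $d$ coordinates of the output, sets the middle block to $0$, and in its final layer multiplies the remembered first-$d$-block by $-1$ and places it into the first $d$ output coordinates. Then the outer skip connection of the $\GPS$-layer cancels the first $d$ coordinates and leaves $\mathrm{M}(\lambda)(v)$ in the last $d$ coordinates, with everything else $0$; this is precisely what it means for $G$ to shift $\mathrm{M}$ to the right. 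Shifting to the left is symmetric, exchanging the roles of the first and last $d$ coordinate blocks. For the simplicity claim: if $\mathrm{M}$ uses sum aggregation, then $\COM$ is a simple $\MLP$ (by definition a message-passing layer's $\COM$ need only be simple), and the construction of $\FF$ can be kept to two $\ReLU$-activated layers plus the final identity layer by the same bookkeeping as in Lemma~\ref{lemma: simulating MLPs with GPS-layers}, so $G$ is simple.

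The main obstacle I expect is purely one of bookkeeping rather than of ideas: one must verify that the combination $\MLP$ of $G$'s $\MP$ module can simultaneously (i) route the original first-$d$-coordinate data of $v$ unchanged into the combination step, (ii) apply $\AGG$ only to the first-$d$-coordinate slices of the neighbours (which is automatic because the other coordinates are zero there), and (iii) place the result in the middle block while zeroing the first and last blocks — all within the dimension budget $3d$ and, when $\AGG$ is sum, within a simple $\MLP$. Checking that non-negativity of all intermediate feature values is preserved (so that the identity-copy tricks through $\ReLU$ really are identities) is the other point that needs care, but it follows from the standing assumption in Appendix~\ref{appendix: Proofs for float section} together with the observation that $\AGG$ applied to non-negative vectors is non-negative for sum, max and mean.
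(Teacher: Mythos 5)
Your construction matches the paper's own proof: $\SA$ outputs a zero matrix, $G$'s $\MP$ module computes $\mathrm{M}$ (restricted to the first $d$ coordinates) and writes the result into the middle $d$ coordinates, and $\FF$ shifts the middle block into the last block while cancelling the rest via the outer skip connection. One bookkeeping slip: for the outer skip connection to yield zero in the middle block, $\FF$ must output the \emph{negative} of its input's middle block there, not $0$ as you wrote — your own concluding sentence (``everything else $0$'') already requires this; likewise note that $\FF$'s input first-$d$ block is $2x$ rather than $x$ because both inner skip connections (around $\SA$ and around $\MP$) contribute the original feature, though your ``remember and negate'' trick handles this correctly regardless.
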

\begin{proof}
    At this point the proof is almost routine by similarity to the proofs of Lemmas \ref{lemma: simulating MLPs with transformer layers} and \ref{lemma: simulating MLPs with GPS-layers}.
    We construct a $\GPS$-layer $G = (\SA, \MP, \FF)$ that shifts $\mathrm{M}$ to the right, as shifting to the left is analogous.  We define $\MP$ to place the output of $\mathrm{M}$ into the middle $d$ components, the others being $0$; this is possible by trivial matrix manipulations in the simple case. Meanwhile, $\SA$ simply outputs a zero matrix as in previous proofs. Finally, $\FF$ places the middle $d$ components of the input into the last $d$ components of the output while the other components of the output cancel out the input.
\end{proof}

As the last result of this section, we show how a $\GPS$-layer can simulate a transformer layer.

\begin{lemma}\label{lemma: transformer layers to GPS-layers}
    For each transformer layer $T$ of dimension $d$, we can construct a $\GPS$-layer $G'$ of dimension $3d$ that shifts $T$ to the right (or left) (with the same attention mechanism). If $T$ is simple, so is $G'$.
\end{lemma}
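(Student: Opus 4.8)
The plan is to mimic the constructions of Lemmas~\ref{lemma: simulating MLPs with transformer layers} and~\ref{lemma: simulating MLPs with GPS-layers}, using the extra $2d$ dimensions of the $\GPS$-layer to absorb the two skip connections of the $\GPS$-layer architecture while carrying the genuine computation of the transformer layer through the $\SA$ and $\FF$ modules. Write $T = (\SA_T, \FF_T)$. I would construct $G' = (\SA, \MP, \FF)$ of dimension $3d$ that shifts $T$ to the right (the left case being symmetric). Let the input feature matrix have its genuine data in the first $d$ columns and zeros in the last $2d$ columns. The idea is: the message-passing module $\MP$ of $G'$ is used purely to copy the first $d$ components of each vertex's feature vector into the middle block of $d$ components (outputting $0$ elsewhere), which is possible using sum aggregation with a combination $\MLP$ that ignores the aggregated neighbour term; the attention module $\SA$ of $G'$ is built from the heads of $\SA_T$ but reading from the first $d$ columns and writing its output into the middle block of $d$ components (zeros elsewhere), achieved by padding the query/key/value matrices with zero rows and choosing $W_O$ to map into the middle block.

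The subtle point is the interaction of skip connections. After the $\SA$ and $\MP$ sub-steps with their skip connections, the feature vector in the middle block equals (input restricted to first $d$ coords) $+$ $\SA_T(\text{input})$ $=$ exactly the value $\lambda^{(i+1)}_\SA$ that $T$ computes after its first skip connection; the first block still holds the original input, and the last block is zero. Now $\FF$ is built from $\FF_T$: it computes $\FF_T$ applied to the middle block and places the result $\FF_T(\lambda^{(i+1)}_\SA)$ into the last $d$ components; simultaneously, in its final (linear) layer it outputs $-(\text{first-block input})$ in the first $d$ components and $-(\text{middle block})$ in the middle $d$ components, the middle block being carried forward through each layer by an identity transformation (legitimate because all inputs are non-negative, hence unaffected by $\ReLU$, using the standing assumption from Appendix~\ref{appendix: Proofs for float section} that inputs to these modules are non-negative). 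After the final skip connection of the $\GPS$-layer, the first block cancels to $0$, the middle block cancels to $0$, and the last block becomes $\lambda^{(i+1)}_\SA + \FF_T(\lambda^{(i+1)}_\SA) = T(\text{input})$. This is exactly $T$'s output shifted into the last $d$ coordinates, i.e.\ $G'$ shifts $T$ to the right by $2d$.

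For the simplicity claim: if $T$ is simple, then $\FF_T$ is a simple ($2$-layer $\ReLU$-activated) $\MLP$, and $\FF$ as constructed adds only identity-carrying dimensions plus a final linear adjustment layer, so it remains $\ReLU$-activated with two layers (the carried-forward identity is implemented inside the existing two layers, and the sign-flipping is folded into the second, linear layer); the $\MP$ module uses $\SUM_\cF$ and a simple combination $\MLP$; and the $\SA$ module is just a reshuffling of $\SA_T$'s heads. Hence $G'$ is simple, using sum aggregation for $\MP$. I expect the main obstacle to be bookkeeping: verifying that the copying into the middle block through $\MP$'s combination $\MLP$ is achievable while keeping that $\MLP$ simple, and checking that the cancellations through both skip connections line up exactly, including the boundary behaviour when $\FF_T$ internally produces negative intermediate values — but since $\FF_T$ is applied exactly as in $T$ and only the padding dimensions carry values that are provably non-negative, the $\ReLU$ interactions are harmless and the argument goes through essentially as in Lemma~\ref{lemma: simulating MLPs with GPS-layers}.
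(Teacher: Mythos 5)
Your construction mirrors the paper's step for step: zero-pad $W_Q, W_K, W_V$ so the heads read from the first block, choose $W_O$ to write into the middle block; have $\MP$ copy the first block to the middle block while ignoring the aggregated neighbour term; and use $\FF$ to cancel the first two blocks and write $T$'s output into the last block. The one place where your argument, as written, breaks is in the description of $\FF$. You say $\FF$ places $\FF_T(\lambda^{(i+1)}_\SA)$ into the last $d$ components, and then conclude that after the final skip connection the last block equals $\lambda^{(i+1)}_\SA + \FF_T(\lambda^{(i+1)}_\SA)$. These two claims are inconsistent: you correctly observe that the last $d$ components of $\lambda^{(i+1)}_{\SA+\MP}$ are zero, so the outer skip connection contributes nothing there, and the $\FF$ you describe would leave $G'$'s last block equal to $\FF_T(\lambda^{(i+1)}_\SA)$ alone, dropping the $\lambda^{(i+1)}_\SA$ summand that $T$'s own second skip connection contributes. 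In other words, the shift definition requires $G'(\cG_r)$ to have $T(\cG) = \lambda^{(i+1)}_\SA + \FF_T(\lambda^{(i+1)}_\SA)$ in its last block, not $\FF_T(\lambda^{(i+1)}_\SA)$.

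The repair is easy and is in fact what the paper's (admittedly terse) phrase ``placing the output to the last $d$ elements'' intends, where ``the output'' must be read as $T$'s output, skip connection included: the final linear layer of $\FF$ should emit $\lambda^{(i+1)}_\SA + \FF_T(\lambda^{(i+1)}_\SA)$ in the last $d$ coordinates, not merely $\FF_T(\lambda^{(i+1)}_\SA)$. This costs nothing, since you are already carrying the middle block $\lambda^{(i+1)}_\SA$ through every layer of $\FF$ in order to negate it for the cancellation of the middle block — the final layer just adds it once more into the last block. With this correction, your argument, including the simplicity claim (folding the extra additive term into the existing second, linear layer), goes through and coincides with the paper's construction.
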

\begin{proof}
    Let $T = (\SA, \FF)$ be a transformer layer of dimension $d$. We construct an equivalent $\GPS$-layer $G' = (\SA', \MP', \FF')$. The proof is similar to previous proofs, but special attention is paid to the attention module.
    
    We define the attention module $\SA'$ by modifying $\SA$ as follows. We use the usual notation $W_Q$, $W_K$, $W_V$ and $W_O$ for the involved matrices of $\SA$. For each $W \in \{W_Q, W_K, W_V\}$, we construct $W'$ by adding rows of zeros to the bottom of $W$. Likewise, we obtain $W_O'$ by adding the same number of columns of zeros to the left and right of $W_O$ (meaning that $W_O$ is in the middle $d$ columns of $W_O'$). Now, if $\SA(\cG) = \cG'$, then $\SA'(\cG_r)$ is the graph obtained from $\cG'$ by adding $d$ zero columns to the left and right of the feature matrix (recall the meaning of $\cG_r$ from Appendix~\ref{Appendix: PLGC = SGT = AHGT}).
    
    The message-passing layer places the first $d$ components of a feature vector to the middle $d$ elements in the output vector. The first and last $d$ elements of the output vector are simply $0$s.

    Now (assuming that the last $2d$ columns of input feature matrices are zero columns) the skip connections of both $\SA'$ and $\MP'$ only affect the first $d$ columns of their respective output matrices. When the outputs of $\SA'$ and $\MP'$ are added together in the $\GPS$-layer, the middle $d$ columns are the same as they would be after the skip connection is applied to $\SA$ in the transformer layer $T$.

    Now the middle $d$ elements of the input vectors of $\FF'$ are the same as the input vectors of $\FF$. As in earlier proofs, we define $\FF'$ to do the same transformations as $\FF$ to the middle $d$ elements, but placing the output to the last $d$ elements of the output vector, while the other elements of the output vector are defined to cancel out the input after the skip connection is applied.
\end{proof}

Now Theorem~\ref{theorem: GMLGC = SGPS = AHGPS = GNNG} follows directly from the results of this section. As a final note, we give a one-sided translation from $\GMLG$ to unique hard-attention $\GPSF$-networks.

\begin{theorem}\label{theorem: logic to hard-attention GPS}
    For each $\GMLG$-formula, we can construct an equivalent simple unique hard-attention $\GPSF$-network.
\end{theorem}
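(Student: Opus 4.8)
The plan is to adapt the subformula-by-subformula simulation from the proof of Theorem~\ref{theorem: logic to hard-attention GT}, adding message-passing layers to handle the graded local modalities $\Diamond_{\geq k}$ that are present in $\GMLG$ but not in $\PLG$. Fix a $\GMLG$-formula $\varphi$ and enumerate its subformulae (including proposition symbols and $\top$) as $\psi_1,\dots,\psi_d$ with $\psi_1 = \varphi$, ordered so that subformulae precede superformulae. Over a suitable floating-point format $\cF$, I would build a $\GPSF$-network of dimension $\ordo(d)$ whose initial $\MLP$ places, for each $\psi_i$ that is a proposition symbol or $\top$, its truth value in a designated component, and $0$ in all other components; as in the cited proof, extra padding columns are used for the shifting trick and initialized to $0$. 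The feature matrix will always be a Boolean matrix in which every not-yet-computed subformula has an all-zero column, and either the first or the last block of $d$ columns is all zero, depending on the parity of the current shift.

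I would then process the subformulae in order, using one block of $\GPS$-layers per $\psi_i$. When $\psi_i$ is a proposition symbol or $\top$, there is nothing to do. When $\psi_i = \neg\psi_j$ or $\psi_i = \psi_j \wedge \psi_k$, it is computed by an $\MLP$ (Boolean completeness, Lemma~\ref{lemma: PL to MLP}), broken into simple $\MLP$s by Lemma~\ref{lemma: splitting MLPs} and realized as simple $\GPS$-layers by Lemma~\ref{lemma: simulating MLPs with GPS-layers}, with the attention and message-passing submodules set to output the zero matrix so that only the feed-forward part and the skip connections act. When $\psi_i = \Diamond_{\geq k}\psi_j$, I would use the message-passing construction underlying the translation from $\GML$ to $\GNNF$s (Lemma~\ref{lemma: GNNF = GML}, ultimately Proposition~4.1 of \cite{Barcelo_GNNs}), realized as a simple $\GPS$-layer via the message-passing shifting lemma of the previous section (with $\SA$ outputting the zero matrix); the aggregation function is $\SUM_\cF$, which is bounded by Proposition~\ref{proposition: floating-point saturation}, so the real-number construction transfers to floats. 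When $\psi_i = \DiamondG\psi_j$, I would reuse the unique hard-attention head from the proof of Theorem~\ref{theorem: logic to hard-attention GT}: with $W_Q$, $W_K$ selecting the $\top$- and $\psi_j$-columns and $W_V = W_K$, the head outputs a column of $1$s if $\psi_j$ holds at some vertex and a zero column otherwise; it is placed into column $i$ via $W_O$, while $\MP$ and $\FF$ are set to output the zero matrix so that the skip connections recover all previously computed columns. By Lemma~\ref{lemma: transformer layers to GPS-layers}, this transformer-layer behaviour is realized by a single simple $\GPS$-layer of larger dimension.

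After the last block, component $i$ holds the truth value of $\psi_i$ at each vertex, so the simple Boolean vertex classifier simply projects the component for $\psi_1 = \varphi$; as this value already lies in $\{0,1\}$, $\ReLU$ suffices and no step function is needed. The main obstacle is bookkeeping rather than mathematics: a single $\GPS$-layer fires its attention, message-passing and feed-forward submodules simultaneously, each wrapped in its own skip connection, so one must force the currently unused submodules to output the zero matrix and use the increased dimension — the shift-to-the-right and shift-to-the-left framework of Lemmas~\ref{lemma: simulating MLPs with GPS-layers} and~\ref{lemma: transformer layers to GPS-layers} — to neutralize the skip connections that would otherwise corrupt the already-computed columns, all while maintaining the Boolean-matrix invariant above. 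Alternating shift-right and shift-left blocks, exactly as in the proof of Theorem~\ref{theorem: logic to hard-attention GT}, keeps the whole construction within the simple regime.
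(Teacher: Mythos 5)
Your proof is correct and takes essentially the same approach as the paper, whose proof of this theorem is a one-line citation of Theorem~\ref{theorem: logic to hard-attention GT}, Lemma~\ref{lemma: transformer layers to GPS-layers}, and Lemma~\ref{lemma: GNNF = GML}. You flesh out the construction the paper leaves implicit — subformula-by-subformula evaluation, the $\MLP$-splitting and shift-left/shift-right bookkeeping, and in particular the unnamed message-passing-to-GPS shifting lemma used to realize the $\Diamond_{\geq k}$ layers, which the paper's one-liner relies on but omits to cite.
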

\begin{proof}
    This follows from Theorem~\ref{theorem: logic to hard-attention GT}, Lemma~\ref{lemma: GNNF = GML} and Lemma~\ref{lemma: transformer layers to GPS-layers}.
\end{proof}

\subsection{Characterizations with graph classification and non-Boolean classification}\label{appendix: graph classification}

In this section, we explain in more detail how Theorems~\ref{theorem: PLGC = SGT = AHGT}, \ref{theorem: GNN GML} and \ref{theorem: GMLGC = SGPS = AHGPS = GNNG} also generalize for graph classification and non-Boolean classification. 

Note that our results generalize, in a sense, trivially for graph classification if we simply restrict our analysis to those learning models and logic formulae that define graph properties. What makes the generalization to graph classification interesting is that we have fixed the syntax of the logics and we have also fixed learning models that exclusively define graph properties via a graph classification head.

For graph classification, we replace the logic $\cL$ appearing in the result with $\cL^*$ and Boolean vertex classifiers with Boolean graph classifiers. Recall that the syntax of the logic $\cL^*$ is defined as
\[
    \psi \coloncolonequals \DiamondG_{\geq k} \varphi \,|\, \neg \, \psi \,|\, \psi \land \psi \,|\, \DiamondG_{\geq k} \psi,
\]
where $\varphi$ is a $\Pi$-formula of $\cL$.
When translating from a logic $\cL$ to, e.g., a $\GPSF$-network, note first that each formula of $\cL^*$ is also a formula of $\cL$,\footnote{Note that this is not the case logics that do not have the counting global modality; for these logics, we can simply remove $\DiamondG_{\geq k} \psi$ from the syntax and handle the remaining counting global modalities with the readout gadget.} and we can use the same construction as we do for vertex classification. 
Recall that a computing model that is a graph classifier uses a readout gadget instead of an $\MLP$ as a final classification head as defined in Appendix \ref{appendix: Other classification heads}.  
Thus, the only difference is that the final classifier is a readout gadget instead of an $\MLP$, and so we use a readout gadget that uses the max aggregation function (this also works for the sum aggregation function by Lemma \ref{lemma: checking > with an MLP}) and where the $\MLP$ is the same as the Boolean vertex classifier in the construction. Since the final $\MLP$ in the construction is simply a projection of a single feature of the feature vector, and since this component of the vector has the same value regardless of vertex, the max aggregation suffices. For the other direction, note that prior to the classification head, we have an equivalent formula of $\cL$ for each bit of a feature vector of a vertex. After this, we can simulate the aggregation function of the readout gadget with the global diamond $\DiamondG_{\geq k}$ and the $\MLP$ via the connectives $\neg$ and $\land$, which gives us a formula of $\cL^*$.

For non-Boolean classification, we simply use a sequence of formulae instead of a single formula and a general classification head instead of a Boolean classification head. From logics to learning models, we simply have a single component of the output feature vector for each formula in the sequence. From learning models to logics, we have a formula in the sequence per each bit of each float in the output feature vector of the learning model.

\section{Transformers on Words and Positional Encodings}\label{appendix: PEs and words}

In this section, we go through the follow-up results explained in the conclusion section \ref{sect:concl} for floating-point based positional encodings and transformers over words.

\subsection{Transformers on Words}\label{appendix: Transformers on Words}

We start by considering transformers over word-shaped graphs (see Appendix \ref{appendix: words} for the definition of word-shaped graphs).
As already noted, a $\GT$ over word-shaped graphs is just an `encoder-only transformer without causal masking'. For example, the popular BERT \citep{devlin2019bertpretrainingdeepbidirectional} is such a model inspired by the work of \cite{NIPS2017_3f5ee243}. 

It is straightforward to see that our proofs of Theorem \ref{theorem: PLGC = SGT = AHGT} and Theorem \ref{theorem: GMLGC = SGPS = AHGPS = GNNG} generalize when restricted to word-shaped graphs. Our results hold in all four cases: Boolean and general vertex classification and Boolean and general graph classification. 
Boolean vertex classification is perhaps the most commonly used for word-shaped graphs, or more precisely, a word $\bw$ is said to be accepted if its pointed word-shaped graph $(\cG_\bw, 1)$ is classified to $1$.
Note that GTs and GPS-networks in the corollaries below do not use causal masking.

\begin{corollary}
    When restricted to word-shaped graphs, the following have the same expressive power: 
    %\begin{itemize}
        %\item 
        $\PLGC$, 
        %\item 
        soft-attention $\GTF$s and
        %\item 
        average hard-attention $\GTF$s (and
        %\item 
        $\GNNGCF$s with constant local aggregation functions).
        This also holds when the $\GTF$s and $\GNNGCF$s are simple.
\end{corollary}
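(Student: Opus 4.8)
The plan is to obtain the corollary directly from Theorem~\ref{theorem: PLGC = SGT = AHGT}, using the simple observation that ``having the same expressive power'' is inherited by any subclass of graphs. Concretely, if a vertex classifier $C_1$ and a vertex classifier $C_2$ are equivalent, i.e.\ define the same vertex property over all $\Pi$-labeled graphs, then in particular they assign the same feature map to every word-shaped graph $\cG_\bw$; hence the restrictions of $C_1$ and $C_2$ to word-shaped graphs are again equivalent, now in the restricted sense. Applying this to each of the equivalences packaged in Theorem~\ref{theorem: PLGC = SGT = AHGT} --- between $\PLGC$, soft-attention $\GTF$s, average hard-attention $\GTF$s, and $\GNNGCF$s with constant local aggregation functions, in both their general and their simple forms --- immediately yields the corresponding equivalences when all these models and logics are evaluated only on word-shaped graphs.

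The one point that must be checked is that the semantics of each model and logic involved, when run on a word-shaped graph, is literally the restriction of its semantics on general $\Pi$-labeled graphs. This is immediate from the definition of $\cG_\bw$ in Appendix~\ref{appendix: words}: a word-shaped graph is just a particular $\Pi$-labeled graph, and none of our computing models or logical semantics is sensitive to whether its input graph happens to be word-shaped. So the equivalences transfer with no change to the translations themselves, and in particular the translations respect the ``simple'' restriction exactly as in the unrestricted case.

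Alternatively, and equivalently, one could re-run the proof of Theorem~\ref{theorem: PLGC = SGT = AHGT} with ``all $\Pi$-labeled graphs'' systematically replaced by ``word-shaped graphs'': every construction used there --- simulating Boolean connectives by $\MLP$s, simulating $\DiamondG_{\geq k}$ via attention heads exploiting underflow, translating attention modules to $\PLGC$ via boundedness of float sums, and the $\GNNF \equiv \GML$ bridge --- is uniform in the input graph and uses only local neighbourhoods and global (edge-free) counting, so it specializes verbatim. I do not expect any real obstacle here: the content is entirely contained in Theorem~\ref{theorem: PLGC = SGT = AHGT}, and the only thing that needs to be articulated is that restricting the class of admissible input graphs preserves equivalence of vertex classifiers, and hence sameness of expressive power.
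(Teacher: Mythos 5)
Your proposal is correct and matches the paper's treatment: the paper simply asserts that the proofs of Theorem~\ref{theorem: PLGC = SGT = AHGT} (and Theorem~\ref{theorem: GMLGC = SGPS = AHGPS = GNNG}) ``generalize when restricted to word-shaped graphs,'' and your observation that equivalence of vertex classifiers over all $\Pi$-labeled graphs specializes to equivalence over the subclass of word-shaped graphs is exactly the clean way to see this. Your first argument (restriction preserves sameness of expressive power, applied directly to the statement of Theorem~\ref{theorem: PLGC = SGT = AHGT}) is in fact slightly tighter than the paper's phrasing, which nominally appeals to re-running the proofs; both reduce the corollary to the same theorem with no additional content.
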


\begin{corollary}
    When restricted to word-shaped graphs, the following have the same expressive power: 
    %\begin{itemize}
        %\item 
        $\GMLGC$, 
        %\item 
        soft-attention $\GPSF$-networks and
        %\item 
        average hard-attention $\GPSF$-networks (and
        %\item 
        $\GNNGCF$s).
        This also holds when the $\GPSF$-networks and $\GNNGCF$s are simple.
\end{corollary}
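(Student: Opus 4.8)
The plan is to derive the corollary directly from Theorem~\ref{theorem: GMLGC = SGPS = AHGPS = GNNG} together with the elementary observation that two classes of vertex classifiers that have the same expressive power over the class of all $\Pi$-labeled pointed graphs also have it over any subclass, in particular over the pointed word-shaped graphs $(\cG_\bw, i)$. Concretely, word-shaped graphs $\cG_\bw = (V_\bw, E_\bw, \lambda_\bw)$ are genuine $\Pi$-labeled graphs in the sense of Section~\ref{section: preliminaries}, since $\lambda_\bw(i) = \{w_i\} \in \cP(\Pi)$ and $E_\bw$ is a legitimate edge relation on $V_\bw = [n]$. Hence for any (possibly simple) soft- or average hard-attention $\GPSF$-network $N$, Theorem~\ref{theorem: GMLGC = SGPS = AHGPS = GNNG} yields a $\GMLGC$-formula $\varphi$ defining the same vertex property as $N$ over all pointed graphs; its restriction to pointed word-shaped graphs still agrees with $N$, which is what is required. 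The converse direction, from a $\GMLGC$-formula to a (simple) $\GPSF$-network, and the corresponding statements for $\GNNGCF$s, follow in exactly the same way, since all four classes are shown equivalent to $\GMLGC$ over all pointed graphs in Theorem~\ref{theorem: GMLGC = SGPS = AHGPS = GNNG}. The ``also holds when simple'' clause is inherited verbatim, as it is already part of that theorem, and the same argument restricts the four variants (Boolean/general vertex and graph classification) discussed in the surrounding text.

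For completeness I would also spell out the stronger observation behind the remark that ``the proofs generalize'': no translation used to prove Theorems~\ref{theorem: PLGC = SGT = AHGT}, \ref{theorem: GNN GML} and \ref{theorem: GMLGC = SGPS = AHGPS = GNNG} exploits any structural feature of general graphs that is absent from word-shaped ones. In the logic-to-model direction the simple $\MLP$s handle Boolean connectives, message-passing layers simulate the graded local modalities $\Diamond_{\geq k}$, and attention heads simulate the counting global modalities $\DiamondG_{\geq k}$ via floating-point underflow (Proposition~\ref{proposition: floating-point underflow 2}); each of these acts pointwise on feature vectors, on out-neighbourhoods, or on the multiset of all feature vectors, hence behaves identically on a word-shaped input. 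In the model-to-logic direction, the analyses in Lemmas~\ref{lemma: MLP to logic} and~\ref{lemma: attention module to logic} and the $\GNNF$-to-$\GML$ translation of Lemma~\ref{lemma: GNNF = GML} rely only on boundedness of float sums (Propositions~\ref{proposition: floating-point saturation} and~\ref{proposition: saturation of softmax}) and on the Boolean completeness of $\PL$, again independent of the shape of the input graph. So the equivalences hold verbatim after restricting the domain to word-shaped graphs.

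This is admittedly a soft corollary and there is no genuine obstacle; the one point to be careful about is the intended reading of ``same expressive power when restricted to word-shaped graphs'', namely agreement on every pointed word-shaped graph $(\cG_\bw, i)$ with $i \in [n]$ rather than merely on the distinguished point $(\cG_\bw, 1)$. Under this reading the corollary is strictly weaker than Theorem~\ref{theorem: GMLGC = SGPS = AHGPS = GNNG} and the above argument suffices; the language-acceptance view used for transformers over words (a word $\bw$ accepted iff $(\cG_\bw,1)$ is classified $1$) is then recovered as a special case, so the corollary also yields the equivalences in that setting. No new construction is needed beyond what Theorem~\ref{theorem: GMLGC = SGPS = AHGPS = GNNG} already provides.
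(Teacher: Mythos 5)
Your proposal is correct, and your primary argument — that equivalence of classifier classes over all $\Pi$-labeled graphs automatically restricts to equivalence over any subclass, in particular word-shaped graphs — is sound and in fact slightly more economical than the paper's own justification, which asserts that the \emph{proofs} generalize rather than that the \emph{statement} restricts; both routes trivially yield the corollary from Theorem~\ref{theorem: GMLGC = SGPS = AHGPS = GNNG}, and your second paragraph independently verifies the paper's "proofs generalize" claim. Your caveat about the intended reading (agreement on all pointed word-shaped graphs, not only the distinguished point) is a reasonable clarification, and the "simple" clause and the four classification variants are indeed inherited verbatim, as you say.
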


In order to characterize floating-point unique hard-attention graph transformers on words, we need to introduce a new logic. We start by discussing a known fragment of linear-time temporal logic (LTL). The set of $\Pi$-formulae of \textbf{linear-time temporal logic with past} (LTLP) is defined by the grammar
\[
\varphi \coloncolonequals \top \mid p \mid \neg \varphi \mid \varphi \land \varphi \mid \mathsf{P} \varphi,
\]
where $p \in \Pi$. The (full) linear-time temporal logic also includes the ``since'' operator $\mathsf{S}$, the ``until'' operator $\mathsf{U}$ and the ``future'' operator $\mathsf{F}$, but we do not consider them here. The semantics of LTLP is defined over words as follows. Given a $\Pi$-labeled pointed word-shaped graph $(\cG_\bw, i)$ and a $\Pi$-formula $\varphi$ of LTLP, we define the truth of $\varphi$ in $(\cG_\bw, i)$ as follows.
\begin{itemize}
    \item $\cG_\bw, i \models p$ iff $p \in \lambda(\cG_\bw)_i$.
    \item The semantics of Boolean operators are defined in the usual way.
    \item $\cG_\bw, i \models \mathsf{P} \varphi$ iff there is some $j < i$ such that $\cG_\bw, j \models \varphi$.
\end{itemize}
Now, since we have defined LTLP, we can define the logic $\cL$ used for the characterization. The logic $\cL$ is a mix of $\PLG$ and LTLP, and obtained from the logic PL as follows: we extend PL with the $2$-ary operator $\mathsf{GB}$ called \textbf{global-before}. The semantics of the formula $\varphi \,\mathsf{GB}\, \psi$ is the same as that of the formula 
\[
\DiamondG ( \psi \land \neg \mathsf{P} \varphi).
\]
Intuitively, the formula states that ``there is a state where $\psi$ is true and $\varphi$ does not precede $\psi$''. We call this logic \textbf{PL with global-before} (or PL+GB).

We start with the following observation. Below, we let $\mathrm{LTLP+G}$ denote the logic obtained from $\mathrm{LTLP}$ by extending its syntax with the global non-counting modality. Below, ``$\cC_1$ is strictly more expressive than $\cC_2$'' means that for every $x \in \cC_2$, there is an equivalent object $y \in \cC_1$, but there is an object $z \in \cC_1$ that is not equivalent to any object in $\cC_2$. Moreover, ``$\cC_1$ and $\cC_2$ have orthogonal expressive power'' means that there is an object in $\cC_1$ that is not equivalent to any object in $\cC_2$, and vice versa.
\begin{proposition}
    Over word-shaped graphs, the following holds.
    \begin{itemize}
        \item $\mathrm{PL+GB}$ is strictly more expressive than $\PLG$.
        \item $\mathrm{LTLP}$ and $\mathrm{PL+GB}$ have orthogonal expressive power.
        \item $\mathrm{LTLP}+\mathrm{G}$ is strictly more expressive than $\mathrm{PL+GB}$.
    \end{itemize}
\end{proposition}
\begin{proof}
    First, we show that $\mathrm{PL+GB}$ is strictly more expressive than $\PLG$.
    We notice that any formula of the type $\DiamondG \varphi$ can be expressed as 
    $\bot \,\mathsf{GB}\, \varphi$, which means that each formula of $\PLG$ can be expressed in $\mathrm{PL+GB}$. 
    Clearly, $\mathrm{PL+GB}$ is strictly more expressive than $\PLG$ because, e.g., the formula 
    $\top \mathsf{GB} p$, 
    which intuitively states that ``the first letter in a word is $p$'', is not expressible in $\PLG$. 
    
    Secondly, it is easy to see that $\mathrm{LTLP}$ and $\mathrm{PL+GB}$ have orthogonal expressive power by considering the following invariance. $\mathrm{LTLP}$ is invariant over word-shaped graphs under extending a word from the right with a unary string, i.e., given a pointed $\Pi$-word $(\bw, i)$, for every $\Pi$-formula $\varphi$ in $\mathrm{LTLP}$, we have $\cG_\bw, i \models \varphi$ iff $\cG_{\bw\bu}, i \models \varphi$, where $\bu$ is a unary word (i.e. $\bu \in \{p\}$ for some letter $p$) and $\bw\bu$ denotes the concatenation of $\bw$ with $\bu$. However, clearly, $\mathrm{PL+GB}$ is not invariant under extending a word from the right.
    On the other hand, $\mathrm{PL+GB}$ can express formulae of the type $\DiamondG \varphi$ which $\mathrm{LTLP}$ cannot.
    
    Lastly, it is trivial that $\mathrm{PL+GB}$ is contained in $\mathrm{LTLP}+\mathrm{G}$. To see that $\mathrm{LTLP}+\mathrm{G}$ is strictly more expressive, note that $\mathrm{LTLP+G}$ contains $\mathrm{LTLP}$ that can already express properties not expressible in $\mathrm{PL+GB}$.
\end{proof}

Before we characterize unique hard-attention, we prove the following helpful lemma.
\begin{lemma}\label{lemma: attention identical rows and columns}
    Let $\cF$ be a float format, let $X \in \cF^{n \times d}$ and $W_Q, W_K \in \cF^{d \times d_h}$. If the $i$th and the $j$th row of $X$ are identical, then the $i$th row (resp. column) and the $j$th row (resp. column) of the matrix
    \[
    \frac{(XW_Q)(XW_K)^\T}{\sqrt{d_h}},
    \]
    are identical.
\end{lemma}
\begin{proof}
    Since the $i$th row and the $j$th row of $X$ are identical, the $i$th row and the $j$th row of $XW_Q$ are identical and the same is true for $XW_K$. To see that the $i$th row and the $j$th row of $Y \colonequals (XW_Q)(XW_K)^\T$ are identical, note that for each column $k$, we have $Y_{i,k} = (XW_Q)_{i,*} ((XW_K)_{k,*})^\T$ and $Y_{j,k} = (XW_Q)_{j,*} ((XW_K)_{k,*})^\T$. By an analogous argument, we can also see that the $i$th column and the $j$th column of $Y$ are identical. Dividing all elements in $Y$ with $\sqrt{d_h}$ naturally does not change which rows or columns are identical.
\end{proof}

We will show that, over words, PL+GB has the same expressive power as float-based graph transformers with unique hard-attention (without causal masking). 
\begin{theorem}\label{thrm: unique GT}
    Over word-shaped graphs, $\mathrm{PL+GB}$ has the same expressive power as unique hard-attention $\GTF$s. This also applies when $\GTF$s are simple.
\end{theorem}
\begin{proof}
    Let $T$ be a $\GTF$ with unique hard-attention over $(\Pi, d)$ over the floating-point format $\cF$.  We construct an equivalent $\Pi$-formula of $\mathrm{PL+GB}$ as follows. The construction is almost the same as the case with soft-attention and average hard-attention. The hardest part is to simulate the unique hard-attention function in each attention head. We first intuitively describe the simulation of an attention head
    \[
    H(X) = \mathrm{UH}\Big( \frac{(X W_Q)(XW_K)^\T}{\sqrt{d_h}} \Big)(XW_V).
    \]
    First, we observe that the unique hard function $\mathrm{UH}$ in $H$ outputs a square matrix $S$ that is just a selection matrix, i.e., each row contains one $1$ and all other elements of the row are $0$s. Secondly, by Lemma \ref{lemma: attention identical rows and columns} and the former observation we notice that if the $i$th row and the $j$th row of the input matrix $X$ are identical, then in $H$ the selection matrix $S$ will select the same row from $XW_V$ to be the $i$th row and the $j$th row of $H(X)$. 
    To find out which row of $XW_V$ will be selected by the selection matrix $S$ of $H(X)$, we have to look at the values of $XW_Q$ and $(XW_K)^\T$ as follows.
    For each row $\vec{q}$ of $XW_Q$, we find the leftmost column $\vec{k}$ of $(XW_K)^\T$ that maximizes the value $\vec{q}\vec{k}$ across all the columns of $(XW_K)^\T$. If the leftmost column of $(XW_K)^\T$ that gives the maximum value with the $i$th row of $XW_Q$ is the $j$th column, then the selection matrix $S$ selects the $j$th row of $XW_V$ for the $i$th row of $H(X)$. In terms of words, if $X$ represents the feature matrix of a word, this means that for each node $v$ (in the word) we have to find the leftmost node $w$ that maximizes the value of 
    \[
    \frac{(XW_Q)_{v,*} ((XW_K)^\T)_{*,w}}{\sqrt{d_h}}. 
    \]

    Now, with this intuition, we start the formal construction. 
    Let $\cG$ be the studied graph.
    The construction is similar to the case with $\GTF$s with soft-attention and average hard-attention except for the simulation of unique hard-attention heads.
    Assume that for each node $u$ the current feature vector $\vec{f}_u$ of $u$ is encoded by the formula sequence $\vec{\psi}$ as in the proof of Theorem \ref{theorem: PLGC to GT}. Then we define the formulae $\vec{\psi}^Q$, $\vec{\psi}^K$ and $\vec{\psi}^V$ which encode for each node $u$ the vectors $\vec{q}_u \colonequals \vec{f}_u W_Q$, $\vec{k}_u \colonequals \vec{f}_u W_K$ and $\vec{v}_u \colonequals \vec{f}_u W_V$, respectively. 
    By using these formulae, we can write down the formula sequence $\vec{\theta}$ which encodes for each node $u$ the feature vector $\vec{g}_u$ that is obtained from $XW_V$ by using the selection matrix $S$ described above. 
    For each node $u$, we define the floating-point number 
    \[
    M_u \colonequals \max\left\{\, \frac{(\vec{q}_u) (\vec{k}_w)^\T}{\sqrt{d_h}} \,\bigg{|}\, w \in V(\cG) \,\right\}.
    \]
    For each feature vector pair $(\vec{f}, \vec{g}) \in \cF^d \times \cF^d$, we can define a formula $\psi_{\vec{f}, \vec{g}}$ (by using the operator $\mathsf{GB}$ and the auxiliary formulae defined above), which intuitively checks two things when evaluated at a given node $u$:
    \begin{enumerate}
        \item First, the formula checks that there is a node $w$ such that 1) $\vec{k}_w = \vec{f}$, 2) $\vec{q}_u \vec{k}_w = M_u$ and 3) $\vec{v}_w = \vec{g}$.
        \item At the same time the formula checks that there is no node $w'$ on the left of $w$ with $\vec{q}_u \vec{k}_{w'} = M_u$ (this part requires the operator $\mathsf{GB}$).
    \end{enumerate}
    By using these formulae, it is easy to define the formula sequence $\vec{\theta}$ as follows. 
    At each node, precisely one of the formulae $\psi_{\vec{f}, \vec{g}}$ is true. We construct $\vec{\theta}$ such that for each feature vector pair $(\vec{f}, \vec{g}) \in \cF^d \times \cF^d$, if $\psi_{\vec{f}, \vec{g}}$ is the true formula, then $\vec\theta$ encodes the feature vector $\vec g$, which is the feature vector selected by the selection matrix for that node.

    For the converse, let $\theta$ be a $\Pi$-formula of $\mathrm{PL+GB}$. We show by induction on the construction of a formula how to simulate $\theta$. Proposition symbols and Boolean connectives are analogous to the case with soft-attention and average hard-attention. To simulate the operator $\varphi \,\mathsf{GB}\, \psi$, we use the unique hard-attention module. Assume that the truth values of $\varphi$ and $\psi$ are encoded into a matrix $X \in \mathbb{B}^{n \times d}$ in the columns $i$ and $j$, respectively. Moreover, we assume that the truth of $\top$ is encoded into the column $k$. Now, we build an attention head $H$ as follows. The matrix $W_Q$ is just the $d \times 2$ Boolean-valued matrix, where in both columns the $k$th row is $1$ and others are $0$. The matrix $W_K$ is the $d \times 2$ matrix where in the first column the $i$th row is $\infty$ and in the second column the $j$th row is $\infty$, and other entries are $0$. Now, the resulting matrix $\frac{(X W_Q)(XW_K)^\T}{\sqrt{2}}$ is a matrix where each column is filled with some float 
    $f$ 
    that represents 
    either $\infty$ of $0$.
    Therefore, $\mathrm{UH}$ will assign $1$ to the same column in all rows. 
    Now, we define that $W_V$ is the Boolean-valued $(d \times 2)$-matrix, where in the first column the $j$th row is $1$ and other entries in both columns are zero. Now, it is easy to verify that $H(X)$ is the $(n\times 2)$-matrix where each row contains the same pair $(x,0)$, where $x$ is 
    either $0$ or $1$.
    Now, we have 
    $x = 1$ if $\DiamondG ( \psi \land \neg \mathsf{P} \varphi )$ is true and $x = 0$ otherwise.
    The rest of the construction is easy to complete.
\end{proof}

By using the result above and Lemma \ref{lemma: GNNF = GML}, it is straightforward to obtain the following theorem for GPS-networks on word-shaped graphs (without causal masking).

\begin{theorem}\label{thrm: unique GPS}
    Over word-shaped graphs, $\mathrm{GML+GB}$ has the same expressive power as unique hard-attention $\GPSF$-networks. This also applies when $\GPSF$-networks are simple.
\end{theorem}

\subsection{Positional Encodings}\label{appendix: Positional Encodings}

In this section, we shall consider our computing models with positional encodings. In the end of Section \ref{sect: float characterization}, we had a preliminary discussion about positional encodings, but let us go through the related concepts on positional encodings in a little bit more detail.

Often each $\GNN$, $\GT$ or $\GPS$-network $A = (P, L^{(1)}, \ldots, L^{(k)}, C)$ based on reals (with input dimension $\ell$) is associated with a \textbf{positional encoding} (or $\mathrm{PE}$) $\pi$, that is, a mapping that gives for a graph $\cG$ a function $\pi(\cG) \colon V(\cG) \to \R^\ell$.\footnote{In the preliminary draft of this paper \citep{ahvonen2025transformersv1}, there is a typo stating that positional encodings are isomorphism‑invariant. For example, \cite{grohe_transformers} notes that real-valued LapPE is not isomorphism‑invariant.} 
A popular $\mathrm{PE}$ is LapPE \citep{Rampasek}.
Now, $A$ with $\pi$ computes over $\cG$ a sequence of feature maps similarly to $A$, but for each vertex $v$ in $\cG$, we define $\lambda^{(0)}_v \colonequals P(\lambda)_v + \pi(\cG)_v$. 
Analogously to computing models based on reals, computing models based on floats can be associated with a positional encoding: 
Let $\cF(p, q)$ be a floating-point format and let $\pi$ be a positional encoding of dimension $d$ over $\cF$, i.e., for each graph $\cG$, $\pi$ gives a mapping $\pi(\cG) \colon V(\cG) \to \cF^d$.
Our positional encodings are ``absolute'', but one could also study ``relative'' positional encodings that give a weight between the vertices of the studied graph $\cG$, i.e., a relative positional encoding over $\cG$ is a function $ V(\cG) \times V(\cG) \to \cF^d$.
If a $\GNNF$, $\GNNGCF$, $\GNNGF$, $\GTF$ or $\GPSF$-network $T$ is paired with a positional encoding $\pi$, we assume that $T$ is defined over the same floating-point format $\cF$ as $\pi$.
Moreover, if $\cC$ is a class of computing models (over floats) we let $\cC[\pi]$ denote the class of computing models, where each computing model over a floating-point format $\cF$ is paired with the positional encoding $\pi$ over $\cF$.

Let $\Pi$ be a set of vertex labels.
Given a logic $\cL$, a $\Pi$-formula of the logic $\cL[\pi]$ is a $\Pi \cup \Pi_\pi$-formula of the logic $\cL$, where $\Pi_\pi = \{\, \ell^\pi_i \mid i \in [d(p+q+1)] \,\}$ is a fresh set of vertex label symbols (i.e. $\Pi \cap \Pi_\pi = \emptyset$). Now, a $\Pi$-formula $\varphi \in \cL[\pi]$ is interpreted over $\Pi \cup \Pi_\pi$-labeled graphs $\cG$ such that for each vertex $v$ in $\cG$, we have $\cG, v \models \ell^\pi_i$ iff the $i$th bit of $\pi(\cG)_v$ is $1$. That is, each $\Pi$-formula of $\cL[\pi]$ is interpreted over graphs where each graph includes vertex label symbols such that in each vertex $v$, the vertex label symbols of $v$ encode $\pi(\cG)_v$ in binary. In terms of $\FO$ this would mean adding an additional set of predicates to the logic.

Now, it is straightforward to see that Theorem \ref{theorem: PLGC = SGT = AHGT} and Theorem \ref{theorem: GMLGC = SGPS = AHGPS = GNNG} apply for graph transformers and $\GPS$-networks with positional encoding $\pi$ when the logic is extended with predicates that encode positional encodings. Also Theorem \ref{thrm: unique GT} and Theorem \ref{thrm: unique GPS} can be generalized in an analogous way.

\begin{corollary}
    Let $\pi$ be a positional encoding.
    The following have the same expressive power: 
    %\begin{itemize}
        %\item 
        $\PLGC[\pi]$, 
        %\item 
        soft-attention $\GTF$s with $\pi$ and
        %\item 
        average hard-attention $\GTF$s with $\pi$ (and
        %\item 
        $\GNNGCF$s with constant local aggregation functions and $\pi$).
        Moreover, the same holds when restricted to word-shaped graphs, and when the $\GTF$s and $\GNNGCF$s are simple.
\end{corollary}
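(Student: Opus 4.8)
The plan is to derive the corollary from Theorem~\ref{theorem: PLGC = SGT = AHGT} by treating the positional encoding $\pi$ as an additional block of vertex labels. Fix the floating-point format $\cF(p,q)$ of $\pi$ and let $\Pi_\pi=\{\ell^\pi_i\mid i\in[d(p+q+1)]\}$ as in the definition of $\cL[\pi]$. For a $\Pi$-labeled graph $\cG$, let $\cG^+$ be the $\Pi\cup\Pi_\pi$-labeled graph with the same vertices and edges in which each vertex $v$ additionally carries the labels $\ell^\pi_i$ for exactly those $i$ such that the $i$th bit of $\pi(\cG)_v$ is $1$. Since $\pi$ is isomorphism invariant, so is the map $\cG\mapsto\cG^+$, and by definition a $\PLGC[\pi]$-formula $\psi$ holds at $(\cG,v)$ iff, viewed as a $\PLGC$-formula over $\Pi\cup\Pi_\pi$, it holds at $(\cG^+,v)$. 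Hence it suffices to show that, for each model class in the list, a model over $\Pi$ equipped with $\pi$ is interchangeable with the corresponding model over $\Pi\cup\Pi_\pi$ carrying no positional encoding, in the sense that the two agree on all pairs $(\cG,v)$ resp.\ $(\cG^+,v)$.

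For the direction from models to logic, given a $\GTF$ (resp.\ $\GNNGCF$ with constant local aggregation) $T$ over $\Pi$ with positional encoding $\pi$, I would build an equivalent model $T^+$ over $\Pi\cup\Pi_\pi$ with no positional encoding: $T^+$ has the same layers as $T$, but its initial $\MLP$ first reassembles the $d$ floats of $\pi(\cG)_v$ from the $\Pi_\pi$-bits supplied by $\cG^+$ and adds them to $P(\lambda)_v$. Reassembling a float from its bit string is a purely local operation on floats, hence a Boolean function on bit strings, hence realizable by an $\MLP$ (Lemma~\ref{lemma: MLP to logic} and Lemma~\ref{lemma: PL to MLP}), and by Lemmas~\ref{lemma: simulating MLPs with transformer layers} and~\ref{lemma: splitting MLPs} by a few extra simple layers (in a larger dimension) when $T$ is required to be simple. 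By construction $T^+(\cG^+)=T(\cG)$ under the positional-encoding semantics, so applying Theorem~\ref{theorem: PLGC = SGT = AHGT} to $T^+$ yields a $\PLGC$-formula over $\Pi\cup\Pi_\pi$, that is, a $\PLGC[\pi]$-formula, equivalent to $T$. Conversely, given $\psi\in\PLGC[\pi]$, Theorem~\ref{theorem: PLGC = SGT = AHGT} provides a (simple) $\GTF$ and a (simple) $\GNNGCF$ with constant local aggregation over $\Pi\cup\Pi_\pi$ equivalent to $\psi$ over all $\Pi\cup\Pi_\pi$-labeled graphs, in particular over all $\cG^+$. To turn such a model into one over $\Pi$ with positional encoding $\pi$, I would have its initial $\MLP$ place the $\Pi$-labels in one block of components and zeros in $d$ further components; after adding $\pi(\cG)_v$ those components hold the $d$ floats of $\pi(\cG)_v$, and the first layer (its attention/message-passing output zeroed out via the skip connection) uses its $\MLP$ to expand these $d$ floats into the $d(p+q+1)$ bit components that the $\Pi_\pi$-labels would have provided, again a local float operation, hence an $\MLP$, and simple after splitting. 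The resulting model coincides with the original on every $\cG^+$. Combining the two directions gives the pairwise equality of expressive power of $\PLGC[\pi]$, soft-attention $\GTF$s with $\pi$, average hard-attention $\GTF$s with $\pi$, and $\GNNGCF$s with constant local aggregation and $\pi$; the ``simple'' variant follows because Theorem~\ref{theorem: PLGC = SGT = AHGT} already holds in the simple case and all conversions above preserve simplicity.

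For the restriction to word-shaped graphs, the same argument applies verbatim: if $\cG$ is word-shaped then so is $\cG^+$ (only the labeling changes), so the reduction lands inside the word-shaped-graph version of Theorem~\ref{theorem: PLGC = SGT = AHGT} established earlier in this appendix, and nothing else is affected. The main obstacle — a mild one — is the bookkeeping between the two encodings of $\pi$: the ``bit-labels'' encoding used by the logic and by the PE-free models, and the ``$d$ floats added to the initial feature'' encoding used by the models with $\pi$. The crux is that translating between them (float to bit string and back) is a local operation on floats, hence an $\MLP$, and, after the standard shifting/splitting of Lemmas~\ref{lemma: simulating MLPs with transformer layers} and~\ref{lemma: splitting MLPs}, a simple one inside a $\GTF$- or $\GPSF$-layer, so that the conversions interact correctly with the skip connections and with the dimension constraints of simple models; this is also consistent with the observation in Appendix~\ref{appendix: Interpreting labeling functions as float feature maps} that bit-labelings and float feature maps are interchangeable. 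All remaining ingredients are immediate from Theorem~\ref{theorem: PLGC = SGT = AHGT}.
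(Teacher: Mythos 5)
Your approach is correct and matches what the paper intends when it dismisses the corollary as "straightforward to see": reinterpret the positional encoding as an extra block of vertex labels $\Pi_\pi$, pass to the auxiliary graph $\cG^+$, and invoke Theorem~\ref{theorem: PLGC = SGT = AHGT} over the extended label set $\Pi \cup \Pi_\pi$. The two conversions (PE-model over $\Pi$ $\leftrightarrow$ PE-free model over $\Pi\cup\Pi_\pi$) are exactly the right reductions, and tracking simplicity via Lemmas~\ref{lemma: simulating MLPs with transformer layers} and~\ref{lemma: splitting MLPs} is the right move. One small inaccuracy worth flagging: for the "reassemble a float from its bit string" step you cite Lemma~\ref{lemma: MLP to logic} and Lemma~\ref{lemma: PL to MLP}, but Lemma~\ref{lemma: PL to MLP} produces Boolean (0/1) outputs, whereas the reassembly needs an MLP whose output is an arbitrary float of $\cF$. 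This is still realizable (a finite-domain lookup table, built from equality-detectors as in Lemma~\ref{lemma: checking = with an MLP} and a weighted sum), and the interchangeability of bit-labelings and float feature maps invoked in Appendix~\ref{appendix: Interpreting labeling functions as float feature maps} is the cleaner justification; but the two lemmas you cite are not, on their own, the statement you need. Since the claim itself is true and provable in the paper's framework, this is a citation mismatch rather than a gap.
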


\begin{corollary}
    Let $\pi$ be a positional encoding.
    The following have the same expressive power: 
    %\begin{itemize}
        %\item 
        $\GMLGC[\pi]$, 
        %\item 
        soft-attention $\GPSF$-networks with $\pi$, 
        %\item 
        average hard-attention $\GPSF$-networks with $\pi$ and 
        %\item 
        $\GNNGCF$s with $\pi$.
        Moreover, the same holds when restricted to word-shaped graphs, and when the $\GPSF$-networks and $\GNNGCF$s are simple.
\end{corollary}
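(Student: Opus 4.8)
The plan is to reduce the corollary to Theorem~\ref{theorem: GMLGC = SGPS = AHGPS = GNNG} by treating the bits of $\pi$ as ordinary vertex labels. Recall that a $\Pi$-formula of $\GMLGC[\pi]$ is, by definition, a $(\Pi \cup \Pi_\pi)$-formula of $\GMLGC$, and that $\ell^\pi_i$ is meant to hold at a vertex $v$ of $\cG$ exactly when the $i$th bit of $\pi(\cG)_v$ is $1$. Hence a $\Pi$-labeled graph $\cG$ together with $\pi$ induces a canonical $(\Pi \cup \Pi_\pi)$-labeled graph $\cG^\pi$: augment $\cG$ with the labels $\ell^\pi_i$ read off from $\pi(\cG)_v$. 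This gives an isomorphism-invariant correspondence $\cG \mapsto \cG^\pi$, and both directions of the corollary amount to transporting the equivalence of Theorem~\ref{theorem: GMLGC = SGPS = AHGPS = GNNG} (for the vocabulary $\Pi \cup \Pi_\pi$) along it, the only work being the conversion between ``$\pi$ delivered as floats added to the initial feature vectors'' and ``$\pi$ delivered as extra Boolean labels''.

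For the direction from $\GPSF$-networks with $\pi$ to $\GMLGC[\pi]$: given a soft- or average-hard-attention $\GPSF$-network $N = (P, L^{(1)}, \dots, L^{(k)}, C)$ over $\cF$ paired with $\pi$, I would build a plain $\GPSF$-network $N' = (P', L^{(1)}, \dots, L^{(k)}, C)$ over $\cF$ (no positional encoding, same hidden dimension) with $N'(\cG^\pi) = (N \text{ with } \pi)(\cG)$ for all $\cG$. Only the initial layer changes: from the float-encoding of the $(\Pi\cup\Pi_\pi)$-labels, $P'$ first recovers $\pi(\cG)_v \in \cF^d$ from the $\Pi_\pi$-bits (a Boolean function of those bits, hence an $\MLPF$ by Lemma~\ref{lemma: MLP PL}), then applies $P$ to the $\Pi$-part and performs the float addition $P(\lambda)_v + \pi(\cG)_v$ exactly as in the semantics of $N$ with $\pi$; the layers $L^{(1)},\dots,L^{(k)},C$ act on feature vectors only and are copied verbatim. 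Applying the network-to-logic direction of Theorem~\ref{theorem: GMLGC = SGPS = AHGPS = GNNG} to $N'$ produces a $\GMLGC$-formula over $\Pi\cup\Pi_\pi$ equivalent to $N'$, i.e.\ a $\GMLGC[\pi]$-formula equivalent to $N$ with $\pi$; the same argument via the $\GNNGCF \equiv \GMLGC$ part of Theorem~\ref{theorem: GNN GML} handles $\GNNGCF$s with $\pi$.

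For the direction from $\GMLGC[\pi]$ to $\GPSF$-networks with $\pi$: a $\GMLGC[\pi]$-formula $\varphi$ is a $\GMLGC$-formula over $\Pi\cup\Pi_\pi$, so Theorem~\ref{theorem: GMLGC = SGPS = AHGPS = GNNG} gives a (simple) $\GPSF$-network $N'$, over a format we may choose and take to be $\cF$ (we may assume the format underlying $\pi$ is rich enough for the construction of Theorem~\ref{theorem: logic to GPS}, as is standard for positional encodings), equivalent to $\varphi$ on $(\Pi\cup\Pi_\pi)$-labeled graphs. To turn $N'$ into a $\GPSF$-network $N$ over $\cF$ paired with $\pi$, take $N$ of larger hidden dimension partitioned into: a block into which $P$ writes $0$, so that after $\lambda^{(0)}_v = P(\lambda)_v + \pi(\cG)_v$ this block equals $\pi(\cG)_v$; a block holding the $\Pi$-label encoding written by $P$ (untouched by $\pi$, which is $0$ there); the hidden and working dimensions of $N'$; and scratch space. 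A preprocessing phase of constantly many simple $\GPS$-layers then reads the bits of the floats in the first block (floats being bit strings, bit extraction is pure Boolean logic, so Lemma~\ref{lemma: PL to MLP}, split into simple $\MLP$s by Lemma~\ref{lemma: splitting MLPs} and embedded into $\GPS$-layers by Lemma~\ref{lemma: simulating MLPs with GPS-layers}), assembling in a dedicated block the float-encoding of the full $(\Pi\cup\Pi_\pi)$-labeling; thereafter we run $N'$ on that block, embedding each of its feed-forward, message-passing and transformer components into $\GPS$-layers of the larger dimension via Lemmas~\ref{lemma: simulating MLPs with GPS-layers} and~\ref{lemma: transformer layers to GPS-layers} and the message-passing-to-$\GPS$-layer lemma, and read off $N'$'s classification head. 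Chaining the two directions through the mutual equivalences of Theorem~\ref{theorem: GMLGC = SGPS = AHGPS = GNNG} yields all pairwise equivalences; the word-shaped and ``simple'' variants are inherited, since the preprocessing phase uses only simple layers and is indifferent to graph shape. The main obstacle, and the only place requiring care, is exactly this dimension-and-format bookkeeping: one must reconcile the fact that $\pi$ has a fixed dimension over a fixed format with the need to run an arbitrary (richer-format, higher-dimensional) $N'$ alongside the $\pi$-bits, the decisive observations being that $P$ can zero out the slot where $\pi$ lands, that the bits of those floats are then recoverable by ordinary $\MLP$s, and that the remaining dimensions can be padded freely.
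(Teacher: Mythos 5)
Your proposal is correct and takes essentially the same approach as the paper: the paper gives no explicit proof, only remarking that the theorems "apply for graph transformers and $\GPS$-networks with positional encoding $\pi$ when the logic is extended with predicates that encode positional encodings," i.e., the reduction $\cG \mapsto \cG^\pi$ over the vocabulary $\Pi \cup \Pi_\pi$ that you spell out. Your elaboration (using Lemma~\ref{lemma: MLP PL} to decode/re-encode the $\Pi_\pi$-bits as floats, and Lemmas~\ref{lemma: splitting MLPs}, \ref{lemma: simulating MLPs with GPS-layers}, and \ref{lemma: transformer layers to GPS-layers} for the dimension bookkeeping) is exactly the intended route, with the same implicit reading that $\pi$ pads with zeros into any extra hidden dimensions.
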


\begin{corollary}
    Let $\pi$ be a positional encoding.
    The following have the same expressive power when restricted to word-shaped graphs: 
    %\begin{itemize}
        %\item 
        $\mathrm{PL+GB}[\pi]$ and 
        %\item 
        unique hard-attention $\GTF$s with $\pi$. This also applies when $\GTF$s are simple.
\end{corollary}

\begin{corollary}
    Let $\pi$ be a positional encoding.
    The following have the same expressive power when restricted to word-shaped graphs: 
    %\begin{itemize}
        %\item 
        $\mathrm{GML+GB}[\pi]$ and 
        %\item 
        unique hard-attention $\GPSF$-networks with $\pi$. This also applies when $\GPSF$-networks are simple.
\end{corollary}

While we primarily consider graph transformers and $\GPS$-networks, Theorem~\ref{theorem: GNN GML} concerning $\GNN$s can be similarly generalized to cover positional encodings.

\begin{corollary}
    Let $\pi$ be a positional encoding.
    The following pairs have the same expressive power: $\GNNF$s with $\pi$ and $\GML[\pi]$; $\GNNGF$s with $\pi$ and $\GMLG[\pi]$; $\GNNGCF$s with $\pi$ and $\GMLGC[\pi]$.
    This also applies when each type of $\GNNF$ is simple.
\end{corollary}

One could also characterize a positional encoding directly and combine that characterization with any of our Theorems \ref{theorem: PLGC = SGT = AHGT}, \ref{theorem: GNN GML}, \ref{theorem: GMLGC = SGPS = AHGPS = GNNG}, \ref{thrm: unique GT} and \ref{thrm: unique GPS}.
We leave this for future work.

\end{document}